\newtheorem{theorem}{Theorem}
\newtheorem{lemma}[theorem]{Lemma}
\newtheorem{proposition}[theorem]{Proposition}
\newtheorem{corollary}[theorem]{Corollary}
\newtheorem{remark}[theorem]{Remark}
\newtheorem{assumption}[theorem]{Assumption}
\newcommand{\R}{\mathbb{R}}
\newcommand{\Z}{\mathbb{Z}}
\newcommand{\N}{\mathbb{N}}
\newcommand{\bP}{\mathbb{P}}
\newcommand{\dps}{\displaystyle}
\newcommand{\ii}{\infty}
\newcommand\1{{\ensuremath {\mathds 1} }}
\renewcommand\phi{\varphi}
\newcommand{\cS}{\mathcal{S}}
\newcommand{\cP}{\mathcal{P}}
\newcommand{\cV}{\mathcal{V}}
\newcommand{\cU}{\mathcal{U}}
\newcommand{\cR}{\mathcal{R}}
\newcommand{\cB}{\mathcal{B}}
\newcommand{\cI}{\mathcal{I}}
\newcommand{\cF}{\mathcal{F}}
\newcommand{\cG}{\mathcal{G}}
\newcommand{\cN}{\mathcal{N}}
\newcommand\pscal[1]{{\ensuremath{\left\langle #1 \right\rangle}}}
\newcommand{\norm}[1]{ \left\| #1 \right\|}
\renewcommand{\geq}{\geqslant}
\renewcommand{\leq}{\leqslant}
\renewcommand{\tilde}{\widetilde}
\newcommand{\eps}{\varepsilon}
\newcommand{\nn}{\nonumber}
\newcommand{\rd}{\mathrm{d}}
\newcommand{\dx}{\rd x}
\newcommand{\dy}{\rd y}
\crefname{figure}{Figure}{Figures}
\DeclareMathOperator{\supp}{supp}
\DeclarePairedDelimiter\myp()
\DeclarePairedDelimiter\myt\{\}
\DeclarePairedDelimiter\myb[]
\DeclarePairedDelimiter\abs\lvert\rvert
\DeclarePairedDelimiter\expec\langle\rangle
\newcommand{\bQ}{\mathbb{Q}}
\newcommand{\id}{\, \mathrm{d}}
\newcommand\nonarg{{}\cdot{}}
\NewDocumentCommand\For{ s m }{
	\IfBooleanF{#1}{\quad}
	\quad
	\text{#2}
}
\providecommand\given{}
\newcommand\SetSymbol[1][]{%
	\nonscript\:#1\vert
	\allowbreak
	\nonscript\:
	\mathopen{}}
\DeclarePairedDelimiterX\Set[1]\{\}{%
	\renewcommand\given{\SetSymbol[\delimsize]}
	#1
}
\NewDocumentCommand\normt{ s o m m }{
	\IfBooleanTF{#1}{
		\norm*{#3}
	}{
		\IfNoValueTF{#2}{
			\norm{#3}
		}{
			\norm[#2]{#3}
		}
	}
	_{#4}
}
\title[LDA in Classical DFT]{Classical Density Functional Theory: The Local Density Approximation}
\author[M. Jex]{Michal Jex}
\address{Department of Physics, Faculty of Nuclear Sciences and Physical Engineering, Czech Technical University in Prague, B\v rehov\'a 7, 11519 Prague, Czech Republic }
\email{michal.jex@fjfi.cvut.cz}
\author[M. Lewin]{Mathieu Lewin}
\address{CNRS \& CEREMADE, Universit\'e Paris-Dauphine, PSL University, 75016 Paris, France}
\email{mathieu.lewin@math.cnrs.fr}
\author[P. S. Madsen]{Peter S. Madsen}
\address{Department of Mathematics, LMU Munich, Theresienstrasse 39, 80333 Munich, Germany}
\email{madsen@math.lmu.de}
\date{\today.}
\begin{document}

\begin{abstract}
We prove that the lowest free energy of a classical interacting system at temperature $T$ with a prescribed density profile $\rho(x)$ can be approximated by the local free energy $\int f_T(\rho(x))\,\rd x$, provided that $\rho$ varies slowly over sufficiently large length scales. A quantitative error on the difference is provided in terms of the gradient of the density. Here $f_T$ is the free energy per unit volume of an infinite homogeneous gas of the corresponding uniform density. The proof uses quantitative Ruelle bounds (estimates on the local number of particles in a large system), which are derived in an appendix.

\medskip

\tiny \noindent \copyright\;2023, the authors.
\end{abstract}

\maketitle

\tableofcontents

Classical Density Functional Theory is used to describe finite and infinite classical particle systems, which have a non-uniform density profile. A typical practical situation is the interface between liquid-gas, liquid-liquid (in fluid mixtures), crystal-liquid and crystal-gas phases at bulk coexistence~\cite{EbnSaaStr-76,EbnPun-76,SaaEbn-77,YanFleGib-76,Evans-79,Evans-92}. The density is then essentially constant within each phase and varies continuously in a neighborhood of the interface.

The \emph{quantum} theory of DFT was developed since the 1960s with the famous works by Hohenberg-Kohn-Sham~\cite{HohKoh-64,KohSha-65,LewLieSei-23_DFT}. In this theory one has a ladder of approximate models of increasing precision~\cite{PerSch-01}. The situation is somewhat similar in classical DFT, with the additional difficulty that the interaction between the particles is only known empirically.

The simplest nonlinear functional of the density is obtained by assuming that the gas is locally infinite and uniform, which is called the \emph{Local Density Approximation} (LDA). In this theory, the lowest free energy $G_T[\rho]$ at density $\rho(x)$ and temperature $T$ is approximated by the local functional
\begin{equation}
 G_T[\rho]\approx\int_{\R^d}f_T\big(\rho(x)\big)\,\dx,
 \label{eq:LDA_intro}
\end{equation}
where $f_T(\rho_0)$ is by definition the free energy per unit volume of an infinite homogeneous gas of density $\rho_0$.
A common strategy to gain in precision is to add gradient corrections to the LDA~\eqref{eq:LDA_intro}, in order to account for the non-uniformity of $\rho(x)$. Constructing efficient functionals of this type turned out to be easier for the liquid-gas transition~\cite{EbnSaaStr-76,EbnPun-76,SaaEbn-77,YanFleGib-76} than for the solid-liquid transition~\cite{RamYus-77,YamYus-79,HayOxt-81}.

In a previous paper~\cite{JexLewMad-23}, we have investigated the representability of any density $\rho(x)$ for a given interaction and proved local upper bounds on $G_T[\rho]$. The most difficult situation is when the interaction potential is very repulsive (non integrable) at the origin, in which case some correlations have to be inserted in the trial state to ensure that the particles never get too close to each other. Using tools from functional analysis (in particular Besicovitch covering techniques), we could construct such states and evaluate their free energy, providing thereby upper bounds on the free energy $G_T[\rho]$ at fixed density.

In this paper we continue our investigation of the fixed density problem and justify the Local Density Approximation~\eqref{eq:LDA_intro}, in a regime where the density varies very slowly over long length scales. To be more precise, our goal is to prove a quantitative estimate in the form
\begin{equation}
\left| G_T[\rho]-\int_{\R^d}f_T\big(\rho(x)\big)\,\dx\right|\leq \text{Err}[\rho]
 \label{eq:LDA_err_intro}
\end{equation}
valid for all densities, with the error $\text{Err}[\rho]$ involving gradients of the density so that it becomes negligible compared to the LDA whenever $\rho$ varies slowly. Although we expect the exact same result for the canonical free energy (denoted by $F_T[\rho]$ in this article), our results will mainly concern the easier grand-canonical free energy $G_T[\rho]$.

An estimate in the form~\eqref{eq:LDA_err_intro} has recently been proved in the classical and quantum Coulomb cases in~\cite{LewLieSei-18,LewLieSei-20}. The Coulomb potential is long range and the interaction between subsystems in a large assembly of particles is very hard to control, unless good screening properties have been shown. The works~\cite{LewLieSei-18,LewLieSei-20} relied on the Graf-Schenker inequality~\cite{GraSch-95}, following ideas from~\cite{HaiLewSol_1-09,HaiLewSol_2-09}. This inequality permits to withdraw such interactions in a lower bound, when the system is split using a tiling of space made of simplices. This is rather particular to the Coulomb case. Similar tools exist for general Riesz potentials $|x|^{-s}$~\cite{Fefferman-85,Gregg-89,Hugues-85,CotPet-19,CotPet-19b} as well as other positive-definite interactions~\cite{Mietzsch-20}, but not all interactions of physical interest can be handled by such a method.

In this work we only consider short range interaction potentials $w$ and do not have to face the problem of screening. Our goal is to however deal with arbitrary (stable) potentials decaying fast enough at infinity, without any assumption on the sign of the Fourier transform $\widehat{w}$ nor on that of $w$. The interaction can be partially attractive at medium or long distances. To control the interaction between subsystems we rely on \emph{Ruelle bounds}~\cite{Ruelle-70}, which provide estimates on the local moments of the number of particles in a large system. In a slowly-varying external potential, such a strategy has already been used in~\cite{Millard-72,GarSim-72,SimGar-73,MarPre-72}. The main novelties of the present work is that we work at fixed density and provide quantitative bounds, that is, an explicit error functional $\text{Err}[\rho]$ in~\eqref{eq:LDA_err_intro}. This requires to make Ruelle bounds quantitative (as is discussed in Appendix~\ref{app:Ruelle}) and to prove convergence rates for the thermodynamic limit.

The paper is organized as follows. In the next section we properly define the canonical and grand-canonical free energies $F_T[\rho]$ and $G_T[\rho]$ for a system at a given density profile $\rho(x)$. We also recall the definition of the free energy $f_T(\rho_0)$ of an infinite gas of uniform density $\rho_0$ and state our main result about the LDA, Theorem~\ref{thm:lda}. The rest of the paper is devoted to the proof of this theorem. In Section~\ref{sec:estimates} we gather useful \emph{a priori} estimates from~\cite{JexLewMad-23}, and others whose proofs are given much later in Section~\ref{sec:tdlim_proofs} for convenience. In Section~\ref{sec:thermolim} we prove that the thermodynamic limit at fixed uniform density is the same as the usual thermodynamic limit without any pointwise constraint on $\rho$. In Section~\ref{sec:gcconvrate} we study the speed of convergence for the grand-canonical problem, which depends on the decay of the interaction potential at infinity. In Section~\ref{sec:lda} we are finally able to provide the full proof of Theorem~\ref{thm:lda}. Appendix~\ref{app:Ruelle} contains the proof of quantitative Ruelle bounds in the spirit of~\cite{Ruelle-70}. We mainly follow the arguments of~\cite{Ruelle-70} but provide all the details for the convenience of the reader and give more explicit estimates than in~\cite{Ruelle-70}.

\bigskip

\noindent{\textbf{Acknowledgement.}} This project has received funding from the European Research Council (ERC) under the European Union's Horizon 2020 research and innovation programme (grant agreement MDFT No 725528 of ML). MJ also received financial support from the Ministry of Education, Youth and Sport of the Czech Republic under the Grant No.\ RVO 14000.

\section{Main result}

\subsection{Free energies at given density}
In order to state our main result, we need to first give the definition of the free energy at fixed density. Although most of our results will concern the grand-canonical case, we also partially look at the canonical case.

\subsubsection{Interaction potential $w$}
For convenience, we work in $\R^d$ with a general dimension $d\geq1$. The physical cases of interest are of course $d\in\{1,2,3\}$ but the proofs are the same for all $d$.
We consider systems of indistinguishable classical particles  interacting through a short-range pair potential $ w $.
Throughout the paper, we work with an interaction satisfying the following properties.

\begin{assumption}[on the short-range potential $w$]
	\label{de:shortrangenew}
	Let $ w :\R^d\to\R\cup\{+\ii\} $ be an even lower semi-continuous function satisfying the following properties, for some constant $\kappa>0$:

\smallskip

\noindent \textnormal{(1)} $ w $ is \emph{superstable}, that is, can be written in the form
$$w=\kappa^{-1}\1(|x|\leq\kappa^{-1})+w_2$$
with $w_2$ stable:
		\begin{equation}
		\label{eq:stabilitynew}
			\sum\limits_{1 \leq j < k \leq N} w_2 \myp{x_j - x_k} \geq - \kappa N
		\end{equation}
for all $ N \in \N $ and $ x_1, \dotsc, x_N \in \R^d $;

\medskip

\noindent \textnormal{(2)} $w$ is \emph{upper and lower regular}, that is,
		\begin{equation}
			\frac{\1(|x|<r_0)}{\kappa} \left(\frac{r_0}{|x|}\right)^\alpha - \frac{\kappa}{1 + \abs{x}^s} \leq w(x)
			\leq \kappa\1(|x|<r_0)\left(\frac{r_0}{|x|}\right)^\alpha+ \frac{\kappa}{1 + \abs{x}^s}
		\label{eq:assumption_w}
		\end{equation}
		for some $ r_0 \geq 0 $, $ 0 \leq \alpha < \infty $ and $s>d$.
\end{assumption}

In words, we assume that $w$ is repulsive close to the origin and behaves like $|x|^{-\alpha}$, for some $0\leq\alpha<\ii$. Even if some of our results are also valid when $\alpha=+\ii$ (our convention is then that $(r_0/|x|)^\alpha=(+\ii)\1(|x|<r_0)$), we assume $\alpha<\ii$ for simplicity throughout the paper and only make some remarks about the hard core case $\alpha=+\ii$. At infinity, we assume that $w$ decays at least polynomially and is integrable. The definition differs sligthly from the one we used in \cite{JexLewMad-23}, where no lower bound was required since we only studied upper bounds on the free energy. Assumption~\ref{de:shortrangenew} covers most cases of physical interest such as the Lennard-Jones potential $w(x)=a|x|^{-12}-b|x|^{-6}$ for $a>0$ or the Yukawa potential $w(x)=e^{-m|x|}/|x|$, for instance.

\subsubsection{Canonical free energy}

In this subsection we define the canonical free energy $F_T[\rho]$ at given density $\rho$. Suppose that we have $ N $ particles in $ \R^d $, distributed according to some Borel probability measure $ \bP $ on $ \R^{dN} $. Since the particles are indistinguishable, we demand that the measure $ \bP $ is symmetric, that is,
\begin{equation*}
	\bP(A_{\sigma \myp{1}}\times\cdots\times A_{\sigma \myp{N}})
	= \bP(A_1\times\cdots\times A_N)
\end{equation*}
for any permutation $ \sigma $ of $ \Set{1, \dotsc, N} $, and any Borel sets $A_1,...,A_N\subset\R^d$. The one-body density of such a symmetric probability $ \bP $ equals $N$ times the first marginal of $\bP$, that is,
\begin{equation*}
	\rho_{\bP} = N\int_{\R^{d \myp{N-1}}} \id \bP \myp{\nonarg,x_2, \dotsc, x_N},
\end{equation*}
where the integration is over $ x_2,\dotsc,x_N $. Equivalently, $\rho_\bP(A)=N\bP(A\times(\R^d)^{N-1})$ for every Borel set $A$. Note the normalization convention $ \rho_{\bP}(\R^d) = N $. For a non-symmetric probability $\bP$ we define $\rho_\bP$ as the sum of the $N$ marginals.

The pairwise average interaction energy of the particles is given by
\begin{equation*}
	\cU_{N} \myp{\bP}
	= \int_{\R^{dN}} \sum\limits_{1 \leq j < k \leq N} w \myp{x_j - x_k} \id \bP \myp{x_1, \dotsc, x_N}.
\end{equation*}
It could in principle be equal to $+\ii$, but it always satisfies $\cU_{N} \myp{\bP}\geq -\kappa N$ due to the stability condition on $w$ in Assumption~\ref{de:shortrangenew}.
When considering systems at positive temperature $T>0$, it is necessary to also include the entropy of the system,
\begin{equation}
\label{eq:canentropy}
	\cS_N \myp{\bP} := - \int_{\R^{dN}} \bP \myp{x} \log \big(N! \, \bP \myp{x}\big) \id x.
\end{equation}
If $ \bP $ is not absolutely continuous with respect to the Lebesgue measure on $ \R^{dN} $, we use the convention that $\cS_N \myp{\bP}=-\ii$.
The total free energy of the system in the state $ \bP $ at temperature $ T \geq 0 $ equals
\begin{equation}
\label{eq:FFcan}
	\mathcal{F}_T \myp{\bP}
	:={} \cU_{N} \myp{\bP} - T \cS_N \myp{\bP}
	= \int_{\R^{dN}} \sum\limits_{j < k} w \myp{x_j - x_k} \id \bP \myp{x} + T \int_{\R^{dN}} \bP \log \myp{N! \, \bP }.
\end{equation}

Throughout the paper, we will mostly consider systems with a given one-body density $\rho$, which is absolutely continuous with respect to the Lebesgue measure. At $T>0$ we also assume that $\int_{\R^d}\rho|\log\rho|<\ii$. This allows us to consider the minimal energy of $ N $-particle classical systems with density $ \rho $, given by
\begin{equation}
\label{eq:Fcan}
	\boxed{
		F_T [\rho]
		:= \inf_{\rho_{\bP} = \rho} \mathcal{F}_T \myp{\bP}
	}
\end{equation}
where the infimum is taken over $ N $-particle states $ \bP $ on $ \R^{dN} $ with one-particle density $ \rho_{\bP} $ equal to $ \rho $. The number $F_T[\rho]$ turns out to be finite for all $\rho\in L^1(\R^d,\R_+)$ of integer mass such that $T\int_{\R^d}\rho|\log\rho|<\ii$, as we have proved in~\cite{JexLewMad-23} using results from optimal transport theory in~\cite{ColMarStr-19}.

\subsubsection{Grand-canonical free energy}
In the grand-canonical ensemble, the exact particle number of the system is not fixed. A state $ \bP $ is a family of symmetric $ n $-particle positive measures $ \bP_n $ on $(\R^d)^n$, so that
\begin{equation*}
	\sum_{n \geq 0} \bP_n\big((\R^d)^n\big)=1.
\end{equation*}
Here $\bP_0$ is just a number, interpreted as the probability that there is no particle at all in the system. After replacing $\bP_n$ by $\bP_n/\bP_n(\R^{dn})$, we can equivalently think that $\bP$ is a convex combination of canonical states.
The entropy of $ \bP $ is defined by
\begin{equation}
\label{eq:gcentropy}
	\cS \myp{\bP}
	:= \sum\limits_{n \geq 0} \cS_n \myp{\bP_n}
	= -\bP_0\log(\bP_0)- \sum\limits_{n \geq 1} \int_{\R^{dn}} \bP_n \log \myp{n! \, \bP_n },
\end{equation}
and the single particle density of the state $ \bP $ is
\begin{equation*}
	\rho_{\bP} = \sum\limits_{n \geq 1} \rho_{\bP_n}=\sum_{n\geq1}n\int_{(\R^d)^n}\rd\bP_n(\nonarg,x_2,\dotsc,x_n).
\end{equation*}
Its integral gives the average number of particles in the system:
\begin{equation}
 \int_{\R^d}\rho_\bP(x)\,\rd x=\sum\limits_{n \geq 1} n\, \bP_n\big(\R^{dn}\big)=:\cN(\bP).
 \label{eq:average_N}
\end{equation}
The grand-canonical free energy of the state $ \bP $ at temperature $ T \geq 0 $  is
\begin{equation}
\label{eq:FFgc}
	\mathcal{G}_T \myp{\bP}
	:= \cU \myp{\bP} - T \cS \myp{\bP}, 
\end{equation}
where $ \cU \myp{\bP} $ denotes the interaction energy in the state $ \bP $,
\begin{equation}
\label{eq:gcinteraction}
	\cU \myp{\bP}
	:= \sum\limits_{n \geq 2} \cU_{n} \myp{\bP_n}
	= \sum\limits_{n \geq 2} \int_{\R^{dn}} \sum\limits_{j<k}^n w \myp{x_j - x_k} \id \bP_n \myp{x_1,...,x_N}.
\end{equation}
From the stability of $w$ we have
$$\cU_{n} \myp{\bP_n}\geq -\kappa n\,\bP_n(\R^{dn})$$
so that, after summing over $n$,
$$	\cU \myp{\bP}\geq -\kappa \int_{\R^d}\rho_\bP(x)\,\dx.$$
When keeping the one-particle density $ \rho = \rho_{\bP} \in L^1 \myp{\R^d} $ fixed, we denote the minimal grand-canonical free energy by
\begin{equation}
\label{eq:gcenergydensity}
\boxed{
	G_T [\rho]
	:= \inf_{\rho_{\bP} = \rho} \mathcal{G}_T \myp{\bP}.
}
\end{equation}
When $\int_{\R^d}\rho$ is an integer, it is clear that $F_T[\rho]\geq G_T[\rho]$. The functional $\rho\mapsto G_T[\rho]$ was studied in~\cite{MarLewNen-22_ppt}. It is a kind of weak--$\ast$ convex envelope of $\rho\mapsto F_T[\rho]$.

\subsection{The thermodynamic limit}
Here we introduce the free energy per unit volume $f_T$, which is obtained when the system is placed at equilibrium in a large domain $\Omega_N\subset\R^d$ which then grows so as to cover the whole space, without any pointwise constraint on the density $\rho(x)$.

First we discuss how the domain is allowed to grow. Let $ \Omega_N \subseteq \R^d $ be a sequence of bounded, connected domains with $ \abs{\Omega_N} \to \infty $. The sequence $ \Omega_N $ is said to have a uniformly regular boundary~\cite{Fisher-64,HaiLewSol_1-09} if there exists a $ t_0 > 0 $ such that
	\begin{equation}
	\label{eq:regularity}
		\abs[\big]{\Set[\big]{x \in \R^d \mid \rd \myp{x,\partial \Omega_N} \leq \abs{\Omega_N}^{\frac{1}{d}} t}}
		\leq \abs{\Omega_N} \frac{t}{t_0}, \For{for all $ t \in \myb{0,t_0} $}.
	\end{equation}
This assumption can be weakened in many ways but it already covers any rescaled sequence $\Omega_N=N^{1/d}\Omega$ with $\partial\Omega$ sufficiently smooth (by parts), such as cubes or balls, or in fact any convex set~\cite{HaiLewSol_1-09}. 	For any sequence $(\Omega_N)$ with a uniformly regular boundary and satisfying $ N / \abs{\Omega_N} \to \rho_0 \geq0$, it is well known~\cite{Ruelle} that the thermodynamic limit exists,
	\begin{equation}
	\label{eq:fcan}
		f_T \myp{\rho_0} := \lim_{\substack{N \to \infty \\ \frac{N}{|\Omega_N|}\to \rho_0}} |\Omega_N|^{-1}\min_{\bP\in\cP_s(\Omega_N^N)}\cF_T(\bP)=\lim_{\substack{N \to \infty \\ \frac{N}{|\Omega_N|} \to \rho_0}} |\Omega_N|^{-1}\min_{\cN(\bP)=N}\cG_T(\bP),
	\end{equation}
	and is independent on the sequence of domains $ \myp{\Omega_N} $. The number $f_T(\rho_0)$ is interpreted as the free energy per unit volume at temperature $T$ and density $\rho_0$ of an infinite gas in equilibrium. The first minimum in~\eqref{eq:fcan} is over all the symmetric probability measures $\bP$ on $(\Omega_N)^N$. Recall that the free energy $\cF_T(\bP)$ was defined in~\eqref{eq:FFcan}. The second minimum is over all grand-canonical probabilities $\bP=(\bP_n)_{n\geq0}$ supported on $\Omega_N$, which have the average number of particles $\cN(\bP)$ equal to the given $N$. In other words, we get the same limit by fixing the number of particles exactly, or by only fixing its average value.
	Let us insist on the fact that the limit~\eqref{eq:fcan} holds without any pointwise constraint on the density function $\rho(x)$ of the system. One can in fact rewrite it as
	\begin{equation}
	\label{eq:fcan_rho}
		f_T \myp{\rho_0} = \lim_{\substack{N \to \infty \\ \frac{N}{|\Omega_N|} \to \rho_0}} |\Omega_N|^{-1}\min_{\substack{\rho\in L^1(\Omega_N)\\ \int_{\Omega_N}\rho=N}}F_T[\rho]		=\lim_{\substack{N \to \infty \\ \frac{N}{|\Omega_N|} \to \rho_0}} |\Omega_N|^{-1}\min_{\substack{\rho\in L^1(\Omega_N)\\ \int_{\Omega_N}\rho=N}}G_T[\rho]
	\end{equation}
since minimizing $F_T[\rho]$ (resp. $G_T[\rho]$) over $\rho$ is the same as minimizing $\cF_T[\bP]$ (resp. $\cG_T[\bP]$) over all possible $\bP$.

The thermodynamic limit function $\rho_0\mapsto  f_T(\rho_0) $ is known to be convex (and $C^1$ if $T>0$)~\cite{Ruelle-70}. Its derivative is the chemical potential $\mu=f'_T(\rho_0)$ and its Legendre transform is the grand-canonical free energy in this chemical potential, defined through the following limit
	\begin{align}
	g_T \myp{\mu} &:= \lim_{n \to \infty}  |\Omega_n|^{-1}\min_{\bP}\big\{\cG_T(\bP)-\mu \,\cN(\bP)\big\}\nn\\
		&=\lim_{n \to \infty}  |\Omega_n|^{-1}\min_{\rho\in L^1(\Omega_n)}\left\{G_T[\rho]-\mu\int_{\Omega_n}\rho\right\},\label{eq:fgc}
	\end{align}
	again independent on the sequence of domains $ \myp{\Omega_n} $. The thermodynamic limit function $ g_T $ is concave (strictly concave when $T > 0$), non-positive, and continuous in the chemical potential $ \mu $. It is the Legendre transform of $f_T$, that is,
	\begin{equation}
		f_T \myp{\rho_0} = \sup_{\mu \in \R} \Set{\mu \rho_0 + g_T \myp{\mu}}, \qquad
		g_T \myp{\mu} = \inf_{\rho_0 \geq 0} \Set{ f_T \myp{\rho_0}-\mu \rho_0}.
	\label{eq:equiv}
		\end{equation}

\subsection{The local density approximation}

The free energies $F_T[\rho]$ and $G_T[\rho]$ are highly nonlinear and nonlocal functionals of the density function $\rho$. The Local Density Approximation (LDA) consists in replacing them by the simpler local functional
$$\rho\mapsto \int_{\R^d}f_T\big(\rho(x)\big)\,\rd x$$
where $f_T$ is the free energy per unit volume of an infinite gas defined previously in~\eqref{eq:fcan} and~\eqref{eq:fgc}.
Our goal is to justify this approximation when $\rho$ varies sufficiently slowly over some large length scale. To measure the amplitude of the variations of $\rho$, we introduce the function
\begin{equation}
\boxed{\delta\rho_\ell(z):=\sup_{x,y\in z+C_\ell}\frac{|\rho(x)-\rho(y)|}{\ell}}
\label{eq:def_delta_rho}
\end{equation}
where $C_\ell:=[-\ell/2,\ell/2]^d$ is the cube of side length $\ell$ centered at the origin and `sup' means the essential supremum (that is, up to sets of zero Lebesgue measure). The function~\eqref{eq:def_delta_rho} measures the variations of $\rho$ at distance of order~$\ell$ of a point $z\in\R^d$ and it is a kind of smeared derivative. When $\rho$ is constant over the whole space we just get $\delta\rho_\ell\equiv0$. When $\rho$ is constant over a large domain $\Omega$ and vanishes outside, we get $\delta\rho_\ell\equiv0$ except at distance $\ell$ from the boundary $\partial\Omega$. The following is the main result of this paper.

\begin{theorem}[Local Density Approximation]
\label{thm:lda}
	Let $ M > 0 $, $ p \geq 1 $, $T\geq0$ and
	\begin{equation}
	\label{eq:b_cond0}
		b > \begin{cases}
			2 - \frac{1}{2p} & \text{if } p \geq 2, \\
			\frac{3}{2}+\frac{1}{2p} & \text{if } 1 \leq p < 2.
		\end{cases}
	\end{equation}
	Let $ w $ be a short-range interaction satisfying \cref{de:shortrangenew} with $s>d+1$. There exists a constant $ C > 0 $ depending on $ M,T,w,d,p,b$, such that
	\begin{equation}
		\abs[\bigg]{G_T \myb{\rho} - \int_{\R^d} f_T \myp{\rho \myp{x}} \id x}
		\leq  \frac{C}{\sqrt{\ell}} \myp[\bigg]{\int_{\R^d} \sqrt{\rho} + \ell^{bp} \int_{\R^d} \delta\rho_\ell(z)^p\,\rd z},
	\label{eq:lda}
	\end{equation}
	for any $ \ell > 0 $, and any density $ \rho\geq 0$ such that $\sqrt\rho\in (L^1\cap L^\ii)(\R^d)$ with $\normt{\rho}{\infty} \leq M $.
\end{theorem}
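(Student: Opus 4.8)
The plan is to prove separately the two inequalities $G_T[\rho]\le\int_{\R^d}f_T(\rho)+\mathrm{Err}[\rho]$ and $G_T[\rho]\ge\int_{\R^d}f_T(\rho)-\mathrm{Err}[\rho]$. Both arguments are carried out at an auxiliary length $L$: one tiles $\R^d$ by a lattice of cubes $Q_j$ of side $L$, writes $\bar\rho_j:=|Q_j|^{-1}\int_{Q_j}\rho$, and only at the very end chooses $L$ as a suitable power of $\ell$ — the condition \eqref{eq:b_cond0} on $b$ being exactly what makes that last choice feasible. Two ingredients proved earlier are used throughout. First, the identification of the thermodynamic limit at fixed \emph{uniform} density with $f_T$ together with its quantitative rate (Sections~\ref{sec:thermolim}--\ref{sec:gcconvrate}): because $s>d+1$ the interface energy is finite, so the rate is of genuine surface order, and after crudely majorizing the (vanishing as $\rho_0\to0$) density dependence of the surface constant one gets $\bigl|G_T[\rho_0\1_Q]-|Q|\,f_T(\rho_0)\bigr|\lesssim L^{d-1}\sqrt{\rho_0}$ for a cube $Q$ of side $L$. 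Second, the quantitative Ruelle bounds of Appendix~\ref{app:Ruelle}, which control the one- and two-point densities of any near-optimal state and, combined with the a priori constraint $\|\rho\|_\infty\le M$, give a pointwise estimate of the form $\rho^{(2)}_\bP(x,y)\le C(M)\,\rho(x)\rho(y)$ (with an extra factor taming the repulsive core).

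\emph{Upper bound.} I would use a trial state of product form over the tiling: on each $Q_j$ a near-optimal grand-canonical equilibrium state of the constant density $\bar\rho_j$; tensor over $j$; then adjust the one-body density on each cube from $\bar\rho_j\1_{Q_j}$ to $\rho|_{Q_j}$ using the density-representability techniques of~\cite{JexLewMad-23} (the correction has size $\le L\,\delta\rho_L$ on $Q_j$, hence is cheap). Its free energy equals $\sum_j|Q_j|\,f_T(\bar\rho_j)$ up to the per-cube convergence-rate error, the interaction between distinct cubes (bounded above by $|w(x)|\le\kappa(1+|x|^s)^{-1}$ at long range together with the Ruelle two-point bound), and the cost of the density correction; and by convexity of $f_T$ one has $\sum_j|Q_j|\,f_T(\bar\rho_j)\le\int_{\R^d}f_T(\rho)$ with no further loss. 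This makes the upper bound the easier direction; alternatively one may simply invoke the local upper bound of~\cite{JexLewMad-23} and re-expand it at scale $L$.

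\emph{Lower bound.} Take a near-optimal state $\bP$ for $G_T[\rho]$ and localize it to the tiling. Subadditivity of the entropy under restriction to the $Q_j$ gives $\cS(\bP)\le\sum_j\cS(\bP_{Q_j})$, while the interaction splits as $\cU(\bP)=\sum_j\cU(\bP_{Q_j})+\cU_{\mathrm{cross}}$, so that $\cG_T(\bP)\ge\sum_j G_T[\rho\1_{Q_j}]+\cU_{\mathrm{cross}}$. The cross term is bounded from below using $w(x)\ge-\kappa(1+|x|^s)^{-1}$ and the Ruelle bound $\rho^{(2)}_\bP\le C(M)\rho\otimes\rho$, by splitting pairs $(x,y)$ according to whether $|x-y|$ lies below or above an intermediate scale $\lambda$ with $L^{d/s}\ll\lambda\ll L$: the near part is supported in the boundary shells of the cubes (total volume $\lesssim(\lambda/L)\sum_j|Q_j|$) and the far part is controlled by $\int_{|z|>\lambda}|z|^{-s}\,dz\lesssim\lambda^{d-s}$, giving $\cU_{\mathrm{cross}}\gtrsim-C(M)\bigl(\lambda/L+L^{d}\lambda^{-s}\bigr)\int_{\R^d}\rho$, which is small as soon as $L\to\infty$ (here $s>d$ suffices). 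Next, since fixing the density can only raise the infimum above its value at fixed mass $\int_{Q_j}\rho$, the convergence rate gives $G_T[\rho\1_{Q_j}]\ge|Q_j|\,f_T(\bar\rho_j)$ minus the surface error. Finally one must return from $\sum_j|Q_j|\,f_T(\bar\rho_j)$ to $\int_{\R^d}f_T(\rho)$, where convexity runs the wrong way and one pays the Jensen gap $\int_{Q_j}f_T(\rho)-|Q_j|f_T(\bar\rho_j)$; this is estimated by splitting the cubes according to the sizes of $\bar\rho_j$ and of $L\,\delta\rho_L(z_j)$, using that $f_T$ is Lipschitz away from the vacuum while near $\rho_0=0$ it is governed by the ideal-gas term $T\rho(\log\rho-1)$ and is only $\rho\log\rho$-continuous, so the gap is $\lesssim|Q_j|\,L\,\delta\rho_L(z_j)\,(1+|\log\bar\rho_j|+|\log L\delta\rho_L(z_j)|)$ on the cubes of small variation and $\le|Q_j|\sup_{[0,M]}|f_T|$, charged to $\delta\rho_L(z_j)^p$, on the cubes of large variation. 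Summing the cube-wise bounds, comparing $\delta\rho_L$ with $\delta\rho_\ell$ across the two scales, and applying Hölder with exponent $p$ reassembles the $\ell^{bp}\int_{\R^d}\delta\rho_\ell^p$ contribution, while the surface-order and low-density logarithmic pieces collapse into the $\int_{\R^d}\sqrt\rho$ term.

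Collecting the three error sources — convergence rate, cross-interaction, and Jensen gap — as explicit functions of $\lambda$, $L$ and the fixed scale $\ell$, optimizing over $\lambda$ and choosing $L$ as a power of $\ell$ yields \eqref{eq:lda}, the two ranges of $p$ in \eqref{eq:b_cond0} reflecting which error is binding. I expect the main difficulty to be the lower bound, and within it the simultaneous control of the inter-cube interaction — which forces the stronger decay $s>d+1$ (already needed for the surface tension, hence the convergence rate, to be of the right order) and relies essentially on making Ruelle's bounds quantitative — and of the Jensen gap of $f_T\circ\rho$ near the vacuum, where $f_T$ loses Lipschitz regularity; it is the balance of these two against the thermodynamic-limit rate that dictates the admissible exponents $b$.
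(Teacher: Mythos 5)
Your upper bound is essentially the paper's strategy (tile, tensorize local equilibrium states, compare the per-cube energy to $f_T$ of a constant via sub-additivity, split cubes according to whether $\delta\rho_\ell$ is small, absorb the inter-cube interaction using $\rho\leq M$), but your lower bound contains the gap that the paper identifies as \emph{the} main difficulty of the whole theorem. You localize a near-optimal state $\bP$ for the constrained problem $G_T[\rho]$ and control the cross-interaction by a claimed bound $\rho^{(2)}_{\bP}\leq C(M)\,\rho\otimes\rho$ "from the quantitative Ruelle bounds of Appendix~\ref{app:Ruelle}". No such bound is available: the Ruelle estimates (Theorem~\ref{thm:Ruelle} and Corollaries~\ref{cor:Ruelle_polynomial}, \ref{thm:Ruelle_V}) apply to \emph{Gibbs states} of an explicit Hamiltonian with a chemical potential or an external potential bounded below, not to an arbitrary (or optimal) state with prescribed one-body density. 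The minimizer of $G_T[\rho]$ is a Gibbs state only for an unknown dual potential, about which one has no a priori lower bound, and its local correlations are not controlled by $\|\rho\|_\infty$. This is precisely why the paper does not localize the constrained minimizer: it passes to Legendre--Fenchel duality, writing $G_T[\rho]\geq\min_{\bP}\{\cF_T(\bP)+\int V\rho_\bP\}-\int V\rho$ for an explicit piecewise potential $V\approx-f_T'(\rho)$ built on the tiling (constant on the cubes where $\rho$ varies little), so that the state to be localized is a genuine Gibbs state in the external potential $V$, to which Corollary~\ref{thm:Ruelle_V} applies. Without this (or some substitute correlation bound for constrained minimizers), your cross-term estimate does not get off the ground.

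Two secondary points. First, your per-cube input $\bigl|G_T[\rho_0\1_Q]-|Q|f_T(\rho_0)\bigr|\lesssim L^{d-1}\sqrt{\rho_0}$ is stronger than what the paper establishes at fixed density: Proposition~\ref{prop:gcconvrate_fixed} only gives an error of order $L^{d-1/2}$ (the sliding-average construction loses half a power); the genuine surface rate $L^{d-1}$ is proved only for the unconstrained grand-canonical energy $G_T(\mu,C_L)$. Since the final error in \eqref{eq:lda} is $\ell^{-1/2}$, this discrepancy feeds directly into the admissible exponents $b$, and your optimization would have to be redone with the weaker rate. Second, in the upper bound you tensorize states on adjacent cubes without corridors; when $\alpha\geq d$ the core of $w$ is not integrable and the cross-interaction $D_w(\rho_j,\rho_m)$ of neighbouring cubes can be infinite, so the corridors of width $\geq r_0$ (as in the paper's choice of $\chi$ supported strictly inside each cube) are not optional.
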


The proof of Theorem~\ref{thm:lda} is provided in Section~\ref{sec:lda}. Under the stated assumptions on $ \rho $, we also have $ \rho \log \rho \in L^1 \myp{\R^d} $, since
\begin{equation*}
	\int_{\R^d} \rho \abs{\log \rho}
	\leq \normt{\sqrt{\rho} \log \rho}{\infty} \int_{\R^d} \sqrt{\rho}.
\end{equation*}
In particular, by the \emph{a priori} bounds on $ G_T $ and $ f_T $ proved in~\cite{JexLewMad-23} and recalled in \cref{sec:estimates} below, the quantities appearing on the left hand side of \eqref{eq:lda} are all finite.

Theorem~\ref{thm:lda} states that the grand-canonical free energy functional can be approximated by the LDA functional $\rho\mapsto \int f_T(\rho(x))\,\rd x$, whenever the variations of $\rho$ are much smaller than the values of $\rho$, in the average sense
$$\int_{\R^d} \delta\rho_\ell(z)^p\,\rd z\ll \int_{\R^d}\sqrt\rho,$$
for some global large length $\ell\gg1$. We emphasize that our estimate~\eqref{eq:lda} depends on the $L^\ii$ norm of $\rho$ through the parameter $M$. From the universal bounds proved in~\cite{JexLewMad-23} and recalled in Section~\ref{sec:estimates}, it seems natural to expect a similar estimate without any constraint on $\|\rho\|_{L^\ii}$ and involving
$$\int_{\R^d}\rho+\rho^{\max(2,1+\alpha/d)}+T\rho(\log\rho)_-$$
in place of $\int_{\R^d}\sqrt\rho$ on the right side. Here we used the much stronger $L^1$ and $L^\ii$ norms of $\sqrt\rho$ to control errors.
We crucially use the boundedness of $\rho$ and, unfortunately, our proof provides a constant $C$ which diverges exponentially fast in the parameter $M$. Nevertheless, we believe that our assumption $\sqrt{\rho}\in (L^1\cap L^\ii)(\R^d)$ is reasonable in most situations of physical interest.

Let us emphasize the condition $s>d+1$ on the decay at infinity of $w$, which we have only used to simplify the statement. The powers of $\ell$ and the condition on $b$ are slightly different for $d<s\leq d+1$. This is all explained later in Propositions~\ref{thm:ldalow} and~\ref{thm:ldaup}.

If we rescale a function $\rho$ in the manner $\rho'(x)=\rho(hx)$ then we have the scaling relation $\delta\rho'_\ell(z)=h\delta\rho_{h\ell}(hz)$. Hence, applying~\eqref{eq:lda} to $\rho_N(x):=\rho(N^{-1/d}x)$ with a smooth~$\rho$, we obtain after taking $\ell=N^{\frac{1}{bd}}$ and $p=2$
$$G_T \Big[\rho(N^{-\frac1d}\cdot)\Big]=N\int_{\R^d} f_T \myp{\rho \myp{x}} \id x+O\!\left(N^{1-\frac{1}{2bd}}\right),\qquad\forall b>\frac{7}{4}.$$
To be more precise, we need here that $\limsup_{h\to0}\int_{\R^d}\delta\rho_h(z)^2\,\rd z<\ii$, which is the case if for instance $\rho$ is $C^1$ with compact support.

Next we mention two easy corollaries of Theorem~\ref{thm:lda}. The first is when we measure the variations of $\rho$ using derivatives instead of the function $\delta\rho_\ell$. In the Coulomb case, this point of view goes back to~\cite{LewLieSei-20}.

\begin{corollary}
\label{cor:lda}
	Suppose, in addition to the assumptions in \cref{thm:lda}, that $ p > d $ and $ \nabla \rho \in L^p \myp{\R^d} $.
	Then we have
	\begin{equation}
		\abs[\bigg]{G_T \myb{\rho} - \int_{\R^d} f_T \myp{\rho \myp{x}} \id x}
		\leq C \eps \left(\int_{\R^d} \sqrt{\rho} + \frac{1}{\eps^{2bp}} \int_{\R^d} \abs{\nabla \rho}^p \right)
		\label{eq:ldalow_cor}
	\end{equation}
for any $ \eps > 0 $.
	\end{corollary}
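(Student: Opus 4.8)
The plan is to derive the estimate directly from \cref{thm:lda}: the only genuinely new point is to bound the smeared derivative $\delta\rho_\ell$ in terms of the honest gradient $\nabla\rho$, which is possible in the regime $p>d$ by Morrey's embedding, after which one simply optimizes over the free length scale $\ell$.

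First I would establish a pointwise estimate on $\delta\rho_\ell$. Since $p>d$ and $\nabla\rho\in L^p(\R^d)$, the density $\rho$ admits a H\"older continuous representative, so that the essential supremum in~\eqref{eq:def_delta_rho} is an honest supremum. The classical local Morrey inequality on a cube asserts that, for any cube $Q$ of side length $\ell$,
\[
	\sup_{x,y\in Q}|\rho(x)-\rho(y)|\leq C_{d,p}\,\ell^{1-\frac dp}\left(\int_Q|\nabla\rho|^p\right)^{1/p},
\]
with $C_{d,p}$ depending only on $d$ and $p$. Taking $Q=z+C_\ell$ and dividing by $\ell$ gives the pointwise bound $\delta\rho_\ell(z)\leq C_{d,p}\,\ell^{-d/p}\left(\int_{z+C_\ell}|\nabla\rho|^p\right)^{1/p}$ for every $z\in\R^d$.

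Next I would integrate this in $z$. Raising the bound to the power $p$ and applying Tonelli's theorem together with the elementary identity $\int_{\R^d}\1(x\in z+C_\ell)\,\rd z=\ell^d$, one obtains
\[
	\int_{\R^d}\delta\rho_\ell(z)^p\,\rd z
	\leq C_{d,p}^p\,\ell^{-d}\int_{\R^d}\left(\int_{\R^d}\1(x\in z+C_\ell)\,\rd z\right)|\nabla\rho(x)|^p\,\rd x
	= C_{d,p}^p\int_{\R^d}|\nabla\rho|^p .
\]
Inserting this into~\eqref{eq:lda} shows that its right-hand side is at most $\frac{C}{\sqrt\ell}\left(\int_{\R^d}\sqrt\rho+C_{d,p}^p\,\ell^{bp}\int_{\R^d}|\nabla\rho|^p\right)$, for every $\ell>0$. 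Finally, choosing $\ell=\eps^{-2}$, so that $\ell^{-1/2}=\eps$ and $\ell^{bp}=\eps^{-2bp}$, and absorbing the dimensional constant $C_{d,p}^p$ into $C$ — permissible, since $C$ may already depend on $d$ and $p$ — yields exactly~\eqref{eq:ldalow_cor}.

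I do not expect any real obstacle here: the single nontrivial input is the local Morrey estimate on a cube, which is entirely classical — it follows, for instance, from the pointwise Riesz-potential representation of $\rho-\rho_Q$ together with the $L^p$-boundedness of the Riesz potential $I_1$ for $p>d$, or from the Sobolev--Poincar\'e inequality. The hypothesis $p>d$ in the corollary is precisely what makes this embedding available; for $p\leq d$ one would need a genuinely different (and less elementary) argument to pass from $\delta\rho_\ell$ to $\nabla\rho$.
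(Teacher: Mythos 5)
Your argument is correct and is essentially identical to the paper's own proof: both rest on the local Morrey inequality in a cube to obtain the pointwise bound $\delta\rho_\ell(z)^p\leq C\ell^{-d}\int_{z+C_\ell}|\nabla\rho|^p$, integrate in $z$ via Tonelli to get $\int_{\R^d}\delta\rho_\ell^p\leq C\int_{\R^d}|\nabla\rho|^p$, and then set $\eps=\ell^{-1/2}$ in~\eqref{eq:lda}. No gaps.
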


\begin{proof}
Morrey's inequality~\cite[Proof of Lem.~4.28]{AdamsFournier} in a cube $Q$ states that
	\begin{equation*}
		\abs{\rho(x)-\rho(y)}
		\leq{} K^{\frac1p}\abs{x-y}^{1-\frac{d}{p}} \myp[\Big]{\int_{Q} \abs{\nabla \rho}^p}^{\frac{1}{p}},
		\qquad x,y \in Q,
	\end{equation*}
for $p>d$ with $K$ independent of the location of the cubes. After scaling, this implies the pointwise bound
	\begin{equation*}
\delta\rho_\ell(z)^p\leq \frac{K}{\ell^d}\int_{z+C_\ell} \abs{\nabla \rho}^p,\qquad\forall z\in\R^d.
	\end{equation*}
Hence we obtain after integration
	\begin{equation*}
\int_{\R^d}\delta\rho_\ell(z)^p\,\rd z\leq \frac{K}{\ell^d}\int_{\R^d}\left(\int_{z+C_\ell} \abs{\nabla \rho}^p\right)\rd z=K\int_{\R^d} \abs{\nabla \rho}^p.
	\end{equation*}
The bound \eqref{eq:ldalow_cor} thus follows from \eqref{eq:lda} with $ \eps=\ell^{-1/2}$.
\end{proof}

In some practical situations, it is useful to have the variations of $\rho$ expressed using $\delta\rho_\ell$ instead of derivatives as in Corollary~\ref{cor:lda}.
An example is when $\rho$ is constant over a large domain.

\begin{corollary}[Thermodynamic limit at constant density]
\label{cor:thermo-limit}
	Let $ w $ be a short-range interaction satisfying \cref{de:shortrangenew}.
	For any $\rho_0,>0$, $T\geq0$, and $r>1$ we have
	\begin{equation}
		\Big|G_T \big[\rho_0\1_\Omega\big] - |\Omega|\,f_T \myp{\rho_0}\Big|\leq C|\Omega|^{1-\frac{1}{4dr}}
	\label{eq:lda_cnst}
	\end{equation}
	for any large-enough domain $\Omega$ with a regular boundary as in~\eqref{eq:regularity}.
	The constant $ C > 0 $ depends on $ \rho_0,T, r, w $ as well as on the parameter $t_0$ in~\eqref{eq:regularity}.
\end{corollary}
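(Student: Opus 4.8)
The plan is to apply \cref{thm:lda} directly to $\rho=\rho_0\1_\Omega$ with the H\"older exponent $p=1$, and then to optimise over the coarse-graining scale $\ell$. First I would verify the hypotheses: $\sqrt{\rho}=\sqrt{\rho_0}\,\1_\Omega\in(L^1\cap L^\infty)(\R^d)$ with $\normt{\rho}{\infty}=\rho_0=:M$ and $\int_{\R^d}\sqrt{\rho}=\sqrt{\rho_0}\,|\Omega|$, and that for $p=1$ the constraint \eqref{eq:b_cond0} simply becomes $b>2$. Since $r>1$ the interval $(2,2r)$ is nonempty, so I fix any $b\in(2,2r)$; this choice is exactly what will produce the gain $\tfrac1{4dr}$ at the end. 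Note that \cref{cor:lda} is useless here because $\1_\Omega\notin W^{1,p}$, so one really has to work with $\delta\rho_\ell$ itself.

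The key geometric step is to bound $\int_{\R^d}\delta\rho_\ell(z)\,\rd z$. Since $\rho$ only takes the two values $\rho_0$ and $0$, one has $\delta\rho_\ell(z)=0$ whenever $\rho$ is a.e.\ constant on the cube $z+C_\ell$ — i.e.\ whenever that cube does not meet both $\Omega$ and its complement in positive measure — and $\delta\rho_\ell(z)\leq\rho_0/\ell$ otherwise. A connected cube meeting both $\Omega$ and $\Omega^c$ must contain a point of $\partial\Omega$, so $\{z:\delta\rho_\ell(z)\neq0\}$ is contained in the $\tfrac{\sqrt d}{2}\ell$-neighbourhood of $\partial\Omega$. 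Inserting this into the boundary-regularity assumption \eqref{eq:regularity} with $t=\tfrac{\sqrt d\,\ell}{2|\Omega|^{1/d}}$ — which is admissible as soon as $\ell\leq\tfrac{2t_0}{\sqrt d}|\Omega|^{1/d}$ — yields
\[
	\bigl|\{z:\delta\rho_\ell(z)\neq0\}\bigr|\leq\frac{\sqrt d}{2t_0}\,\ell\,|\Omega|^{1-\frac1d},
	\qquad\text{hence}\qquad
	\int_{\R^d}\delta\rho_\ell(z)\,\rd z\leq\frac{\rho_0\sqrt d}{2t_0}\,|\Omega|^{1-\frac1d},
\]
the last bound being, conveniently, independent of $\ell$.

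Plugging the two estimates into \eqref{eq:lda} with $p=1$ gives
\[
	\Bigl|G_T[\rho_0\1_\Omega]-|\Omega|\,f_T(\rho_0)\Bigr|
	\leq C\Bigl(\ell^{-1/2}|\Omega|+\ell^{\,b-1/2}|\Omega|^{1-\frac1d}\Bigr),
\]
with $C$ depending on $\rho_0,T,w,d,b,t_0$. It then remains to optimise: the choice $\ell=|\Omega|^{1/(bd)}$ balances the two terms and turns the right-hand side into $C|\Omega|^{1-\frac1{2bd}}$; this $\ell$ satisfies the constraint $\ell\leq\tfrac{2t_0}{\sqrt d}|\Omega|^{1/d}$ once $|\Omega|$ is large enough, because $1/(bd)<1/d$. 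Finally $b<2r$ forces $\tfrac1{2bd}>\tfrac1{4dr}$, so for $|\Omega|\geq1$ the bound is at most $C|\Omega|^{1-\frac1{4dr}}$, which is \eqref{eq:lda_cnst}.

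I expect no deep obstacle: the delicate point is simply the measure estimate of the boundary layer $\{\delta\rho_\ell\neq0\}$ via \eqref{eq:regularity}, which is what forces $\ell\lesssim|\Omega|^{1/d}$ and hence restricts the statement to large-enough $\Omega$, together with the bookkeeping ensuring that $b$ and $\ell$ conspire to give the exponent $1-\tfrac1{4dr}$. One caveat is that \cref{thm:lda} as stated requires $s>d+1$; for $d<s\leq d+1$ the same argument applies verbatim using the sharper LDA bounds proved later in the paper, with correspondingly adjusted powers of $\ell$.
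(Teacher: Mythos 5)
Your proof is correct and follows essentially the same route as the paper's: bound the support of $\delta\rho_\ell$ by the $\sqrt{d}\,\ell/2$-neighbourhood of $\partial\Omega$ via the regularity assumption \eqref{eq:regularity}, insert this into \eqref{eq:lda}, and choose $\ell^{bp+1-p}=|\Omega|^{1/d}$; your specialization to $p=1$ with $b\in(2,2r)$ is exactly where the paper's minimization of the power $bp+1-p$ under \eqref{eq:b_cond0} lands. The caveat you raise about $d<s\leq d+1$ applies equally to the paper's own argument and is not a gap on your side.
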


\begin{proof}
The function $\delta\rho_\ell$ vanishes everywhere except at a distance $\sqrt{d}\ell/2$ of $\partial\Omega$, where it is bounded above by $\rho_0/\ell$. Thus, we have from the regularity~\eqref{eq:regularity} of the boundary
$$\int_{\R^d} \delta\rho_\ell(z)^p\,\rd z\leq \frac{\rho_0^p}{\ell^p} \Big|\big\{x \in \R^d \mid \rd \myp{x,\partial \Omega}\leq\sqrt{d}\ell/2\big\}\Big|\leq C|\Omega|^{1-\frac1d}\ell^{1-p},$$
provided that $\ell|\Omega|^{-1/d}\leq 2t_0/\sqrt{d}$. Optimizing over $\ell$ leads us to the choice
$$\ell^{bp+1-p}=|\Omega|^{\frac1d}$$
with $ b $ satisfying \eqref{eq:b_cond0}, which is allowed provided that $|\Omega|$ is large enough compared with $t_0/\sqrt d$.
Since we can choose any $p\geq 1$, this can be stated as in~\eqref{eq:lda_cnst} by minimizing the power $bp+1-p$.
\end{proof}

Corollary~\ref{cor:thermo-limit} means that we get the same thermodynamic limit as in~\eqref{eq:fcan} when we enforce the constraint that the density is constant everywhere instead of just fixing the average total number of particles per unit volume. This is of course a consequence of the translation-invariance of the problem. For Coulomb and Riesz gases, the similar property is not at all obvious and was recently proved  in~\cite{CotPet-19b,LewLieSei-19b,Lauritsen-21}.
Although we have stated~\eqref{eq:lda_cnst} as a corollary, our proof of Theorem~\ref{thm:lda} in fact goes by considering the case of constant densities first.
In \cref{prop:gcconvrate_fixed} we give a direct proof in the case of \emph{cubes}, which even provides a better estimate than \eqref{eq:lda_cnst}, with the right hand side instead behaving like $\abs{\Omega}^{1-\frac{1}{2d}}$ for large $\abs{\Omega}$.
A result similar to Corollary~\ref{cor:thermo-limit} holds in the case of several phases of different densities (e.g. over two half spaces).

\begin{remark}[Cluster expansions]
In the recent paper~\cite{JanKunTsa-22}, the dual potential $V$ (whose equilibrium Gibbs state has density $\rho$) is expressed as a convergent series in $\rho$, under the assumption that $\|\rho\|_{L^\ii}$ is small enough. One can then express $G_T[\rho]$ as a convergent series in $\rho$, and thereby probably obtain bounds better than~\eqref{eq:lda} and~\eqref{eq:ldalow_cor} for $M$ small enough. We insist that our result is valid for all possible values of $\|\rho\|_{L^\ii}$. The latter only appears implicitly in the constant $C$.
\end{remark}

We can deduce from~\eqref{eq:lda} a lower bound on the canonical free energy $F_T[\rho]$ using the fact that $F_T[\rho]\geq G_T[\rho]$ whenever $\int_{\R^d}\rho\in\N$. We expect an upper bound similar to~\eqref{eq:lda} but it is always harder to construct good trial states in the canonical case. Our paper will nevertheless contain several intermediate results valid for $F_T[\rho]$. In particular, we are able to prove the existence of the thermodynamic limit at constant density in the canonical case.

\begin{theorem}[Thermodynamic limit at constant density, canonical case]\label{thm:thermolim}
Let $ w $ be a short-range interaction satisfying \cref{de:shortrangenew}.
Let $\rho_0>0$. Suppose that $ \Omega_N \subseteq \R^d $ is a sequence of bounded connected domains with uniformly regular boundaries as in~\eqref{eq:regularity} for some $t_0$, and such that $ \abs{\Omega_N} \to \infty $ and $ \rho_0 \abs{\Omega_N} \in \mathbb{N} $ for all $ N $. Then we have
	\begin{equation}
		\lim_{N\to\ii}\frac{F_T \big[\rho_0\1_{\Omega_N}\big]}{|\Omega_N|}=f_T \myp{\rho_0}
	\label{eq:lda_cnst_can}
	\end{equation}
for any $ T \geq 0 $.
\end{theorem}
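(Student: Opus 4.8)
The plan is to establish the two bounds $\liminf_N |\Omega_N|^{-1} F_T[\rho_0\1_{\Omega_N}] \geq f_T(\rho_0)$ and $\limsup_N |\Omega_N|^{-1} F_T[\rho_0\1_{\Omega_N}] \leq f_T(\rho_0)$ separately. The lower bound is immediate: any $N$-particle symmetric probability $\bP$ with $\rho_\bP = \rho_0\1_{\Omega_N}$ and $N=\rho_0|\Omega_N|$ is necessarily supported in $\Omega_N^N$, so $F_T[\rho_0\1_{\Omega_N}] \geq \min_{\bP\in\cP_s(\Omega_N^N)}\cF_T(\bP)$, and dividing by $|\Omega_N|$ and letting $N\to\infty$ yields $\liminf \geq f_T(\rho_0)$ by the standard thermodynamic limit~\eqref{eq:fcan} (which applies since $N/|\Omega_N| = \rho_0$ and the $\Omega_N$ have uniformly regular boundaries). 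Alternatively one may combine $F_T[\rho]\geq G_T[\rho]$ on integer masses with \cref{cor:thermo-limit}.

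For the upper bound, fix $\eps>0$. The idea is to tile $\Omega_N$ by \emph{eroded} cubes separated by narrow corridors, fill the cubes with near-optimal building blocks, flatten the density by averaging over global translations, and finally repair the remaining boundary layer. Concretely, choose $\eta>0$ small, then a side length $a$ with $\rho_0 a^d\in\N$, $\eta a > r_0$, and $a$ large enough that the minimizer $\bP^\#$ of $\cF_T$ over $\cP_s\big((C_{a(1-\eta)})^{\rho_0 a^d}\big)$ obeys $\cF_T(\bP^\#)\leq (f_T(\rho_0)+\eps)a^d$; this is possible because, by~\eqref{eq:fcan}, $a^{-d}\cF_T(\bP^\#)\to(1-\eta)^d f_T\big(\rho_0(1-\eta)^{-d}\big)$, which tends to $f_T(\rho_0)$ as $\eta\to 0$ by continuity of $f_T$. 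Note $\bP^\#$ has exactly $\rho_0 a^d$ particles, its density $\rho^\#$ is supported in $C_{a(1-\eta)}$ (hence vanishes within $\eta a/2>r_0/2$ of $\partial C_a$), and $\rho^\#$ is bounded thanks to the quantitative Ruelle bounds of \cref{app:Ruelle}. I place independent translated copies of $\bP^\#$ at the sites $aj$, $j\in J_N:=\{j\in\Z^d : C_{2a}+aj\subseteq\Omega_N\}$, and let $\bQ_N$ be their symmetrized product; since distinct copies are separated by corridors of width $\geq\eta a>r_0$, on which $|w(z)|\leq\kappa(1+|z|^s)^{-1}$, the entropy is additive and the inter-copy interaction is $O\big(\tau(\eta a)|\Omega_N|\big)$ with $\tau(R):=\int_{|z|\geq R}|w|\to 0$, so $\cF_T(\bQ_N)\leq\big(f_T(\rho_0)+\eps+C\tau(\eta a)\big)|\Omega_N|+o(|\Omega_N|)$.

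Next I average $\bQ_N$ over global translations $v\in C_a$: since $w$ is translation invariant and the entropy is concave, this does not increase the free energy, and the resulting state $\widetilde\bQ_N$ has density $\rho_0$ at every point farther than $ca$ from $\partial\Omega_N$, has density in $[0,\rho_0]$ on the remaining shell $S$, and vanishes off $\Omega_N$; by the boundary regularity~\eqref{eq:regularity}, $|S|\leq C(t_0,d)\,a\,|\Omega_N|^{1-1/d}$. It remains to lift the density on $S$ up to $\rho_0$: the defect $g:=\rho_0\1_{\Omega_N}-\rho_{\widetilde\bQ_N}\geq 0$ is supported in $S$ with integer mass $\rho_0|\Omega_N|-|J_N|\rho_0 a^d\in\Z$, and using the fixed-density construction of~\cite{JexLewMad-23} one builds a state carrying the density $g$ superimposed on $\widetilde\bQ_N$, with correlations inserted so that the two particle families stay at mutual distance $\geq r_0$ (this is essential when $w$ is non-integrable at the origin, i.e.\ $\alpha\geq d$); the resulting canonical state has density exactly $\rho_0\1_{\Omega_N}$, exactly $\rho_0|\Omega_N|$ particles, and free energy exceeding $\cF_T(\widetilde\bQ_N)$ by at most $C(\rho_0,T,w,a)\,|S| = O(|\Omega_N|^{1-1/d})$. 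Thus $|\Omega_N|^{-1}F_T[\rho_0\1_{\Omega_N}]\leq f_T(\rho_0)+\eps+C\tau(\eta a)+o(1)$; sending $N\to\infty$, then enlarging $a$ so that $C\tau(\eta a)\leq\eps$, and finally letting $\eps\to 0$ gives the upper bound.

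The step I expect to be the main obstacle is the shell repair: superimposing a fixed-mass state with a prescribed (variable, non-negative) density concentrated in the thin shell $S$ onto the already built bulk state while keeping the particle families apart, and checking that the Besicovitch-type estimates of~\cite{JexLewMad-23} survive this ``superposition onto a given configuration'' with a cost of order $|S|=o(|\Omega_N|)$. The remaining points — the integrality bookkeeping for $\rho_0 a^d$, $\rho_0|\Omega_N|$ and $\int g$, the fact that shift-averaging produces exactly the constant density $\rho_0$ in the bulk, the shell-volume estimate from~\eqref{eq:regularity}, and the additivity of the entropy under symmetrized products on disjoint regions — are routine.
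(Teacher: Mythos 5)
Your overall strategy (tile with corridors, fill the cubes with unrestricted minimizers, average over translations to flatten the density, treat the boundary layer separately) is the same as the paper's, and your lower bound and bulk estimates are fine. But the step you yourself flag as the main obstacle — the ``shell repair'' — is a genuine gap, not a routine verification, and as formulated it does not go through. The defect $g=\rho_0\1_{\Omega_N}-\rho_{\widetilde\bQ_N}$ is supported on a shell $S$ that \emph{overlaps} the support of the bulk particles: for each translation $v$, the copies of $\bP^\#$ sitting in cubes near $\partial\Lambda$ lie inside $S$. If $\alpha\geq d$ a plain tensor product $\bQ_{N,v}\otimes\bP_{\rm rep}$ then has infinite interaction energy (the two bounded, overlapping one-body densities integrated against the non-integrable core), so correlations are mandatory; but any conditioning that forces the repair particles to stay at distance $\geq r_0$ from the (random) bulk particles alters the one-body marginals, so the combined state no longer has density exactly $\rho_0\1_{\Omega_N}$. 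The constructions of~\cite{JexLewMad-23} build a correlated state of prescribed density \emph{from scratch}; they do not superimpose one onto a given configuration, and nothing in the paper or that reference supplies such a tool. A per-$v$ repair does not rescue this either, since $\rho_0\1_{\Omega_N}-\sum_j\rho^\#(\cdot-aj-v)$ is not non-negative ($\rho^\#$ oscillates above $\rho_0$).

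The paper avoids the problem by reversing the order of operations: it decomposes the \emph{target density}, for each shift $\tau$, as $\rho_0\1_{\Omega_N}$ averaged into a bulk part $\sum_{k\in\mathcal A_\tau}\rho_n(\cdot-kL-\tau)$ plus a boundary part $\beta_\tau=\rho_0\1_{\Omega_N\setminus\Lambda_N}+\1_{\Lambda_N}\sum_{k\notin\mathcal A_\tau}\rho_n(\cdot-kL-\tau)$, both non-negative and with supports separated by half a corridor width $\lambda/2$. It then builds, for each $\tau$, a tensor product of the bulk blocks with \emph{any} canonical state of density $\beta_\tau$ (whose free energy is controlled by the universal upper bound~\eqref{eq:upper_bd_F_T}), and only afterwards averages over $\tau$. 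The corridor separation makes the cross-interaction harmless, and the fact that $\Lambda_N$ is an exact union of cubes makes $\int\beta_\tau$ a $\tau$-independent integer, which is what saves the canonical particle-number bookkeeping. If you reorganize your proof along these lines — decompose the density per translation into spatially separated non-negative pieces before constructing any state, rather than averaging first and patching afterwards — the rest of your estimates carry over.
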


The proof of Theorem~\ref{thm:thermolim} is provided in Section~\ref{sec:thermolim}.

\section{A priori estimates}
\label{sec:estimates}
In this short section, we briefly recall and discuss few \emph{a priori} estimates on the free energy functionals $ F_T $ and $ G_T $, as well as the minimal energies per unit volume for cubes.

\subsection{Universal bounds}
By~\cite[Lemma 6.1]{MarLewNen-22_ppt} and \cite{JexLewMad-23}, we have the universal lower energy bound, which holds for any $ 0 \leq \rho \in L^1 \myp{\R^d} $ satisfying $ \int_{\R^d} \rho \abs{\log \rho} < \infty $,
\begin{equation}
	G_T [\rho]
	\geq - \myp[\big]{\kappa  + T} \int_{\R^d} \rho + T \int_{\R^d} \rho \log \rho,
\label{eq:lower_bd_G_T_simple}
\end{equation}
where $ \kappa $ is the stability constant of $ w $ in Assumption~\ref{de:shortrangenew}.
From \cite[Thm.~10 \& 11]{JexLewMad-23}, we also have for $ \alpha \neq d $ the upper bound
\begin{equation}
\label{eq:upper_bd_G_T1}
	G_T \myb{\rho}
	\leq  C \int_{\R^d}\rho^{\gamma} + C\myp{1+T} \int_{\R^d} \rho + T \int_{\R^d} \rho \log \rho,
\end{equation}
where
\begin{equation}
\label{eq:gammadef}
	\boxed{
		\gamma := 1 + \max \myp{1, \alpha/d},
	}
\end{equation}
and the constant $ C $ depends only on the dimension $ d $ and the interaction $ w $.
When $ \alpha = d $, the bound instead takes the form
\begin{equation}
\label{eq:upper_bd_G_T2}
	G_T \myb{\rho}
	\leq  C \int_{\R^d}\rho^2 \myp{\log \rho}_+ + C\myp{1+T} \int_{\R^d} \rho + T \int_{\R^d} \rho \log \rho.
\end{equation}
In particular, when the density $ \rho $ is uniformly bounded, $ \normt{\rho}{\infty} \leq M $, we have the simple bound
\begin{equation}
\label{eq:upper_bd_G_T_simple}
	G_T \myb{\rho}
	\leq  C_M \int_{\R^d} \rho + T \int_{\R^d} \rho \log \rho,
\end{equation}
where the constant $ C_M $ now also depends on $ M $ and the temperature $ T $.
By~\cite[Thm.~14]{JexLewMad-23}, this bound also holds in the hard-core case ($ \alpha = \infty $), provided that $ \normt{\rho}{\infty} \leq \myp{1-\eps}^d r_0^{-d} \rho_c \myp{d} $, where $ \rho_c \myp{d} $ is the sphere packing density in $ d $ dimensions, and $ 0 < \eps < 1 $.
The constant $ C_M $ in this case then also depends on $ \eps $, and behaves like $ \abs{\log \eps} $.

In the canonical case, we have by~\cite[Thm.~12]{JexLewMad-23} a bound depending on the local radius $ R \myp{x} $ of $ \rho $, which is defined to be the largest number satisfying
\begin{equation}
\label{eq:Rdef}
	\int_{B \myp{x, R \myp{x}}} \rho \myp{y} \id y = 1.
\end{equation}
For $ \alpha \neq d $, the bound takes the form
\begin{equation}
\label{eq:upper_bd_F_T}
	F_T \myb{\rho}
	\leq  C \int_{\R^d}\rho^{\gamma}+ C \int_{\R^d} \rho + T \int_{\R^d} \rho \log \rho+T \int_{\R^d} \rho \log R^d,
\end{equation}
and similarly for $ \alpha = d $, where the $ \rho^{\gamma} $ term is replaced by $ \rho^2 \myp{\log \rho}_+ $.

Finally, we also have the following useful (but non-optimal) sub-additivity bound for the grand-canonical energy.

\begin{proposition}[Sub-additivity estimate]
\label{prop:subadditivity}
	Suppose that $ w $ satisfies Assumption~\ref{de:shortrangenew}.
	For all $ 0 < \eps \leq 1/2 $, there is a constant $ C > 0 $ such that for any pair of densities $ 0 \leq \rho_1, \rho_2 \in L^1 \myp{\R^d} \cap L^{\gamma} \myp{\R^d} $, we have for $ \alpha \neq d $,
	\begin{align}
		\label{eq:subadditivity}
		G_T \myb{\rho_1 + \rho_2}
		\leq{}& G_T \myb{\rho_1} + C \eps \int_{\R^d} \rho_1^{\gamma} + \rho_1 \nn \\
		&+ \frac{C}{\eps^{\gamma-1}} \int_{\R^d} \rho_2^{\gamma} + C \myp{\log \eps}_- \int_{\R^d} \rho_2 + T \int_{\R^d} \rho_2 \log \rho_2\,,
	\end{align}
	where $ \gamma = 1 + \max \myp{1, \alpha / d} $.
	When $ \alpha = d $, we get instead
	\begin{align}
	\label{eq:subadditivity2}
		G_T \myb{\rho_1 + \rho_2}
		\leq{}& G_T \myb{\rho_1} + C \eps \int_{\R^d} \rho_1^2 \myp{\log \rho_1}_+ + \rho_1 + \frac{C}{\eps} \int_{\R^d} \rho_2^2 \myp{\log \rho_2}_+ \nn \\
		&+ \frac{C \myp{\log \eps}_-}{\eps} \int_{\R^d} \rho_2^2 + C \myp{\log \eps}_- \int_{\R^d} \rho_2 + T \int_{\R^d} \rho_2 \log \rho_2.
	\end{align}
\end{proposition}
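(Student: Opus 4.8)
The plan is to build a trial state for the combined density $\rho_1+\rho_2$ by superposing a near-optimal state for $\rho_1$ with an explicitly constructed low-energy state carrying the surplus density $\rho_2$, coupling the two just enough that the added particles stay away from the $\rho_1$-particles. Fix $\delta>0$ and choose a grand-canonical state $\bP^{(1)}$ with $\rho_{\bP^{(1)}}=\rho_1$ and $\cG_T(\bP^{(1)})\leq G_T[\rho_1]+\delta$. For the surplus layer I would reuse the density-adapted trial states of~\cite{JexLewMad-23}: tile $\R^d$ (up to a null set) by cubes $Q$ on which $\rho_2$ is essentially constant and of side comparable to the local value of $\rho_2^{-1/d}$, and put an independent elementary state on each $Q$, with a small occupation probability tuned by a parameter $\eps$. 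The combined state $\bP$ is obtained by first drawing a configuration $X$ from $\bP^{(1)}$ and then, independently cube by cube, placing the surplus particle uniformly in the part of its cube lying at distance at least $\eta(Q)\sim\min\bigl(r_0,(\text{local value of }\rho_1)^{-1/d}\bigr)$ from every point of $X$; cubes on which $\rho_1$ is so large that this exclusion leaves essentially no room are declared defective and their bounded contribution is charged to $\int\rho_1^\gamma$. By construction $\rho_{\bP}=\rho_1+\rho_2$, so $G_T[\rho_1+\rho_2]\leq\cG_T(\bP)$.

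The next step is to estimate $\cG_T(\bP)=\cU(\bP)-T\cS(\bP)$. Entropy is superadditive under this kind of independent superposition, so $-T\cS(\bP)\leq -T\cS(\bP^{(1)})$ plus the entropy of the surplus layer, and the latter amounts to $T\int\rho_2\log\rho_2+C\int\rho_2$ for the bare occupation probabilities together with an extra $C(\log\eps)_-\int\rho_2$ coming from confining each surplus particle to a cube carrying only a fraction of a particle. The interaction splits as $\cU(\bP)=\cU(\bP^{(1)})+\cU_{22}+\cU_{12}$. For the self-interaction $\cU_{22}$ of the surplus layer, occupied cubes are typically at distance $\gtrsim(\text{local }\rho_2)^{-1/d}$, and summing the pointwise bound $w(x)\leq\kappa\1_{|x|<r_0}(r_0/|x|)^\alpha+\kappa(1+|x|^s)^{-1}$ over the cube decomposition gives $\cU_{22}\leq C\eps^{-(\gamma-1)}\int\rho_2^\gamma+C\int\rho_2$, the long-range tail being lower order since $s>d$. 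For the cross-term, each $\rho_1$-particle interacts with surplus particles only at distance $\geq\eta(Q)\sim\rho_1^{-1/d}$, so integrating $r^{-\alpha}$ over the shell $[\eta,r_0]$ yields the local bound $\cU_{12}\lesssim\int_{\R^d}\rho_1^{\alpha/d}\rho_2=\int_{\R^d}\rho_1^{\gamma-1}\rho_2$ plus an integrable long-range tail; Young's inequality with the conjugate exponents $\gamma$ and $\gamma/(\gamma-1)$ then converts this into $C\eps\int\rho_1^\gamma+C\eps^{-(\gamma-1)}\int\rho_2^\gamma$, after a harmless interpolation absorbing the remaining lower $L^p$ norms into $\int\rho_i$ and $\int\rho_i^\gamma$. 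Collecting everything and letting $\delta\to0$ gives~\eqref{eq:subadditivity}. The borderline case $\alpha=d$ is identical except that $\int_\eta^{r_0}r^{-1}\,\rd r=\log(r_0/\eta)\sim\log\bigl(r_0\,\rho_1^{1/d}\bigr)$ replaces $\eta^{d-\alpha}$, which is exactly what turns $\int\rho_i^\gamma$ into $\int\rho_i^2(\log\rho_i)_+$ in~\eqref{eq:subadditivity2}; when $\alpha<d$ no exclusion is needed, since then $\kappa\1_{|x|<r_0}(r_0/|x|)^\alpha+\kappa(1+|x|^s)^{-1}$ is integrable and $\cU_{12}$ can be bounded directly by Young's convolution inequality.

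The main obstacle is precisely the control of the cross-term $\cU_{12}$ between the \emph{arbitrary} near-optimal state $\bP^{(1)}$ and the surplus layer when the short-range repulsion is strong. For $\alpha\geq d$ a plain product (independent) superposition gives $\cU_{12}=+\infty$ as soon as the supports of $\rho_1$ and $\rho_2$ overlap, so the surplus particles must genuinely dodge the $\rho_1$-particles; and since $\rho_1$ is only assumed in $L^1\cap L^\gamma$ and not in $L^\infty$, the exclusion radius must be chosen adaptively as $\eta\sim\rho_1^{-1/d}$, with the regions where this still leaves no room paid for by $\int\rho_1^\gamma$. Getting this trade-off to close with exactly the exponents $\gamma$ and $\gamma/(\gamma-1)$, while keeping the $\eps$-dependence so that only $\eps$, $\eps^{-(\gamma-1)}$ and $(\log\eps)_-$ appear, is the delicate bookkeeping. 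Everything else — entropy superadditivity and the summation of short-range interactions over a density-adapted cube decomposition — is routine and already available from~\cite{JexLewMad-23}.
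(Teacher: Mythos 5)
There is a genuine gap in your construction, and it sits exactly where you locate the ``main obstacle''. When you draw a configuration $X$ from the near-minimizer $\bP^{(1)}$ and then place each surplus particle uniformly on the part of its cube at distance $\geq\eta(Q)$ from every point of $X$, the conditional law of the surplus particle depends on $X$, so after averaging over $\bP^{(1)}$ its one-body density is \emph{not} the uniform one you prescribed: the claim ``by construction $\rho_{\bP}=\rho_1+\rho_2$'' fails, and there is no obvious way to repair it, because you know nothing about $\bP^{(1)}$ beyond its first marginal. The same lack of information sinks the estimate on $\cU_{12}$: the exclusion radius $\eta\sim\rho_1^{-1/d}$ is calibrated to the \emph{one-body density}, but the number of $X$-particles near a given surplus particle is governed by the two-point function of $\bP^{(1)}$, for which no Ruelle-type bound is available (the paper stresses precisely this difficulty for minimizers at fixed density in the proof of Proposition~\ref{thm:ldalow}). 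So the bound $\cU_{12}\lesssim\int\rho_1^{\gamma-1}\rho_2$ is not justified, and the ``defective cubes'' device leaves the density constraint unsatisfied there as well.

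The paper's proof avoids all of this by never superposing the two systems. One writes $\rho_1+\rho_2=(1-\eps)\rho_1+\eps\bigl(\tfrac{\rho_2}{\eps}+\rho_1\bigr)$ and takes the \emph{convex combination} $\bP=(1-\eps)\bP_1+\eps\bP_2$ of states with densities $\rho_1$ and $\tfrac{\rho_2}{\eps}+\rho_1$. Since $\cU$ is linear and $\cS$ concave in $\bP$, this gives $G_T[\rho_1+\rho_2]\leq(1-\eps)G_T[\rho_1]+\eps\,G_T\bigl[\tfrac{\rho_2}{\eps}+\rho_1\bigr]$ with no cross-interaction at all; the second term is then controlled by the universal upper bound \eqref{eq:upper_bd_G_T1} (or \eqref{eq:upper_bd_G_T2}) from~\cite{JexLewMad-23}, the entropy is split by concavity, and $-\eps G_T[\rho_1]$ is absorbed via the universal lower bound \eqref{eq:lower_bd_G_T_simple}. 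The amplification of $\rho_2$ by $1/\eps$ is what produces the $\eps^{-(\gamma-1)}$ and $(\log\eps)_-$ factors. If you want to keep a constructive route, the correlated trial states of~\cite{JexLewMad-23} only work because the \emph{entire} state is built by hand; grafting them onto an unknown near-minimizer is where your argument breaks.
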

\begin{proof}
	We use the same approach as in \cite[Lemma 19]{LewLieSei-20}.
	Write $ \rho_1 + \rho_2 = \myp{1-\eps} \rho_1 + \eps \myp{\frac{\rho_2}{\eps} + \rho_1} $, and let $ \bP_1 $ and $ \bP_2 $ be any grand-canonical states with densities $ \rho_1 $ and $ \frac{\rho_2}{\eps} + \rho_1 $, respectively.
	Then $ \bP := \myp{1-\eps} \bP_1 + \eps \bP_2 $ is a trial state with density $ \rho_1 + \rho_2 $, so by minimizing over $ \bP_1 $, $ \bP_2 $, we obtain by convexity,
	\begin{equation*}
		G_T \myb{\rho_1+\rho_2}
		\leq{} \myp{1-\eps} G_T \myb{\rho_1} + \eps G_T \myb[\Big]{\frac{\rho_2}{\eps} + \rho_1}.
	\end{equation*}
The universal upper bound \eqref{eq:upper_bd_G_T1} on $ G_T \myb{\rho} $ provides
	\begin{align*}
		\eps G_T \myb[\Big]{\frac{\rho_2}{\eps} + \rho_1}
		\leq{}& C \eps \int_{\R^d} \myp[\Big]{\frac{\rho_2}{\eps} + \rho_1}^{\gamma} + \myp[\Big]{\frac{\rho_2}{\eps} + \rho_1} \\
		&+ \eps T \int_{\R^d} \myp[\Big]{\frac{\rho_2}{\eps} + \rho_1} \log \myp[\Big]{\frac{\rho_2}{\eps} + \rho_1} \\
		\leq{}& C \eps \int_{\R^d} \myp{\rho_1^{\gamma} + \rho_1} + \frac{C}{\eps^{\gamma - 1}} \int_{\R^d} \rho_2^{\gamma} + C \int_{\R^d} \rho_2 \\
		&+ \eps T \int_{\R^d} \myp[\Big]{\frac{\rho_2}{\eps} + \rho_1} \log \myp[\Big]{\frac{\rho_2}{\eps} + \rho_1}.
	\end{align*}
Writing $ \frac{\rho_2}{\eps} + \rho_1 = \eps \frac{\rho_2}{\eps^2} + \myp{1-\eps} \frac{\rho_1}{1-\eps} $ and using the concavity of the entropy, along with $ \eps \leq 1/2 $, we get
	\begin{align*}
		\MoveEqLeft[4] \eps \int_{\R^d} \myp[\Big]{\frac{\rho_2}{\eps} + \rho_1} \log \myp[\Big]{\frac{\rho_2}{\eps} + \rho_1}
		\leq{} \int_{\R^d} \rho_2 \log \myp[\Big]{\frac{\rho_2}{\eps^2}} + \eps \int_{\R^d} \rho_1 \log \myp[\Big]{\frac{\rho_1}{1-\eps}} \\
		\leq{}& 2 \myp{\log \eps}_- \int_{\R^d} \rho_2 + \int_{\R^d} \rho_2 \log \rho_2 + \eps \log 2 \int_{\R^d} \rho_1 + \eps \int_{\R^d} \rho_1 \log \rho_1.
	\end{align*}
	Applying the universal lower bound \eqref{eq:lower_bd_G_T_simple}, we conclude that \eqref{eq:subadditivity} holds.

	In the $ \alpha = d $ case, \eqref{eq:subadditivity2} is obtained in the same way, using instead the upper bound \eqref{eq:upper_bd_G_T2},
	\begin{align*}
		G_T \myb[\Big]{\frac{\rho_2}{\eps} + \rho_1}
		\leq{}& C \int_{\R^d} \myp[\Big]{\frac{\rho_2}{\eps} + \rho_1}^2 \myp[\Big]{\log \myp[\Big]{\frac{\rho_2}{\eps} + \rho_1}}_+ + \myp[\Big]{\frac{\rho_2}{\eps} + \rho_1} \\
		&+ T \int_{\R^d} \myp[\Big]{\frac{\rho_2}{\eps} + \rho_1} \log \myp[\Big]{\frac{\rho_2}{\eps} + \rho_1}.
	\end{align*}
The first term can be estimated by
	\begin{align*}
		\MoveEqLeft[6] \int_{\R^d} \myp[\Big]{\frac{\rho_2}{\eps} + \rho_1}^2 \myp[\Big]{\log \myp[\Big]{\frac{\rho_2}{\eps} + \rho_1}}_+ \\
		\leq{}& \int_{\R^d} 4 \max \myp[\Big]{ \frac{\rho_2}{\eps} , \rho_1}^2 \myp[\Big]{\log \myp[\Big]{2 \max \myp[\Big]{ \frac{\rho_2}{\eps} , \rho_1}}}_+ \\
		\leq{}& \int_{\R^d} 4 \myp[\Big]{\frac{\rho_2}{\eps} }^2 \myp[\Big]{\log \myp[\Big]{2 \frac{\rho_2}{\eps}}}_+ + \int_{\R^d} 4 \rho_1^2 \myp{\log \myp{2 \rho_1}}_+ \\
		\leq{}& \frac{4}{\eps^2} \int_{\R^d} \rho_2^2 \myp[\Big]{ \myp{\log \rho_2}_+ + \log \frac{2}{\eps}} + 4 \int_{\R^d} \rho_1^2 \myp[\big]{ \myp{\log \rho_1}_+ + \log 2}.
	\end{align*}
	The bound \eqref{eq:subadditivity2} now follows in the same way as before.
\end{proof}

\subsection{Energy per unit volume}
We denote by
\begin{equation}
\boxed{ G_T(\mu,\Omega):=\min_{\bP}\big\{\cG_T(\bP)-\mu\cN(\bP)\big\}=\min_{\rho\in L^1(\Omega)}\left\{G_T[\rho]-\mu\int_\Omega\rho\right\}}
\label{eq:def_G_T_mu_Omega}
\end{equation}
the minimum free energy with a chemical potential $\mu$ in an arbitrary bounded domain $\Omega$. The first minimum is above all possible grand-canonical states $\bP$ in the domain $\Omega$. It is well known~\cite{Ruelle} that the first minimum is attained at a unique $\bP$, called the Gibbs state and given by
	\begin{equation}
	\label{eq:gibbs}
		\bP_{T, \mu, \Omega} := \frac{1}{Z_{T,\mu,\Omega}} \sum\limits_{n \geq 0} \frac{e^{-\frac{1}{T} \myp{H_n - \mu n}}}{n!} ,\quad 		Z_{T, \mu, \Omega}
		:= \sum\limits_{n \geq 0 } \frac{1}{n!} \int_{\Omega^n} e^{-\frac{1}{T} \myp{H_n \myp{x} - \mu n}} \id x,
	\end{equation}
	where $ H_n \myp{x} := \sum_{i<j}^n w \myp{x_i-x_j} $. The minimal free energy is then given by
\begin{equation*}
G_T(\mu,\Omega)=- T \log Z_{T, \mu, \Omega}.
\end{equation*}

Our goal here is to provide bounds in the case of a large cube $\Omega=C_L=(-L/2,L/2)^d$, which are uniform with respect to the side length $L$. For later purposes it is convenient to divide by the volume and introduce
\begin{equation*}
	\boxed{g_T \myp{\mu,L} := \frac{G_T \myp{\mu, C_L}}{L^d}.}
\end{equation*}
We will also study its Legendre transform which is defined as
\begin{equation}
\label{eq:fl_def}
	\boxed{f_T \myp{\rho,L} := \sup_{\mu \in \R} \Set{\mu \rho + g_T \myp{\mu,L}}.}
\end{equation}
Note that $ f_T \myp{\rho,L} $ is different from the minimal canonical energy per unit volume in $ C_L $. The equivalence of ensembles as in~\eqref{eq:equiv} is only true after taking the limit $L\to\ii$. More precisely, one can see that $f_T(\rho,L)$ is rather the free energy per unit volume of the grand-canonical Gibbs state in $ C_L $ which has average density $ \rho $. Indeed, at positive temperature $T > 0$, the supremum \eqref{eq:fl_def} is attained at the unique chemical potential $ \mu_L = \mu_L \myp{\rho} $ satisfying
\begin{align}
	\rho ={}& - \frac{\partial}{\partial \mu} g_T \myp{\mu_L,L}
	={} \frac{T}{L^d} \frac{\partial}{\partial \mu} \log Z_{T,\mu_L,C_L} \nn \\
	={}& \frac{T}{L^d} \frac{1}{Z_{T,\mu_L,C_L}} \frac{\partial}{\partial \mu} Z_{T,\mu_L,C_L}
	={} \frac{\mathcal{N}(\bP_{T,\mu_L,C_L})}{L^d},
\label{eq:avg_particles}
\end{align}
so the Gibbs state $ \bP_{T,\mu_L,C_L} $ corresponding to this $ \mu_L $ has average density $ \rho $ in $ C_L $.
It now immediately follows that
\begin{equation*}
	f_T \myp{\rho,L}
	={} \mu_L \rho + \frac{G_T \myp{\mu_L, L}}{L^d}
	={} \frac{ \mathcal{G}_T \myp{\bP_{T,\mu_L,C_L}} }{L^d},
\end{equation*}
as claimed.
In fact, $ f_T \myp{\rho,L} $ can also be obtained by minimizing over all states in $ C_L $ with average particle number $ \rho L^d $,
\begin{equation}
\label{eq:fT_variational}
	f_T \myp{\rho,L}
	= \inf_{ \substack{ \supp \bP \subseteq C_L \\ \mathcal{N}(\bP) = \rho L^d }} \frac{\mathcal{G}_T \myp{\bP}}{L^d}.
\end{equation}
Indeed, if $ \bP $ is a minimizer for the right hand side, then
\begin{equation*}
	\mathcal{G}_T \myp{\bP}-\mu_L\cN(\bP)
	={} \mathcal{G}_T \myp{\bP} - \mu_L \rho L^d=L^d\big(f_T(\rho,L)-\mu_L\rho\big)=G_T(\mu_L,L)
\end{equation*}
so by uniqueness of minimizers of $ G_T \myp{\mu_L, L} $ we must have $ \bP = \bP_{T, \mu_L, C_L} $.

Using the equivalence of ensembles \eqref{eq:equiv}, one can see that $ f_T \myp{\rho,L} $ has the infinite volume limit
\begin{equation*}
	\lim_{L \to \infty} f_T \myp{\rho,L}
	= \sup_{\mu \in \R} \myt[\Big]{\mu \rho + \lim_{L \to \infty} g_T \myp{\mu,L} }
	= f_T \myp{\rho}
\end{equation*}
as we have already stated in~\eqref{eq:fcan}. This is well known~\cite{Ruelle} and follows for instance from Corollary~\ref{cor:thermolim} below.

Next we state some bounds which are uniform in $L$ and thus carry over to the infinite volume limit. Loosely speaking, those estimates state that the free energy $f_T(\rho)$ contains a positive term $\rho^\gamma$ which dominates at large densities, due to the repulsive nature of the potential $w$ close to the origin (recall that $\gamma=\max(2,1+\alpha/d)$). At low density, the energy rather behaves like $T\rho\log\rho+C\rho$. Although we think that these bounds must be well known, we have not found them stated explicitly in the literature. The estimates are enclosed in the following three statements.

\begin{proposition}[Fixed average density bounds]
\label{thm:tdlimbounds}
	Let $w$ satisfy Assumption~\ref{de:shortrangenew}.
	There are constants $ C,c > 0 $ depending only on $ s $ and the dimension $ d $, such that for any $ L > 0 $ and $ \rho \geq 0 $, we have
	\begin{align}
	\label{eq:fcanupbound}
		f_T \myp{\rho,L}
		\leq{}&	\begin{cases}
					C \kappa r_0^{d \myp{\gamma-1}} \rho^{\gamma} + C \kappa \rho^2 + C T \rho + T \rho \log \rho,					& \alpha \neq d, \\
					C \kappa r_0^d \rho^2 \myp{\log r_0^d \rho}_+ + C \kappa \myp{1+r_0^d} \rho^2 + C T \rho + T \rho \log \rho,	& \alpha = d, \\
		\end{cases}
	\end{align}
	and
	\begin{equation}
	\label{eq:fcanlowbound}
		f_T \myp{\rho,L}
		\geq	\begin{cases}
					\frac{c}{\kappa} r_0^{d \myp{\gamma-1}} \rho^{\gamma} - \myp[\big]{\kappa + \frac{c}{\kappa} +T}\rho + T \rho \log \rho,	& \alpha \neq d, \\
					\frac{c}{\kappa} r_0^d \rho^2 \myp[\big]{\log \frac{r_0^d \rho}{2 \sqrt{d}}}_+ - \myp{\kappa+T} \rho + T \rho \log \rho,	& \alpha = d. \\
				\end{cases}
	\end{equation}
\end{proposition}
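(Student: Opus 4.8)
The plan is to establish the upper bound \eqref{eq:fcanupbound} and the lower bound \eqref{eq:fcanlowbound} separately, starting from the variational description $f_T(\rho,L)=\inf\{\cG_T(\bP)/L^d:\ \supp\bP\subseteq C_L,\ \cN(\bP)=\rho L^d\}$ from \eqref{eq:fT_variational}. Since a grand-canonical state whose one-particle density is supported in $C_L$ is itself supported in $C_L^n$ at every particle number $n$, this also equals $\inf\{G_T[\sigma]/L^d:\ \supp\sigma\subseteq C_L,\ \int_{C_L}\sigma=\rho L^d\}$. In both bounds one must treat the three regimes $\alpha<d$ (where the upper bound on $w$ in \eqref{eq:assumption_w} is integrable near the origin), $\alpha=d$ (the critical one, producing the logarithm), and $\alpha>d$ (where $w$ may fail to be locally integrable) by hand, since the exponent $\gamma$ and the prefactor $r_0^{d(\gamma-1)}$ come precisely from the short-distance behaviour of $w$.

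For the \emph{upper bound} I would take $\sigma=\rho\1_{C_L}$ above, so that $f_T(\rho,L)\le G_T[\rho\1_{C_L}]/L^d$, and then exhibit explicit trial states for $G_T[\rho\1_{C_L}]$ keeping the constants $\kappa,r_0$ under control (the bounds \eqref{eq:upper_bd_G_T1}--\eqref{eq:upper_bd_G_T2} already give a bound of the right shape, but with a constant depending implicitly on $w$, so one has to revisit the constructions of \cite{JexLewMad-23}). When $\alpha<d$, the Poisson point process $\bP$ of intensity $\rho$ on $C_L$ works: it has $\cN(\bP)=\rho L^d$, interaction energy $\tfrac{\rho^2}{2}\iint_{C_L\times C_L}w(x-y)\,\rd x\,\rd y\le\tfrac{\rho^2 L^d}{2}\,C\kappa(r_0^d+1)$ by \eqref{eq:assumption_w} and $s>d$, and entropy contribution $-T\cS(\bP)=T\rho L^d(\log\rho-1)$, so that dividing by $L^d$ yields \eqref{eq:fcanupbound} in that case. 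When $\alpha\ge d$ the Poisson state has infinite energy, and I would instead use a correlated product state over a tiling of $C_L$ by cubes of side $\ell_0\sim\max(r_0,\rho^{-1/d})$, placing in each cube a random number of particles of average $\rho\ell_0^d$ kept pairwise separated by a distance of order $\rho^{-1/d}$, as in \cite{JexLewMad-23}. Estimating the self-energy of one cube and the interaction between distinct cubes then produces the repulsive term $C\kappa r_0^{d(\gamma-1)}\rho^\gamma$ (from nearby pairs at distance $\sim\rho^{-1/d}<r_0$ when $\rho>r_0^{-d}$), the mean-field term $C\kappa\rho^2$ (from the tail $\lesssim|x|^{-s}$), and the entropy $T\rho\log\rho+CT\rho$; for $\alpha=d$ the same computation, with the integral of $|x|^{-d}$ over the dyadic scales between $\rho^{-1/d}$ and $r_0$, generates the factor $(\log r_0^d\rho)_+$.

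For the \emph{lower bound}, given any admissible $\bP$ I would split $\cG_T(\bP)=\cU(\bP)-T\cS(\bP)$. The \emph{soft} part of \eqref{eq:fcanlowbound}, namely $-(\kappa+T)\rho+T\rho\log\rho$, follows at once from the universal lower bound \eqref{eq:lower_bd_G_T_simple}: applying it to each density $\sigma$ with $\int_{C_L}\sigma=\rho L^d$ and using Jensen's inequality for $t\mapsto t\log t$ to get $\tfrac1{L^d}\int_{C_L}\sigma\log\sigma\ge\rho\log\rho$, one obtains $f_T(\rho,L)\ge-(\kappa+T)\rho+T\rho\log\rho$, which is already stronger than what is stated. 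The extra superstable term $\tfrac c\kappa r_0^{d(\gamma-1)}\rho^\gamma$ (respectively its logarithmic analogue for $\alpha=d$) is the real content and requires a quantitative version of Ruelle's superstability estimate, in the spirit of Appendix~\ref{app:Ruelle}: for a suitable length $\delta$, every configuration satisfies $\sum_{j<k}w(x_j-x_k)\ge\sum_\Delta(A\,n_\Delta^2-B\,n_\Delta)$ over the $\delta$-cubes $\Delta$, with $A,B$ controlled explicitly in terms of the core of $w$. Taking expectations, using $\bE[n_\Delta^2]\ge\bE[n_\Delta]^2$ and $\sum_\Delta\bE[n_\Delta]^2\ge\rho^2 L^d\delta^d$ (Cauchy--Schwarz, since $\sum_\Delta\bE[n_\Delta]=\cN(\bP)=\rho L^d$), one gets $\cU(\bP)/L^d\ge A\delta^d\rho^2-B\rho$. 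For $\alpha\le d$ one runs this at the fixed scale $\delta\sim r_0$, whereas for $\alpha>d$ one runs it at the density-dependent scale $\delta\sim\rho^{-1/d}$ while counting pairs lying within a bounded number of neighbouring cubes --- such pairs being at distance $\lesssim\rho^{-1/d}<r_0$ and hence carrying energy $\gtrsim\kappa^{-1}(r_0\rho^{1/d})^\alpha$ each --- which produces $\cU(\bP)/L^d\gtrsim\kappa^{-1}r_0^\alpha\rho^{1+\alpha/d}-C\rho=\tfrac c\kappa r_0^{d(\gamma-1)}\rho^\gamma-C\rho$; in the low-density range $r_0^d\rho\le1$ the $\rho^\gamma$ term is itself $\lesssim\rho$ and is absorbed. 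For $\alpha=d$ one sums the analogous contributions over the dyadic scales between $\rho^{-1/d}$ and $r_0$, which generates $(\log\tfrac{r_0^d\rho}{2\sqrt d})_+$. Combining with the entropic estimate gives \eqref{eq:fcanlowbound}.

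The main obstacle is the quantitative superstability estimate: it is needed not only with the sharp dependence of $A,B$ on $\kappa$ and $r_0$, but crucially in a \emph{scale-dependent} form, so that for $\alpha\ge d$ it sees the genuine short-distance singularity $|x|^{-\alpha}$ of $w$ rather than merely the height-$\kappa^{-1}$ core hard-wired into the definition of superstability. This is exactly where the negative tail $-\kappa/(1+|x|^s)$ must be absorbed carefully, using $s>d$, and it is precisely what the quantitative Ruelle bounds proved in Appendix~\ref{app:Ruelle} are designed to deliver. Constructing the correlated trial state for $\alpha\ge d$ in the upper bound is the other technical ingredient, but it is essentially available from \cite{JexLewMad-23}, so that here only the bookkeeping of the constants $\kappa$ and $r_0$ is new.
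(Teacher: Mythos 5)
Your overall strategy is sound and, for the lower bound, structurally the same as the paper's: bound the entropy by its Poisson maximum (equivalently, \eqref{eq:lower_bd_G_T_simple} plus Jensen) and bound the interaction from below by a local occupation-number estimate followed by Jensen/Cauchy--Schwarz over a tiling of $C_L$. The two genuine differences are these. First, for \eqref{eq:fcanupbound} the paper does not rebuild trial states: it simply inserts $\rho\1_{C_L}$ into \eqref{eq:fT_variational} and quotes \eqref{eq:upper_bd_G_T1}--\eqref{eq:upper_bd_G_T2} \emph{as stated in} \cite{JexLewMad-23}, where the constants' dependence on $\kappa$ and $r_0$ is already explicit; your plan to re-derive them via Poisson ($\alpha<d$) and correlated product states ($\alpha\geq d$) is legitimate but redundant. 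Second, and more importantly, for the $\rho^\gamma$ term in \eqref{eq:fcanlowbound} the paper works at the \emph{fixed} scale $r\sim r_0/\sqrt d$ and invokes the sharp Riesz energy inequality $\min\sum_{i<j}r^\alpha|x_i-x_j|^{-\alpha}\geq cn^\gamma-cn$ in a single cube (\cite{Lewin-22,HarSaf-05}), which yields $\sum_k \frac{c}{\kappa}n_k^\gamma$ pointwise and then $\rho^\gamma$ by Jensen applied to $t\mapsto t^\gamma$; no Ruelle-type machinery from Appendix~\ref{app:Ruelle} is needed, since the tail is disposed of once and for all by the stability of $w_2$. Your alternative of running the estimate at the density-dependent scale $\delta\sim\rho^{-1/d}$ with only elementary pair counting is workable but has a trap you should make explicit: at $\delta=\rho^{-1/d}$ the mean occupation per cube is $1$, so $\bE[n_\Delta(n_\Delta-1)]\geq \bE[n_\Delta]^2-\bE[n_\Delta]$ degenerates to $0$ and your displayed bound $\cU(\bP)/L^d\geq A\delta^d\rho^2-B\rho$ produces nothing (the subtracted diagonal $\tfrac{A}{2}\rho$ is itself of order $\rho^\gamma$). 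The fix is exactly the one you gesture at --- group the cubes into blocks of side $K\rho^{-1/d}$ with $K$ a fixed constant $\geq 2$ so that the mean block occupation exceeds $1$ by a definite margin, while keeping $K\sqrt d\,\rho^{-1/d}<r_0$ so the core bound applies to all pairs in a block --- but it must be carried out, whereas the paper's route through the $cn^\gamma-cn$ inequality sidesteps the issue entirely. With that point settled, and the low-density regime $r_0^d\rho\lesssim1$ absorbed into $-C\rho/\kappa$ as you indicate, your argument gives \eqref{eq:fcanlowbound}.
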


\begin{corollary}[Bounds on the dual variable]
	\label{thm:mubounds}
	Let $w$ satisfy Assumption~\ref{de:shortrangenew}.
	Denote by $ \mu_L \myp{\rho} $ any chemical potential maximizing \eqref{eq:fl_def} (which is unique if $T>0$).
	There are constants $ C,c > 0 $ depending only on $ s $ and the dimension $ d $, such that for all $ L > 0 $ and $ \rho \geq 0 $,
	\begin{equation}
	\label{eq:muupbounds}
		\mu_L \myp{\rho}
		\leq	\begin{cases}
					C \kappa r_0^{d \myp{\gamma-1}} \rho^{\gamma-1} + C \kappa \rho + C \myp{\kappa + \frac{1}{\kappa} + T} + T \log \rho,	& \alpha \neq d, \\
					C \kappa r_0^d \rho \myp{\log r_0^d \rho}_+ + C \kappa \myp{1+ r_0^d} \rho +\kappa + C T + T \log \rho,					& \alpha = d, \\
				\end{cases}
	\end{equation}
	and
	\begin{align}
	\label{eq:mulowbound}
		\mu_L \myp{\rho}
		\geq{}&	\begin{cases}
					\frac{c}{\kappa} r_0^{d \myp{\gamma-1}} \rho^{\gamma-1} - \kappa - \frac{c}{\kappa} -T + T \log \rho,	& \alpha \neq d, \\
					\frac{c}{\kappa} r_0^d \rho \myp[\big]{\log \frac{r_0^d \rho}{2 \sqrt{d}}}_+ - \kappa-T + T \log \rho,	& \alpha = d. \\
				\end{cases}
	\end{align}
\end{corollary}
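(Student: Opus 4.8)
The plan is to observe that $\mu_L(\rho)$ is, by definition, a subgradient at $\rho$ of the convex function $\rho\mapsto f_T(\rho,L)$, and then to bound this subgradient above and below by difference quotients of $f_T(\cdot,L)$, into which I insert the two-sided estimates of \cref{thm:tdlimbounds}. Concretely, $f_T(\rho,L)=\sup_\mu\{\mu\rho+g_T(\mu,L)\}$ is a supremum of affine functions of $\rho$, hence convex; and if $\mu_L=\mu_L(\rho)$ attains this supremum, then for every $\rho'\geq0$,
\begin{equation*}
	f_T(\rho',L)\;\geq\;\mu_L\rho'+g_T(\mu_L,L)\;=\;f_T(\rho,L)+\mu_L(\rho'-\rho).
\end{equation*}
For $T>0$ the maximizer is unique, as already noted above; for $T=0$ the same inequality holds for any maximizer, or one passes to the limit $T\to0^+$ using that the resulting bounds are uniform for $T$ in compact sets. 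I will assume $\rho>0$, the case $\rho=0$ being degenerate (the supremum is then not attained, $\mu_L=-\infty$, consistent with the $T\log\rho$ terms in \eqref{eq:muupbounds}--\eqref{eq:mulowbound}).

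For the upper bound, apply the displayed inequality with $\rho'=2\rho$:
\begin{equation*}
	\mu_L(\rho)\;\leq\;\frac{f_T(2\rho,L)-f_T(\rho,L)}{\rho}\;\leq\;\frac{1}{\rho}\Bigl(f_T^{\mathrm{up}}(2\rho,L)-f_T^{\mathrm{low}}(\rho,L)\Bigr),
\end{equation*}
where $f_T^{\mathrm{up}},\,f_T^{\mathrm{low}}$ denote the right-hand sides of \eqref{eq:fcanupbound} and \eqref{eq:fcanlowbound}. Substituting these, discarding the negative term $-\tfrac{c}{\kappa}r_0^{d(\gamma-1)}\rho^\gamma$ coming from $-f_T^{\mathrm{low}}$, and simplifying with $\log(2\rho)=\log\rho+\log2$, $(\log 2r_0^d\rho)_+\leq(\log r_0^d\rho)_++\log2$, and division of the $\rho^\gamma,\rho^2$ terms by $\rho$, one arrives at \eqref{eq:muupbounds}; the entropy term survives with coefficient exactly $T$ because $2T\rho\log(2\rho)-T\rho\log\rho=T\rho\log\rho+2T\rho\log2$, the remaining $\rho$-independent pieces being absorbed into the additive constant.

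For the lower bound, apply the inequality with $0<\rho'<\rho$:
\begin{equation*}
	\mu_L(\rho)\;\geq\;\frac{f_T(\rho,L)-f_T(\rho',L)}{\rho-\rho'}\;\geq\;\frac{f_T^{\mathrm{low}}(\rho,L)-f_T^{\mathrm{up}}(\rho',L)}{\rho-\rho'},
\end{equation*}
and since every term of $f_T^{\mathrm{up}}(\rho',L)$ vanishes as $\rho'\to0^+$ (in the $\alpha=d$ case because $(\log r_0^d\rho')_+=0$ for $\rho'$ small), the right-hand side tends to $f_T^{\mathrm{low}}(\rho,L)/\rho$ as $\rho'\to0^+$. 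Dividing the expression \eqref{eq:fcanlowbound} by $\rho$ is then precisely \eqref{eq:mulowbound}, with the same constant $c$.

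I do not anticipate a genuine obstacle: the statement is a corollary, and its content is convex duality together with the substitution of \cref{thm:tdlimbounds}. The only points requiring a little care are the selection of a maximizer $\mu_L(\rho)$ at $T=0$ (handled by the approximation indicated above) and the bookkeeping of the harmless $\gamma$-dependent numerical factors generated by the choice $\rho'=2\rho$ in the upper bound.
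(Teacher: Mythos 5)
Your proposal is correct and follows essentially the same route as the paper: convexity of $f_T(\nonarg,L)$, the subgradient inequality evaluated at $\tilde\rho=2\rho$ for the upper bound and at $\tilde\rho\to 0$ (the paper plugs in $\tilde\rho=0$ directly, using $f_T(0,L)=0$) for the lower bound, followed by substitution of the two-sided estimates of \cref{thm:tdlimbounds}. The treatment of the $T=0$ case via one-sided derivatives, and the bookkeeping of the factors of $2$, also match the paper's argument.
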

\begin{corollary}[Grand-canonical bounds]
\label{thm:gcbounds}
	Let $w$ satisfy Assumption~\ref{de:shortrangenew}.
	There are constants $ C $, $ K > 0 $, depending only on $d$ and $w$, such that for any $ \mu \in \R $ and $ L > 0 $,
	\begin{equation}
	\label{eq:fgclowbound}
		g_T \myp{\mu,L}
		\geq{}	\begin{cases}
					- K \myp{\mu+C}_+^{\frac{\gamma}{\gamma-1}} - T e^{- \frac{1}{T} \myp{\mu+C}_-},	& \alpha \neq d, \\[0.2cm]
					- K \frac{\myp{\mu+C}_+^2}{\log(2+\myp{\mu+C}_+)}- T e^{- \frac{1}{T} \myp{\mu+C}_-},	& \alpha = d,
				\end{cases}
	\end{equation}
	and
	\begin{equation}
	\label{eq:fgcupbound}
		g_T \myp{\mu,L}
		\leq{}	\begin{cases}
					- \frac{1}{K(1+T)} \myp{\mu-C}_+^{\frac{\gamma}{\gamma-1}} - T e^{- \frac{1}{T} \myp{\mu-C}_-},	& \alpha \neq d, \\
					- \frac1{K(1+T)} \frac{\myp{\mu-C}_+^2}{\log(2+\myp{\mu-C}_+)}- T e^{- \frac{1}{T} \myp{\mu-C}_-}	& \alpha = d,
				\end{cases}
	\end{equation}
	where $ \gamma = 1+\max \myp{1, \alpha / d} $.
\end{corollary}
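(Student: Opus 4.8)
The plan is to read off both inequalities as Legendre transforms of the one–variable estimates in \cref{thm:tdlimbounds}. Setting $h(\mu):=-g_T(\mu,L)$, the definition \eqref{eq:fl_def} says exactly that $f_T(\nonarg,L)=h^*$ is the convex conjugate of $h$. Since $\mu\mapsto G_T(\mu,C_L)=-T\log Z_{T,\mu,C_L}$ is real–analytic and concave for $T>0$, and a concave infimum of affine functions for $T=0$, the function $h$ is convex, lower semicontinuous and proper (its conjugate $f_T(\rho,L)$ is finite for $\rho\geq0$ by \eqref{eq:fcanupbound}--\eqref{eq:fcanlowbound}, and $+\infty$ for $\rho<0$), so Fenchel--Moreau gives back
\begin{equation*}
	g_T(\mu,L)=\inf_{\rho\geq0}\big\{f_T(\rho,L)-\mu\rho\big\}.
\end{equation*}
Because every bound in \cref{thm:tdlimbounds} is uniform in $L$, the resulting bounds on $g_T(\mu,L)$ will be uniform in $L$ automatically.

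For the lower bound \eqref{eq:fgclowbound} I would insert the lower bound \eqref{eq:fcanlowbound} into this infimum. For $\alpha\neq d$ this reduces to estimating from below, with $C:=\kappa+c/\kappa$,
\begin{equation*}
	\inf_{\rho\geq0}\Big\{\tfrac c\kappa r_0^{d(\gamma-1)}\rho^\gamma+T\rho\log\rho-(\mu+C+T)\rho\Big\},
\end{equation*}
which I would bound by the better of two elementary one–variable minimisations. Dropping the nonnegative power term and using $\inf_{\rho\geq0}\{T\rho\log\rho-\lambda\rho\}=-Te^{\lambda/T-1}$ with $\lambda=\mu+C+T$ gives $g_T(\mu,L)\geq-Te^{(\mu+C)/T}$, which is exactly $-Te^{-(\mu+C)_-/T}$ when $\mu+C<0$ and is the relevant contribution there. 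Keeping instead $T\rho(\log\rho-1)\geq-T$ and using $\inf_{\rho\geq0}\{a\rho^\gamma-\lambda\rho\}=-c_{a,\gamma}\lambda_+^{\gamma/(\gamma-1)}$ with $a=\tfrac c\kappa r_0^{d(\gamma-1)}$, $\lambda=\mu+C$ yields $g_T(\mu,L)\geq-c_{a,\gamma}(\mu+C)_+^{\gamma/(\gamma-1)}-T$, which dominates when $\mu+C\geq0$. Taking the maximum of the two and enlarging the constant $K$ if needed produces \eqref{eq:fgclowbound}; the case $\alpha=d$ is identical except that the conjugate of the $\rho^2\log\rho$ term is of size $\mu^2/\log\mu$, so I would test with $\rho\sim(\mu+C)/\log(\mu+C)$ and verify the matching estimate, which is where some extra care with the logarithm enters.

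The upper bound \eqref{eq:fgcupbound} is obtained symmetrically, inserting the upper bound \eqref{eq:fcanupbound} into the same infimum (equivalently, testing $g_T(\mu,L)\leq f_T(\rho,L)-\mu\rho$ at a well–chosen constant density $\rho$: $\rho\sim((\mu-C)/A)^{1/(\gamma-1)}$ for large $\mu$, and $\rho=e^{(\mu-C)/T-1}$ in the entropy–dominated regime, which gives $-Te^{(\mu-C)/T}$ as soon as $\mu-C<0$). The only genuinely new point is the factor $1/(1+T)$: to absorb the entropy I would use $T\rho\log\rho\leq\frac{T}{\gamma-1}\rho^\gamma$ for $\rho\geq1$ (and $\leq0$ for $\rho\leq1$), so that the effective coefficient of $\rho^\gamma$ is $O_w(1)\kappa r_0^{d(\gamma-1)}+\frac{T}{\gamma-1}\lesssim(1+T)\,O_w(1)$, and the $\rho^\gamma$–minimisation then contributes $-\tfrac{1}{K(1+T)}(\mu-C)_+^{\gamma/(\gamma-1)}$. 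For $\alpha=d$ one uses $\log\rho\leq\rho/e$ in place of the power bound and recovers the stated $\mu^2/\log(2+\mu)$ behaviour.

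The main difficulty here is bookkeeping rather than conceptual. One must propagate the $\kappa$, $r_0$ and $T$ dependence of the constants from \cref{thm:tdlimbounds} through two Legendre minimisations so that the output has precisely the claimed shape: the symmetric shifts $\mu\pm C$ with a single $C$ depending only on $d,w$; the factor $1/(1+T)$ in the upper bound but not the lower one; and the exponential tails $Te^{-(\mu\pm C)_-/T}$ that correctly enforce $g_T(\mu,L)\to0$ as $\mu\to-\infty$. The logarithmic case $\alpha=d$, where both the power and the entropy estimates carry log factors that must be matched on both sides, will require the most attention.
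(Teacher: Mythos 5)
Your plan is sound and, for the upper bound \eqref{eq:fgcupbound}, it is essentially the paper's own argument: one uses $g_T(\mu,L)=\inf_{\rho\geq0}\{f_T(\rho,L)-\mu\rho\}$ (valid since $g_T(\cdot,L)$ is a finite concave function of $\mu$, being an infimum of affine functions) and evaluates at the two test densities you describe; your observation that the entropy can be absorbed into the $\rho^\gamma$ term at the cost of a coefficient of order $1+T$, whose conjugate then carries a prefactor $(1+T)^{-1/(\gamma-1)}\geq (K(1+T))^{-1}$ because $\gamma\geq2$, is precisely where the stated factor comes from. For the lower bound \eqref{eq:fgclowbound} you take a genuinely different, though closely parallel, route: you conjugate the already-established lower bound \eqref{eq:fcanlowbound} on $f_T(\cdot,L)$, whereas the paper goes back to the pointwise superstability estimate $\sum_{i<j}w(x_i-x_j)-\mu n\geq \sum_k \frac{c}{\kappa}n_k^{\gamma}-(C+\mu)n_k$ and treats the entropy separately through the free partition function, which produces the tail $-Te^{-\frac1T(\mu+C)_-}$ directly. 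The two are equivalent in substance (the paper even advertises your route in the paragraph following the statement), and your two-case minimisation for $\alpha\neq d$ --- dropping the power term to get $-Te^{(\mu+C)/T}$, or using $T\rho(\log\rho-1)\geq-T$ and conjugating $a\rho^\gamma$ to get $-K(\mu+C)_+^{\gamma/(\gamma-1)}-T$ --- correctly reproduces \eqref{eq:fgclowbound}. Your approach has the small advantage of reusing \cref{thm:tdlimbounds} verbatim; the paper's direct argument avoids invoking Fenchel--Moreau for the lower bound.

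The one place where your sketch underestimates the work is the $\alpha=d$ lower bound. The conjugate of $\rho\mapsto b\rho^2(\log c\rho)_+$ is \emph{not} globally of the form $K\lambda_+^2/\log(2+\lambda_+)$: on the region $\rho\leq 1/c$ the power term vanishes, so for moderate $\lambda$ the infimum behaves linearly in $\lambda$, which is more negative than $\lambda^2/\log(2+\lambda)$ as $\lambda\to0^+$. The paper's fix is to first record the intermediate bound $-K\lambda_+\big(1+\lambda_+/\log(2+\lambda_+)\big)$ with $\lambda=\kappa+\mu$ and then \emph{enlarge the shift}, replacing $\kappa+\mu$ by $C+\mu$ with $C>\kappa$, so that $(\mu+C)_+$ is either zero or bounded away from zero and the linear piece is absorbed into the quadratic-over-log form; you will need the same step. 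Relatedly, for a lower bound on an infimum, ``testing with $\rho\sim(\mu+C)/\log(\mu+C)$'' only bounds the infimum from above; you must bound the function from below for all $\rho$ (e.g.\ via the explicit Lambert-$W$ minimisation as in the paper). Neither issue affects the viability of your approach.
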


We defer the proofs to Section~\ref{sec:tdlim_proofs}.
We will prove Proposition~\ref{thm:tdlimbounds} by first noting that the upper bounds on $ f_T $ follow from \eqref{eq:fT_variational} and the universal bounds \eqref{eq:upper_bd_G_T1} and \eqref{eq:upper_bd_G_T2} (as they are stated in \cite{JexLewMad-23}).
The lower bounds on $ f_T $ we provide by hand.
The bounds on the dual variable $ \mu_L $ follow as an easy corollary, using that $ f_T $ is convex with $ f_T \myp{0,L}=0 $.
We include the bounds on $ g_T $ here for completeness, though we do not need them for our purposes.
They can be obtained using the bounds on $ f_T $ and the fact that $ g_T $ are $ f_T $ are connected through the Legendre transform.
We have gathered all the details in \cref{sec:tdlim_proofs} below.

\section{Proof of Theorem~\ref{thm:thermolim} on the thermodynamic limit at fixed density}
\label{sec:thermolim}

In this section we provide the proof of Theorem~\ref{thm:thermolim} on the thermodynamic limit at fixed constant density (uniform gas), in the canonical case. We will also quickly explain how the grand-canonical case follows from this result, even if we prove more on this case later. To simplify the notation we denote by $\rho:=\rho_0$ the constant density from the statement.

Note first that it suffices to provide an upper bound, since the energy at fixed density $ \rho \1_{\Omega_N} $ is bounded from below by the minimal energy obtained when not putting restrictions on the density, that is,
	\begin{equation*}
		\liminf_{N \to \infty} \frac{F_T \myb{\rho \1_{\Omega_N}} }{\abs{\Omega_N}}
		\geq f_T \myp{\rho}.
	\end{equation*}

	\medskip

\noindent\textbf{A partition of unity.}
	In order to build a suitable trial state with density $ \rho \1_{\Omega_N} $ and compare its energy to $ f_T \myp{\rho} $, we will need to construct a specific partition of unity on $ \R^d $.
	We split $ \R^d $ into a lattice of cubes $ C_L + kL $, $ k \in \Z^d $, where $ C_L = [-L/2, L/2)^d $, and $ L $ is chosen such that $ n:= \rho L^d \in \N $.
	To handle the core of the interaction, we place corridors between the cubes (by shrinking the cubes a little) to keep particles from getting too close to each other.
	That is, we fix a tiny $ \eps > 0 $, and define $ \ell, \lambda \in \myp{0,L} $ by
	\begin{equation*}
		L := \myp[\Big]{\frac{\rho + \eps}{\rho}}^{\frac{1}{d}} \ell,
	\end{equation*}
	and
	\begin{equation*}
		\lambda := L - \ell
		= \myp[\Big]{\myp[\Big]{\frac{\rho + \eps}{\rho}}^{\frac{1}{d}} - 1} \ell.
	\end{equation*}
	Then $ \lambda $ and $ \ell $ satisfy $ L = \ell + \lambda $ and $ n = \rho L^d = \myp{\rho + \eps} \ell^d $.
	Now, let $ \bP_n $ be the minimizer of the unrestricted $ n $-body problem in the slightly smaller cube $ C_{\ell} $, and let $ \rho_n := \rho_{\bP_n} $ be its density. From the lower bound on $w$ in Assumption~\ref{de:shortrangenew}, we know that $\bP_n$ and $\rho_n$ are bounded functions, although $\|\bP_n\|_{L^\ii}$ and $\|\rho_n\|_{L^\ii}$ could in principle depend on $n$. By considering $ \rho_n $ as a function in the larger cube $ C_L $, the family of translated densities $ \rho_n \myp{\nonarg - kL} $, $ k \in \Z^d $, yields a partition of unity at density $ \rho $ when averaged over translations in $ C_L $, i.e.,
	\begin{equation}
	\label{eq:rhores}
		\frac{1}{\abs{C_L}} \int_{C_L} \sum_{k \in \Z^d} \rho_n \myp{\nonarg - kL - \tau} \id \tau
		= \frac{\myp{\rho + \eps} \ell^d}{L^d}
		= \rho,
		\For{a.e. in $ \R^d $}.
	\end{equation}
	Note that the distance between any pair of the smaller cubes $ C_{\ell} + kL $, $ k \in \Z^d $, is at least $ \lambda $, and that $ \lambda $ is of order $ L $.

	\medskip

\noindent\textbf{Constructing the trial state.}
A simple way of constructing a constant density in $\Omega_N$ would be to multiply~\eqref{eq:rhores} by the characteristic function $\1_{\Omega_N}$ and pull it inside the integral. The difficulty here is that $\1_{\Omega_N}\sum_{k \in \Z^d} \rho_n \myp{\nonarg - kL - \tau}$ might not have a constant mass for all $\tau$, which is needed to build a canonical trial state. Our main observation is that the mass is in fact independent of $\tau$ in the particular situation that the domain is a union of cubes of side length $L$ (the mass is in fact constant in each of the initial small cubes of side length $L$). The main idea is thus to first approximate $\Omega_N$ from inside by a union of cubes $\Lambda_N$ as in Figure~\ref{fig:existence}. In the complementary region close to the boundary we just take any reasonable trial state with constant density and use our universal bound~\eqref{eq:upper_bd_F_T}. Inside we take independent copies of a given trial state in each cube. When we translate the tiling, some cubes will leave $\Lambda_N$ but the same amount will penetrate on the other side by periodicity. We simply merge the incomplete cubes with the boundary part.

As announced, we decompose the density $ \rho \1_{\Omega_N} $ into a bulk term consisting of copies of $ \rho_n $ placed on a cubic lattice, and a boundary term giving rise to lower order energy.
We denote by
\begin{equation}
	\Lambda_N := \bigcup_{\substack{k \in \Z^d \\ C_L + k_L \subseteq \Omega_N}} \myp{C_L + kL}
\end{equation}
our interior approximation of $\Omega_N$ by cubes, and for $\tau \in C_L$,
\begin{equation}
	\Lambda_{N,\tau}
	:= \bigcup_{k \in \mathcal{A}_{\tau}} \myp{C_L + kL + \tau},
	\qquad
	\mathcal{A}_{\tau}
	:= \Set{k \in \Z^d \given C_L + kL +\tau \subseteq \Lambda_N},
\end{equation}
the union of cubes contained in $\Lambda_N$ after shifting the lattice of cubes by $\tau$.
Multiplying \eqref{eq:rhores} by $\1_{\Lambda_N}$ allows us to write
	\begin{align*}
		\rho \1_{\Omega_N}
		={}& \frac{1}{\abs{C_L}} \int_{C_L} \sum\limits_{k \in \mathcal{A}_{\tau}} \rho_n \myp{\nonarg-kL-\tau} \id \tau \\
		&+ \frac{1}{\abs{C_L}} \int_{C_L} \underbrace{ \rho \1_{\Omega_N \setminus \Lambda_N} + \1_{\Lambda_N} \sum\limits_{k \in \Z^d \setminus \mathcal{A}_{\tau} } \rho_n \myp{\nonarg-kL-\tau} }_{\dps \qquad =: \beta_{\tau}} \id \tau,
	\end{align*}
see Figure~\ref{fig:existence}.

  \begin{figure}
\begin{tikzpicture}
    \node[anchor=south west,inner sep=0] (image) at (0,0) {\includegraphics[width=.9\textwidth]{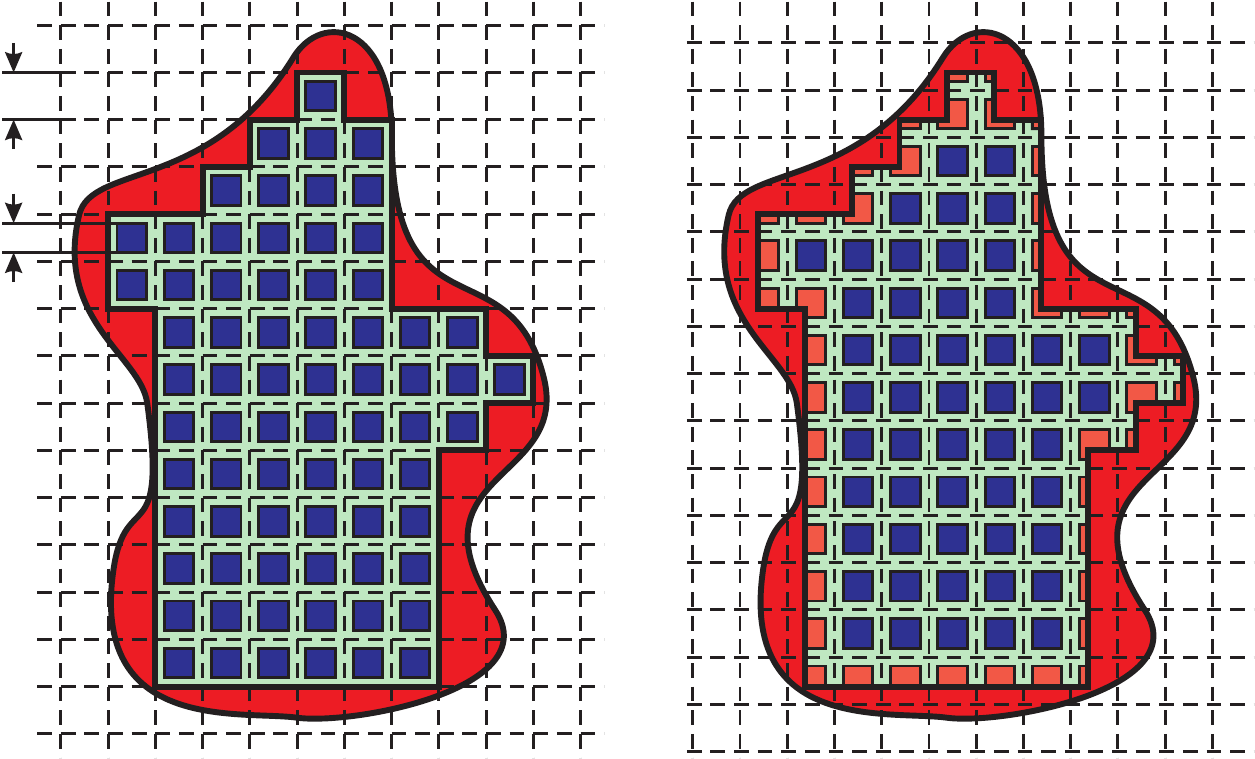}};
    \begin{scope}[x={(image.south east)},y={(image.north west)}]
     \draw (0.012,0.911) node[below] {$L$};
     \draw (-0.005,0.725) node[below] {$\ell$};
    \end{scope}
\end{tikzpicture}
  \caption{The cutting of the set $\Omega_N$ to smaller cubes with corridors on the left and shifted grid on the right. The blue colour corresponds to the cubes in $\Lambda_{N,\tau}$, orange colour to the cubes intersecting $\Omega_N\setminus \Lambda_N$, and the red colour to $ \rho \1_{\Omega_N \setminus \Lambda_N}$. The density $\beta_\tau$ corresponds to the union of the orange cubes and the red region.
  \label{fig:existence}}
  \end{figure}

	Note that because of the corridors between the cubes $ C_{\ell} + kL + \tau $, the distance between the supports of the bulk term $ \sum_{k \in \mathcal{A}_{\tau}} \rho_n \myp{\nonarg - kL - \tau} $ and the boundary term $ \beta_{\tau} $ is at least $ \lambda/2 $.
	Furthermore, we have by the regularity~\eqref{eq:regularity} of $ \Omega_N $,
	\begin{align*}
		\abs{\supp \beta_{\tau}}
		\leq{} \abs{ \Omega_N \setminus \Lambda_{N,\tau}}
		\leq{} \abs{ \Set{x \in \R^d \given \rd \myp{x, \partial \Omega_N} \leq \tfrac{3}{2}  \sqrt{d} L} }
		\leq C\abs{\Omega_N}^{1-\frac1d}L,
	\end{align*}
	in particular, $ \abs{\Lambda_{N,\tau}} = \abs{\Omega_N} + o \myp{\abs{\Omega_N}} $.
	Furthermore, by definition of $ \beta_{\tau} $ we have for any of the translated cubes $ C_L + kL + \tau $,
	\begin{equation}
	\label{eq:betabound}
		\int_{C_L + kL + \tau} \beta_{\tau} \myp{x} \id x
		\leq \int \rho_n \myp{x} + \rho \1_{C_L} \myp{x} \id x
		\leq 2 n.
	\end{equation}

	Now, in each subcube $C_L + kL + \tau$ of $ \Lambda_{N,\tau} $, we place a copy $ \bP_{n,kL+\tau} $ of the $ n $-particle minimizer $ \bP_n $ with density $ \rho_n $, and in the remaining boundary region $ \Omega_N \setminus \Lambda_{N,\tau} $ we simply place any state $ \bP_{\tau} $ with density $ \beta_{\tau} $.
	By our regularity assumptions on $\Omega_N$, it follows from \cite[Lemma $1$]{Fisher-64} that the diameter of $\Omega_N$ is of order $|\Omega_N|^{1/d}$, hence the radius $R(x)$ for $\beta_{\tau}$, defined in~\eqref{eq:Rdef}, satisfies $R(x)\leq C|\Omega_N|^{1/d}	$.
	The free energy of the boundary state $\bP_\tau$ can thus be estimated using our upper bound~\eqref{eq:upper_bd_F_T} by
	\begin{equation}
	\label{eq:trialboundary}
		\mathcal{F}_T \myp{\bP_{\tau}}\leq C \rho L\abs{\Omega_N}^{1-\frac1d} \log|\Omega_N|.
	\end{equation}
	The final trial state $ \bP_N $ on $\Omega_N$ is then defined by
	\begin{equation}
	\label{eq:thermostate}
		\bP_N
		= \frac{1}{\abs{C_L}} \int_{C_L} \widetilde{\bP}_{\tau} \otimes \bP_{\tau} \id \tau
		:= \frac{1}{\abs{C_L}} \int_{C_L} \myp[\bigg]{ \bigotimes_{k \in \mathcal{A}_{\tau}} \bP_{n, kL+\tau} } \otimes \bP_{\tau} \id \tau,
	\end{equation}
	where $ \widetilde{\bP}_{\tau} $ has density
	\begin{equation*}
		\rho_{\widetilde{\bP}_{\tau}} \myp{x} = \sum\limits_{k \in \mathcal{A}_{\tau}} \rho_n \myp{x- kL - \tau}.
	\end{equation*}

	\medskip

\noindent\textbf{Energy bounds.}
	The energy of the state $ \bP_N $ splits into a bulk term, a boundary term, and an interaction term between the two,
	\begin{equation}
	\label{eq:trialenergy}
		\mathcal{F}_T \myp{\bP_N}
		= \frac{1}{\abs{C_L}} \int_{C_L} \mathcal{F}_T \myp{\widetilde{\bP}_\tau} + \mathcal{F}_T \myp{\bP_{\tau}} + 2 D_w \myp{\rho_{\widetilde{\bP}_{\tau}}, \beta_{\tau}} \id \tau,
	\end{equation}
	where
	\begin{equation*}
		D_w \myp{\nu, \mu} := \frac{1}{2} \iint \nu \myp{x} \mu \myp{y} w \myp{x-y} \id x \id y.
	\end{equation*}
	Denoting $ \rho_{C_k} = \rho_n \myp{\nonarg - kL} $, $ k \in \Z^d $, the energy of the bulk term is
	\begin{align}
		\mathcal{F}_T \myp{\widetilde{\bP}_{\tau}}
		={}& \frac{\abs{\Lambda_{N,\tau}}}{L^d} \mathcal{F}_T \myp{\bP_n} + \sum\limits_{\substack{k,k' \in \mathcal{A}_{\tau} \\ k \neq k'}} 2 D_w \myp{\rho_n \myp{\nonarg - kL - \tau}, \rho_n \myp{\nonarg -k'L - \tau}} \nn \\
		={}& \frac{\abs{\Lambda_{N,\tau}}}{\ell^d} \frac{\rho}{\rho + \eps} F_T \myp{n, C_{\ell}} + \sum\limits_{\substack{k,k' \in \mathcal{A}_{\tau} \\ k \neq k'}} \iint \rho_{C_k} \myp{x} \rho_{C_{k'}} \myp{y} w \myp{x-y} \id x \id y,
	\label{eq:trialbulk}
	\end{align}
	where $ F_T \myp{n, C_{\ell}} $ is the minimal energy for $ n $ particles in $ C_{\ell} $, without restrictions on the density.
	To estimate the interaction between the cubes, we introduce the ``tail" of the interaction,
	\begin{equation*}
		\phi \myp{t} := \frac{\kappa}{1+ \lambda^s} \1_{(-\infty,\lambda]} \myp{t} + \frac{\kappa}{1 + \abs{t}^s} \1_{\myp{\lambda,\infty}} \myp{t}
		\qquad t \in \R,
	\end{equation*}
	and the minimal distance between particles in two different cubes
	\begin{equation}
	\label{eq:deltadist}
		\delta_{kk'} := \inf_{\substack{x \in C_{\ell} + kL \\ y \in C_{\ell} + k'L}} \abs{x-y}.
	\end{equation}
	Note that because of the corridors between the cubes, we have
	\begin{equation}
	\label{eq:deltabound}
		\delta_{0k} \geq \lambda + L \myp{\abs{k}-1}
		\geq \lambda + \abs{x} - L \myp{1+ \sqrt{d}}
	\end{equation}
	for all $ k \neq 0 $ and $ x \in C_L + kL $.
	In particular, since we always have $ \delta_{0k} \geq \lambda $, we can estimate
	\begin{align*}
		\MoveEqLeft[6] \sum\limits_{\substack{k,k' \in \mathcal{A}_{\tau} \\ k \neq k'}} \iint \rho_{C_k} \myp{x} \rho_{C_{k'}} \myp{y} w \myp{x-y} \id x \id y \\
		\leq{}& \sum\limits_{k \in \mathcal{A}_{\tau}} \iint \rho_{C_k} \myp{x} \sum\limits_{\substack{k' \in \Z^d \\ k \neq k'}} \rho_{C_{k'}} \myp{y} \phi \myp{\delta_{kk'}} \id x \id y \\
		=& \#\mathcal{A}_{\tau} n^2 \sum\limits_{\substack{k \in \Z^d \\ k \neq 0}} \phi \myp{\delta_{0k}}
		\leq{} \rho^2 \abs{\Lambda_{N,\tau}} L^d \sum\limits_{\substack{k \in \Z^d \\ k \neq 0}} \phi \myp{\delta_{0k}} \\
		\leq{}& \rho^2 \abs{\Lambda_{N,\tau}} \int_{\R^d} \phi \myp[\big]{\abs{x} - L \myp{1+ \sqrt{d}}} \id x,
	\end{align*}
	where we used \eqref{eq:deltabound} in the last inequality.
	We have
	\begin{align*}
		\int_{\R^d} \phi \myp[\big]{\abs{x} - L \myp{1+ \sqrt{d}}} \id x
		={}& \frac{\abs{\mathbb{S}^{d-1}}}{d} \phi \myp{0} L^d \myp{1+\sqrt{d}}^d \\
		&+ \abs{\mathbb{S}^{d-1}} \int_0^{\infty} \phi \myp{r} \myp[\big]{r + L \myp{1+\sqrt{d}}}^{d-1} \id r.
	\end{align*}
	Splitting the integral at $ r = L \myp{1+ \sqrt{d}} $ gives
	\begin{align*}
		\MoveEqLeft[4] \abs{\mathbb{S}^{d-1}} \int_0^{\infty} \phi \myp{r} \myp[\big]{r + L \myp{1+\sqrt{d}}}^{d-1} \id r \\
		\leq{}& \frac{\abs{\mathbb{S}^{d-1}}}{d} \phi \myp{0} \myp{2^d -1} L^d \myp{1+\sqrt{d}}^d + 2^{d-1} \int_{\abs{x} \geq L \myp{1+ \sqrt{d}}} \phi \myp{ \abs{x}} \id x,
	\end{align*}
	where for instance we can bound
	\begin{equation*}
		\int_{\abs{x} \geq L \myp{1+ \sqrt{d}}} \phi \leq \kappa \abs{\mathbb{S}^{d-1}} \int_{L \myp{1+\sqrt{d}}}^{\infty} r^{d-s-1} \id r = \frac{\kappa \abs{\mathbb{S}^{d-1}}}{\myp{s-d} \myp{L \myp{1+\sqrt{d}}}^{s-d}}.
	\end{equation*}
	Collecting the bounds, we have shown
	\begin{align}
		\MoveEqLeft[4] \int_{\R^d} \phi \myp[\big]{\abs{x} - L \myp{1+ \sqrt{d}}} \id x \nn \\
		\leq{}& 2^{d-1} \kappa \abs{\mathbb{S}^{d-1}} \myp[\bigg]{ \frac{2}{d} \frac{\myp{L \myp{1+\sqrt{d}}}^d}{1 + \lambda^s} + \frac{1}{\myp{s-d} \myp{L \myp{1+\sqrt{d}}}^{s-d}}},
	\label{eq:w2bound}
	\end{align}
	and hence the interaction energy between the cubes in the bulk of $ \Omega_N $ is bounded by
	\begin{align}
		\MoveEqLeft[4] \frac{1}{\abs{\Omega_N}} \sum\limits_{\substack{k,k' \in \mathcal{A}_{\tau} \\ k \neq k'}} \iint \rho_{C_k} \myp{x} \rho_{C_{k'}} \myp{y} w \myp{x-y} \id x \id y \nn \\
		\leq{}& \rho^2 2^{d-1} \kappa \abs{\mathbb{S}^{d-1}} \myp[\bigg]{ \frac{2}{d} \frac{\myp{L \myp{1+\sqrt{d}}}^d}{1 + \lambda^s} + \frac{1}{\myp{s-d} \myp{L \myp{1+\sqrt{d}}}^{s-d}}},
	\label{eq:bulkbound}
	\end{align}
	where the right hand side tends to zero when $ L \to \infty $, since $ \lambda $ is of order $ L $, and $ s > d $.

	What remains now is to estimate the interaction $ D_w \myp{\rho_{\widetilde{\bP}_{\tau}}, \beta_{\tau}} $ between the bulk and the boundary in \eqref{eq:trialenergy}.
	Denote
	\begin{equation*}
		\mathcal{B}_{N,\tau} := \Set{k \in \Z^d \given \myp{C_L + kL + \tau} \cap \myp{\Omega_N \setminus \Lambda_N} \neq \emptyset }.
	\end{equation*}
	Then $ \mathcal{A}_{N,\tau} \cap \mathcal{B}_{N,\tau} = \emptyset $, and since the cubes $ \myp{C_L + kL + \tau}_{k \in \mathcal{A}_{N,\tau} \cup \mathcal{B}_{N,\tau}} $ by construction constitute a cover of $ \Omega_N $ for any $ \tau \in C_L $, we have
	\begin{equation*}
		\Omega_N \setminus \Lambda_{N,\tau}
		\subseteq \bigcup\limits_{k \in \mathcal{B}_{N,\tau}} \myp{C_L + kL + \tau},
	\end{equation*}
	where by regularity of $ \Omega_N $,
	\begin{equation*}
		\abs[\Big]{\bigcup\limits_{k \in \mathcal{B}_{N,\tau}} \myp{C_L + kL}}
		\leq \abs{ \Set{ x \in \R^d \given \rd \myp{x, \partial \Omega_N} \leq \tfrac{3}{2} \sqrt{d} L } }
		\leq C\abs{\Omega_N}^{1-\frac1d}  L.
	\end{equation*}
	To estimate $ D_w \myp{\rho_{\widetilde{\bP}_{\tau}}, \beta_{\tau}} $, we will use the slightly smaller distance
	\begin{equation*}
		{\widetilde{\delta}}_{kk'}
		:={} \inf_{\substack{x \in C_{\ell} + kL + \tau \\ y \in C_L + k'L + \tau }} \abs{x-y}
		\leq{} \inf_{\substack{x \in C_{\ell} + kL + \tau \\ y \in \supp \beta_\tau \cap \myp{C_L + k'L + \tau} }} \abs{x-y},
	\end{equation*}
	for $ k \in \mathcal{A}_{N,\tau} $ and $ k' \in \mathcal{B}_{N,\tau} $, because of the presence of the term $ \rho \1_{\Omega_N \setminus \Lambda_N} $ in the definition of $ \beta_{\tau} $.
	Note that in spite of this, the distance between the supports of the bulk density $ \rho_{\widetilde{\bP}_\tau} $ and $ \beta_{\tau} $ is still at least $ \lambda/2 $, and we have the bound $ \widetilde{\delta}_{0k} \geq \abs{x} - L \myp{1+ \sqrt{d}} $.
	Hence we can estimate the interaction term in \eqref{eq:trialenergy},
	\begin{align}
		2D_w \myp{\rho_{\widetilde{\bP}_\tau}, \beta_{\tau}}
		={}& \sum\limits_{\substack{k' \in \mathcal{B}_{N,\tau} \\ k \in \mathcal{A}_{N,\tau}}} \int_{C_L + k'L + \tau} \beta_{\tau} \myp{y} \int \rho_n \myp{x - kL - \tau} w \myp{x-y} \id x \id y \nn \\
		\leq{}& \sum\limits_{k' \in \mathcal{B}_{N,\tau}} \int_{C_L + k'L + \tau} \beta_{\tau} \myp{y} \id y \sum\limits_{\substack{k \in \Z^d \\ k \neq k'}} n \phi \myp{ \widetilde{\delta}_{kk'} } \nn \\
		\leq{}& \# \mathcal{B}_{N,\tau} 2 n^2 \sum\limits_{\substack{k \in \Z^d \\ k \neq 0}} \phi \myp{\widetilde{\delta}_{0k}} \nn \\
		\leq{}& C \rho^2 \abs{\Omega_N}^{1-\frac1d} L \int_{\R^d} \phi \myp[\big]{\abs{x} - L \myp{1+ \sqrt{d}}} \id x,
	\label{eq:trialinteract}
	\end{align}
	where we also used \eqref{eq:betabound}.
	The integral on the right hand side is bounded by \eqref{eq:w2bound}, so this interaction term is of lower order than $ \abs{\Omega_N} $.
	Finally, combining \eqref{eq:trialenergy} with \eqref{eq:trialbulk}, \eqref{eq:trialboundary}, \eqref{eq:bulkbound}, and \eqref{eq:trialinteract}, we conclude that
	\begin{align*}
		\lim_{L \to \infty} \limsup_{N \to \infty} \frac{\mathcal{F}_T \myp{\bP_N}}{\abs{\Omega_N}}
		\leq{}& \lim_{L \to \infty} \frac{\rho}{\rho + \eps} \frac{F_T \myp{n, C_{\ell}}}{\ell^d} + o\myp{1}_{L \to \infty} \\
		={}& \frac{\rho}{\rho + \eps} f_T \myp{\rho + \eps}.
	\end{align*}
	Since $ \eps $ is arbitrary, and $ f_T $ is continuous, this concludes the proof of Theorem~\ref{thm:thermolim}.\qed

	\bigskip

The limit in the canonical case implies the same in the grand-canonical case. We state it here, although we will prove more later.

\begin{corollary}
\label{cor:thermolim}
	Under the same assumptions as \cref{thm:thermolim}, the corresponding grand-canonical thermodynamic limit exists, and is the same as in the canonical case,
	\begin{equation}
		\lim_{N \to \infty} \frac{G_T \myb{\rho \1_{\Omega_N}}}{\abs{\Omega_N}}
		= f_T \myp{\rho}.
	\end{equation}
	Here, we do not need to assume that $ \rho \abs{\Omega_N} $ is an integer.
\end{corollary}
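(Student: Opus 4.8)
The plan is to establish the two matching bounds separately: the lower bound $\liminf_{N\to\infty}\abs{\Omega_N}^{-1}G_T[\rho\1_{\Omega_N}]\geq f_T(\rho)$ by Legendre duality, and the upper bound $\limsup_{N\to\infty}\abs{\Omega_N}^{-1}G_T[\rho\1_{\Omega_N}]\leq f_T(\rho)$ by reducing to the canonical statement of \cref{thm:thermolim}. The only point requiring care is that $\rho\abs{\Omega_N}$ need not be an integer, which we deal with by slightly truncating $\Omega_N$ and using the sub-additivity estimate of \cref{prop:subadditivity}.

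\smallskip
\noindent\textbf{Lower bound.} By the variational formula~\eqref{eq:def_G_T_mu_Omega}, for every $\mu\in\R$ we have $G_T(\mu,\Omega_N)\leq G_T[\rho\1_{\Omega_N}]-\mu\int_{\Omega_N}\rho$, that is,
\begin{equation*}
	G_T[\rho\1_{\Omega_N}]\geq G_T(\mu,\Omega_N)+\mu\rho\abs{\Omega_N}.
\end{equation*}
Dividing by $\abs{\Omega_N}$ and using the existence of the grand-canonical thermodynamic limit in the chemical potential, $\abs{\Omega_N}^{-1}G_T(\mu,\Omega_N)\to g_T(\mu)$ — valid for any sequence of domains with uniformly regular boundary, see~\eqref{eq:fgc} and~\cite{Ruelle} — we obtain $\liminf_{N\to\infty}\abs{\Omega_N}^{-1}G_T[\rho\1_{\Omega_N}]\geq g_T(\mu)+\mu\rho$. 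Taking the supremum over $\mu\in\R$ and invoking the Legendre relation~\eqref{eq:equiv} gives $\liminf_{N\to\infty}\abs{\Omega_N}^{-1}G_T[\rho\1_{\Omega_N}]\geq f_T(\rho)$.

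\smallskip
\noindent\textbf{Upper bound.} I would first reduce to the integer-mass case. Put $m_N:=\lfloor\rho\abs{\Omega_N}\rfloor\in\N$; since $\abs{\Omega_N}\to\infty$ and $\rho>0$, we have $m_N\to\infty$. By shaving a small cap off $\Omega_N$ — for instance intersecting with a half-space in a generic direction and moving the cutting hyperplane, the removed volume varying continuously, until it equals $(\rho\abs{\Omega_N}-m_N)/\rho<1/\rho$ by the intermediate value theorem — one obtains a connected subdomain $\Omega_N'\subseteq\Omega_N$ with $\rho\abs{\Omega_N'}=m_N\in\N$, with $\abs{\Omega_N\setminus\Omega_N'}$ bounded, and still satisfying a uniform regularity bound~\eqref{eq:regularity} (the new boundary facet is flat and the removed piece has bounded volume). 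Writing $\rho\1_{\Omega_N}=\rho\1_{\Omega_N'}+\rho\1_{\Omega_N\setminus\Omega_N'}$ and applying \cref{prop:subadditivity} with a fixed $\eps\in(0,1/2]$, using that $\rho$ is a constant and $\abs{\Omega_N\setminus\Omega_N'}$ is bounded, leads to
\begin{equation*}
	\frac{G_T[\rho\1_{\Omega_N}]}{\abs{\Omega_N}}\leq\frac{G_T[\rho\1_{\Omega_N'}]}{\abs{\Omega_N}}+C_\rho\,\eps+\frac{C_{\rho,\eps,T}}{\abs{\Omega_N}},
\end{equation*}
with $C_\rho$ independent of $\eps$. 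Since $\rho\abs{\Omega_N'}\in\N$, we have $G_T[\rho\1_{\Omega_N'}]\leq F_T[\rho\1_{\Omega_N'}]$, and \cref{thm:thermolim} applied to the sequence $(\Omega_N')$ together with $\abs{\Omega_N'}/\abs{\Omega_N}\to1$ yields $\limsup_{N\to\infty}\abs{\Omega_N}^{-1}G_T[\rho\1_{\Omega_N'}]\leq f_T(\rho)$. Hence $\limsup_{N\to\infty}\abs{\Omega_N}^{-1}G_T[\rho\1_{\Omega_N}]\leq f_T(\rho)+C_\rho\eps$, and letting $\eps\downarrow0$ concludes.

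\smallskip
The substance of the argument is light, given \cref{thm:thermolim}: the lower bound is pure duality, and the upper bound is $G_T\leq F_T$ at integer mass plus a vanishing error from sub-additivity. The only real obstacle is the bookkeeping in the integer-mass reduction — one must verify that the truncated domains $\Omega_N'$ stay connected and keep a uniform boundary-regularity constant, so that \cref{thm:thermolim} genuinely applies to $(\Omega_N')$. This is routine for the domains admitted by~\eqref{eq:regularity}, but it is the one step where a small amount of care is needed.
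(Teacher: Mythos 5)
Your lower bound is the same duality argument as the paper's (the paper fixes $\mu=f_T'(\rho)$ rather than taking a supremum, but this is cosmetic). Your upper bound, however, takes a genuinely different route: the paper does \emph{not} reduce to \cref{thm:thermolim}, but instead re-runs the tiling construction of its proof with the grand-canonical Gibbs state $\bP_{T,\mu_\ell(\rho+\eps),\ell}$ in place of the canonical minimizer $\bP_n$, which removes the integer-mass constraint at the source. Your reduction --- truncate $\Omega_N$ to $\Omega_N'$ with $\rho\abs{\Omega_N'}\in\N$, absorb the removed sliver via \cref{prop:subadditivity}, and use $G_T\leq F_T$ at integer mass --- is a legitimate and arguably more economical alternative once \cref{thm:thermolim} is available; the sub-additivity bookkeeping (all $\rho_2$-terms are $O(1)$ since $\abs{\Omega_N\setminus\Omega_N'}<1/\rho$) is correct.

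The one step you flag as ``routine'' is the one that is not, as written. A half-space cut of a general connected domain satisfying only~\eqref{eq:regularity} need not leave a connected set (think of a domain with several thin protrusions crossing the cutting hyperplane), and the new flat facet $\Omega_N\cap\{x_1=c\}$ could a priori have large $(d-1)$-measure, so the uniform regularity constant of $\partial\Omega_N'$ also requires an argument rather than an assertion. Both issues are fixable: for $d\geq 2$, the regularity condition~\eqref{eq:regularity} (with $t=t_0/2$) guarantees a point $x_0$ with $\rd(x_0,\partial\Omega_N)\geq \tfrac{t_0}{2}\abs{\Omega_N}^{1/d}$, and removing a closed ball of the required (bounded) volume centered at such a deep interior point keeps the open set connected (reroute paths around the sphere) and adds only a boundary piece of bounded area, so uniform regularity survives; for $d=1$ the domains are intervals and your end-truncation works verbatim. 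With that repair --- or with the paper's grand-canonical trial-state construction, which sidesteps the issue entirely --- your argument is complete.
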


\begin{proof}
	For the upper bound, we can use the same proof as in \cref{thm:thermolim} with only minor modifications.
	Because the particle number is now not fixed, there is no need to put any restrictions on the side length $ L $ of the cubic lattice covering $ \R^d $.
	Instead of using $ \rho_n $ to construct the partition of unity with corridors in \eqref{eq:rhores}, we take the density of the grand-canonical Gibbs state $ \bP_{T, \mu_{\ell}\myp{\rho+\eps},\ell} $ in $ C_{\ell} $, and also use this Gibbs state in the construction of the trial state \eqref{eq:thermostate}.
	Here, the chemical potential $ \mu_{\ell} \myp{\rho+\eps} $ is chosen to maximize \eqref{eq:fl_def} at density $ \rho+\eps $ and side length $ \ell $, in particular,
	\begin{equation*}
		\mathcal{G}_T \myp{\bP_{T, \mu_{\ell}\myp{\rho+\eps},\ell}} = \ell^d f_T \myp{\rho+\eps,\ell},
	\end{equation*}
	as explained in \eqref{eq:avg_particles}-\eqref{eq:fT_variational}.
	Going through the proof exactly as before now yields
	\begin{equation*}
		\limsup_{N \to \infty} \frac{G_T \myb{\rho \1_{\Omega_N}}}{\abs{\Omega_N}}
		\leq{} \lim_{\ell \to \infty} \frac{\rho}{\rho+\eps} \frac{f_T \myp{\rho+\eps,\ell}}{\ell^d} + o\myp{1}_{\ell \to \infty}
		= \frac{\rho}{\rho+\eps} f_T \myp{\rho+\eps}
	\end{equation*}
	for any sufficiently small $ \eps > 0 $.

	For the corresponding lower bound on $ G_T \myb{\rho \1_{\Omega_N}} $, we introduce the chemical potential $ \mu=f'_T(\rho)$ and note that for any grand-canonical state $ \bP $ with $ \rho_{\bP} = \rho \1_{\Omega_N} $, we have
	\begin{equation*}
		\mathcal{G}_T \myp{\bP}
		= \mathcal{G}_T\myp{\bP}-\mu\cN(\bP) + \mu \rho \abs{\Omega_N}.
	\end{equation*}
	It follows that
	\begin{equation*}
		\liminf_{N \to \infty} \frac{G_T \myb{\rho \1_{\Omega_N}}}{\abs{\Omega_N}}
		\geq{} \liminf_{N \to \infty} \frac{G_T \myp{\mu, \Omega_N}}{\abs{\Omega_N}} + \mu \rho
		={} g_T \myp{\mu} + \mu \rho=f_T(\rho),
	\end{equation*}
	where we recall $ g_T $ is the usual grand-canonical thermodynamic limit \eqref{eq:fgc}. The last equality is due to our choice $\mu=f'_T(\rho)$.
\end{proof}

\begin{remark}[Hard-core]
Corollary~\ref{cor:thermolim} is also valid in the hard-core case $\alpha=+\ii$, under the additional assumption that $ \rho_0 <\rho_c(d) r_0^{-d} $, where $\rho_c(d)$ is the packing density. Due to the corridors we slightly increase the density in the small boxes and therefore need the strict inequality. In the grand-canonical case we do not need to separate the boundary part as we did in Figure~\ref{fig:existence} and the proof goes as in~\cite[Thm.~14]{JexLewMad-23}.
To make the proof of \cref{thm:thermolim} work for hard cores in the canonical case, we would need an upper bound on the free energy of the boundary part, but it is not clear if the corresponding density is representable (see the discussion in~\cite[Sec.~2.4]{JexLewMad-23}.
\end{remark}

\section{Grand-canonical convergence rates}
\label{sec:gcconvrate}
The aim of this section is to provide \emph{a priori} bounds for the convergence rate of the usual grand-canonical free energy for cubes in the thermodynamic limit. That is, for cubes $ C_L = [ -L/2 , L/2 )^d $, temperature $ T \geq 0 $, and any chemical potential $ \mu \in \R $, we estimate
\begin{equation*}
	\frac{G_T \myp{\mu, C_L}}{L^d}-g_T \myp{\mu}.
\end{equation*}

A central tool for controlling error terms is a bound due to Ruelle \cite{Ruelle-70} which allows one to uniformly control the local average (square) number of particles $ \expec[\big]{n_Q^2}_{T,\mu,\Omega} $ in a cube $ Q $, for a Gibbs state in the set $ \Omega \subseteq \R^d $.
One version of the bound takes the form
\begin{equation}
\label{eq:ruelle}
	\boxed{\expec{n_Q}_{T, \mu, \Omega}\leq \expec[\big]{n_Q^2}_{T, \mu, \Omega}^{1/2}
	\leq  \abs{Q} \xi_{T,\mu},}
\end{equation}
for all cubes $Q$ of side length at least $L_0$ (which is independent of $\mu$ and $T$), and where
\begin{equation}
\xi_{T,\mu}=\begin{cases}
C_Te^{\frac{\mu}{2T}} \myp[\big]{1+e^{\frac{\mu d}{2T\eps}} } &\text{for $T>0$,}\\
C_0(C_0+\mu)_+^{1+\frac{d}{\eps}}&\text{for $T=0$.}
              \end{cases}
 \label{eq:def_xi}
\end{equation}
We recall that $\pscal{\nonarg}_{T,\mu,\Omega}$ denotes the expectation against the Gibbs state at temperature $T$, chemical potential $\mu$ in a domain $\Omega$. At $T=0$ we just take any minimizer $\{\bar x_1,...,\bar x_N\}$ of the free energy
$$\min_{n\geq0}\min_{x_1,...,x_n\in\Omega} \myt[\bigg]{\sum_{1\leq j<k\leq n}w(x_j-x_k)-\mu n }$$
and $\expec[\big]{n_Q^2}_{0, \mu, \Omega}:=\#\{\bar x_j\in Q\}^2$ is simply the square of the number of points in $Q$.
In~\eqref{eq:def_xi}, $\eps=\min(1,s-d)/2$ and the constant $ C_T $ depends only on $ T $ and the interaction $ w $ between the particles, and not on the cube $ Q $, or the larger domain $ \Omega $. The bound~\eqref{eq:ruelle} is well known at low activity $z=e^{\mu/T}\ll1$ and can be found in~\cite[Chap.~4]{Ruelle}. The bound for larger activities is given in Appendix~\ref{app:Ruelle}, in Corollary~\ref{cor:Ruelle_polynomial} for $T>0$ and Corollary~\ref{cor:T0} for $T=0$.

We express the convergence rate in two different ways. The following proposition provides an additive error term. Later in Corollary~\ref{cor:gcconvrate_mu} we instead express the rate using a shift of the chemical potential $\mu$, which turns out to be convenient for our purposes.

\begin{proposition}[Grand-canonical convergence rate]
\label{prop:gcconvrate}
	Let $ T\geq  0 $ and $ \mu \in \R $, and suppose that the interaction $ w $ satisfies the conditions in \cref{de:shortrangenew}.
	Then there exists a constant $ c > 0 $ depending only on $ d$, $ T $, and the interaction $ w $, such that
	\begin{equation}
	\label{eq:gcconvrate}
		\left|\frac{G_T \myp{\mu,C_{\ell}}}{\ell^d}-g_T \myp{\mu}\right|
		\leq{}  c\, \xi_{T,\mu}^2\eps_\ell
	\end{equation}
	for any $ \ell > 0 $ sufficiently large (independently of $\mu$ and $T$), where
\begin{equation}
\eps_\ell:= \begin{cases}
		\ell^{-1}&\text{if $s>d+1$,}\\
		\ell^{-1}\log\ell&\text{if $s=d+1$,}\\
		\ell^{d-s}&\text{if $d<s<d+1$.}
			\end{cases}
\label{eq:eps_ell}
\end{equation}
\end{proposition}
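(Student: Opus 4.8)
The plan is to prove \eqref{eq:gcconvrate} by a two-sided subadditivity/superadditivity argument in the spirit of the usual thermodynamic limit proof, but making every error term quantitative via the Ruelle bound \eqref{eq:ruelle}. The basic mechanism is as follows. On the one hand, since $g_T(\mu)=\lim_{L\to\infty}G_T(\mu,C_L)/L^d$ exists and is independent of the domain sequence, it suffices to control the deviation between $G_T(\mu,C_\ell)/\ell^d$ and $G_T(\mu,C_L)/L^d$ for $L\gg\ell$ and then let $L\to\infty$. To pass from a big cube $C_L$ to many small cubes $C_\ell$, I would tile $C_L$ by translates of $C_\ell$ (say $L/\ell\in\N$ to start, handling the general case by an easy correction), place in each small cube an independent copy of the Gibbs state $\bP_{T,\mu,C_\ell}$, and get a grand-canonical trial state for $C_L$. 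Its free energy is $(L/\ell)^d\,G_T(\mu,C_\ell)$ plus the interaction energy between distinct small cubes; bounding this interaction gives $G_T(\mu,C_L)\le (L/\ell)^d G_T(\mu,C_\ell)+(\text{error})$, i.e. approximate subadditivity. For the reverse direction, I would take the Gibbs state in $C_L$, restrict/condition it to the sub-cubes $C_\ell$ (dropping cross-cube interactions from below, which is the only place the \emph{sign} of $w$ enters — but $w$ is merely bounded in tail, not sign-definite, so we cannot just drop; instead we again pay the interaction energy as an error controlled by Ruelle), obtaining $G_T(\mu,C_L)\ge (L/\ell)^d G_T(\mu,C_\ell)-(\text{error})$ after also accounting for the entropy cost of restricting to a product structure and a boundary layer between adjacent cubes.

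The core quantitative estimate is the bound on the inter-cube interaction energy. For two cubes $Q,Q'$ of side $\ell$ at distance $r$, the interaction energy in any grand-canonical state whose local particle numbers obey the Ruelle bound is controlled by
\begin{equation*}
\Bigl| \sum_{j\in Q, k\in Q'} w(x_j-x_k)\Bigr|
\lesssim \expec{n_Q}\,\expec{n_{Q'}}\,\sup_{|x|\ge r}|w(x)|
\lesssim \ell^{2d}\,\xi_{T,\mu}^2\,\frac{\kappa}{1+r^s},
\end{equation*}
using $|w(x)|\le \kappa/(1+|x|^s)$ in the tail (the short-range core part $\1(|x|<\kappa^{-1})/\kappa$ contributes nothing once cubes are at distance $\ge\kappa^{-1}$) together with Cauchy–Schwarz and \eqref{eq:ruelle}. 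Summing the bound $\kappa/(1+r^s)$ over all pairs of small cubes inside $C_L$: the number of pairs of cubes at lattice distance $\sim m\ell$ is $\sim (L/\ell)^d m^{d-1}$, so
\begin{equation*}
\frac{1}{L^d}\sum_{Q\ne Q'}\ell^{2d}\frac{\kappa}{1+(\mathrm{dist})^s}
\;\lesssim\; \xi_{T,\mu}^2\, \ell^{d}\sum_{m\ge 1} m^{d-1}\frac{1}{(m\ell)^{s}}
\;\lesssim\; \xi_{T,\mu}^2\,\ell^{d-s}\sum_{m\ge1}m^{d-1-s}.
\end{equation*}
When $s>d+1$ one should also exploit the translation average (shifting the tiling by $\tau\in C_\ell$ and averaging) to replace a bulk $\ell^{d-s}$ contribution — which would be too big — by a more favourable $\ell^{-1}$, essentially because the "worst" near-boundary cubes are a fraction $\ell/L$ of all cubes; more precisely one estimates $\sum_{k\ne0}\phi(\delta_{0k})$ as in the proof of Theorem~\ref{thm:thermolim} (cf.\ \eqref{eq:w2bound}), where a volume term $\ell^d/(1+\lambda^s)$ and a tail term $\ell^{d}\cdot\ell^{-s}$ appear; with $\lambda\sim\ell$ this is $O(\ell^{d-s})$ per unit volume, but after dividing by the side length and optimizing one gets the three regimes of $\eps_\ell$ in \eqref{eq:eps_ell}: $\ell^{-1}$ for $s>d+1$ (the $\ell^{d-s}$ tail beats $\ell^{-1}$ only when $s-d>1$, so $\ell^{-1}$ dominates the error budget there once the boundary fraction $\ell/L$ is folded in — this matching of exponents is the delicate bookkeeping), the borderline logarithm at $s=d+1$, and $\ell^{d-s}$ for $d<s<d+1$.

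I would organize the write-up as: (i) reduce to comparing $C_\ell$ with $C_{m\ell}$ and then $m\to\infty$; (ii) the upper bound via independent copies plus the inter-cube interaction estimate above, using that the Gibbs state in $C_\ell$ satisfies the Ruelle bound \eqref{eq:ruelle} for $\ell\ge L_0$; (iii) the lower bound, where one conditions the $C_{m\ell}$-Gibbs state on cube-occupation data, uses subadditivity of entropy across the product decomposition, and again pays the cross interaction — here one must be a little careful that $\expec{n_Q^2}$ rather than $\expec{n_Q}^2$ is what Ruelle controls, which is exactly why \eqref{eq:ruelle} is stated with the second moment; (iv) collect the errors and take $m\to\infty$ to identify $g_T(\mu)$, yielding \eqref{eq:gcconvrate} with the stated $\eps_\ell$. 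The main obstacle is item (iii): getting a \emph{lower} bound of the correct order requires simultaneously (a) removing the inter-cube interactions without a favourable sign, paid for by Ruelle bounds applied to the true $C_{m\ell}$-Gibbs state, and (b) controlling the entropy lost by replacing the true state with a product over sub-cubes, which needs that conditioning on the tuple of occupation numbers costs only $O(\#\text{cubes}\cdot\log(\text{typical }n))$ in entropy — a term that is $o(L^d)$ but must be shown to fit inside the $\xi_{T,\mu}^2\eps_\ell$ budget, uniformly in $\mu$. Handling the non-divisible case $L/\ell\notin\N$ and the translation averaging that sharpens $\ell^{d-s}\to\ell^{-1}$ when $s>d+1$ is the remaining routine-but-fiddly part, entirely parallel to the computation \eqref{eq:w2bound}–\eqref{eq:bulkbound} already carried out in Section~\ref{sec:thermolim}.
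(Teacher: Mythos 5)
Your overall architecture is the same as the paper's: compare $C_\ell$ with a large cube $C_L$ in both directions, using a tensor product of small-cube Gibbs states for one inequality and a decomposition of the big-cube Gibbs state for the other, with Ruelle bounds controlling the cross-cube interactions. However, your central quantitative estimate is wrong. You sum the inter-cube interaction by treating every pair of distinct $\ell$-cubes as being at distance $\gtrsim m\ell$, which yields $\xi_{T,\mu}^2\,\ell^{d-s}$ per unit volume. This misses the dominant contribution: adjacent cubes share a face, so pairs of points near that face are at distance $O(1)$, and the surface contribution is $\sim \ell^{d-1}$ per cube, i.e.\ $\ell^{-1}$ per unit volume --- which for $s>d+1$ is \emph{larger} than $\ell^{d-s}$, not smaller. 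Your remark that translation averaging should improve $\ell^{d-s}$ to "a more favourable $\ell^{-1}$" has the inequality backwards (for $s>d+1$, $\ell^{d-s}<\ell^{-1}$), and no translation averaging is used in this proposition. The correct route, as in the paper, is to subdivide each $\ell$-cube into cubes of \emph{fixed} side $r\sim L_0$ --- which is also required to apply the Ruelle bound \eqref{eq:ruelle} legitimately, since it holds only for cubes of side at least $L_0$, not for cubes of side $\ell$ --- and to compare the resulting lattice sum with $\int_{C_\ell}\int_{(C_\ell)^c}(1+|x-y|^s)^{-1}$, which is exactly what \cref{lem:integralbound} evaluates and what produces the three regimes of $\eps_\ell$.

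Your item (iii) also introduces an unnecessary obstacle. You propose conditioning the $C_L$-Gibbs state on cube-occupation data, replacing it by a product, and then controlling the resulting entropy cost "uniformly in $\mu$" --- a step you correctly identify as unresolved. The paper avoids this entirely by geometric localization: setting $\bP_k:=\bP_{\1_{C_\ell^k}}$, the interaction splits exactly into local terms plus cross terms, the particle number is additive, and the entropy satisfies $\cS(\bP)\le\sum_k\cS(\bP_k)$ with no error term; each $\bP_k$ is then an admissible competitor for $G_T(\mu,C_\ell^k)$, giving the lower bound \eqref{eq:geolowbound} directly. With that tool your "main obstacle" disappears, but as written your route does not close.
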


The inequality~\eqref{eq:gcconvrate} provides the expected error bound on $G_T(\mu,C_\ell)$ in terms of the surface area, when $w$ decays fast enough, that is, $s>d+1$. When $d<s\leq d+1$ the bound gets worse. The proof will be based on the following simple lemma, which will be used to estimate the interaction between subsystems.

\begin{lemma}
	\label{lem:integralbound}
	For any $ s > d $, there is a constant $ c > 0 $, depending only on $ s $ and $ d $, such that for any $ \ell \geq2 $ and $ x_0 \in C_{\ell} $,
	\begin{equation}
		\int_{ \myp{C_{\ell} }^c} \frac{\id y}{1 + \abs{x_0-y}^s}
		\leq \frac{c}{1+\rd \myp{x_0, \partial C_{\ell}}^{s-d}},
		\label{eq:bound_interaction_cube_outside}
	\end{equation}
	and
	\begin{equation*}
		\int_{C_{\ell}} \int_{ \myp{C_{\ell} }^c} \frac{\id y}{1 + \abs{x-y}^s}\id x
		\leq c\ell^d\eps_\ell,
	\end{equation*}
	with $\eps_\ell$ as in~\eqref{eq:eps_ell}.
\end{lemma}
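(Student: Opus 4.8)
The plan is to prove the two estimates in \cref{lem:integralbound} by elementary calculus, treating the pointwise bound \eqref{eq:bound_interaction_cube_outside} first and then integrating it over $C_\ell$ to obtain the second bound.

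For the pointwise estimate, fix $x_0 \in C_\ell$ and write $r := \rd(x_0, \partial C_\ell)$, so that $r \in [0, \ell/2]$. The ball $B(x_0, r)$ is contained in $C_\ell$, hence $(C_\ell)^c \subseteq B(x_0,r)^c$, and we may bound
\begin{equation*}
\int_{(C_\ell)^c} \frac{\id y}{1 + \abs{x_0 - y}^s}
\leq \int_{\abs{x_0 - y} \geq r} \frac{\id y}{1 + \abs{x_0 - y}^s}
= \abs{\mathbb{S}^{d-1}} \int_r^\infty \frac{t^{d-1}}{1+t^s}\,\rd t.
\end{equation*}
Since $s > d$, the last integral converges; splitting at $t = \max(r,1)$ and using $t^{d-1}/(1+t^s) \leq t^{d-1-s}$ for $t \geq 1$ gives a bound of order $1$ when $r \leq 1$ and of order $r^{d-s}$ when $r \geq 1$, which in both cases is $\leq c/(1 + r^{s-d})$ for a suitable constant $c = c(s,d)$. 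This is exactly \eqref{eq:bound_interaction_cube_outside}. (One could alternatively keep the original variable of integration, but reducing to a radial integral over $B(x_0,r)^c$ is the cleanest route.)

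For the double integral, insert \eqref{eq:bound_interaction_cube_outside} to get
\begin{equation*}
\int_{C_\ell} \int_{(C_\ell)^c} \frac{\id y}{1+\abs{x-y}^s}\,\rd x
\leq c \int_{C_\ell} \frac{\rd x}{1 + \rd(x,\partial C_\ell)^{s-d}},
\end{equation*}
and the remaining task is to estimate $\int_{C_\ell} (1 + \rd(x, \partial C_\ell)^{s-d})^{-1}\,\rd x$. Using the coarea-type slicing by the level sets of $x \mapsto \rd(x,\partial C_\ell)$, one has $\abs{\{x \in C_\ell : \rd(x,\partial C_\ell) \leq t\}} \leq C \ell^{d-1} t$ for $0 \leq t \leq \ell/2$ (the boundary layer of a cube), so that the integral is controlled by $C\ell^{d-1}\int_0^{\ell/2} (1 + t^{s-d})^{-1}\,\rd t$ up to the interior contribution. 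If $s > d+1$ the $t$-integral converges as $\ell \to \infty$, giving $O(\ell^{d-1}) = O(\ell^d \cdot \ell^{-1})$; if $s = d+1$ it is $O(\log \ell)$, giving $O(\ell^{d-1}\log\ell)$; and if $d < s < d+1$ it is $O(\ell^{1-(s-d)}) = O(\ell^{d+1-s})$, giving $O(\ell^{d}\ell^{d-s}) = O(\ell^d \eps_\ell)$. In all three cases the bound is $c\,\ell^d \eps_\ell$ with $\eps_\ell$ as in \eqref{eq:eps_ell}, as claimed.

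There is no real obstacle here: the only mild care needed is the bookkeeping of the boundary-layer volume of the cube and the three-way case split governed by the sign of $s - d - 1$, together with checking that the constant can be taken uniform in $\ell$ (which requires $\ell \geq 2$, as assumed, so that constants from the $t \leq 1$ regime do not blow up relative to $\ell^{d}\eps_\ell$). One should also note that the first bound is genuinely needed only in the regime $r \lesssim \ell$, which is automatic since $r \leq \ell/2$, so no separate argument for large $r$ is required.
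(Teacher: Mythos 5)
Your proof is correct and follows essentially the same route as the paper's: the pointwise bound is obtained by comparison with a radial integral over $\{|x_0-y|\ge \rd(x_0,\partial C_\ell)\}$ with the same split at distance $1$, and the double integral is reduced to $\ell^{d-1}\int_0^{\ell}(1+t^{s-d})^{-1}\,\rd t$ with the identical three-way case analysis (the paper decomposes $\partial C_\ell$ into its $2d$ faces and uses Fubini, which is just a concrete implementation of your coarea slicing, since both rest on the level sets of $\rd(\cdot,\partial C_\ell)$ having measure $O(\ell^{d-1})$).
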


\begin{proof}
	We denote $ \eta = \rd \myp{x_0, \partial C_{\ell}} $ and assume first that $\eta\leq1$. Then we simply bound
$$\int_{ \myp{C_{\ell}}^c } \frac{\id y}{1 + \abs{x_0-y}^s}\leq \int_{\R^d} \frac{\id y}{1 + \abs{y}^s}\leq \frac{2}{1+\eta^{s-d}}\int_{\R^d} \frac{\id y}{1 + \abs{y}^s}. $$
When $\eta\geq1$, we note that $ \myp{C_{\ell}}^c + x_0 \subseteq \myp{C_{\eta}}^c \subseteq \Set{\abs{y} \geq \frac{\eta}{2}} $, so
	\begin{align*}
		\int_{ \myp{C_{\ell}}^c } \frac{\id y}{1 + \abs{x_0-y}^s}
		={}& \int_{ \myp{C_{\ell}}^c + x_0 } \frac{\id y}{1 + \abs{y}^s}
		\leq{} \int_{ \abs{y} \geq \frac{\eta}{2}} \frac{\id y}{\abs{y}^s}  \\
		={}& \abs{\mathbb{S}^{d-1}} \int_{\frac{\eta}{2}}^{\infty} \frac{\id r}{r^{s-d+1}}
		={} \frac{c_1}{\eta^{s-d}}\leq \frac{2c_1}{1+\eta^{s-d}},
	\end{align*}
	where the last bound is because $\eta\geq1$.

To prove the second inequality, we write the boundary as a union of faces $\partial C_\ell=\cup_{j=1}^{2d} F_j$ and we use the previous bound to get
$$\int_{C_{\ell}} \int_{ \myp{C_{\ell} }^c} \frac{\id y}{1 + \abs{x-y}^s}\id x\leq c\sum_{j=1}^{2d}\int_{C_{\ell}}\frac{\id x}{1+\rd (x,F_j)^{s-d}}=2cd\ell^{d-1}\int_0^\ell\frac{\id x_1}{1+x_1^{s-d}}.$$
The last equality is because the integral involving $F_j$ is independent of the face, by symmetry. The last integral is bounded for $s>d+1$, behaves as $\log\ell$ for $s=d+1$ and as $\ell^{d+1-s}$ for $d<s<d+1$, hence the result follows.
\end{proof}

We are now ready to provide the following.

\begin{proof}[Proof of \cref{prop:gcconvrate}]
	\textbf{Lower bound on $ G_T $.}
	Consider a large cube $ C_L = [ -L/2, L/2 )^d $ of side length $ L > 0 $.
	Write $ C_L $ as a union of smaller cubes $ C_{\widetilde{\ell}} $,
	\begin{equation*}
		C_L = \bigcup_{k \in \mathcal{A}} C_{\widetilde{\ell}} + \widetilde{\ell} k,
	\end{equation*}
	for some appropriate subset $ \mathcal{A} \subseteq \Z^d $.
	We insert corridors of size $ \delta > 0 $ between the smaller cubes by shrinking them a little and instead considering the union of the cubes $ C_{\ell}^k := C_{\ell} + \widetilde{\ell} k $, where $ \ell = \widetilde{\ell} - \delta $.
	In each of the smaller cubes $ C_{\ell}^k $ we place the usual Gibbs state $ \bP_{T,\mu,C_{\ell}^k} $ and consider in the large cube $ C_L $ the (grand-canonical) tensor product
	\begin{equation*}
		\bP_L := \bigotimes_{k \in \mathcal{A}} \bP_{T,\mu,C_{\ell}^k}.
	\end{equation*}
	We use $\bP_L$ as a trial state for the grand-canonical problem in $C_L$ at the same chemical potential $\mu$. The free energy then satisfies
	\begin{align}
		\mathcal{G}_T^{\mu} \myp{\bP_L}
		={}& \sum\limits_{k \in \mathcal{A}} G_T \myp{\mu, C_{\ell}} + \sum\limits_{\substack{k,m\in \mathcal{A} \\k \neq m}} 2 D_w \myp{\rho_{\ell,k} , \rho_{\ell,m}}  \nn \\
		\leq{}& \frac{L^d}{\myp{\ell+\delta}^d} G_T \myp{\mu, C_{\ell}} + \frac{L^d}{\myp{\ell+\delta}^d} \sum\limits_{\substack{k \in \Z^d \\ k \neq 0}} 2 \abs{ D_{w} \myp{\rho_{\ell,0}, \rho_{\ell,k}} },
	\label{eq:gcrate1}
	\end{align}
	where $ \rho_{\ell,k} = \rho_{\ell} \myp{\cdot - \widetilde{\ell} k} $ denotes the one-body density of the Gibbs state $ \bP_{T,\mu,C_{\ell}} $ placed in the cube $ C_{\ell}^k $.
	We will proceed to estimate the interaction term above, and then take $ L $ to infinity at fixed $ \ell $ to obtain a lower bound on $ G_T \myp{\mu, C_{\ell}} $.

	By taking $ \delta \geq r_0 $, none of the error terms see the core of $ w $, so by dividing each of the small cubes into even smaller cubes of side length $ r \sim L_0 $, i.e.,
	\begin{equation*}
		C_{\ell}^k
		= \bigcup_{m \in \widetilde{\mathcal{A}}} C_r^{k,m}
		:= \bigcup_{m \in \widetilde{\mathcal{A}}} \myp{C_r + r m + \widetilde{\ell} k},
	\end{equation*}
	for some appropriate subset $ \widetilde{\mathcal{A}} \subseteq \Z^d $, we obtain
	\begin{align}
		2 \abs{ D_{w} \myp{\rho_{\ell,0}, \rho_{\ell,k}} }
		\leq{}& \iint \frac{\kappa}{1+ \abs{x-y}^s} \rho_{\ell,0} \myp{x} \rho_{\ell, k} \myp{y} \id x \id y \nn \\
		\leq{}& \sum\limits_{j,m \in \widetilde{\mathcal{A}}} \frac{\kappa}{1+ \rd \myp[\big]{C_r^{0,j}, C_r^{k,m}}^s } \int_{C_r^{0,j}} \rho_{\ell,0} \int_{C_r^{k,m}} \rho_{\ell,k} \nn \\
		\leq{}& C r^d \xi_{T,\mu}^2 \sum\limits_{j,m \in \widetilde{\mathcal{A}}} \frac{1}{1+ \rd \myp[\big]{C_r^{0,j}, C_r^{k,m}}^s },
	\label{eq:ruelle_interaction}
	\end{align}
	from the Ruelle bound \eqref{eq:ruelle}.
	Note that we can choose a constant $ \widetilde{c} $ such that for all $ t \geq 0$,
	\begin{equation*}
		1+\myp{t + 2 \sqrt{d} r}^s
		\leq \tilde c(1+t^s),
	\end{equation*}
	and denote by $ C_1 $, $ C_2 $ any two open, disjoint cubes of side length $ r $.
	Then, since
	\begin{equation*}
		\max_{x \in C_1, y \in C_2} \abs{x-y}
		\leq \rd \myp{C_1, C_2} + 2 \sqrt{d} r,
	\end{equation*}
	it follows that (by taking $ t = \rd \myp{C_1, C_2} $),
	\begin{equation}
	\label{eq:integralbound2}
		\frac{1}{1 + \rd \myp{C_1, C_2}^s}
		\leq \frac{\widetilde{c}}{1+ \max\limits_{x \in C_1 ,y \in C_2} \abs{x-y}^s}
		\leq \frac{\tilde c}{r^{2d}} \int_{C_1} \int_{C_2} \frac{\id x \id y}{1 + \abs{x-y}^s}.
	\end{equation}
Continuing \eqref{eq:ruelle_interaction} and summing over $ k $, we now obtain
\begin{equation}
\sum\limits_{j,m \in \widetilde{\mathcal{A}}} \frac{1}{1+ \rd \myp[\big]{C_r^{0,j}, C_r^{k,m}}^s }\leq \frac{\tilde c}{r^{2d}}\int_{C^0_\ell}\int_{C^k_\ell}\frac{\id x \id y}{1 + \abs{x-y}^s}.
\label{eq:interactbound1}
\end{equation}
Summing finally then over $k$, we find for the interaction between the cube $C^0_\ell$ and the rest of the system
\begin{equation}
\sum_{\substack{k \in \Z^d \\ k \neq 0}} 2 \abs{ D_{w} \myp{\rho_{\ell,0}, \rho_{\ell,k}}} \leq C \xi_{T,\mu}^2\int_{C^0_\ell}\int_{(C^0_\ell)^c}\frac{\id x \id y}{1 + \abs{x-y}^s}\leq C \xi_{T,\mu}^2\ell^d\eps_\ell,
\label{eq:estim_int_rho}
\end{equation}
by \cref{lem:integralbound}.
After taking $L\to\ii$ in~\eqref{eq:gcrate1} we end up with
	$$(1+\delta/\ell)^d g_T \myp{\mu}\leq{} \frac{G_T \myp{\mu,C_{\ell}}}{\ell^d}+C\xi_{T,\mu}^2\eps_\ell.$$
The stability of $w$ and the free energy of the non-interacting gas imply the simple lower bound
$$0\geq g_T \myp{\mu}\geq -Te^{\frac1T(\mu+\kappa)}\geq -C_T\xi_{\mu,T}^2.$$
from which we deduce that the correction $\delta |g_T(\mu)|/\ell$ can be absorbed into $C\xi_{T,\mu}^2\eps_\ell$. At $T=0$ we also have $g_0(\mu)\geq -C\xi_{0,\mu}^2$. Hence we obtain
$$g_T \myp{\mu}-C\xi_{T,\mu}^2\eps_\ell\leq{} \frac{G_T \myp{\mu,C_{\ell}}}{\ell^d}$$
as we wanted.

\medskip

	\noindent\textbf{Upper bound on $ G_T $.}
	We consider again a large cube $ C_L = [ -L/2; L/2 )^d $ and divide it into a union of smaller cubes $ C_{\ell} $,
	\begin{equation*}
		C_L = \bigcup_{k \in \mathcal{A}} C_{\ell}^{k}
		= \bigcup_{k\in \mathcal{A}} \myp{ C_{\ell} + \ell k }.
	\end{equation*}
	For a state $ \bP $ on $ C_L $ we denote by $ \bP_k := \bP_{\1_{C_{\ell}^k}} $ its geometric localization to the cube $ C_{\ell}^k $  (see \cite[Appendix A]{HaiLewSol_2-09} and \cite{Lewin-11}). Note that the expected number of particles behaves additively,
	\begin{equation*}
		\mathcal{N}(\bP)
		= \int \rho_{\bP}
		= \sum\limits_{k \in \mathcal{A}} \int_{C_{\ell}^k} \rho_{\bP_k}
		= \sum\limits_{k \in \mathcal{A}} \mathcal{N} \myp{\bP_k},
	\end{equation*}
	and the interaction energy of $ \bP $ splits into local terms and cross terms,
	\begin{align*}
		\mathcal{U} \myp{\bP}
		={}& \frac{1}{2} \iint w \myp{x-y} \rho_{\bP}^{\myp{2}} \myp{x,y} \id x \id y \\
		={}& \sum\limits_{k,m \in \mathcal{A}} \frac{1}{2} \iint w \myp{x-y} \1_{C_{\ell}^k} \myp{x} \1_{C_{\ell}^m} \myp{y} \rho_{\bP}^{\myp{2}} \id x \id y \\
		={}& \sum\limits_{k \in \mathcal{A}} \mathcal{U} \myp{\bP_k} + \sum\limits_{\substack{k,m \in \mathcal{A} \\ k \neq m}} \frac{1}{2} \iint_{C_{\ell}^k \times C_{\ell}^m} w \myp{x-y} \rho_{\bP}^{\myp{2}} \myp{x,y} \id x \id y.
	\end{align*}
	By the sub-additivity property of the entropy for $T>0$, and taking $ \bP $ to be the Gibbs state in $ C_L $, we conclude that
	\begin{equation}
		G_T \myp{\mu, C_L}
		\geq{} L^d \frac{G_T \myp{\mu, C_{\ell}}}{\ell^d} + \sum\limits_{\substack{k,m \in \mathcal{A} \\ k \neq m}} \expec{I_{k,m}}_{\bP},
	\label{eq:geolowbound}
	\end{equation}
	with
	\begin{equation*}
		\expec{I_{k,m}}_{\bP}
		:={} \frac{1}{2} \iint w \myp{x-y} \1_{C_{\ell}^k} \myp{x} \1_{C_{\ell}^{m}} \myp{y} \rho_{\bP}^{\myp{2}} \myp{x,y} \id x \id y.
	\end{equation*}
	Hence, it only remains to provide a suitable lower bound on the sum of cross terms $ \expec{I_{k,m}}_{\bP} $ in the thermodynamic limit.
	For this, we again divide the cubes $ C_{\ell}^k $ into smaller cubes of fixed side length $ r $,
	\begin{equation*}
		C_{\ell}^k
		= \bigcup_{\gamma \in \widetilde{\mathcal{A}}} C_{r}^{k,\gamma}
		= \bigcup_{\gamma \in \widetilde{\mathcal{A}}} \myp{ C_{r} + r \gamma + \ell k },
	\end{equation*}
	and recall that we can bound the interaction from below by $ w \myp{x} \geq - \frac{\kappa}{1 + \abs{x}^s} $, so that
	\begin{align}
		- 2 \expec{I_{k,m}}_{\bP}
		\leq{}& \sum\limits_{\gamma, \gamma' \in \widetilde{\mathcal{A}}}  \pscal{ \sum\limits_{i < j} \frac{\kappa}{1+\abs{x_i-x_j}^s} \1_{C_r^{k,\gamma} } \myp{x_i} \1_{C_r^{m, \gamma'}} \myp{x_j}}_{T,\mu,C_L} \nn \\
		\leq{}& \sum\limits_{\gamma, \gamma' \in \widetilde{\mathcal{A}}} \frac{\kappa}{1+ \rd \myp[\big]{C_r^{k,\gamma}, C_r^{m,\gamma'}}^s } \expec[\big]{ n_{C_r^{k,\gamma}}n_{C_r^{m,\gamma'}} }_{T,\mu,C_L} \nn \\
		\leq{}& \sum\limits_{\gamma, \gamma' \in \widetilde{\mathcal{A}}} \frac{\kappa}{1+ \rd \myp[\big]{C_r^{k,\gamma}, C_r^{m,\gamma'}}^s } \expec[\big]{n_{C_r^{k,\gamma}}^2}_{T, \mu, C_L}^{\frac{1}{2}} \expec[\big]{n_{C_r^{m,\gamma'}}^2}_{T, \mu, C_L}^{\frac{1}{2}} \nn \\
		\leq{}& C \xi_{T,\mu}^2 r^{2d} \sum\limits_{\gamma, \gamma' \in \widetilde{\mathcal{A}}} \frac{1}{1+ \rd \myp[\big]{C_r^{k,\gamma}, C_r^{m,\gamma'}}^s },
	\label{eq:interactbound}
	\end{align}
	where we have again used the Ruelle bound \eqref{eq:ruelle}. By~\eqref{eq:interactbound1} and Lemma~\ref{lem:integralbound}, we obtain
$$\sum\limits_{m \in \mathcal{A}\setminus\{k\}} \expec{I_{k,m}}_{\bP}\geq{} - C \xi_{T,\mu}^2\ell^d\eps_\ell$$
where $C$ now contains the factor $r^{2d}$. Finally, returning to \eqref{eq:geolowbound}, we can now divide by $ L^d $ and let $ L \to \infty $ to conclude that
	\begin{equation*}
		\frac{G_T \myp{\mu, C_{\ell}}}{\ell^d}
		\leq{} g_T \myp{\mu} +C \xi_{T,\mu}^2 \eps_\ell.
	\end{equation*}
	which concludes the proof of Proposition~\ref{prop:gcconvrate}.
\end{proof}

In our setting, a slightly different way of stating the convergence rate will be useful. Instead of shifting the total free energy, we modify the chemical potential $\mu$. This works well at $T>0$ but we get a slightly worse lower bound for $T=0$, which is probably an artefact of our proof.

\begin{corollary}[Shifting the chemical potential]\label{cor:gcconvrate_mu}
Under the same assumptions as in Proposition~\ref{prop:gcconvrate}, we have
\begin{equation}
	g_T \myp[\big]{\mu+C_T(1+\xi^2_{T,\mu})\eps_\ell }
	\leq \frac{G_T \myp{\mu, C_{\ell}}}{\ell^d}
	\leq g_T \myp[\big]{\mu-C_T(1+\xi^2_{T,\mu})\eps_\ell }
\label{eq:shiftmu_T}
\end{equation}
for $T>0$, and
\begin{equation}
g_0\left(\mu+\frac{C\xi_{0,\mu}}{\ell^{\frac{s-d}{s-d+1}}}\right)\leq  \frac{G_0 \myp{\mu, C_{\ell}}}{\ell^d}\leq g_0\big(\mu-C\xi_{0,\mu}\eps_\ell\big).
\label{eq:shiftmu_T0}
\end{equation}
\end{corollary}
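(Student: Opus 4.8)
The plan is to convert the additive error of Proposition~\ref{prop:gcconvrate} into a chemical‑potential shift, using the concavity of $g_T$ together with a lower bound on the equilibrium density $\rho_0(\mu):=-g_T'(\mu)\geq0$ (the slope of $g_T$, which exists since $g_T$ is concave, and equals the thermodynamic density). The first step is to prove that there is $c_1>0$ depending only on $T,w,d$ with $\rho_0(\mu)\geq c_1\,\xi_{T,\mu}^2/(1+\xi_{T,\mu}^2)$ for all $\mu\in\R$. Below a fixed threshold, where $\xi_{T,\mu}^2\leq Ce^{\mu/T}$, one combines concavity (so $\rho_0(\mu)\geq h_0^{-1}(g_T(\mu-h_0)-g_T(\mu))$ for any $h_0>0$) with the two‑sided estimate $g_T(\mu)\asymp -Te^{\mu/T}$ read off from Corollary~\ref{thm:gcbounds} in the limit $L\to\infty$: taking $h_0$ large enough makes $g_T(\mu-h_0)-g_T(\mu)\gtrsim e^{\mu/T}$. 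Above that threshold, $\rho_0(\mu)\geq\rho_0(\mu^*)>0$ by monotonicity while $\xi_{T,\mu}^2/(1+\xi_{T,\mu}^2)<1$. The same input also yields $\rho_0(\mu)\geq c\,(\mu-C)_+^{1/(\gamma-1)}$ for large $\mu$, from $g_T(\mu)\lesssim-(\mu-C)_+^{\gamma/(\gamma-1)}$.

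For the left‑hand inequality of \eqref{eq:shiftmu_T}, Proposition~\ref{prop:gcconvrate} gives $G_T(\mu,C_\ell)/\ell^d\geq g_T(\mu)-c\,\xi_{T,\mu}^2\eps_\ell$, so it suffices, with $s:=C_T(1+\xi_{T,\mu}^2)\eps_\ell$, to check $g_T(\mu)-g_T(\mu+s)\geq c\,\xi_{T,\mu}^2\eps_\ell$; since $\rho_0$ is non‑decreasing,
\[
g_T(\mu)-g_T(\mu+s)=\int_\mu^{\mu+s}\rho_0\;\geq\; s\,\rho_0(\mu)\;\geq\; C_T c_1\,\xi_{T,\mu}^2\eps_\ell,
\]
which closes the estimate once $C_T\geq c/c_1$. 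For the right‑hand inequality, one needs instead $g_T(\mu-s)-g_T(\mu)\geq c\,\xi_{T,\mu}^2\eps_\ell$; now the shift lies to the left of $\mu$, so $\rho_0(\mu)$ cannot be used directly and I would argue by cases on the size of $s$ relative to a fixed step $h_0$ (for $\ell$ large, $s>h_0$ forces $\mu$ to be arbitrarily large). If $s\leq h_0$, concavity gives $g_T(\mu-s)-g_T(\mu)\geq\frac{s}{h_0}(g_T(\mu-h_0)-g_T(\mu))\geq s\,\rho_0(\mu-h_0)$, and since $\xi_{T,\mu-h_0}\geq\theta\,\xi_{T,\mu}$ for a fixed $\theta\in(0,1)$, the density bound closes it as before. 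If $h_0<s<2\mu$, then $g_T(\mu-s)-g_T(\mu)\geq\int_{\mu/2}^\mu\rho_0\gtrsim\mu^{\gamma/(\gamma-1)}$, while $s<2\mu$ forces $\xi_{T,\mu}^2\eps_\ell<2\mu/C_T$, so the inequality reduces to $\mu^{1/(\gamma-1)}\gtrsim1$, true since $\mu$ is large. Finally if $s\geq2\mu$, then $\mu-s\leq-\mu$, so Corollary~\ref{thm:gcbounds} gives $g_T(\mu-s)\geq -T$, whereas its universal upper bound gives $G_T(\mu,C_\ell)/\ell^d\leq-c'(\mu-C)_+^{\gamma/(\gamma-1)}\leq-T$ for $\mu$ large; the inequality then holds trivially.

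The upper bound in \eqref{eq:shiftmu_T0} is obtained by the same case analysis at $T=0$; here $\xi_{0,\mu}$ is only polynomial in $\mu$, which is exactly why a first‑power shift $C\xi_{0,\mu}\eps_\ell$ is enough. The lower bound in \eqref{eq:shiftmu_T0} is the delicate point: $\rho_0(\mu)=-g_0'(\mu)$ vanishes identically below the saturation threshold, so the density bound fails and Proposition~\ref{prop:gcconvrate} alone does not convert into a positive shift. For this I would go back to the trial‑state construction in the proof of the lower bound of Proposition~\ref{prop:gcconvrate} and, rather than absorbing the cross‑interaction into an additive term, widen the corridors between the sub‑cubes: the interaction between sub‑cubes then decays like a power of the corridor width $\delta$, while the volume lost to corridors costs $\lesssim(\delta/\ell)\,|g_0(\mu)|$; optimising over $\delta$ produces the exponent $(s-d)/(s-d+1)$, and re‑reading the result as a shift of $\mu$ (absorbing $|g_0(\mu)|$ through $g_0$ itself) gives the stated estimate.

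The main obstacle is precisely Step~3 in the very‑large‑$\mu$ regime, together with its counterpart the $T=0$ lower bound: there the Ruelle constant $\xi$ grows much faster than the density $\rho_0$, so the soft ``Proposition~\ref{prop:gcconvrate} $+$ concavity'' scheme cannot close, and one must instead fall back on the crude (but sufficient) comparison with the nearly‑empty regime for $T>0$, respectively rework the trial state for $T=0$ — which is also why the two bounds in \eqref{eq:shiftmu_T0} are asymmetric.
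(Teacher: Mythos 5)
Your $T>0$ argument is correct and is in places more careful than the paper's own proof. The paper converts the additive error of Proposition~\ref{prop:gcconvrate} into a shift by showing $\xi_{T,\mu}^2\leq C_T(1+\xi_{T,\mu}^2)\rho_0(\mu)$ via the bounds on $\mu_L(\rho)$ in Corollary~\ref{thm:mubounds}; you obtain the same density lower bound from the two-sided bounds on $g_T$ in Corollary~\ref{thm:gcbounds} plus concavity, which is equivalent by Legendre duality. For the rightward shift both arguments coincide. For the leftward shift the paper simply invokes concavity at $\mu$, but, as you correctly observe, the supporting-line inequality at $\mu$ goes the wrong way there and one really needs a lower bound on $\rho_0$ at $\mu-s$; your three-case analysis ($s\leq h_0$ using $\xi_{T,\mu-h_0}\geq\theta\,\xi_{T,\mu}$, intermediate $s$ using $\rho_0(\nu)\gtrsim\nu^{1/(\gamma-1)}$, and $s\geq 2\mu$ by crude comparison with the nearly-empty regime) closes this honestly. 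Your strategy for the $T=0$ \emph{lower} bound (widen the corridors to width $\delta$, pay $(\delta/\ell)\abs{g_0(\mu)}$ in volume, absorb the cross-interaction into a chemical-potential shift $\nu_\ell\sim\xi_{0,\mu}\delta^{-(s-d)}$, optimize $\delta=\ell^{1/(s-d+1)}$) is exactly the paper's.

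The genuine gap is the $T=0$ \emph{upper} bound, which you claim follows from ``the same case analysis''. It cannot: the additive error in Proposition~\ref{prop:gcconvrate} is \emph{quadratic} in $\xi_{0,\mu}$ while the claimed shift $s=C\xi_{0,\mu}\eps_\ell$ is \emph{linear}, so the soft conversion would require $g_0(\mu-s)-g_0(\mu)=\int_{\mu-s}^{\mu}\rho_0\geq c\,\xi_{0,\mu}^2\eps_\ell$, i.e.\ essentially $\rho_0(\mu)\gtrsim\xi_{0,\mu}$. But $\rho_0(\mu)\lesssim\mu^{1/(\gamma-1)}$ by Corollary~\ref{thm:mubounds}, whereas $\xi_{0,\mu}\sim\mu^{1+d/\eps}$ with $1+d/\eps>1\geq 1/(\gamma-1)$, so the inequality fails for large $\mu$; it also fails just above the saturation threshold $\mu_c$ whenever $\rho_0(\mu)\to0$ there while $\xi_{0,\mu_c}>0$. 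The paper instead reworks the localization argument itself: at $T=0$ the minimizer is deterministic, so $\expec{n_{Q}n_{Q'}}=\expec{n_{Q}}\expec{n_{Q'}}\leq C\abs{Q'}\,\xi_{0,\mu}\expec{n_{Q}}$, i.e.\ the Ruelle bound is applied to only one factor. The cross terms are then \emph{linear} in the particle number and are absorbed by comparing with $G_0(\mu-\nu_\ell,C_L)$ through the term $\nu_\ell\,\cN(\bP)$, after averaging the resulting one-body error potential $\sum_k\big(1+\rd(x,\partial C_\ell^k)^{s-d}\big)^{-1}$ over translations of the tiling. This is precisely the ``keep one factor of $n$ and pair it with a shift of $\mu$'' device that you reserve for the lower bound; it is indispensable for the upper bound as well.
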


\begin{proof}[Proof of \cref{cor:gcconvrate_mu}]
Let $\rho:=-\frac{\partial}{\partial\mu_+} g_T(\mu)$ be the (right) density corresponding to the chemical potential $\mu$. If $\rho\geq1$ we simply estimate $\xi_{T,\mu}^2\leq \xi_{T,\mu}^2\rho$. For $\rho\leq1$ we use Corollary~\ref{thm:mubounds} (in the limit $L\to\ii$), which provides
$$\xi_{T,\mu}^2\leq C_T\rho,\qquad\text{for $\rho\leq1$ and $T>0$}$$
where $C_T$ only depends on $T$. All in all, we can thus bound when $T>0$
$$g_T(\mu)-C\xi_{T,\mu}^2\eps_\ell\geq g_T(\mu)+C(1+\xi_{T,\mu}^2)\eps_\ell\frac{\partial}{\partial\mu_+}g_T(\mu)\geq g_T\left(\mu+C_T(1+\xi_{T,\mu}^2)\eps_\ell\right)$$
and
$$g_T(\mu)+C\xi_{T,\mu}^2\eps_\ell\leq g_T(\mu)-C(1+\xi_{T,\mu}^2)\eps_\ell\frac{\partial}{\partial\mu_+}g_T(\mu)\leq g_T\left(\mu-C_T(1+\xi_{T,\mu}^2)\eps_\ell\right),$$
by concavity of $\mu\mapsto g_T(\mu)$. This is the claimed estimate~\eqref{eq:shiftmu_T}.

At $T=0$ the situation is more complicated, since we have no good control on $\xi_{0,\mu}$ at low density. Recall that $\mu\mapsto g_0(\mu)$ vanishes over some interval $(-\ii,\mu_c)$ for some unknown $\mu_c=f_0'(0)$. We need to revisit the proof of Proposition~\ref{prop:gcconvrate} and we will get a slightly worse lower bound.

\medskip

\noindent\textbf{Lower bound on $ G_0$.} We argue as before, except that we use $\bP_L$ as a trial state for the grand-canonical problem in $C_L$ at a modified chemical potential $\mu + \nu_\ell$, with $ \nu_\ell $ to be chosen later. We will also have to take the length of the corridors $\delta$ large. 	The free energy then satisfies
$$\mathcal{G}_0^{\mu+ \nu_\ell} \myp{\bP_L}\leq{} \frac{L^d}{\myp{\ell+\delta}^d} G_0 \myp{\mu, C_{\ell}} + \frac{L^d}{\myp{\ell+\delta}^d} \sum\limits_{\substack{k \in \Z^d \\ k \neq 0}} 2 \abs{ D_{w} \myp{\rho_{\ell,0}, \rho_{\ell,k}} } - \nu_\ell \cN \myp{\bP_L}.$$
For simplicity, in the interaction of one cube to the rest of the system we have added all the cubes in $\R^d$. The main idea is to use the last term to control the interaction, but this requires an estimate in terms of $\cN \myp{\bP_L}$. We thus use the Ruelle bound for the cubes outside of $C_\ell^0$ but not for $C^0_\ell$ itself, and obtain by~\eqref{eq:integralbound2} and Lemma~\ref{lem:integralbound}
\begin{align*}
\abs{ D_{w} \myp{\rho_{\ell,0}, \rho_{\ell,k}} }
&\leq C r^d \xi_{0,\mu} \sum\limits_{j \in \widetilde{\mathcal{A}}} \expec[\big]{n_{C_r^{0,j}}}_{0,\mu,C_{\ell}^0} \sum\limits_{m \in \widetilde{\mathcal{A}}} \frac{1}{1+ \rd \myp[\big]{C_r^{0,j}, C_r^{k,m}}^s }\\
&\leq C\xi_{0,\mu} \sum\limits_{j \in \widetilde{\mathcal{A}}} \expec[\big]{n_{C_r^{0,j}}}_{0,\mu,C_{\ell}^0} \int_{(C_{\ell+\delta})^c}\frac{\rd y}{1+ |X_j-y|^s }\\
&\leq C\xi_{0,\mu} \sum\limits_{j \in \widetilde{\mathcal{A}}} \frac{\expec[\big]{n_{C_r^{0,j}}}_{0,\mu,C_{\ell}^0}}{\rd(X_j,\partial C_{\ell+\delta})^{s-d}}
\end{align*}
where $X_j$ denotes the center of the cube $C_r^{0,j}$. As a sub-optimal but simple bound we can use that $\rd(X_j,\partial C_{\ell+\delta})\geq \delta$ and thus get after summing
$$\mathcal{G}_0^{\mu+ \nu_\ell} \myp{\bP_L}\leq{} \frac{L^d}{\myp{\ell+\delta}^d} G_0 \myp{\mu, C_{\ell}} + \myp[\Big]{ C\frac{\xi_{0,\mu}}{\delta^{s-d}}- \nu_\ell } \cN \myp{\bP_L}.$$
Choosing $\nu_\ell=C\xi_{0,\mu}/{\delta^{s-d}}$ and passing to the limit $L\to\ii$, we arrive at
$$(1+\delta/\ell)^d
g_0 \myp[\Big]{ \mu+\frac{C\xi_{0,\mu}}{\delta^{s-d}} } \leq{\ell^{-d}} G_0 \myp{\mu, C_{\ell}}.$$
On the other hand, using the stability of $w$, we have
\begin{equation}
(1+t)g_0(\mu)\geq g_0(\mu+t\mu+t\kappa),\qquad\forall \mu\in\R,\ t>0,
\label{eq:simple_lower_bd_g_0}
\end{equation}
and we thus obtain
$${\ell^{-d}} G_0 \myp{\mu, C_{\ell}}\geq g_0\left(\mu+\frac{C\xi_{0,\mu}}{\delta^{s-d}} +\Big(\mu+\frac{C\xi_{0,\mu}}{\delta^{s-d}}+\kappa\Big)\frac{\delta}{\ell}\right).$$
The optimal choice here is $ \delta = \ell^{\frac{1}{s-d+1}} $, which leads to the lower bound in~\eqref{eq:shiftmu_T0}, after changing the constant $C_0$ in the definition \eqref{eq:def_xi} of $\xi_{0,\mu}$.

\medskip

\noindent\textbf{Upper bound on $ G_0$.} It turns out that we can get the optimal rate. We argue exactly as in the proof of Proposition~\ref{prop:gcconvrate}, taking this time $\bP$ a minimizer for the chemical potential $\mu-\nu_\ell$. We also split $\R^d=\cup C_\ell^k$ using cubes \emph{a priori} unrelated to the large cube $C_L$ (the latter is not necessarily an exact union of the smaller cubes). In the estimate~\eqref{eq:interactbound} we use the important fact that
$$\expec[\big]{ n_{C_r^{k,\gamma}}n_{C_r^{m,\gamma'}}}_{0,\mu,C_L}=\expec[\big]{ n_{C_r^{k,\gamma}}}_{0,\mu,C_L}\expec[\big]{n_{C_r^{m,\gamma'}}}_{0,\mu,C_L}$$
since $\bP$ is here a delta measure as a minimizer, hence everything is deterministic. In the following we simply suppress the expectations. The same arguments as for the lower bound then provide
\begin{align*}
	\MoveEqLeft[2] G_0 \myp{\mu-\nu_\ell, C_L}\\
	\geq{}& L^d \frac{G_0 \myp{\mu, C_{\ell}}}{\ell^d} - C\xi_{0,\mu}\sum_{k} \sum\limits_{j \in \widetilde{\mathcal{A}}} \frac{n_{C_r^{k,j}}}{1+\rd(X_j,\partial C_{\ell}^k)^{s-d}}+\nu_\ell\cN(\bP)\\
	={}& L^d \frac{G_0 \myp{\mu, C_{\ell}}}{\ell^d} +\int_{C_L} \myp[\bigg]{ \nu_\ell- C\xi_{0,\mu}\sum_{k} \sum\limits_{j \in \widetilde{\mathcal{A}}} \frac{\1_{C_r^{k,j}} \myp{x}}{1+\rd(X_j,\partial C_{\ell}^k)^{s-d}} } \rho_\bP \myp{x} \id x\\
	\geq{}& L^d \frac{G_0 \myp{\mu, C_{\ell}}}{\ell^d} +\int_{C_L}\left(\nu_\ell- \frac{C'\xi_{0,\mu}}{1+\rd(x, \cup_k \partial C_{\ell}^k)^{s-d}}\right)\rho_\bP(x) \id x.
\end{align*}
Now we use that the right side simplifies if we average over the position of the tiling of size $\ell$, at fixed $\bP$. Under this procedure, the periodic function
$$x\mapsto \frac{1}{1+\rd(x, \cup_k \partial C_{\ell}^k)^{s-d}}$$
is replaced by a constant over $\R^d$, which equals its average over one cube. This can be bounded by
$$\frac1{\ell^d}\int_{C^0_\ell}\frac{\dx }{1+\rd(x, \partial C_{\ell}^0)^{s-d}}\leq C\eps_\ell,$$
due to the proof of Lemma~\ref{lem:integralbound}. Thus we have proved that
$$G_0 \myp{\mu-\nu_\ell, C_L}\geq L^d \frac{G_0 \myp{\mu, C_{\ell}}}{\ell^d} +\left(\nu_\ell- C'\xi_{0,\mu}\eps_\ell\right)\cN(\bP)$$
and we can conclude using $\nu_\ell=C'\xi_{0,\mu}\eps_\ell$ and taking $L\to\ii$.
\end{proof}

We will need the following tool.

\begin{lemma}[Cutting and translating states]
\label{lem:measurecut}
	Let $ \Omega = \Omega_1 \cup \Omega_2 \subseteq \R^d $ be any disjoint union of Borel subsets, and consider a tiling $\R^d=\bigcup_{k\in\Z^d} L_0(k+Q)$ of cubes of side length $L_0$, with $Q=(0,1]^d$. Let $ \bP $ be a grand-canonical state on $ \Omega $.
For any $ v \in \R^d $ satisfying $ \Omega_1 \cap \myp{\Omega_2 + v}= \emptyset $, we define a map $ T: \Omega \rightarrow \Omega_v := \Omega_1 \cup \myp{\Omega_2 + v} $ by $ T\myp{x} := x \1_{\Omega_1} \myp{x} + \myp{x+v} \1_{\Omega_2} \myp{x} $, and a state $ \bP_v $ supported on $ \Omega_v $ by $ \myp{\bP_v}_n \myp{x} := \bP_n \myp{\myp{T^{-1}}^{\otimes n} \myp{x} } $.
	Then the $ k $-particle densities of $ \bP_v $ are given by
	\begin{equation}
	\label{eq:measurecutdensity}
		\rho_{\bP_v}^{\myp{k}} \myp{x}
		:= \sum\limits_{n \geq k} \frac{1}{\myp{n-k}!} \int_{\Omega_v^{n-k}} \bP_v \myp{x,y} \id y
		= \rho_{\bP} \myp{\myp{T^{-1}}^{\otimes k} \myp{x}}.
	\end{equation}
	Furthermore, if $ \rd \myp{\Omega_1, \Omega_2 + v} \geq 2r_0+2\sqrt{d}L_0 $, then the free energy of the state $ \bP_v $ satisfies
	\begin{equation}
	\label{eq:measurecutenergy_gc}
		\mathcal{G}_T \myp{\bP_v}
		\leq \mathcal{G}_T \myp{\bP} + 2\kappa \mathcal{N} \myp{\bP} + \frac{C\sum_{k}\big\langle n_{k}^2\big\rangle_\bP}{\rd \myp{\Omega_1, \Omega_2+v}^{s-d}},
	\end{equation}
	where $n_{k}$ denotes the number of particles in the cube $L_0(k+Q)$.
\end{lemma}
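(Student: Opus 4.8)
The plan is to use that the map $T$ is a piecewise translation — the identity on $\Omega_1$ and the fixed shift $x\mapsto x+v$ on $\Omega_2$ — hence, since $\Omega_1\cap\Omega_2=\emptyset$ and $\Omega_1\cap(\Omega_2+v)=\emptyset$, a bijection $\Omega\to\Omega_v$ that preserves Lebesgue measure. First I would read off the consequences for densities and entropy. Changing variables $x=T^{\otimes n}(y)$ in the integral defining $\rho^{(k)}_{\bP_v}$ gives at once $\rho^{(k)}_{\bP_v}(x)=\rho^{(k)}_\bP\big((T^{-1})^{\otimes k}(x)\big)$, i.e.\ \eqref{eq:measurecutdensity}; in particular $\rho_{\bP_v}=\rho_\bP\circ T^{-1}$ and $\cN(\bP_v)=\cN(\bP)$. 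The same substitution in $\cS_n$, using that $T^{\otimes n}$ has Jacobian $1$ and that $\log$ only sees the value $\bP_n$, yields $\cS_n((\bP_v)_n)=\cS_n(\bP_n)$ for every $n$, hence $\cS(\bP_v)=\cS(\bP)$. Therefore $\cG_T(\bP_v)-\cG_T(\bP)=\cU(\bP_v)-\cU(\bP)$, and everything is reduced to comparing interaction energies.

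Next I would split the interaction according to which of $\Omega_1,\Omega_2$ the two particles of a pair belong to. For a pair with both endpoints in $\Omega_1$ (resp.\ both in $\Omega_2$) one has $T(y)-T(y')=y-y'$, so those contributions are literally unchanged; only the ``mixed'' pairs, with one endpoint in each $\Omega_i$, are affected, and there $T(y)-T(y')=(y-y')-v$. Hence
\begin{equation*}
\cG_T(\bP_v)-\cG_T(\bP)=\underbrace{\sum_n\int\sum_{\substack{y_i\in\Omega_1\\ y_j\in\Omega_2}}w(y_i-y_j-v)\,\id\bP_n}_{=:\,\cU_{12}^v}\;-\;\underbrace{\sum_n\int\sum_{\substack{y_i\in\Omega_1\\ y_j\in\Omega_2}}w(y_i-y_j)\,\id\bP_n}_{=:\,\cU_{12}(\bP)}.
\end{equation*}
The hypothesis $\rd(\Omega_1,\Omega_2+v)\ge 2r_0+2\sqrt d L_0$ guarantees that in every mixed pair of the \emph{translated} configuration the two points are at distance $\ge 2r_0+2\sqrt d L_0$, in particular outside the support of the core term $\1(|x|<r_0)$, so by the upper bound in \eqref{eq:assumption_w} we have $w(y_i-y_j-v)\le \kappa/(1+|y_i-y_j-v|^s)$. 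Tiling $\R^d$ by the cubes $L_0(k+Q)$, bounding $1/(1+|x-x'|^s)$ on cube$\,\times\,$cube by the averaged quantity exactly as in \eqref{eq:integralbound2}, applying Cauchy--Schwarz and the Ruelle bound \eqref{eq:ruelle} in the form $\expec{n_k n_{k'}}_\bP\le\tfrac12\big(\expec{n_k^2}_\bP+\expec{n_{k'}^2}_\bP\big)$, and finally summing the resulting cube-to-cube kernel by \cref{lem:integralbound} (which is what produces the decay factor $\rd(\Omega_1,\Omega_2+v)^{-(s-d)}$), I would obtain $\cU_{12}^v\le C\sum_k\expec{n_k^2}_\bP\big/\rd(\Omega_1,\Omega_2+v)^{s-d}$.

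It then remains to bound $-\cU_{12}(\bP)$ from above by $2\kappa\cN(\bP)$, and this is the delicate step, since the \emph{original} mixed pairs are not controlled by the separation hypothesis and may sit anywhere. The idea is that the cross interaction is nonetheless bounded from below by stability: writing $\cU(\bP)=\cU_{11}(\bP)+\cU_{22}(\bP)+\cU_{12}(\bP)$, noting that $\cU_{11}(\bP),\cU_{22}(\bP)$ are the \emph{same} numbers that occur in $\cU(\bP_v)$, and combining the stability bound $\cU(\bP)\ge-\kappa\cN(\bP)$ with the analogous bounds for the two subsystems separately, one arrives at $\cU_{12}(\bP)\ge-2\kappa\cN(\bP)$; this is where superstability is genuinely used, i.e.\ the decomposition $w=\kappa^{-1}\1(|x|\le\kappa^{-1})+w_2$ with $w_2$ stable, so that the (possibly infinite) repulsive self-energies are handled separately from the stable remainder and cancel cleanly between $\bP$ and $\bP_v$. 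Putting this together with the estimate on $\cU_{12}^v$ yields \eqref{eq:measurecutenergy_gc}. The main obstacle is precisely this last lower bound: one must make sure that attractive tails between $\Omega_1$ and $\Omega_2$ cannot accumulate more than $O(\cN(\bP))$ — which is exactly the content of the stability of $w_2$ — and that the self-energy terms, which need not be finite a priori, are treated so that the comparison with $\bP_v$ remains meaningful (if any of them is $+\infty$ the asserted inequality is trivial).
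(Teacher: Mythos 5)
Your overall route is the same as the paper's: $T$ is a measure\-/preserving bijection from $\Omega$ onto $\Omega_v$, so the density formula \eqref{eq:measurecutdensity} and $\cS(\bP_v)=\cS(\bP)$ follow by change of variables; the within-group pair differences are unchanged, so only the two mixed terms $\cU_{12}^v$ and $\cU_{12}(\bP)$ must be compared; and $\cU_{12}^v$ is controlled exactly as you describe, by dropping the core (the separation exceeds $r_0$), tiling by the cubes $L_0(k+Q)$, using \eqref{eq:integralbound2} together with $\expec{n_kn_m}_{\bP}\leq\tfrac12\big(\expec{n_k^2}_{\bP}+\expec{n_m^2}_{\bP}\big)$, and summing via \cref{lem:integralbound}. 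One small mislabel: that last inequality is just Cauchy--Schwarz/AM--GM, not the Ruelle bound \eqref{eq:ruelle}; and indeed no Ruelle bound should enter here, since the conclusion keeps $\sum_k\expec{n_k^2}_{\bP}$ — which, correctly, is what you end up with.

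The step that does not close as you describe it is the lower bound $\cU_{12}(\bP)\geq-2\kappa\,\cN(\bP)$. Writing $\cU=\cU_{11}+\cU_{22}+\cU_{12}$, the stability bounds $\cU\geq-\kappa\cN$ and $\cU_{ii}\geq-\kappa\cN_i$ all point the same way: they yield
\begin{equation*}
\cU_{12}=\cU-\cU_{11}-\cU_{22}\leq \cU+\kappa\,\cN(\bP),
\end{equation*}
i.e.\ an \emph{upper} bound on $\cU_{12}$, whereas a lower bound would require \emph{upper} bounds on the self-energies $\cU_{11},\cU_{22}$, which are not available (they may even be $+\infty$). So "combining the stability bound with the analogous bounds for the two subsystems" cannot produce the claimed inequality. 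The paper's mechanism is different: on the mixed pairs it discards the nonnegative core $w_1\geq0$, so that $\sum_{\mathrm{mixed}}w\geq\sum_{\mathrm{mixed}}w_2$, and then invokes the stability of $w_2$ to reach $-2\kappa n$ pointwise before integrating against $\bP_n$; this is the only place where the specific splitting $w=w_1+w_2$ with $w_1\geq0$ is used, and you should reproduce it (or, alternatively, use the lower regularity $w\geq-\kappa(1+|x|^s)^{-1}$ and the same cube decomposition to bound $-\cU_{12}(\bP)\leq C\sum_k\expec{n_k^2}_{\bP}$, which also suffices for the way the lemma is applied).
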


Recall that $r_0$ is the range of the core of the interaction $w$, as defined in Assumption~\ref{de:shortrangenew}.

\begin{proof}
	The fact that the $ k $-particle densities satisfy \eqref{eq:measurecutdensity} follows immediately from $ T $ being measure preserving, which also implies
	\begin{equation*}
		\mathcal{S} \myp{\bP_v} = \mathcal{S} \myp{\bP}.
	\end{equation*}
	Hence, to finish the proof, we only need to provide a bound on the interaction energy $ \mathcal{U} \myp{\bP_v} $.
	Applying \eqref{eq:measurecutdensity}, we have
	\begin{align*}
		\mathcal{U} \myp{\bP_v}
		={}& \frac{1}{2} \iint_{\Omega_v^2} w \myp{x-y} \rho_{\bP_v}^{\myp{2}} \myp{x,y} \id x \id y \\
		={}& \mathcal{U} \myp{\bP} - \iint_{\Omega_1 \times \Omega_2 } w \myp{x - y} \rho_{\bP}^{\myp{2}} \myp{x, y} \id x \id y\\
 		&+ \iint_{\Omega_1 \times \myp{\Omega_2 + v}} w \myp{x - y} \rho_{\bP}^{\myp{2}} \myp{x, y-v} \id x \id y.
	\end{align*}
To estimate the second term, we use the stability of $ w $ in the form
	\begin{align*}
		\MoveEqLeft[4] \iint_{\Omega_1 \times \Omega_2 } w \myp{x - y} \rho_{\bP}^{\myp{2}} \myp{x, y} \id x \id y \\
		={}& 2 \iint_{\Omega_1\times\Omega_2} \sum\limits_{n \geq 2} \frac{1}{n!} \int_{\Omega^{n-2}} \sum\limits_{j < k} w \myp{x_j - x_k} \bP_n \myp{x} \id x \\
		\geq{}& 2 \sum\limits_{n \geq 2} \frac{1}{n!} \int_{\Omega^n} \sum\limits_{j < k} w_2 \myp{x_j - x_k} \bP_n \myp{x} \id x
		\geq{} - 2 \kappa \mathcal{N} \myp{\bP}.
	\end{align*}
To estimate the third term, we argue similarly as in~\eqref{eq:interactbound}, leading to
	\begin{align*}
		\MoveEqLeft[6] \iint_{\Omega_1 \times \myp{\Omega_2 + v}} w \myp{x - y} \rho_{\bP}^{\myp{2}} \myp{x, y-v} \id x \id y\\
		\leq{}& C\sum_{k,m} \int_{Q_k\cap \Omega_1}\int_{Q_m\cap \Omega_2}\frac{\dx\,\dy}{1+|x-y-v|^s}\pscal{ n_k n_m}_{\bP}\nn \\
		\leq{}& C\sum_{k,m} \int_{Q_k\cap \Omega_1}\int_{Q_m\cap \Omega_2}\frac{\dx\,\dy}{1+|x-y-v|^s}\left(\pscal{ n_k^2}_\bP+\pscal{n_m^2}_{\bP}\right)\nn \\
		\leq{}& \frac{C}{\rd(\Omega_1,\Omega_2+v)^{s-d}}\sum_k \pscal{ n_k^2}_\bP,
	\end{align*}
by Lemma~\ref{lem:integralbound}. This concludes the proof of \eqref{eq:measurecutenergy_gc}.
\end{proof}

We now provide a bound on the convergence rate at constant density.

\begin{proposition}[Convergence rate at fixed density]
\label{prop:gcconvrate_fixed}
	Let $ T \geq 0 $ and $ \rho > 0 $. If $T=0$ we also assume that $s>d+1$. There is a constant $ C > 1 $ depending only on $ L_0 $, $ T $, and $ w $, such that for $ \ell $ sufficiently large,
	\begin{equation}
 \left|\frac{G_T \myb{\rho \1_{C_{\ell}}}}{\ell^d}-f_T \myp{\rho}\right|\leq \xi(\rho)\eta_\ell,
		\label{eq:upper_CV_rate_rho}
	\end{equation}
	with
	$$\xi(\rho):=C\begin{cases}
		\rho e^{C\rho^{\gamma-1}} &\text{if $T>0$,}\\
 		\sqrt{\rho}\left(1+\rho^{\gamma-1}\right)^{2+\frac{2d}{\eps}} &\text{if $T=0$,}
		\end{cases}$$
		and
$$\eta_\ell:=\begin{cases}
		\ell^{-1/2}&\text{if $s>d+1$,}\\
		\ell^{-1/2}\sqrt{\log\ell} &\text{if $s=d+1$,}\\
		\ell^{-\frac{s-d}{1+s-d}}&\text{if $d<s<d+1$\,.}
		\end{cases}$$
		When $\gamma=2$, $\rho^{\gamma-1}$ has to be replaced by $\rho\log(2+\rho)$ in the definition of $\xi(\rho)$.
		Here, $\eps=\min(1,s-d)/2$ is the same as in \eqref{eq:def_xi}.
	\end{proposition}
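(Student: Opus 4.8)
The plan is to prove the two inequalities in~\eqref{eq:upper_CV_rate_rho} separately, the lower bound on $G_T[\rho\1_{C_\ell}]$ being the easy one. For any grand-canonical state $\bP$ supported in $C_\ell$ with $\rho_\bP=\rho\1_{C_\ell}$ one has $\cN(\bP)=\rho\ell^d$, so for every $\mu\in\R$,
\[
 \cG_T(\bP)=\bigl(\cG_T(\bP)-\mu\cN(\bP)\bigr)+\mu\rho\ell^d\geq G_T(\mu,C_\ell)+\mu\rho\ell^d .
\]
Minimizing over $\bP$, dividing by $\ell^d$ and choosing $\mu$ to be a maximizer of $\mu\rho+g_T(\mu)$, so that $\mu\rho+g_T(\mu)=f_T(\rho)$, \cref{prop:gcconvrate} (for $T>0$), resp.\ the lower bound in~\eqref{eq:shiftmu_T0} together with the concavity of $g_0$ (for $T=0$, where $s>d+1$ is assumed), give
\[
 \frac{G_T[\rho\1_{C_\ell}]}{\ell^d}\geq f_T(\rho)-C\bigl(1+\xi_{T,\mu}^2\bigr)\eps_\ell
\]
at $T>0$, and the analogue with $\rho\,\xi_{0,\mu}\,\ell^{-(s-d)/(s-d+1)}$ replacing $\xi_{T,\mu}^2\eps_\ell$ at $T=0$. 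Since $\mu=f_T'(\rho)$ is controlled by $\rho$ through \cref{thm:mubounds}, the quantities $\xi_{T,\mu}^2$ (resp.\ $\rho\,\xi_{0,\mu}$) are $\leq\xi(\rho)$ up to a change of the constant $C$, and $\eps_\ell\leq\eta_\ell$, which yields the claimed lower bound.

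For the matching upper bound I would build an explicit trial state by tiling $C_\ell$, in the spirit of the proof of \cref{thm:thermolim} and \cref{cor:thermolim} but keeping quantitative control of every error. Fix an integer $m$ with $L:=\ell/m$ comparable to a parameter to be optimized, split $C_\ell$ into the $m^d$ cubes $C_L+Lk$, shrink each by a corridor of \emph{constant} width $\lambda\gtrsim r_0$ (only to keep the core of $w$ out of the cross terms), and place in each shrunk cube $C_{L'}$, $L'=L-\lambda$, a copy of the grand-canonical Gibbs state $\bP_{T,\mu_{L'}(\rho'),C_{L'}}$ of average density $\rho':=\rho(L/L')^d$. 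Averaging these periodic arrays over the translations $\tau\in C_L$ produces, exactly as in~\eqref{eq:rhores}, a state whose density equals $\rho$ in the bulk; as $C_\ell$ is an exact union of $L$-cubes, the leftover mass in the boundary layer of width $O(L)$ near $\partial C_\ell$ is a $\tau$-independent $O(\rho\,\ell^{d-1}L)$, handled by a boundary state estimated through the universal upper bounds~\eqref{eq:upper_bd_G_T1}--\eqref{eq:upper_bd_G_T_simple}. The decisive improvement over the proof of \cref{thm:thermolim} is that all interactions between distinct cubes---bulk--bulk as well as bulk--boundary---are now bounded using the Ruelle bound~\eqref{eq:ruelle} on the fixed scale $L_0$, exactly as in~\eqref{eq:ruelle_interaction}--\eqref{eq:estim_int_rho} and \cref{lem:integralbound}, which gives a bulk--bulk interaction $\lesssim\xi_{T,\mu_{L'}(\rho')}^2\,\ell^d\eps_L$ and a bulk--boundary term of smaller order $\lesssim\xi_{T,\mu_{L'}(\rho')}^2\,\ell^d\eps_\ell$ (the $\eps$'s as in~\eqref{eq:eps_ell} at scales $L$ and $\ell$).

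It then remains to compare the leading contribution $\tfrac{\rho}{\rho'}\,|\Lambda_{N,\tau}|\,f_T(\rho',L')$ of the array with $\ell^d f_T(\rho)$. Writing $f_T(\rho',L')=\mu_{L'}(\rho')\rho'+g_T(\mu_{L'}(\rho'),L')$ and $f_T(\rho)\geq\mu_{L'}(\rho')\rho+g_T(\mu_{L'}(\rho'))$, the difference is at most $(1-\tfrac{\rho}{\rho'})\,|g_T(\mu_{L'}(\rho'))|+|g_T(\mu_{L'}(\rho'),L')-g_T(\mu_{L'}(\rho'))|$, which by $\rho'-\rho=O(\rho/L)$, the sharp two-sided bounds of \cref{thm:mubounds} and \cref{thm:gcbounds} (keeping $|g_T|$ at the relevant chemical potential of size $O(\xi(\rho))$), and \cref{prop:gcconvrate}, is $O(\xi(\rho)\,\ell^d/L)$. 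Collecting everything, and using $\xi_{T,\mu_{L'}(\rho')}^2\lesssim\xi(\rho)$ (from \cref{thm:mubounds} and $\rho'\leq2^d\rho$, absorbing polynomial prefactors into the exponential, resp.\ power, defining $\xi$), one arrives at
\[
 \frac{G_T[\rho\1_{C_\ell}]}{\ell^d}\leq f_T(\rho)+C\,\xi(\rho)\Bigl(\tfrac1L+\eps_L+\eps_\ell+\tfrac L\ell\Bigr).
\]
Optimizing over $L$ then gives the rate $\eta_\ell$: for $s\geq d+1$ one takes $L\sim\sqrt\ell$, which balances $\eps_L\sim L^{-1}$ against $L/\ell$ and accounts for the square-root loss $\eta_\ell\sim\sqrt{\eps_\ell}$, while for $d<s<d+1$ one takes $L\sim\ell^{1/(1+s-d)}$, balancing $\eps_L\sim L^{d-s}$ against $L/\ell$.

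The structural part---tiling, translation averaging, Ruelle bounds and \cref{lem:integralbound}---is routine given the machinery already in the paper. I expect the main obstacle to be the bookkeeping of the $\rho$-dependence: each error term must be shown to be $\leq C\,\xi(\rho)\eta_\ell$ with the precise $\xi(\rho)$ of the statement, which forces the use of the sharp two-sided a priori bounds of \cref{thm:mubounds} and \cref{thm:gcbounds} (the crude ones lose factors such as $|\log\rho|$ as $\rho\to0$) and a check that the polynomial factors produced by applying the Ruelle bound twice are absorbed into $e^{C\rho^{\gamma-1}}$ when $T>0$, resp.\ into $(1+\rho^{\gamma-1})^{2+2d/\eps}$ when $T=0$. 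The case $T=0$ is genuinely more delicate---hence the extra assumption $s>d+1$---since there is no entropic control at low density; one must then go through the chemical-potential--shift form~\eqref{eq:shiftmu_T0}, in which only the first power of the Ruelle constant appears, both in the lower bound above and in the preliminary estimate $f_0(\rho,L)\leq f_0(\rho)+O(\rho\,\xi_{0,\mu}\,\eps_L)$ entering the bulk term.
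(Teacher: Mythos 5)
Your lower bound is essentially the paper's argument for $T>0$: Legendre duality to drop the density constraint, \cref{prop:gcconvrate}, and the control of $\mu(\rho)$ from \cref{thm:mubounds}. At $T=0$, however, the step where you convert the shift in \eqref{eq:shiftmu_T0} into an additive error of size $\rho\,\xi_{0,\mu}\,\ell^{-(s-d)/(s-d+1)}$ ``by concavity of $g_0$'' does not go through: concavity costs a factor $-g_0'(\mu+t)$, the density at the \emph{shifted} chemical potential, and at a first-order transition this can be of order one even when $\rho$ is tiny (if $f_0$ is affine on $[0,\rho_1]$ with slope $\mu_c$, the right derivative of $g_0$ just above $\mu_c$ is $-\rho_1$). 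The paper instead uses the additive rate \eqref{eq:gcconvrate}, which carries no power of $\rho$ at small density, and then manufactures the $\sqrt{\rho}$ prefactor by interpolating with the $\ell$-independent a priori bound $\abs{\ell^{-d}G_0[\rho\1_{C_\ell}]-f_0(\rho)}\leq C\rho(1+\rho^{\gamma-1})$, see \eqref{eq:CV_rate_T0}; some such interpolation is needed in your route as well.

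The more serious gap is in the upper bound, in the treatment of the boundary layer. After shifting the tiling by $\tau$, the leftover density near $\partial C_\ell$ is not ``$\rho$ on a layer of width $O(L)$'': as in the proof of \cref{thm:thermolim}, it is $\rho$ on the outer frame \emph{plus the restrictions of the Gibbs densities} $\rho_{\bP}(\cdot-kL-\tau)$ coming from the cubes cut by the boundary. A boundary state realizing this density must reproduce chunks of the unknown finite-volume Gibbs density, and the universal bounds \eqref{eq:upper_bd_G_T1}--\eqref{eq:upper_bd_G_T_simple} then bring in $\normt{\rho_{\bP}}{\infty}$ (through $\int\beta_\tau^\gamma$ or through $C_M$), which the correlation estimate \eqref{eq:estim_correlation_fn} controls only with a doubly exponential $\rho$-dependence, far worse than $\xi(\rho)$. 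Since this boundary contribution has total size $\sim\rho\,\ell^{d-1}L=\rho\,\eta_\ell\,\ell^d$ after the optimization $L\sim\sqrt{\ell}$, it is \emph{not} of lower order and its prefactor must be $\leq\xi(\rho)$. The paper's device for this is precisely \cref{lem:measurecut}: for each $\tau$ the cut pieces on opposite faces of $C_\ell$ are reassembled, by translating parts of the state, into \emph{complete} cubes carrying a deformation of $\bP_\ell$ itself, so no universal bound on an uncontrolled density is ever invoked; the reassembly cost is an explicit $2\kappa\cN(\bP_\ell)$ per cube plus an interaction term controlled by the Ruelle bounds, which sums to $C\kappa\rho\,\ell^{d-1}L$ plus lower order. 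Without this (or an equivalent) idea your boundary term does not close with the stated $\xi(\rho)$.
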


\begin{proof}
First we quickly discuss the lower bound. We introduce $\mu=\mu(\rho)$ (at $T=0$ we take the largest admissible $\mu$) and write
\begin{align*}
\ell^{-d}G_T[\rho\1_{C_\ell}]&\geq \ell^{-d}G_T(\mu,C_\ell)+\mu\rho\\
&\geq g_T(\mu)+\mu\rho -C\xi_{T,\mu}^2\eps_\ell=f_T(\rho)-C\xi_{T,\mu}^2\eps_\ell
\end{align*}
by Proposition~\ref{prop:gcconvrate}. Recall that $\xi_{T,\mu}$ is defined in~\eqref{eq:def_xi}. When $T>0$, we have $\xi_{T,\mu}^2\leq\xi(\rho)$ for a large enough constant $C$, due to the bounds~\eqref{eq:muupbounds} on the chemical potential. This provides an estimate better than the one stated, with $\eps_\ell$ in place of $\eta_\ell$. When $T=0$ we only get
$$\ell^{-d}G_0[\rho\1_{C_\ell}]\geq f_0(\rho)-C\left(1+\rho^{\gamma-1}\right)^{2+\frac{2d}{\eps}}\eps_\ell.$$
When $\gamma=2$, $\rho^{\gamma-1}$ is replaced by $\rho\log(2+\rho)$. This does not have the claimed behavior $\sqrt\rho$ at low density. However, from the universal bounds in~\cite[Thm.~11]{JexLewMad-23} we see that
$$\abs[\big]{ \ell^{-d} G_0 \myb{\rho \1_{C_{\ell}}} - f_0 \myp{\rho} } \leq C\rho(1+\rho^{\gamma-1})$$
(with an additional logarithm for $\gamma=2$).
We can therefore introduce a power of $\rho$ at the expense of decreasing the power of $\ell$:
\begin{equation}
\ell^{-d} G_0 \myb{\rho \1_{C_{\ell}}} - f_0 \myp{\rho}\geq -C\sqrt{\eps_\ell}\sqrt\rho\left(1+\rho^{\gamma-1}\right)^{\frac32+\frac{d}{\eps}},
 \label{eq:CV_rate_T0}
\end{equation}
which can be bounded by $\xi(\rho)$. Note that $\sqrt{\eps_\ell}=\eta_\ell$ since $s>d+1$ by assumption.

Next we turn to the upper bound. This time, we split the cube of interest into smaller cubes. For consistency with the previous proofs we thus call the length of interest $L$ and the smaller length $\ell$. Let $ \delta > 0$ and the length $\ell + \delta =: \widetilde{\ell} < L $. We consider the tiling of space
$$\R^d=\bigcup_{k\in\Z^d}C_{\widetilde\ell}^k,\qquad C^k_{\widetilde\ell}:=C_{\widetilde{\ell}} + k\widetilde{\ell}.$$
Consider a function $\rho_\ell$ in $C_{\tilde\ell}$ supported in the slightly smaller cube $C_\ell$, and such that $ \int_{C_\ell} \rho_\ell = \rho \widetilde{\ell}^d $. Repeat this function in each cube $C_{\widetilde\ell}^k$ to obtain an $\tilde\ell$--periodic function over $\R^d$. When we average this function over translations of the tiling, we obtain a constant $\rho$ over the whole space:
	\begin{equation*}
		\rho
		= \frac{1}{\widetilde{\ell}^d} \int_{C_{\widetilde{\ell}}} \sum\limits_{k \in \Z^d} \rho_{\ell} \myp{x-k \widetilde{\ell} - \tau} \id \tau,\qquad\forall x\in\R^d.
	\end{equation*}
The idea of the proof is to interpret this as a partition of unity and write
\begin{equation}
\label{eq:densitydecomp}
\rho \1_{C_{L}} \myp{x}
= \frac{1}{\widetilde{\ell}^d} \int_{C_{\widetilde{\ell}}} \sum\limits_{k \in \Z^d} \rho_\ell \myp{x-k \widetilde{\ell} - \tau} \1_{C_{L}} \myp{x} \id \tau.
\end{equation}
We choose the size of the corridors, $ \delta $, to be large enough but fixed. Let then $\bP_\ell$ be the Gibbs state (or a minimizer at $T=0$) for $G_T(\mu_\ell,C_\ell)$, where $\mu_\ell$ is chosen so that
\begin{equation*}
\cN(\bP_\ell)=\rho \widetilde{\ell}^d.
\end{equation*}
In other words, $\bP_\ell$ has the exact average density $\rho$ in the bigger cube $C_{\tilde\ell}$. We take $\rho_\ell:=\rho_{\bP_\ell}$. For every $\tau$, we will use $\bP_\ell$ to construct a trial state $\bQ_\tau$ of density
$$\rho_{\bQ_\tau}(x):=\sum\limits_{k \in \Z^d} \rho_\ell \myp{x-k \widetilde{\ell} - \tau} \1_{C_{L}} \myp{x}=:\sum_{k\in\Z^d}\rho_\tau^k(x).$$
Then the state
$$\bQ:=\frac{1}{\widetilde{\ell}^d} \int_{C_{\widetilde{\ell}}}\bQ_\tau\,\rd \tau$$
has the desired density $\rho\1_{C_L}$. We now explain how to construct $\bQ_\tau$. For every fixed $\tau$, we call $\cI_\tau\subset\Z^d$ the indices of the cubes which are completely inside $C_L$:
$$\cI_\tau:=\{k\in\Z^d\ :\ C_{\widetilde\ell}^k+\tau\subset C_L\}.$$
We call $\cB_\tau$ the set of indices so that $C_{\widetilde\ell}^k+\tau$ intersects the boundary of $C_L$. We use the fact that the union of these boundary cubes
$$\bigcup_{k\in\cB_\tau}C_L\cap(C^k_{\tilde\ell}+\tau)$$
is made of a certain number of \emph{complete} cubes which have been split into finitely many parts as displayed in Figure~\ref{fig:cubes}. In other words, any piece of cube at the boundary can be merged with some other pieces to build an entire cube. We can thus write
$$\bigcup_{k\in\cB_\tau}C_L\cap(C^k_{\tilde\ell}+\tau)=\bigcup_{k\in\cB'_\tau}T_k(C^k_{\tilde\ell}+\tau)$$
where $T_k$ is a translation map as in Lemma~\ref{lem:measurecut} (or, rather, the composition of finitely many such maps depending on the number of pieces), and $\cB'_\tau$ is a subset of $\cB_\tau$.

  \begin{figure}
\begin{tikzpicture}
    \node[anchor=south west,inner sep=0] (image) at (0,0) {\includegraphics[width=.9\textwidth]{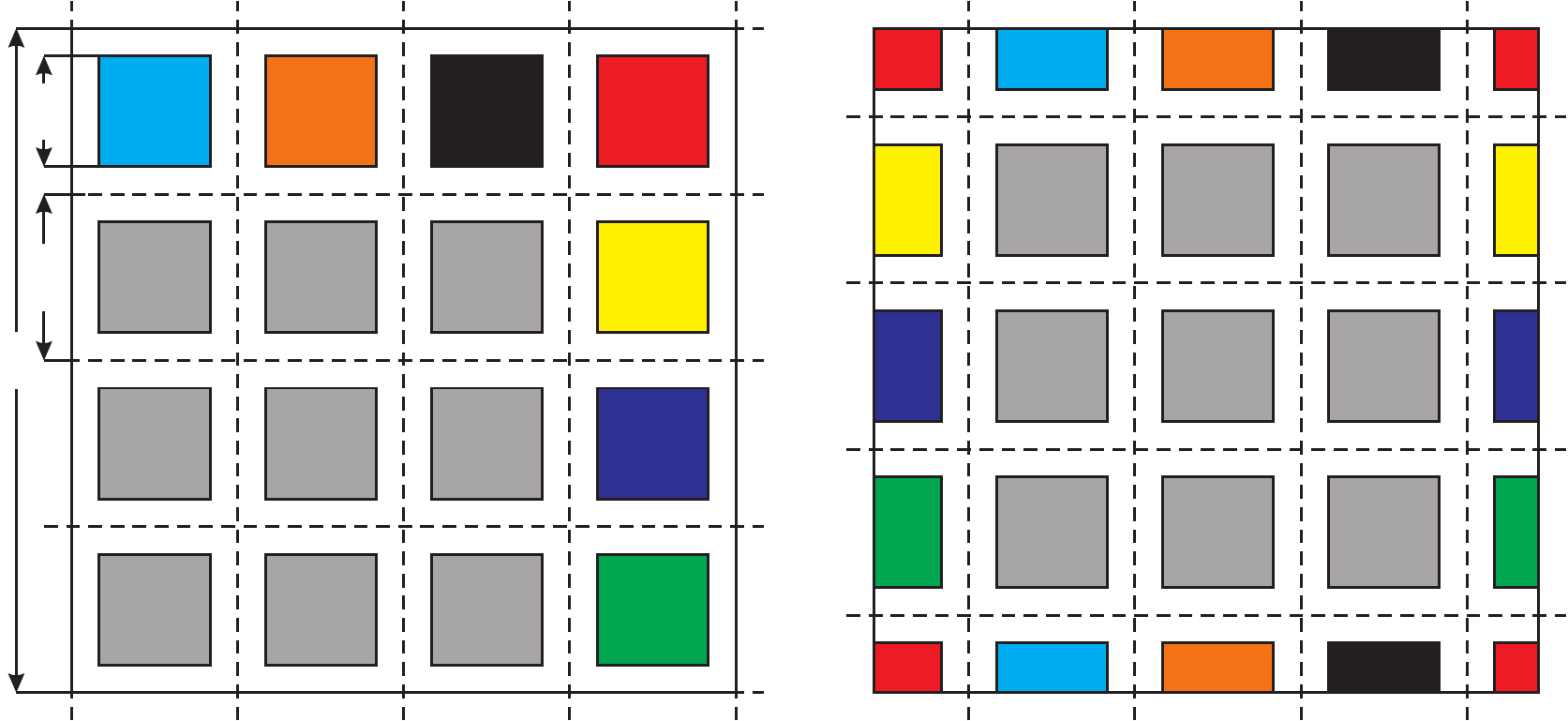}};
    \begin{scope}[x={(image.south east)},y={(image.north west)}]
        \draw (0.0285,0.895) node[below] {$\ell$};
       \draw (0.0285,0.675) node[below] {$\tilde \ell$};
       \draw (0.012,0.55) node[below] {$L$};
    \end{scope}
\end{tikzpicture}
  \caption{Splitting of the large cube $C_L$ into smaller cubes $C_{\tilde\ell}$ with corridors, for $\tau=0$ (left figure) and a shift $\tau\neq0$ (right figure). The grey color corresponds to the cubes which stay inside when translated by $\tau$. The other colors display the other pieces at the boundary. The sets of a given color can be merged to form an entire cube. For those sets, the trial state is constructed using \cref{lem:measurecut}.
  \label{fig:cubes}}
  \end{figure}

Next we use as trial state for a given shift $\tau$
$$\bQ_\tau=\bigotimes_{k\in\cI_\tau}\bP_\ell^k\otimes \bigotimes_{k\in\cB'_\tau}\tilde\bP_\ell^k$$
with $\bP_\ell^k:= \bP_\ell+k\tilde\ell+\tau$ and $\tilde\bP_\ell^k$ constructed from $\bP_\ell$ using Lemma~\ref{lem:measurecut}. This state has the desired density $\rho_\tau^k$ by construction. From the subadditivity of the entropy, we have
$$G_T[\rho\1_{C_L}]\leq \frac1{\tilde\ell^d}\int_{C_{\tilde\ell}}\cG_T(\bQ_\tau)\,\rd \tau$$
and
$$\cG_T(\bQ_\tau)=\sum_{k\in\cI_\tau}\cG_T(\bP_\ell)+\sum_{k\in\cB'_\tau}\cG_T(\tilde \bP^k_\ell)+2\sum_{k\neq k'\in\cI_\tau\cup\cB'_\tau}D(\rho_\tau^k,\rho_\tau^{k'}).$$
Using Lemma~\ref{lem:measurecut} and the fact that the pieces of a cube at the boundary are at a distance of order $L$ from each other, we have for $k\in\cB'_\tau$
$$\cG_T(\tilde\bP^k_\ell)\leq\cG_T(\bP_\ell)+ \tilde\ell^d2\kappa \rho + \frac{C}{L^{s-d}}\sum_{j\in \Z^d}\pscal{n_j^2}_{\bP_\ell}$$
where $n_j$ denotes the number of particles in the cube $L_0(j+Q)$, with $L_0$ the side length for which the Ruelle bound~\eqref{eq:ruelle} holds. When $T>0$, the Ruelle bound provides
$$\sum_{j}\pscal{n_j^2}_{\bP_\ell}\leq C\ell^d\xi_{T,\mu_\ell}^2\leq C\rho e^{C\rho^{\gamma-1}}\tilde \ell^d,$$
using the definition of $\xi_{T,\mu}$ and the bounds~\eqref{eq:muupbounds} on the chemical potential. There is an additional logarithm when $\gamma=2$. When $T=0$ the state $\bP_\ell$ can be taken to be a delta measure, hence
$$\pscal{n_j^2}_{\bP_\ell}=\pscal{n_j}^2_{\bP_\ell}\leq  \xi_{0,\mu_\ell}\pscal{n_j}_{\bP_\ell}$$
so that
$$\sum_{j}\pscal{n_j^2}_{\bP_\ell}\leq C\rho \left(1+\rho^{\gamma-1}\right)^{1+\frac{d}{\eps}}\tilde \ell^d.$$
On the other hand, the interaction between any cube and the rest can be estimated exactly like in~\eqref{eq:estim_int_rho} in the proof of Proposition~\ref{prop:gcconvrate} by
$$\sum_{k'\in\cI_\tau\cup\cB_\tau'\setminus\{k\}}D(\rho_\tau^k,\rho_\tau^{k'})\leq C\xi_{T,\mu_\ell}^2\ell^d\eps_\ell.$$
Note that this uses the presence of the corridors to estimate $w\leq \kappa(1+|x|^s)^{-1}$. Thus we have proved that
$$
\frac{\cG_T(\bQ_\tau)}{L^d}\leq \frac{\cG_T(\bP_\ell)}{\tilde\ell^d}+\zeta(\rho)\begin{cases}
\rho\left(\frac{\ell}{L}+\eps_\ell\right)&\text{if $T>0$,}\\
\rho\frac{\ell}{L}+\eps_\ell &\text{if $T=0$.}
\end{cases}
 $$
where
$$\zeta(\rho)=\begin{cases}
Ce^{C\rho^{\gamma-1}}&\text{if $T>0$,}\\
C\left(1+\rho^{\gamma-1}\right)^{2+\frac{2d}{\eps}}&\text{if $T=0$,}
    \end{cases}$$
with an additional logarithm for $\gamma=2$. We have used here that the volume of the union of the cubes in $\cB'_\tau$ can be controlled by $L^{d-1}\ell$, since $\ell$ and $\tilde\ell$ are comparable.
Next we study the free energy of $\bP_\ell$ and recall that
$$		\tilde{\ell}^{-d} \mathcal{G}_T \myp{\bP_\ell}
		={} (1+\delta/\ell)^{-d} f_T \big(\rho(1+\delta/\ell)^d,\ell\big).$$
We apply \cref{cor:gcconvrate_mu} and obtain
\begin{align*}
f_T (\rho(1+\delta/\ell)^d,\ell)
={}&g_T(\mu_\ell,C_\ell)+\mu_\ell\rho(1+\delta/\ell)^d\nn\\
\leq{}& g_T(\mu_\ell-\zeta(\rho)\eps_\ell)+\mu_\ell\rho(1+\delta/\ell)^d.
\end{align*}
so that
\begin{equation}
\tilde{\ell}^{-d} \mathcal{G}_T \myp{\bP_\ell}
\leq{} (1+\delta/\ell)^{-d}g_T(\mu_\ell-\zeta(\rho)\eps_\ell)+\mu_\ell\rho.
\label{eq:estim_rho_f_T}
\end{equation}
If $T>0$ we use that $g_T(\mu)\geq -Te^{\frac{1}{T}(\mu+\kappa)}$ due to the stability of $w$ and the free energy of the non-interacting gas. This gives
\begin{align*}
(1+\delta/\ell)^{-d}g_T(\mu_\ell-\zeta(\rho)\eps_\ell)&\leq (1-C\delta/\ell)g_T(\mu_\ell-\zeta(\rho)\eps_\ell)\\
&\leq g_T(\mu_\ell-\zeta(\rho)\eps_\ell)+C\frac{\delta}\ell Te^{\mu_\ell/T}\\
&\leq g_T(\mu_\ell-\zeta(\rho)\eps_\ell)+\eps_\ell\rho\zeta(\rho),
\end{align*}
from the bounds on $\mu_\ell$ in Proposition~\ref{thm:mubounds} and the fact that $1/\ell\leq \eps_\ell$, after increasing the constant in $\zeta(\rho)$. From the duality formula $f_T(\rho)=\sup_\nu \{g_T(\nu)+\nu\rho\}$, we then obtain from~\eqref{eq:estim_rho_f_T}
\begin{equation}
\tilde{\ell}^{-d} \mathcal{G}_T \myp{\bP_\ell}
\leq f_T(\rho)+2\zeta(\rho)\eps_\ell\rho.
\label{eq:bound_f_T_positive}
\end{equation}
At $T=0$ we argue slightly differently and immediately use the duality formula in~\eqref{eq:estim_rho_f_T} to get
\begin{align*}
\tilde{\ell}^{-d} \mathcal{G}_0 \myp{\bP_\ell}
\leq{}& (1+\delta/\ell)^{-d}g_0(\mu_\ell-\zeta(\rho)\eps_\ell)+\mu_\ell\rho\\
\leq{}& (1+\delta/\ell)^{-d}f_0(\rho)+\frac{\zeta(\rho)\rho\eps_\ell}{(1+\delta/\ell)^d}+\mu_\ell\rho \myp[\Big]{ 1-\frac1{(1+\delta/\ell)^d} }\\
\leq{}& f_0(\rho)+C\frac\delta\ell f_0(\rho)_-+\zeta(\rho)\rho\eps_\ell+C\frac\delta\ell (\mu_\ell)_+\rho.
\end{align*}
Using that $f_0(\rho)\geq -\kappa\rho$ by stability and the upper bounds on $\mu_\ell$ in Proposition~\ref{thm:mubounds}, we get the same as~\eqref{eq:bound_f_T_positive} at $T=0$.

As a conclusion, after increasing the constant in $\zeta(\rho)$, we have proved in all cases
\begin{equation}
\frac{\cG_T(\bQ_\tau)}{L^d}\leq f_T(\rho)+C\zeta(\rho)\begin{cases}
\rho\left(\frac{\ell}{L}+\eps_\ell\right)&\text{if $T>0$,}\\
\rho\frac{\ell}{L}+\eps_\ell &\text{if $T=0$.}\end{cases}
 \label{eq:bound_sliding_cutting2}
\end{equation}
After optimizing we are led to choosing
$$\ell=\begin{cases}
		\sqrt{L}&\text{for $s>d+1$,}\\
		\sqrt{L\log L}&\text{for $s=d+1$,}\\
		L^{\frac1{1+s-d}}&\text{for $d<s<d+1$}
		\end{cases}$$
for $T>0$ or $T=0$ with $\rho\geq1$, and
$$\ell=\begin{cases}
		\sqrt{L/\rho}&\text{for $s>d+1$,}\\
		\sqrt{L/\rho\log (L/\rho)}&\text{for $s=d+1$,}\\
		(L/\rho)^{\frac1{1+s-d}}&\text{for $d<s<d+1$}
		\end{cases}$$
	for $T=0$ and $\rho\leq1$. This is how we get $\eta_L$ for the error term. When $T=0$ and $s=d+1$ we have an additional $\log\rho$ which is why we have assumed for simplicity that $s>d+1$ at zero temperature, in the statement. This concludes the proof of Proposition~\ref{prop:gcconvrate_fixed}.
\end{proof}

\section{Proof of Theorem~\ref{thm:lda} on the local density approximation}
\label{sec:lda}
\subsection{Lower bound}
In this subsection, we prove the lower bound on $ G_T \myb{\rho} $ in \cref{thm:lda}.

\begin{proposition}[LDA from below]
\label{thm:ldalow}
	Let $ M > 0 $, and $ p \geq 1 $, and let $ w $ be a short-range interaction (satisfying the conditions of \cref{de:shortrangenew}).
	Furthermore, take any $ b \geq 0 $ satisfying
	\begin{equation}
	\label{eq:b_cond}
		b >1+\min(1,s-d)\left( 1-\frac{1}{2p} \right).
	\end{equation}
	For any $T \geq 0$, there exists a constant $ C > 0 $ depending on $ M $, $ p $, $ b $, $ T $, $d$, and $ w $, such that
	\begin{equation}
		G_T \myb{\rho} - \int_{\R^d} f_T \myp{\rho \myp{x}} \id x
		\geq{} - C \sqrt{\eps_\ell} \myp[\bigg]{ \int_{\R^d} \sqrt{\rho} + \ell^{bp} \int_{\R^d}\delta\rho_\ell(z)^p\,\rd z},
	\label{eq:ldalow}
	\end{equation}
	for any $ \ell > 0 $, and any density $ 0 \leq \rho \in L^1 \myp{\R^d} $ satisfying $ \normt{\rho}{\infty} \leq M $ and $ \sqrt{\rho} \in L^1 \myp{\R^d} $.
	Here $\eps_{\ell}$ is defined by \eqref{eq:eps_ell}.
\end{proposition}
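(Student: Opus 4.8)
The plan is to prove the lower bound by the familiar ``tile, localize, introduce a local chemical potential'' strategy, reducing the fixed-density problem to the grand-canonical problem in a large cube, for which the convergence rate is already available from \cref{prop:gcconvrate,cor:gcconvrate_mu}. First I would cut $\R^d$ into the lattice of cubes $C_\ell^k:=C_\ell+\ell k$, $k\in\Z^d$, let $\bar\rho_k:=\ell^{-d}\int_{C_\ell^k}\rho$ be the mean density in the $k$-th cube, and set $\mu_k:=f_T'(\bar\rho_k)$ (the chemical potential dual to $\bar\rho_k$; at $T=0$ take the largest admissible one, and simply discard the cubes with $\bar\rho_k=0$). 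Testing with the piecewise-constant field $V:=\sum_k\mu_k\1_{C_\ell^k}$, which is bounded above thanks to $\|\rho\|_{L^\infty}\le M$, one has
\begin{equation*}
	G_T[\rho]\;\ge\;\inf_{\bP}\Big(\cG_T(\bP)-\int_{\R^d}V\rho_\bP\Big)+\int_{\R^d}V\rho .
\end{equation*}
The infimum is attained at a grand-canonical Gibbs state on the (possibly disconnected) union of the cubes, so the quantitative Ruelle bound \eqref{eq:ruelle} applies — with the local activity $e^{\mu_k/T}$ in $C_\ell^k$. Geometrically localizing this Gibbs state to the cubes $C_\ell^k$, using sub-additivity of the entropy, splitting the interaction into intra-cube and cross terms, and bounding the cross terms exactly as in the proof of \cref{prop:gcconvrate} (subdividing into cubes of the fixed Ruelle side length and applying \cref{lem:integralbound}), one obtains
\begin{equation*}
	\inf_{\bP}\Big(\cG_T(\bP)-\int V\rho_\bP\Big)\;\ge\;\sum_k G_T(\mu_k,C_\ell^k)-C\,\eps_\ell\sum_k\ell^d\,\xi_{T,\mu_k}^2 .
\end{equation*}

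Next I would use $G_T(\mu_k,C_\ell^k)=\ell^d g_T(\mu_k,\ell)\ge\ell^d g_T(\mu_k)-C\ell^d\xi_{T,\mu_k}^2\eps_\ell$ from \cref{prop:gcconvrate} (and the $T=0$ variant of \cref{cor:gcconvrate_mu} when $T=0$), together with the Legendre identity $g_T(\mu_k)=f_T(\bar\rho_k)-\mu_k\bar\rho_k$ valid at the dual point. Since $\bar\rho_k$ is the exact mean of $\rho$ over $C_\ell^k$, the terms $-\mu_k\ell^d\bar\rho_k$ and $\mu_k\int_{C_\ell^k}\rho$ cancel, and one arrives at
\begin{equation*}
	G_T[\rho]\;\ge\;\int_{\R^d}f_T(\rho)-\sum_k\Big(\int_{C_\ell^k}f_T(\rho)-\ell^d f_T(\bar\rho_k)\Big)-C\eps_\ell\sum_k\ell^d\xi_{T,\mu_k}^2 .
\end{equation*}
Thus only two error sums remain. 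For the Ruelle term, the bounds on the dual variable in \cref{thm:mubounds} give $\xi_{T,\mu_k}^2\le C_M\bar\rho_k$, whence $\eps_\ell\sum_k\ell^d\xi_{T,\mu_k}^2\le C_M\eps_\ell\int\rho\le C_M\sqrt{\eps_\ell}\int\sqrt\rho$ for $\ell$ large, using $\rho\le M$.

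For the ``Jensen-gap'' sum, each term is non-negative by convexity of $f_T$, and I would bound it from above by the oscillation of $\rho$ on $C_\ell^k$ — that is, by $\ell\,\delta\rho_\ell(z_k)$ — via a modulus-of-continuity estimate for $f_T$ on $[0,M]$. The key point here is that, writing $f_T(\rho)=T\rho\log\rho+\tilde f(\rho)$ with $\tilde f$ Lipschitz on $[0,M]$, the relevant Bregman divergence of $f_T$ is controlled \emph{linearly} by the oscillation, even near $\rho=0$, because the logarithmic singularity of $f_T'$ is averaged out; combined with the crude bound $|f_T(\rho)|\le C_M\sqrt\rho$ on $[0,M]$ this yields
\begin{equation*}
	\int_{C_\ell^k}f_T(\rho)-\ell^d f_T(\bar\rho_k)\;\le\;C_M\,\ell^d\min\Big(\ell\,\delta\rho_\ell(z_k),\ \sqrt{\bar\rho_k+\ell\,\delta\rho_\ell(z_k)}\Big).
\end{equation*}
Summing, splitting the cubes according to the size of $\delta\rho_\ell(z_k)$, applying Chebyshev's inequality against $\int\delta\rho_\ell^p$ on the high-oscillation cubes and the a priori bound $\delta\rho_\ell\le M/\ell$ on the remaining ones, and absorbing the leftover low-density contribution into $\int\sqrt\rho$, gives a bound of the form $C\sqrt{\eps_\ell}\big(\int\sqrt\rho+\ell^{bp}\int\delta\rho_\ell^p\big)$; the balance of exponents in this step is exactly what produces the condition \eqref{eq:b_cond} on $b$. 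The case $T=0$ follows the same scheme with the worse power of $\ell$ coming from the $T=0$ part of \cref{cor:gcconvrate_mu} and the deterministic Ruelle bound of \cref{cor:T0}.

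The main obstacle I anticipate is precisely this last bookkeeping of the Jensen-gap sum: the gap is naturally controlled by an $L^1$-type quantity in $\delta\rho_\ell$, whereas the statement is phrased with the $L^p$ quantity $\int\delta\rho_\ell^p$, so one must trade powers of $\ell$ against the $L^\infty$ bounds on $\rho$ and on $\delta\rho_\ell$ while carefully isolating the region where $\rho$ is small, and it is the interplay of all these exponents (together with the $\sqrt{\eps_\ell}$ that results from trading $\eps_\ell\int\rho$ for $\sqrt{\eps_\ell}\int\sqrt\rho$) that dictates the threshold for $b$. A secondary point requiring some care is to justify that the auxiliary problem with the piecewise-constant field $V$ is minimized by a genuine Gibbs state for which the quantitative Ruelle bound of \cref{app:Ruelle} holds with the local activity $e^{\mu_k/T}$ in each cube $C_\ell^k$.
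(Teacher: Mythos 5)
Your overall architecture (Legendre--Fenchel duality against a piecewise-constant dual potential, tiling into cubes of side $\ell$, the convergence rates of \cref{prop:gcconvrate,cor:gcconvrate_mu}, and the Ruelle bounds with external potential to control the cross-cube interactions) is the same as the paper's, and the exact cancellation you get from choosing $\mu_k=f_T'(\bar\rho_k)$ with $\bar\rho_k$ the cube average is an attractive simplification. But the step you yourself flag as the main obstacle is a genuine gap, not just bookkeeping. Your stated per-cube control of the Jensen gap,
\begin{equation*}
	\int_{C_\ell^k}f_T(\rho)-\ell^d f_T(\bar\rho_k)\;\le\;C_M\,\ell^d\min\Big(\ell\,\delta\rho_\ell(z_k),\ \sqrt{\bar\rho_k+\ell\,\delta\rho_\ell(z_k)}\Big),
\end{equation*}
is not strong enough to be summed against the right-hand side of \eqref{eq:ldalow}. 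Take $\rho=t\,\1_A$ on a cube with $\abs{A}=\theta\ell^d$ and $t$ of order $\eps_\ell$: the entropy part of the gap is $\approx T\ell^d\theta t\log(1/\theta)$, whereas $\sqrt{\eps_\ell}\int_{C_k}\sqrt\rho=\sqrt{\eps_\ell}\,\theta\ell^d\sqrt t$, so their ratio is $T\sqrt t\log(1/\theta)/\sqrt{\eps_\ell}$, unbounded as $\theta\to0$; and your min-bound only returns the much larger $\ell^d t$, while the $\delta\rho_\ell^p$ term does not see $\theta$ at all. The point is that the Bregman divergence of $T\rho\log\rho$ is \emph{not} captured by the oscillation alone near the zero set of $\rho$; one needs the sharper estimate $T\int_{C_k}\rho\log(\rho/\bar\rho_k)\le T\ell^d\bar\rho_k\log(eR_k/\bar\rho_k)$ together with a further optimization over the concentration of $\rho$ inside the cube (which, as far as I can tell, does close exactly under \eqref{eq:b_cond}, but only with the strict inequality and the extra logarithm tracked carefully). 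The paper avoids this issue entirely: it classifies cubes by whether $\int_{C_k}\rho$ is already dominated by the target error, and in those "simple" cubes it takes the \emph{pointwise} potential $-f_T'(\rho(x))$, so the two $T\log\rho$ terms cancel identically and there is no Jensen gap to estimate; only in the remaining "main" cubes, where $\rho$ is provably bounded below by $\eps_\ell$ and comparable to its maximum, does it use a constant chemical potential and convexity.

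Two further points. First, at $T=0$ your estimate $\xi_{0,\mu_k}^2\le C_M\bar\rho_k$ fails: \cref{thm:mubounds} gives no useful upper bound on $\xi_{0,\mu}$ in terms of the density at low density (the paper makes this explicit in the proof of \cref{cor:gcconvrate_mu}), so the Ruelle-term sum $\eps_\ell\sum_k\ell^d\xi_{0,\mu_k}^2$ cannot be absorbed into $\sqrt{\eps_\ell}\int\sqrt\rho$ this way; the paper instead chooses the potential so negative in the low-mass cubes that the minimizer has no particles there at all. Second, your reduction of the interaction error to $C\eps_\ell\sum_k\ell^d\xi_{T,\mu_k}^2$ needs the localized Ruelle bound \eqref{eq:ruelleloc} of \cref{thm:Ruelle_V} for the Gibbs state in the non-constant field $V$ (not \eqref{eq:ruelle} cube by cube, since the state does not factorize over the tiles); you acknowledge this, and the paper supplies exactly that corollary, so this part is fixable as stated.
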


The upper bound stated in Proposition~\ref{thm:ldaup} below will require a stronger condition on $b$. In Theorem~\ref{thm:lda} we have just taken the worse of the two.

The main difficulty in proving Proposition~\ref{thm:ldalow} is that we have little information on the optimal state~$\bQ$ minimizing $G_T[\rho]$. In particular we have no local Ruelle-type bounds on~$\bQ$ (except on the average local number which is given by $\rho$ by definition). To circumvent this problem, we will argue by Legendre-Fenchel duality and just remove the constraint on the density at the expense of adding an external potential. More precisely, we use that
$$G_T \myb{\rho}\geq \min_{\bP}\left\{\cF_T(\bP)+\int_{\R^d}V(x)\rho_\bP(x)\,\rd x\right\}-\int_{\R^d}V(x)\rho(x)\,\rd x$$
for any external potential $V$. There is equality if we maximize over $V$ but here we just pick a well-chosen $V$ to obtain the lower bound~\eqref{eq:ldalow}. To construct this external potential we tile $\R^d$ with cubes of side length $\ell$. In each cube we take $V$ to be an approximation of the expected opposite chemical potential $V\approx -\mu$ with $\mu=f'_T(\rho)$. More precisely, in the cubes $C_k$ where $\rho$ varies slowly enough (that is, $\delta\rho_\ell$ is small enough compared to the other terms), we just take the potential to be constant, for instance $-f_T'(\max_{C_k}\rho)$. In the cubes where $\rho$ varies too much, we take $V(x)=-f'_T(\rho(x))$. However, we will need to modify this choice of $ V $ a bit to also handle the zero temperature case. The proof requires to estimate the interactions between the cubes, for which we need the Ruelle bounds of Appendix~\ref{app:Ruelle}.

\begin{proof}
Due to the universal bounds on $G_T \myb{\rho}$ proved in~\cite{JexLewMad-23}, and the bounds on $f_T$ in \cref{thm:tdlimbounds}, the estimate~\eqref{eq:ldalow} holds when $\ell$ is finite, even without the term involving $\delta\rho_\ell$. We may thus assume in the whole proof that $\ell$ is large enough.

We write the whole proof for $T>0$ and explain at the end how to adapt it for $T=0$. As announced, we split $ \R^d $ into a union of cubes $ C_k = [ -\ell/2, \ell/2 )^d +  \ell k $, $ k \in \Z^d $ of side length $ \ell > 0 $.
	Denote by $ V $ the one-body potential
	\begin{equation}
	\label{eq:vdef}
		V = \sum\limits_{k \in \Z^d} v_k \1_{C_k},\qquad 	v_k = -\mu_k + K\eps_{\ell}
	\end{equation}
	where the functions $ \mu_k $ will be chosen below. The constant $K$ depends on $M$ and we explain later how to choose it so that the shift appearing in the lower bound on the convergence rate in Corollary~\ref{cor:gcconvrate_mu} is bounded above by $K\eps_\ell$. We let $ \bQ = \myp{\bQ_n} $ be a grand-canonical state minimizing the grand-canonical free energy $G_T[\rho]$ at fixed density $ \rho $.
Then we can rewrite and bound from below
	\begin{align}
		G_T \myb{\rho}
		={}& G_T \myb{\rho} + \int V \rho - \int V \rho \nn \\
		={}& \mathcal{U} \myp{\bQ} + \sum\limits_{n \geq 1} \int_{\R^{dn}} \sum\limits_{i=1}^n V \myp{x_i} \id \bQ_n \myp{x} - T \mathcal{S} \myp{\bQ} - \int V \rho \nn \\
		\geq{}& \inf_{\bP = \myp{\bP_n}} \Set[\big]{ \mathcal{U} \myp{\bP} + \cV \myp{\bP} - T \mathcal{S} \myp{\bP} } - \int V \rho,
	\label{eq:ldalow1}
	\end{align}
	where $ \cV \myp{\bP} $ denotes the grand-canonical energy of the potential $ V $ in the state $ \bP $.
	In the minimization problem above, we have removed the restriction that the one-body density of the state $ \bP $ must be equal to $ \rho $, meaning that the minimizer is just a Gibbs state (with external potential $V$).

\medskip
\noindent\textbf{Choosing the $ \mu_k $ and localizing.}
	Since the map $\rho\mapsto\mu(\rho)$ is well defined for $ T>0 $, but its inverse might not exist in case of phase transitions, it is easier to think in terms of densities. For any $k$, we will thus choose some function $\rho_k$ in $C_k$ and then take
	$$\boxed{\mu_k(x):=f'_T(\rho_k(x))=\mu(\rho_k(x)).}$$
	We will either pick $\rho_k = \max_{C_k} \rho $ constant (in which case $\mu_k$ is also constant), or just $\rho_k=\rho\1_{C_k}$. By definition, we have in all cases
	\begin{equation}
	\label{eq:mukdef}
		f_T \myp{\rho_k \myp{x} } = g_T \myp{\mu_k \myp{x} } + \mu_k \myp{x} \rho_k \myp{x},\qquad\forall x\in C_k.
	\end{equation}
Note that in either case $ \mu_k $ is universally bounded from above in terms of $ M \geq \normt{\rho}{\ii}$, by~\eqref{eq:muupbounds}. Hence by \eqref{eq:def_xi} we must have $\xi_{T,\mu_k}\leq C_M$ for all $k$. We choose the constant $K$ in~\eqref{eq:vdef} so that the shift appearing in Corollary~\ref{cor:gcconvrate_mu} is controlled by $K\eps_\ell$:
$$K\geq C_T(1+\xi^2_{T,\mu_k}),\qquad\forall k.$$

	Let us now denote by $ \bP $ the Gibbs state in the external potential $V$, that is, minimizing \eqref{eq:ldalow1}. By geometric localization and sub-additivity of the entropy, we obtain in the same way as \eqref{eq:geolowbound},
	\begin{align}
		G_T \myb{\rho}
		\geq{} &\mathcal{U} \myp{\bP} + \cV \myp{\bP} - T \mathcal{S} \myp{\bP} - \int_{\R^d} V \rho \nn \\
		\geq{}& \sum\limits_{k \in \Z^d} G_T^{v_k}(C_k) + \int_{C_k} \mu_k \rho + \sum\limits_{\substack{k,m \in \Z^d \\ k \neq m}} \expec{I_{k,m}}_{\bP} - K\eps_\ell \int_{\R^d} \rho,
	\label{eq:ldalow2}
	\end{align}
	where $ G_T^{v} \myp{C_k} $ denotes the minimal grand-canonical energy in $ C_k $ with external potential $ v $, and
	\begin{equation*}
		\expec{I_{k,m}}_{\bP}
		:={} \frac{1}{2} \iint_{C_k \times C_m} w \myp{x-y} \rho_{\bP}^{\myp{2}} \myp{x,y} \id x \id y
	\end{equation*}
	denotes the interaction between the cubes $ C_k $ and $ C_m $.
	We will first choose the $ \rho_k $ and bound $ G_T^{v_k}(C_k) $ from below, depending on the behaviour of $ \rho $ in each cube, and then treat the interaction terms at the end.

	For the remainder of the proof, we denote by
	\begin{equation*}
		r_k:=\min_{C_k}\rho,
		\qquad R_k:=\max_{C_k}\rho
	\end{equation*}
	the (essential) minimum and maximum values of $ \rho $ in the cube $ C_k $.

\medskip
\noindent\textbf{Free energy in the simple cubes.}
	Now, to decide whether to use $ \rho_k = \rho $ or $ \rho_k = R_k $, we consider all the $ k $'s for which
\begin{equation}
		\int_{C_k} \rho \myp{x} \id x
		\leq{} \sqrt{\eps_\ell} \left(\int_{C_k} \sqrt{\rho \myp{x}} \id x + \ell^{bp+d} \delta\rho_\ell \myp{\ell k}^p\right).
\label{eq:simple_cubes}
\end{equation}
	where $p\geq 1$ and $ b $ satisfying \eqref{eq:b_cond} are fixed as in the statement.
	In other words we look at the cubes where $\int_{C_k}\rho$ is already controlled by the error term we are aiming for. A similar argument was used in the quantum case in~\cite{LewLieSei-20}. Note that the set of such ``simple'' cubes where \eqref{eq:simple_cubes} holds contains all the cubes where the density is uniformly small.
	In fact, if $R_k\leq \eps_\ell$,
	then
	\begin{equation*}
		\int_{C_k} \rho \myp{x} \id x
		\leq \sqrt{\eps_\ell} \int_{C_k} \sqrt{\rho \myp{x}} \id x,
	\end{equation*}
	so~\eqref{eq:simple_cubes} holds, even with just the square root term.

For all those simple cubes, we choose $\rho_k=\rho\1_{C_k}$ and use the bound on $\int_{C_k}\rho$ to control the whole free energy. Using $ G_T^v \myp{C_k} \geq - C \int_{C_k} e^{-\frac{v}{T}} $ along with \cref{thm:mubounds}, we get for $ T> 0 $,
	\begin{align}
		G_T^{v_k}\myp{C_k} + \int_{C_k} \mu_k \rho - \int_{C_k} f_T \myp{\rho}
		\geq{}& - C \int_{C_k} e^{\frac{\mu_k-K\eps_{\ell}}{T}}+ \int_{C_k} \myp{T \log \rho_k - C} \rho  \nn \\
		&- \int_{C_k} C \rho + T \rho \log \rho \nn \\
		\geq{}& - C\int_{C_k} \rho.
	\label{eq:estim_simple}
	\end{align}
We have used here that $|\mu \myp{\rho}-T\log\rho|\leq C_{T,M}$ for $ \rho \leq M $ by~\eqref{eq:muupbounds} and~\eqref{eq:mulowbound}, as well as
$$f_T(\rho)\leq C_T(\rho^\gamma+\rho)+T\rho\log\rho\leq C_{T,M}\rho+T\rho\log\rho$$
by~\eqref{eq:fcanupbound}. It is thus important that $\rho_k(x)=\rho(x)$ in these cubes, so that the two logarithms cancel each other. Combining with \eqref{eq:simple_cubes}, we arrive at
	\begin{equation}
		G_T^{v_k}\myp{C_k} + \int_{C_k} \mu_k \rho - \int_{C_k} f_T \myp{\rho}
		\geq{} -C\sqrt{\eps_\ell} \left( \int_{C_k} \sqrt{\rho} + \ell^{bp+d} \delta \rho_{\ell} \myp{\ell k}^p\right)
	\label{eq:estim_simple_final}
	\end{equation}
	for the simple cubes. We obtain the desired error term for the local free energy, after summing over $k$.

\medskip
\noindent\textbf{Free energy in the main cubes.}
	For the remaining cubes we have
	\begin{equation}
		\int_{C_k} \rho \myp{x} \id x
		> \sqrt{\eps_\ell} \left( \int_{C_k} \sqrt{\rho \myp{x}} \id x + \ell^{bp+d} \delta\rho_\ell \myp{\ell k}^p\right).
	\label{eq:complicated_cubes}
	\end{equation}
	We call these the ``main cubes'' since, as we will see, the bound~\eqref{eq:complicated_cubes} implies that the density is slowly varying in $C_k$, hence this is where the Local Density Approximation is efficient. In these cubes we take $\rho_k$ to be a constant:
	$$\rho_k=R_k=\max_{C_k}\rho.$$
	We recall that the corresponding $ \mu_k $ satisfies~\eqref{eq:mukdef}. Since $ \rho_k $ is constant, $ v_k = -\mu_k+K \eps_{\ell} $ is constant as well, so we can directly use the convergence rate from \cref{cor:gcconvrate_mu},
	\begin{align*}
		G_T^{v_k} \myp{C_k}={}G_T \myp{-v_k, C_k}
		\geq{}& \ell^d g_T \myp[\big]{\mu_k-K\eps_\ell + C_T(1+\xi^2_{T,\mu_k-K\eps_\ell})\eps_\ell} \\
		\geq{}& \ell^d g_T \myp[\big]{\mu_k-K\eps_\ell + C_T(1+\xi^2_{T,\mu_k})\eps_\ell} \\
		\geq{}& \ell^d g_T \myp{\mu_k}= \ell^d \big(f_T \myp{\rho_k} - \mu_k \rho_k\big).
	\end{align*}
	We used that $ \xi_{T,\mu} $ and $ g_T $ are respectively increasing and non-increasing in $ \mu $.
	Using that $ \mu_k \leq C_T $ and $\rho\leq \rho_k=R_k$, we obtain
	\begin{align}
		G_T^{v_k} \myp{C_k} + \int_{C_k} \mu_k \rho
		\geq{}& \ell^d f_T \myp{\rho_k} + \int_{C_k} \mu_k \myp{\rho-\rho_k} 
		\nn \\
		\geq{}& \ell^d f_T \myp{R_k} -C \int_{C_k}  (R_k-\rho).
	\label{eq:estim_precise}
	\end{align}
	By the argument just below \eqref{eq:simple_cubes}, necessarily
	\begin{equation}
		R_k \geq \eps_\ell
	\label{eq:pty_R_k}
	\end{equation}
	for all the cubes satisfying \eqref{eq:complicated_cubes}.
	Using~\eqref{eq:complicated_cubes}, we then have
	\begin{align}
		R_k-r_k
		\leq{}& \ell^{1-b-\frac{d}p}\eps_\ell^{-\frac{1}{2p}} \myp[\Big]{ \int_{C_k} \rho }^{\frac{1}{p}}
		\leq{} \ell^{1-b}\eps_\ell^{-\frac{1}{2p}} R_k^{\frac1p} \nn \\
		\leq{}& \ell^{1-b}\eps_\ell^{-\frac{1}{2p}} \left(\eps_\ell^{-1}\right)^{\frac{p-1}{p}}R_k
		=\ell^{1-b}\eps_\ell^{-1+\frac{1}{2p}} R_k.
	\label{eq:varies_little}
	\end{align}
	We require that $b$ is so large that the coefficient of the right side tends to zero
	\begin{equation}
		\ell^{1-b}\eps_\ell^{-1+\frac{1}{2p}} \to 0,
	\label{eq:cond_b_ell1}
	\end{equation}
	which can be checked to be true under the condition \eqref{eq:b_cond}.
	Hence, we obtain for $\ell$ large enough
	\begin{equation*}
		R_k-r_k\leq \frac{ R_k}{2},
	\end{equation*}
	which gives
	\begin{equation}
		\eps_\ell \leq R_k \leq 2r_k\leq 2\rho\leq 2R_k,
	\label{eq:main_cubes_unif_bounds}
	\end{equation}
	that is, the (essential) maximum and minimum of $ \rho $ are comparable and not too small in the box $C_k$.

	With this we can go back to~\eqref{eq:estim_precise}. By convexity of $f_T$ we have
	\begin{align*}
	f_T(R_k)
	\geq{}& f_T(\rho)+\mu(\rho)(R_k-\rho)\\
	\geq{}& f_T \myp{\rho} + (T\log\rho-C)(R_k-\rho)
	\geq f_T \myp{\rho}-C\log\ell(R_k-\rho),
	\end{align*}
	due to the lower bound~\eqref{eq:mulowbound} on $\mu(\rho)$ and~\eqref{eq:main_cubes_unif_bounds}. We used here that $\eps_\ell$ is a power of $\ell$ (with an additional logarithm when $s=d+1$).  As usual, the constant depends on $T$ and $M\geq\|\rho\|_\ii$. We have thus proved
	\begin{equation}
		G_T^{v_k} \myp{C_k} + \int_{C_k} \mu_k \rho
		\geq  \int_{C_k}f_T(\rho)-C\log\ell\int_{C_k}(R_k-\rho).
		\label{eq:lipschitz_bound}
	\end{equation}
	On the other hand, we have for $ p > 1 $, using \eqref{eq:pty_R_k},
	\begin{align}
		\log\ell  \int_{C_k} \myp{R_k-\rho} 
		\leq{}&  \ell^{d+1} \delta \rho_{\ell} \myp{\ell k}\log\ell  \nn \\
		={}& \ell^{1-b} \eps_\ell^{\frac{1}{2p}-\frac{1}{2}}\log\ell \myp[\big]{ \sqrt{\eps_\ell}\ell^{d} }^{\frac{p-1}p}\left(\ell^{bp+d}\delta\rho_\ell(\ell k)^p\right)^{\frac1p}\nn\\
		\leq{}& \sqrt{2} \ell^{1-b} \eps_\ell^{\frac{1}{2p}-\frac{1}{2}}\log\ell \myp[\bigg]{\int_{C_k}\sqrt\rho }^{\frac{p-1}p} \myp[\Big]{\ell^{bp+d}\delta\rho_\ell(\ell k)^p }^{\frac1p} \nn \\
		\leq{}& C\ell^{1-b} \eps_\ell^{\frac{1}{2p}-\frac{1}{2}}\log\ell \myp[\bigg]{\int_{C_k}\sqrt\rho+\ell^{bp+d}\delta\rho_\ell(\ell k)^p }.
	\label{eq:log_bound}
	\end{align}
	In the second inequality we used that
	$$\sqrt{\eps_\ell} \ell^{d} \leq \int_{C_k}\sqrt{R_k} \leq \sqrt2\int_{C_k}\sqrt{\rho}$$
	by~\eqref{eq:main_cubes_unif_bounds}. We require that
	\begin{equation}
		\ell^{1-b}\eps_\ell^{\frac{1}{2p}-\frac{1}{2}} \log\ell \leq C\sqrt{\eps_\ell},
	\label{eq:cond_b_ell2}
	\end{equation}
	which implies our other condition~\eqref{eq:cond_b_ell1}, due to the additional logarithm.
	From the definition of $\eps_\ell$ in~\eqref{eq:eps_ell}, the condition~\eqref{eq:cond_b_ell2} is true when
	\begin{equation}
		b > \begin{cases}
			2-\frac{1}{2p}	&\text{if $s\geq d+1$,}\\
			1+(s-d) \myp[\big]{1-\frac{1}{2p}}	&\text{if $d<s<d+1$.}
		\end{cases}
	\end{equation}
	The strict inequality is used to control the logarithm. This is exactly the condition~\eqref{eq:b_cond} from the statement. When $p=1$ we also get the desired bound, using the strict inequality in the condition on $b$.
	This means that for the ``main'' cubes satisfying \eqref{eq:complicated_cubes}, we obtain from \eqref{eq:estim_precise}
	\begin{equation}
		 G_T^{v_k} \myp{C_k} + \int_{C_k} \mu_k \rho \geq  \int_{C_k} f_T \myp{\rho \myp{x}} \id x
		- C\sqrt{\eps_\ell} \myp[\bigg]{ \int_{C_k} \sqrt{\rho} +\ell^{bp+d} \delta \rho_{\ell} \myp{\ell k}^p }.
	\label{eq:estim_complicated_final}
	\end{equation}
	Combining \eqref{eq:ldalow2} with \eqref{eq:estim_simple_final} and \eqref{eq:estim_complicated_final}, we have proven after summing over $k$
	\begin{align}
		\MoveEqLeft[6] G_T \myb{\rho} - \int_{\R^d} f_T \myp{\rho \myp{x} } \id x \nn \\
		\geq{}& \sum\limits_{\substack{k, m \in \Z^d \\ k \neq m}} \expec[\big]{I_{k,m}}_{\bP} - C\sqrt{\eps_\ell} \myp[\bigg]{ \int_{\R^d} \sqrt{\rho} +\ell^{bp+d} \delta \rho_{\ell} \myp{\ell k}^p }.
	\label{eq:intermediate_lowbound}
	\end{align}
	It remains to provide a bound on the interaction terms.

\medskip
\noindent\textbf{Bound on the interactions at $T>0$.}
We claim that
	\begin{equation}
		\sum\limits_{\substack{k,m \in \Z^d \\ k \neq m}} \expec{I_{k,m}}_{\bP}
		\geq{} - C\sqrt{\eps_\ell} \myp[\bigg]{ \int_{\R^d} \rho + \int_{\R^d} \sqrt{\rho } + \ell^{bp + d} \sum\limits_{k \in \Z^d} \delta\rho_{\ell}\myp{\ell k}^p }.
		\label{eq:ldalowinteraction}
	\end{equation}
The proof will make use of a local version of the Ruelle bound \eqref{eq:ruelle} in the presence of an external potential $ V $, which is bounded from below by a constant $-\mu_0$.
Precisely, there exist $ C, L_0 > 0 $, such that for any cube $ Q $ of side length $ L_0 $, we have
\begin{equation}
	\label{eq:ruelleloc}
	\expec[\big]{n_{Q}^2}_{T,V}
	\leq{} C |Q|\int_{Q} e^{-\frac{1}{T} V \myp{x}} \id x\left(1+e^{\frac{d\mu_0}{\eps T}}\right).
\end{equation}
The side length $ L_0 $ of the cube $ Q $ can be chosen arbitrarily large and the constant $ C $ depends on $L_0$ and the temperature $T$. The estimate~\eqref{eq:ruelleloc} is in Corollary~\ref{thm:Ruelle_V} in Appendix~\ref{app:Ruelle}. When $\int_{Q} e^{-\frac{1}{T} V \myp{x}}$ is small, it can be proved the same as in~\cite[Chap.~4]{Ruelle} and~\cite{JanKunTsa-22}.

First, we recall that the potential $ V $ is given by \eqref{eq:vdef}, and that $ \bP $ is the Gibbs state minimizing the problem in \eqref{eq:ldalow1}.
	Since $ \ell \geq L_0 $, we can choose an $ r\in \myb{L_0; 2L_0} $ such that each $ C_k $ is exactly a disjoint union of cubes of side length $ r $,
	\begin{equation*}
		C_k = \bigcup_{\gamma \in \mathcal{A}} Q_{k,\gamma}.
	\end{equation*}
	Since $ V = - \mu_k + K\eps_\ell $ in each of the smaller cubes $ Q_{k, \gamma} $, we get by throwing away the positive part of the interaction and repeating the calculation in~\eqref{eq:interactbound},
	\begin{align}
		\sum_{k\neq m}\expec{I_{k,m}}_{\bP}
		\geq{}& - \sum_{k\neq m}\sum\limits_{\gamma, \gamma' \in \mathcal{A}} \frac{\kappa}{1+ \rd \myp[\big]{Q_{k,\gamma}, Q_{m,\gamma'}}^s } \expec[\big]{n_{Q_{k,\gamma}}^2}_{T, V}^{\frac{1}{2}} \expec[\big]{n_{Q_{m,\gamma'}}^2}_{T, V}^{\frac{1}{2}} \nn\\
		\geq{}& - \sum_{k\neq m}\sum\limits_{\gamma, \gamma' \in \mathcal{A}} \frac{\kappa}{1+ \rd \myp[\big]{Q_{k,\gamma}, Q_{m,\gamma'}}^s } \expec[\big]{n_{Q_{k,\gamma}}^2}_{T, V}\nn\\
		\geq{}& - c_M \sum_{k}\sum_{\gamma\in \mathcal{A}} \int_{Q_{k,\gamma}} \int_{(C_k)^c} \frac{\dx\,\dy}{1 + \abs{x-y}^s} \int_{Q_{k,\gamma}} \rho_k.
		\label{eq:estim_interaction}
		\end{align}
Here we have used that
$$2\expec[\big]{n_{Q_{k,\gamma}}^2}_{T, V}^{\frac{1}{2}} \expec[\big]{n_{Q_{m,\gamma'}}^2}_{T, V}^{\frac{1}{2}}\leq \expec[\big]{n_{Q_{k,\gamma}}^2}_{T, V}+ \expec[\big]{n_{Q_{m,\gamma'}}^2}_{T, V}$$
and then
$$\expec[\big]{n_{Q_{k,\gamma}}^2}_{T, V}\leq C\int_{Q_{k,\gamma}}e^{\mu_k/T}\leq C\int_{Q_{k,\gamma}}\rho_k$$
due to the Ruelle bound~\eqref{eq:ruelleloc}, and the estimates on $\mu_k$ in terms of $\rho_k$ from~\cref{thm:mubounds}. We also used~\eqref{eq:integralbound2} to replace $1/(1+\rd(Q_{k,\gamma},Q_{m,\gamma'})^s)$ by an integral.

Next we have two cases. In the ``simple'' cubes $C_k$ we just estimate
$$\int_{(C_k)^c} \frac{\dx\,\dy}{1 + \abs{x-y}^s}\leq \int_{\R^d} \frac{\dx\,\dy}{1 + \abs{x-y}^s}\leq C$$
and obtain
$$\sum_{\gamma\in \mathcal{A}} \int_{Q_{k,\gamma}} \int_{(C_k)^c} \frac{\dx\,\dy}{1 + \abs{x-y}^s} \int_{Q_{k,\gamma}} \rho_k\leq C\int_{C_k}\rho$$
which can be bounded by the desired error terms due to the definition~\eqref{eq:simple_cubes} of the simple cubes. In the ``main'' cubes, we use that $\rho_k\equiv R_k$ is constant and thus get
\begin{align*}
\MoveEqLeft[4] \sum_{\gamma\in \mathcal{A}} \int_{Q_{k,\gamma}} \int_{(C_k)^c} \frac{\dx\,\dy}{1 + \abs{x-y}^s} \int_{Q_{k,\gamma}} \rho_k\\
={}& R_k r^d\int_{C_k}\int_{(C_k)^c} \frac{\dx\,\dy}{1 + \abs{x-y}^s}
\leq c r^d\eps_\ell \ell^d R_k\leq c 2r^d\eps_\ell\int_{C_k}\rho,
\end{align*}
by Lemma~\ref{lem:integralbound} and the fact that $\rho\geq R_k/2$ on $C_k$. We recall that $r$ is the (fixed) side length of the small cubes $Q_{k,\gamma}$. We have therefore proved~\eqref{eq:ldalowinteraction}.

	\medskip
\noindent\textbf{Concluding the proof for $T>0$.}
	Inserting the estimate~\eqref{eq:ldalowinteraction} into~\eqref{eq:intermediate_lowbound}, we finally conclude that
	\begin{multline}
		G_T \myb{\rho} - \int_{\R^d} f_T \myp{\rho \myp{x} } \id x
		\geq{} -C\sqrt{\eps_\ell} \myp[\bigg]{ \int_{\R^d} \rho + \int_{\R^d} \sqrt{\rho } + \ell^{bp+d} \sum\limits_{k\in \Z^d} \delta\rho_{\ell}\myp{\ell k}^p }.
	\label{eq:ldalow_sum}
	\end{multline}
	The terms in this inequality are all invariant under translations except for the last sum, due to our initial choice of the tiling of space.
	However, since~\eqref{eq:ldalow_sum} holds for all densities $\rho$ we can freely average the bound over translations $\rho(\cdot +\tau)$ with $\tau\in C_\ell$, which amounts to averaging over the position of the tiling.
	This is how we obtain the bound~\eqref{eq:ldalow} in Proposition~\ref{thm:ldalow} for $T>0$.

\medskip
\noindent\textbf{Zero temperature case.}
We conclude the proof by explaining how to treat the case $T=0$.
Since $\rho\mapsto \mu(\rho)$ is not necessarily single-valued, we can for instance instead take $\mu_k$ to be the largest value of $\mu$ such that~\eqref{eq:mukdef} holds. In addition, we do not insert any shift and instead define $v_k=-\mu_k$ for all $k$.

We take the same definition of the ``simple'' and ``main'' cubes as in the $T>0$ case, i.e. the cubes satisfying \eqref{eq:simple_cubes} and \eqref{eq:complicated_cubes}, respectively.
In the ``main'' cubes (those for which~\eqref{eq:complicated_cubes} holds) we take $\mu_k$ to be maximal such that \eqref{eq:mukdef} holds with $\rho_k=R_k:=\max_{C_k}\rho$. In the ``simple'' cubes (where we again have $R_k\leq\eps_\ell$) we do not take $\rho_k=\rho$ as we did for $T>0$, but rather choose $ \mu_k = -C $ sufficiently negative, such that
$$ G_0^{v_k}\myp{C_k}  = 0,\qquad \pscal{n_{C_k}}_\bP=0.$$
In other words, we enforce that there is no particle at all in the simple cubes. For the first condition it suffices that $C\geq\kappa$, the stability constant. For the second condition (recall that $\bP$ is the minimizer with the external potential $V$), we have to use the Ruelle bound~\eqref{eq:main_estim_Ruelle_thm_V_T0} which provides the existence of such a constant $C$, depending on $M$.

For the simple cubes, we are thus left with
\begin{multline*}
G_0^{v_k}\myp{C_k} + \int_{C_k} \mu_k \rho - \int_{C_k} f_0 \myp{\rho}= -C\int_{C_k} \rho - \int_{C_k} f_0 \myp{\rho}\\
\geq -C_M\int_{C_k} \rho\geq-C_M\sqrt{\eps_\ell}\left( \int_{C_k} \sqrt{\rho \myp{x}} \id x + \ell^{bp+d} \delta\rho_\ell \myp{\ell k}^p\right),
\end{multline*}
since at $T=0$ we have $|f_0(\rho)|\leq C_M\rho$ by~\cref{thm:tdlimbounds}.
In the main cubes we again have $ R_k\geq\eps_\ell$ and~\eqref{eq:varies_little}, so~\eqref{eq:main_cubes_unif_bounds} still holds for $\ell$ sufficiently large. However, because $ f_0'\myp{\rho} \geq -C $,~\eqref{eq:log_bound} is replaced by
\begin{equation}
\int_{C_k}(R_k-\rho)\leq C\ell^{1-b}\eps_\ell^{-\frac12+\frac{1}{2p}}\left(\int_{C_k}\sqrt\rho+\ell^{bp+d}\delta\rho_\ell( \ell k)^p\right),
\label{eq:log_bound_T0}
\end{equation}
that is, the logarithm in \eqref{eq:log_bound} disappears.
The condition we need on $b$ is therefore the same \eqref{eq:cond_b_ell2} as in the $T > 0$ case (but without the $\log \ell$),
which is true under~\eqref{eq:b_cond}.
This allows us to argue as for $T>0$ to estimate the local energy for the main cubes.
One difference is that we use instead the \emph{additive} convergence rate in~\eqref{eq:gcconvrate}, which gives
\begin{multline*}
G_0^{v_k}\myp{C_k} + \int_{C_k} \mu_k \rho - \int_{C_k} f_0 \myp{\rho}\\
\geq -C\int_{C_k} (R_k-\rho) -C\ell^d\eps_\ell\geq -C\int_{C_k} (R_k-\rho) -2C\sqrt{\eps_\ell}\int_{C_k} \sqrt{\rho},
\end{multline*}
where in the last bound we have used that $\rho\geq R_k/2\geq \eps_\ell/2$ in the box $C_k$.
Inserting~\eqref{eq:log_bound_T0} leads to the expected error term after summing over $k$.

Finally, the interaction is estimated similarly as for $T>0$, but we use that there is just no ``simple'' cube to consider, since there is no particle at all there. Like for Corollary~\ref{cor:gcconvrate_mu}, the main difficulty is that $\xi_{0,\mu}$ cannot easily be controlled in terms of $\rho$ at low density, as we used at $T>0$ in~\eqref{eq:estim_interaction}. Therefore we use instead that $\rho_k$ is uniformly bounded over the main cubes and get similarly as in the proof of Proposition~\ref{prop:gcconvrate}
\begin{align*}
\sum\limits_{\substack{k,m \in \Z^d \\ k \neq m}}\expec{I_{k,m}}_{\bP}\geq -C\!\!\sum_{\text{main cubes}}\ell^{d}\eps_\ell&\geq -2C\!\!\sqrt{\eps_\ell}\sum_{\text{main cubes}}\int_{C_k}\sqrt\rho\\
&\geq  -2C\sqrt{\eps_\ell}\int_{\R^d}\sqrt \rho
\end{align*}
as desired. This concludes the proof of Proposition~\ref{thm:ldalow}.
\end{proof}

\subsection{Upper bound}

\begin{proposition}[LDA from above]
	\label{thm:ldaup}
	Let $ M > 0 $ and $ p \geq 1 $, and let $ w $ be an interaction satisfying \cref{de:shortrangenew}.
	Let $T\geq0$ and assume furthermore that $s > d+1$ if $T=0$. Let
	\begin{equation}
	\label{eq:b_cond2}
		b > \begin{cases}
				2 - \frac{1}{2p} & \text{if } p \geq 2, \\
				\frac{3}{2}+\frac{1}{2p} & \text{if } 1 \leq p < 2.
			\end{cases}
	\end{equation}
	There exists a constant $ C > 0 $ depending on $ M,p,b,T,d $, and $ w $, such that
	\begin{equation}
		G_T \myb{\rho} - \int_{\R^d} f_T \myp{\rho \myp{x}} \id x \\
		\leq  C\eta_\ell\myp[\bigg]{\int_{\R^d}\sqrt \rho + \ell^{bp} \int_{\R^d} \delta\rho_\ell(z)^p\,\rd z},
	\label{eq:lda_up}
	\end{equation}
	for any $ \ell > 0 $, and any density $ 0 \leq \rho \in L^1 \myp{\R^d} $ satisfying $ \normt{\rho}{\infty} \leq M $ and $ \sqrt{\rho} \in L^1 \myp{\R^d} $.
\end{proposition}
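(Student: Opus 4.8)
The plan is to build an explicit grand-canonical trial state with one-body density exactly equal to $\rho$, reusing the tiling-with-corridors-and-sliding construction already employed for \cref{cor:thermolim} and \cref{prop:gcconvrate_fixed}, but now over a genuinely non-constant density. First I would tile $\R^d$ by cubes $C_k=[-\ell/2,\ell/2)^d+\ell k$, $k\in\Z^d$, insert corridors of a fixed width $\delta\ge r_0$ between them (so the core of $w$ is never seen across cubes), place a local state inside each shrunken cube, form the corresponding tensor-product state, and average over the position of the tiling at the very end. Exactly as in the proof of \cref{thm:ldalow}, I would split the cubes into ``simple'' ones, where \eqref{eq:simple_cubes} holds, and ``main'' ones, where \eqref{eq:complicated_cubes} holds; the latter inequality again forces $\rho$ to be slowly varying in $C_k$, so that $\eps_\ell\le R_k\le 2r_k$ and $R_k-r_k\le\ell\,\delta\rho_\ell(\ell k)$ with $r_k=\min_{C_k}\rho$, $R_k=\max_{C_k}\rho$, by the computation \eqref{eq:varies_little}--\eqref{eq:main_cubes_unif_bounds}.

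In a simple cube it is enough to use \emph{any} state with density $\rho\1_{C_k}$: the universal upper bound \eqref{eq:upper_bd_G_T_simple} together with $\bigl|\int_{C_k}f_T(\rho)\bigr|\le C_M\int_{C_k}\rho$ (from \eqref{eq:fcanupbound} and \cref{thm:tdlimbounds}) show that its contribution is $\le C_M\int_{C_k}\rho$, which by definition \eqref{eq:simple_cubes} of a simple cube is already $\le C\sqrt{\eps_\ell}\bigl(\int_{C_k}\sqrt\rho+\ell^{bp+d}\delta\rho_\ell(\ell k)^p\bigr)$. In a main cube I would instead decompose $\rho\1_{C_k}=r_k\1_{C_k}+(\rho-r_k)\1_{C_k}$ (both terms nonnegative), apply the sub-additivity estimate of \cref{prop:subadditivity} with a parameter $\eps>0$ to be optimised, bound $G_T[r_k\1_{C_k}]$ by $\ell^d f_T(r_k)+C\ell^d\xi(r_k)\eta_\ell$ via \cref{prop:gcconvrate_fixed}, and then replace $\ell^d f_T(r_k)$ by $\int_{C_k}f_T(\rho)$ using convexity of $f_T$ and the bounds on the dual variable in \cref{thm:mubounds}, exactly as in \eqref{eq:log_bound}, at the cost of $C(\log\ell)\int_{C_k}(\rho-r_k)\le C(\log\ell)\ell^{d+1}\delta\rho_\ell(\ell k)$ (and only $C\int_{C_k}(\rho-r_k)$ when $T=0$). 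Since $\rho\ge r_k\ge\eps_\ell/2$ on a main cube and $r_k\le M$, one has $\xi(r_k)\le C_M\sqrt{r_k}$, so the convergence-rate error $\ell^d\xi(r_k)\eta_\ell\le C_M\eta_\ell\int_{C_k}\sqrt{r_k}$ sums to $\le C_M\eta_\ell\int_{\R^d}\sqrt\rho$; what is left is to choose $\eps$ so that the correction terms $\eps\int_{C_k}(r_k^\gamma+r_k)$, $\eps^{1-\gamma}\int_{C_k}(\rho-r_k)^\gamma$, $(\log\eps)_-\int_{C_k}(\rho-r_k)$ and $T\int_{C_k}(\rho-r_k)\log(\rho-r_k)$ appearing in \eqref{eq:subadditivity}/\eqref{eq:subadditivity2} are all absorbed, after inserting $\rho-r_k\le\ell\,\delta\rho_\ell(\ell k)$ and $\eps_\ell\le R_k\le 2\rho$, into $\eta_\ell\bigl(\int_{C_k}\sqrt\rho+\ell^{bp+d}\delta\rho_\ell(\ell k)^p\bigr)$.

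Finally I would estimate the interaction between the cubes of the trial state. Because of the corridors only the tail $w\le\kappa(1+|\cdot|^s)^{-1}$ is seen, so the cross terms are bounded by $\sum_{k\ne m}D_w(\rho\1_{C_k},\rho\1_{C_m})\le C_M\sum_k\int_{C_k}\rho(x)\,(1+\rd(x,\partial C_k)^{s-d})^{-1}\,\rd x$; splitting again into simple cubes (weight bounded by $1$, then \eqref{eq:simple_cubes}) and main cubes ($\rho\le 2r_k$ and $\int_{C_k}(1+\rd(x,\partial C_k)^{s-d})^{-1}\,\rd x\le c\ell^d\eps_\ell$ by \cref{lem:integralbound}) gives a total $\le C_M\eta_\ell\bigl(\int_{\R^d}\sqrt\rho+\ell^{bp}\int_{\R^d}\delta\rho_\ell(z)^p\,\rd z\bigr)$. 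The density renormalisation forced by the corridors only contributes lower-order terms for $\ell$ large, as in the proof of \cref{cor:thermolim}, and averaging over the tiling position turns $\ell^d\sum_k\delta\rho_\ell(\ell k)^p$ into $\int_{\R^d}\delta\rho_\ell(z)^p\,\rd z$, yielding \eqref{eq:lda_up}.

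I expect the main obstacle to be the optimisation of $\eps$ in the main cubes: the $\rho^\gamma$-type penalty $\eps^{1-\gamma}\int_{C_k}(\rho-r_k)^\gamma$ from sub-additivity (with $\gamma=\max(2,1+\alpha/d)\ge2$) is genuinely heavier than anything in the lower bound, and balancing it against $\eps\int_{C_k}r_k^\gamma$ and the logarithmic term $(\log\eps)_-\int_{C_k}(\rho-r_k)$ — while keeping every contribution below $\eta_\ell\bigl(\int_{C_k}\sqrt\rho+\ell^{bp+d}\delta\rho_\ell(\ell k)^p\bigr)$, using $\eps_\ell\le r_k$, $r_k\le M$, and $\rho-r_k\le\ell\,\delta\rho_\ell(\ell k)$ — is what produces the stronger condition $b>\tfrac32+\tfrac1{2p}$ for $1\le p<2$ in \eqref{eq:b_cond2}, as opposed to $b>2-\tfrac1{2p}$ in \cref{thm:ldalow}. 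At $T=0$ no separate work-around is needed beyond the hypothesis $s>d+1$, which is simply inherited from \cref{prop:gcconvrate_fixed} and from the fact that the low-density form of $\xi(\rho)$ there still behaves like $\sqrt\rho$.
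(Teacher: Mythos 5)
Your proposal follows essentially the same route as the paper's proof: the sliding tensor-product trial state over a corridored tiling, the dichotomy into simple and main cubes, sub-additivity (\cref{prop:subadditivity}) to reduce each main cube to the constant density $r_k$, \cref{prop:gcconvrate_fixed} plus convexity of $f_T$ to recover $\int_{C_k}f_T(\rho)$, and the same tail estimate for the inter-cube interaction — and you correctly locate the origin of the modified condition on $b$ in the pointwise control of $R_k-r_k$ (via $\rho-r_k\leq\ell\,\delta\rho_\ell(\ell k)$ and $r_k\leq M$) needed to absorb the sub-additivity error terms. The only cosmetic difference is that the simple/main dichotomy should be made with the threshold $\eta_\ell$ as in \eqref{eq:up_simple_cubes}, rather than $\sqrt{\eps_\ell}$, so that the simple-cube contribution matches the claimed error when $d<s\leq d+1$; for $s>d+1$ the two coincide.
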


Recall that $\eta_\ell$ is defined in~\cref{prop:gcconvrate_fixed}. For $s>d+1$ we just have $\eta_\ell=\sqrt{\eps_\ell}=1/\sqrt\ell$ so that~\eqref{eq:lda_up} takes the same form as the lower bound in Proposition~\ref{thm:ldalow}. The constraint on $b$ is slightly worse when $1\leq p<2$, however. For $T=0$ we can also handle $d<s\leq d+1$ but we get an error worse than $\eta_\ell$. For simplicity we do not give the details.

\begin{proof}
The proof uses very similar arguments as for the lower bound, but with difficulties at different places. The interaction is much easier to treat but the local free energy is slightly trickier to estimate. Again, due to the universal bounds proved in~\cite{JexLewMad-23}, the estimate~\eqref{eq:ldalow} holds when $\ell$ is finite, even without the term involving $\delta\rho_\ell$. We can thus assume that $\ell$ is large enough.

The first step is to localize the problem into cubes of side length $ \ell > 0 $, with corridors. Following~\cite{LewLieSei-20}, we pick a $\chi$ supported on the cube $Q_\ell:=[-\ell/2,\ell/2)^d$ with $\int\chi=\ell^d$ and use the relation
\begin{equation}
\frac{1}{\ell^d} \int_{Q_{\ell}} \sum_{k \in \Z^d} \chi(x-\ell k-\tau) \id \tau=1,\qquad\forall x\in\R^d,
 \label{eq:partition_unity}
\end{equation}
which we interpret as a continuous partition of unity. We automatically have corridors if $\chi$ is supported well inside the cube and thus take
$$\chi:=\frac{\1_{C_0}}{\myp{1-\delta/\ell}^d},\qquad C_0:=\left[-\frac{\ell-\delta}2,\frac{\ell-\delta}2\right)^d.$$
where $ \delta \geq r_0 $ (the range of the core of $ w $). Then we let $\chi_{k}:=\chi(\cdot-\ell k)$ and deduce that
	\begin{equation*}
		\rho \myp{x}
		= \frac{1}{\ell^d} \int_{Q_\ell} \sum_{k \in \Z^d} \chi_{k} \myp{x-\tau} \rho \myp{x} \id \tau
		=: \frac{1}{\ell^d} \int_{Q_\ell} \sum_{k \in \Z^d} \rho_{k,\tau} \myp{x} \id \tau.
	\end{equation*}
	For each $ k \in \Z^d $ and $ \tau \in Q_{\ell} $, we let $ \bP_{k,\tau} $ be the state minimizing the energy at fixed density $ \rho_{k,\tau} $ and take as a trial state
	\begin{equation*}
		\bP := \frac{1}{\ell^d} \int_{Q_{\ell}} \bigotimes\limits_{k \in \Z^d} \bP_{k,\tau} \id \tau.
	\end{equation*}
	Then, from the concavity of the entropy we have
	$$G_T \myb{\rho}\leq \mathcal{G}_T \myp{\bP}\leq \frac{1}{\ell^d} \int_{Q_{\ell}}  \bigg\{\sum\limits_{k \in \Z^d} \mathcal{G}_T \myp{\bP_{k,\tau}} + \sum\limits_{\substack{k,m \in \Z^d \\ k \neq m}} 2 D_w \myp{\rho_{k,\tau}, \rho_{m,\tau}}\bigg\}\rd \tau.$$
From~\eqref{eq:partition_unity}, we can also write
$$ \int_{\R^d} f_T \myp{\rho \myp{x}} \id x=\frac{1}{\ell^d} \int_{Q_{\ell}} \int_{\R^d} \sum\limits_{k \in \Z^d} \chi_{k} \myp{x-\tau} f_T \big(\rho \myp{x}\big) \id x\,\rd \tau.$$
Since everything will be done at fixed $ \tau $, for simplicity we suppress $\tau$ from the notation. We also denote by $C_k=C_{0}+\ell k$ the translated cube. With these notations we have to estimate
$$G_T[\rho_{k,\tau}]-\int \chi_{k} \myp{\cdot -\tau} f_T \myp{\rho}=G_T\left[\frac{\rho\1_{C_k}}{(1-\delta/\ell)^d}\right]-\frac1{(1-\delta/\ell)^d}\int_{C_k}  f_T \myp{\rho}.$$

As for the lower bound, we then split the cubes into two categories, depending whether $\rho$ varies too much or not. We thus consider all the cubes satisfying the simple estimate
\begin{equation}
	\label{eq:up_simple_cubes}
		\int_{C_k} \rho
		\leq{} \eta_\ell \left( \int_{C_k} \sqrt{\rho} + \ell^{bp+d} \delta\rho_{\ell}\myp{\ell k}^p\right),
	\end{equation}
where $\eta_\ell$ is from Proposition~\ref{prop:gcconvrate_fixed}, which we call the ``simple cubes''. In those cubes we use the upper bound \eqref{eq:upper_bd_G_T_simple} on $ G_T $ and the lower bound \eqref{eq:fcanlowbound} on $ f_T $ to obtain the simple estimate
	\begin{align}
		\MoveEqLeft[3] G_T\left[\rho_k\right]-\frac1{(1-\delta/\ell)^d}\int_{C_k}  f_T \myp{\rho} \nn \\
		\leq{}& C \int_{C_k} \rho_k + T \int_{C_k} \rho_k \log \rho_k + \int_{C_k} \frac{1}{\myp{1-\delta / \ell}^d} \myp[\big]{ C \rho - T \rho \log \rho} \nn \\
		={}& 2C \int_{C_k} \rho_k + T \int_{C_k} \rho_k \myp[\big]{ \log \rho_k - \log \rho}\nn\\
		={}&\frac{2C-Td\log(1-\delta/\ell)}{(1-\delta/\ell)^d} \int_{C_k} \rho\leq C\int_{C_k}\rho.
	\label{eq:up_estim_simple}
	\end{align}
From the definition~\eqref{eq:up_simple_cubes} of the simple cubes, we get the desired error term for those cubes.

Next we look at the ``main cubes'' for which
\begin{equation}
	\label{eq:up_complicated_cubes}
		\int_{C_k} \rho
		\geq{} \eta_\ell\left( \int_{C_k} \sqrt{\rho} + \ell^{bp+d} \delta\rho_{\ell}\myp{\ell k}^p\right),
	\end{equation}
We denote again
	\begin{equation*}
		r_k:=\min_{C_k} \rho,
		\qquad R_k:=\max_{C_k} \rho.
	\end{equation*}
In these cubes, we must have
	\begin{equation}
	\label{eq:up_pty_R_k}
		\sqrt{R_k}\geq \eta_\ell.
	\end{equation}
	The difference $ R_k - r_k $ is bounded in terms of $ R_k $ exactly as in \eqref{eq:varies_little},
	\begin{align}
	\label{eq:up_varies_little}
		R_k - r_k 	
		\leq{}& \ell^{1-b}\eta_\ell^{-2+\frac1p} R_k,
	\end{align}
	while on the other hand, the same argument also gives
	\begin{equation*}
		R_k - r_k
		\leq{}	\begin{cases}
					\ell^{1-b} \eta_{\ell}^{-1+\frac{1}{p}} \sqrt{R_k} & \text{if } p \geq 2 \\
					M^{\frac{2-p}{2p}} \ell^{1-b} \eta_{\ell}^{-\frac{1}{p}} \sqrt{R_k} & \text{if } 1 \leq p < 2.
				\end{cases}
	\end{equation*}
	If we require that
	\begin{equation*}
		\max\myp[\Big]{\ell^{1-b} \eta_\ell^{-1+\frac1p}, \ell^{1-b} \eta_{\ell}^{-\frac{1}{p}}}
		\leq{} \frac{C \eta_{\ell}}{\log \ell},
	\label{eq:cond_b_ell1_up}
	\end{equation*}
	which is true for any value of $s > d$ under our choice of $b$ in~\eqref{eq:b_cond2}, then we obtain in either case that the right hand side of \eqref{eq:up_varies_little} tends to zero for large $\ell$, and
	\begin{equation}
	\label{eq:up_varies_little2}
		R_k - r_k
		\leq{} \frac{C \eta_{\ell} \sqrt{R_k}}{\log \ell}.
	\end{equation}
	Note the additional $\eta_\ell/\log\ell$ on the right of~\eqref{eq:up_varies_little2} compared to the similar estimate~\eqref{eq:varies_little} in the proof of the lower bound.
	From \eqref{eq:up_varies_little} we obtain again
	\begin{equation*}
		\eta_\ell^2 \leq{} R_k \leq{} 2 r_k \leq 2 \rho \leq 2R_k
	\end{equation*}
	for $ \ell $ large enough.
	Furthermore, \eqref{eq:up_varies_little2} implies the pointwise bound
	\begin{equation}
		\rho_k - r_k
		={} \frac{1}{\myp{1- \delta / \ell}^d} \rho - r_k
		\leq{} \rho - r_k + c \frac{\delta}{\ell} \rho
		\leq{} C\frac{\eta_\ell \sqrt{R_k}}{\log\ell},
		\label{eq:estim_r_k_rho}
	\end{equation}
	since $\log\ell/\ell\leq \eta_\ell$.
	In the ``main'' cubes satisfying \eqref{eq:up_complicated_cubes}, we will replace $ \rho_k $ by the minimum $ r_k $ of $ \rho $ in the cube $ C_k $, using the sub-additivity bound in \cref{prop:subadditivity}.
	More precisely, choosing $ \eps = {\eta_\ell}/{\log\ell}$ in \eqref{eq:subadditivity} yields in the case $ \alpha \neq d $,
	\begin{align}
		G_T \myb{\rho_k}
		={}& G_T \myb{\myp{r_k + \myp{\rho_k - r_k}} \1_{C_k}} \nn \\
		\leq{}& G_T \myb{r_k \1_{C_k}} + C \myp[\Big]{ 1+\log \myp[\Big]{ \frac{\eta_\ell}{\log\ell} }_- } \frac{\eta_\ell}{\log\ell} \sqrt{R_k} \ell^d\nn \\
		\leq{}& G_T \myb{r_k \1_{C_k}} + C\eta_\ell\int_{C_k} \sqrt{\rho} 
		\leq{} \abs{C_k} f_T \myp{r_k} + C\eta_\ell\int_{C_k} \sqrt{\rho}.
	\label{eq:up_estim_precise}
	\end{align}
	In the last line we have used the upper bound on the convergence rate from Proposition~\ref{prop:gcconvrate_fixed} (and hence also the assumption $s > d+1$ when $T = 0$).
	The same bound can be obtained in the $ \alpha = d $ case by using \eqref{eq:subadditivity2}.
	Similarly, the lower bound \eqref{eq:fcanlowbound} on $ f_T $ and $\rho \geq \eta_{\ell}^2 / 2$ in $C_k$ imply
	\begin{equation*}
		- \int_{C_k} f_T \myp{\rho}\leq C\begin{cases}
											\log\ell \int_{C_k}\rho&\text{if $T>0$,}\\
											\int_{C_k}\sqrt\rho&\text{if $T=0$.}
		                                 \end{cases}
	\end{equation*}
	Using $\log\ell/\ell\leq\eta_\ell$, we obtain in either case
	\begin{equation*}
		G_T \myb{\rho_k} - \frac{1}{\myp{1-\delta / \ell}^d} \int_{C_k} f_T \myp{\rho}
		\leq{} \int_{C_k}(f_T \myp{r_k} - f_T \myp{\rho}) +C\eta_\ell \int_{C_k}\sqrt{\rho}.
	\end{equation*}
	By \eqref{eq:up_pty_R_k} and the bounds on $ f_T'(\rho) $, we have as in \eqref{eq:estim_r_k_rho},
	\begin{equation*}
		\int_{C_k} f_T \myp{r_k} - f_T \myp{\rho}
		\leq{} C \log\ell  \int_{C_k} \myp{\rho - r_k}
		\leq{} C \eta_\ell\int_{C_k} \rho.
	\end{equation*}
	We conclude that for any ``main'' cube $ C_k $ where \eqref{eq:up_complicated_cubes} holds, we have
	\begin{equation*}
		G_T \myb{\rho_k} - \frac{1}{\myp{1-\delta / \ell}^d} \int_{C_k} f_T \myp{\rho}
		\leq{} C \eta_\ell \int_{C_k}\sqrt{\rho}.
	\end{equation*}

It remains to deal with the interaction terms. Due to the corridors, we have
$$D_w \myp{\rho_{k,\tau}, \rho_{m,\tau}}\leq \kappa\iint  \frac{\rho_{k,\tau}(x)\rho_{m,\tau}(y)}{1+|x-y|^s}\dx\,\dy.$$
Hence, using the fact that $\rho\leq M$, we have
$$\sum_{m\in\Z^d\setminus\{k\}}D_w \myp{\rho_{k}, \rho_{m}}\leq C\int_{C_k}\int_{(C_k)^c}  \frac{\rho(x)}{1+|x-y|^s}\dy\,\dx.$$
In the simple cubes we just bound the previous integral by $\int_{C_k}\rho$ and then use~\eqref{eq:up_simple_cubes} to obtain $\eta_\ell\int_{C_k}\sqrt\rho$. In the main cubes satisfying~\eqref{eq:up_complicated_cubes}, we rather use Lemma~\ref{lem:integralbound} and obtain
$$\int_{C_k}\int_{(C_k)^c}  \frac{\rho(x)}{1+|x-y|^s}\dy\,\dx\leq C\eps_\ell \ell^d R_k\leq 2C\eps_\ell \int_{C_k}\rho.$$
We arrive at the desired estimate after integrating over $\tau$.
\end{proof}

\section{Uniform bounds on energy per unit volume}
\label{sec:tdlim_proofs}

This section contains the proofs of Proposition~\ref{thm:tdlimbounds} and Corollaries~\ref{thm:mubounds} and~\ref{thm:gcbounds}.

\begin{proof}[Proof of \cref{thm:tdlimbounds}]
	We note first that because $ f_T \myp{\rho,L} $ is given by the minimization problem \eqref{eq:fT_variational}, the upper bounds \eqref{eq:fcanupbound} follow easily from the universal bounds \eqref{eq:upper_bd_G_T1} and \eqref{eq:upper_bd_G_T2} by simply using the density $ \rho \1_{C_L} $,
	\begin{equation*}
		f_T \myp{\rho,L}
		={} \inf_{ \substack{ \supp \bP \subseteq C_L \\ \mathcal{N}(\bP) = \rho L^d }} \frac{\mathcal{G}_T \myp{\bP}}{L^d}
		\leq{} \min_{\rho_{\bP} = \rho \1_{C_L}} \frac{\mathcal{G}_T \myp{\bP}}{L^d}
		={} \frac{G_T \myb{ \rho \1_{C_L}}}{L^d}.
	\end{equation*}
	Thus we need only to prove the lower bound \eqref{eq:fcanlowbound}.
	For this, we use \eqref{eq:fT_variational} and split the minimization problem
	\begin{equation*}
		f_T \myp{\rho,L}
		= \inf_{ \substack{ \supp \bP \subseteq C_L \\ \mathcal{N}(\bP) = \rho L^d }} \frac{\mathcal{G}_T \myp{\bP}}{L^d}
		\geq \inf_{ \substack{ \supp \bP \subseteq C_L \\ \mathcal{N}(\bP) = \rho L^d }} \frac{\cU \myp{\bP}}{L^d} - \sup_{ \substack{ \supp \bP \subseteq C_L \\ \mathcal{N}(\bP) = \rho L^d }} \frac{T\cS \myp{\bP}}{L^d},
	\end{equation*}
	where $ \mathcal{S} \myp{\bP} $ under the stated conditions is maximized by the Poisson state \cite[Lemma $ 6.1 $]{MarLewNen-22_ppt} with density $ \rho \1_{C_L} $, with
	\begin{equation}
	\label{eq:tdlim_entropy}
		\sup_{ \substack{ \supp \bP \subseteq C_L \\ \mathcal{N}(\bP) = \rho L^d }} \cS \myp{\bP}
		\leq L^d \myp{\rho - \rho \log \rho}.
	\end{equation}
	Thus we proceed to derive a bound on the interaction energy $ \mathcal{U} \myp{\bP} $.
	We write the cube $ C_L $ as a union of smaller cubes of side length $ 0 < r < \sqrt{d}^{-1} r_0 $,
	\begin{equation*}
		C_L = \bigcup\limits_{k \in \mathcal{A}} C_k.
	\end{equation*}
	Given a configuration $ x = \Set{x_1, \dotsc, x_n} $ of $ n $ particles in $ C_L $, we denote by $ n_k = \# \myp{x \cap C_k} $ the number of particles in the cube $ C_k $.
	It follows from \cref{de:shortrangenew} that for any $ n \geq 1 $,
	\begin{align}
		\sum\limits_{i < j}^n w \myp{x_i - x_j}
		\geq{}& \sum\limits_{i < j}^n \frac{r_0^{\alpha}}{\kappa} \frac{\1 \myp{\abs{x_i-x_j}<r_0}}{\abs{x_i-x_j}^{\alpha}} - \kappa n \nn \\
		\geq{}& \sum\limits_{k \in \mathcal{A}} \sum\limits_{\substack{x_i, x_j \in C_k \\ i \neq j}} \frac{1}{\kappa d^{\alpha/2}} \frac{r^{\alpha}}{\abs{x_i - x_j}^{\alpha}} - \kappa n \nn \\
		\geq{}& \sum\limits_{k \in \mathcal{A}} \frac{c}{\kappa} n_k^{\gamma} - \myp[\Big]{\kappa + \frac{c}{\kappa}} n.
	\label{eq:low_interact}
	\end{align}
	We have used here the well-known inequality
	\begin{equation}
		\min_{x_i \in C_r} \sum\limits_{i < j}^n \frac{r^{\alpha}}{\abs{x_i - x_j}^{\alpha}}
		\geq c n^{\gamma} - c n
	\end{equation}
	which holds for $ \alpha \neq d $ in any cube of side length $r$, see for instance~\cite[Lemma 1]{Lewin-22} and~\cite{HarSaf-05} (when $\alpha=d$ there is even a lower bound involving $n^2\log(n)$).
	It follows for any state $ \bP $ in $ C_L $ with $ \mathcal{N} \myp{\bP} = \rho L^d $ that we can estimate, using Jensen's inequality,
	\begin{align*}
		\mathcal{U} \myp{\bP}
		\geq{}& \sum\limits_{n \geq 0} \frac{1}{n!} \int_{C_L^n} \sum\limits_{k \in \mathcal{A}} \frac{c}{\kappa} n_k^{\gamma} - \myp[\Big]{\kappa + \frac{c}{\kappa}} n \id \bP_n \\
		\geq{}& \sum\limits_{n \geq 0} \frac{1}{n!} \int_{C_L^n} \frac{c}{\kappa} \abs{\mathcal{A}}^{1-\gamma} n^{\gamma} \id \bP_n - \myp[\Big]{\kappa + \frac{c}{\kappa}} \mathcal{N}\myp{\bP} \\
		\geq{}& \frac{c}{\kappa} \abs{\mathcal{A}}^{1-\gamma} \mathcal{N}\myp{\bP}^{\gamma} - \myp[\Big]{\kappa + \frac{c}{\kappa}} \mathcal{N}\myp{\bP}
		={} \frac{c r^{d \myp{\gamma-1}}}{\kappa} L^d \rho^{\gamma} - \myp[\Big]{\kappa + \frac{c}{\kappa}} L^d \rho.
	\end{align*}
	Combining with \eqref{eq:tdlim_entropy} and choosing for instance $ r \geq \frac{r_0}{2 \sqrt{d}} $, we finally conclude for $ \alpha \neq d $ that
	\begin{equation*}
		f_T \myp{\rho,L} \geq \frac{c r_0^{d \myp{\gamma-1}}}{\kappa} \rho^{\gamma} - \myp[\big]{\kappa + \frac{c}{\kappa} +T}\rho + T \rho \log \rho.
	\end{equation*}

	In the case $ \alpha = d $, we get in the same way,
	\begin{align*}
		\mathcal{U} \myp{\bP}
		\geq{}& \sum\limits_{n \geq 0} \frac{1}{n!} \int_{C_L^n} \sum\limits_{k \in \mathcal{A}} \frac{c}{\kappa} n_k^2 \log n_k - \kappa n \id \bP_n \\
		\geq{}& \frac{c \abs{\mathcal{A}}}{\kappa} \myp[\bigg]{\frac{\mathcal{N}\myp{\bP}}{\abs{\mathcal{A}}}}^2 \log \myp[\bigg]{\frac{\mathcal{N}\myp{\bP}}{\abs{\mathcal{A}}}}_+ - \kappa \mathcal{N} \myp{\bP}
		={} \frac{c r^d L^d}{\kappa} \rho^2 \myp[\big]{\log r^d \rho}_+ - \kappa L^d \rho.
	\end{align*}
	Again choosing $ r \geq \frac{r_0}{2 \sqrt{d}} $, we conclude that \eqref{eq:fcanlowbound} holds.
\end{proof}

\begin{proof}[Proof of \cref{thm:mubounds}]
	Since $ f_T \myp{\rho,L} $ is a convex function, we have for all $ \rho, \widetilde{\rho} \geq 0 $,
	\begin{equation*}
		f_T \myp{\widetilde{\rho},L}
		\geq f_T \myp{\rho,L} + f_T' \myp{\rho,L} \myp{\widetilde{\rho}- \rho}.
	\end{equation*}
	Taking $ \widetilde{\rho} = 0 $ and using $ f_T \myp{0,L}= 0 $ immediately gives $ \mu_L \myp{\rho}=f_T' \myp{\rho,L} \geq \frac{f_T \myp{\rho,L}}{\rho} $, so combining with the lower bounds on $ f_T $ from \cref{thm:tdlimbounds} gives the stated lower bounds \eqref{eq:mulowbound} on $ \mu_L \myp{\rho} $.

	For the upper bounds, we simply take $ \widetilde{\rho} = 2 \rho $ and obtain
	\begin{equation*}
		\mu_L \myp{\rho}=f_T' \myp{\rho,L} \leq \frac{f_T \myp{2\rho,L}-f_T \myp{\rho,L}}{\rho}.
	\end{equation*}
	Using both the upper and lower bounds on $ f_T $ in \cref{thm:tdlimbounds} yields the upper bounds \eqref{eq:muupbounds}. At $T=0$ the argument is the same, using any derivative in the interval $\myb[\big]{ \frac{\partial}{\partial \rho_-} f_0(\rho,L), \frac{\partial}{\partial \rho_+} f_0(\rho,L) }$.
\end{proof}

\begin{proof}[Proof of \cref{thm:gcbounds}]
	Dividing $ C_L $ into smaller cubes of side length $ r \leq \sqrt{d}^{-1} r_0 $, we follow the approach of the proof of the lower bound in \cref{thm:tdlimbounds} and obtain as in \eqref{eq:low_interact} for $ \alpha \neq d $,
	\begin{align}
		\sum\limits_{i < j}^n w \myp{x_i - x_j} - \mu n
		\geq{}& \sum\limits_{k \in \mathcal{A}} \frac{c}{\kappa} n_k^{\gamma} - \myp[\big]{\kappa + \frac{c}{\kappa}+ \mu} n_k \nn\\
		\geq{}& - L^d K \myp[\big]{\kappa + \frac{c}{\kappa} + \mu}_+^{\frac{\gamma}{\gamma-1}},
	\label{eq:bound_using_EN}
	\end{align}
	where the constant $ K $ depends on $ \kappa $, $ \alpha $, $ d $, and $ r_0 $.
	We use the bound~\eqref{eq:bound_using_EN} when $ C := \kappa + \frac{c}{\kappa} \geq - \mu $.
	If $ C + \mu \leq 0 $, then we just remove the core of the interaction, so all in all we obtain
	\begin{equation*}
		\sum\limits_{i < j}^n w \myp{x_i - x_j} - \mu n
		\geq - L^d K \myp{C + \mu}_+^{\frac{\gamma}{\gamma-1}} + \myp{C+\mu}_- n.
	\end{equation*}
	Using this, it follows for any grand-canonical probability $ \bP = \myp{\bP_n} $ supported in $ C_L $, that
	\begin{align}
		\cG_T \myp{\bP}-\mu\cN(\bP)
		\geq{}& - L^d K \myp{C+ \mu}_+^{\frac{\gamma}{\gamma-1}} + \myp{C+ \mu}_- \mathcal{N} \myp{\bP} - T \mathcal{S} \myp{\bP} \nn \\
		\geq{}& - L^d K \myp{C+ \mu}_+^{\frac{\gamma}{\gamma-1}} - L^d T e^{- \frac{1}{T} \myp{C+\mu}_-},
	\label{eq:fgclowbound1}
	\end{align}
	where we used that the energy of a non-interacting system in a bounded region $ \Omega \subseteq \R^d $ is minimized by
	\begin{equation*}
		- \widetilde{\mu} \mathcal{N} \myp{\bP} - T \mathcal{S} \myp{\bP}
		\geq - T \log \myp[\bigg]{ \sum\limits_{n \geq 0} \frac{1}{n!} \int_{\Omega^n} e^{ \frac{\widetilde{\mu} n}{T}} \id \bP_n}
		= - T e^{\frac{\widetilde{\mu}}{T}} \abs{\Omega}.
	\end{equation*}
	The bound \eqref{eq:fgclowbound} then follows directly from \eqref{eq:fgclowbound1} by minimizing over $ \bP $.

	In the case $ \alpha = d $, the bound in \eqref{eq:bound_using_EN} is replaced by
	\begin{align*}
		\sum\limits_{i < j}^n w \myp{x_i - x_j} - \mu n
		\geq{}& \sum\limits_{k \in \mathcal{A}} \frac{c}{\kappa} n_k^2 \log n_k - \myp[\big]{\kappa + \mu} n_k.
	\end{align*}
    For $\lambda>0$, the function $x\in\R_+\mapsto x^2\log x-\lambda x$ is first decreasing and then increasing. It attains its minimum at $\bar x=e^{W-1/2}=\lambda/(2W)$ where $W=W_0(\lambda\sqrt{e}/2)\geq0$, with $W_0$ the principal branch of the Lambert $ W $-function, solving $ W_0 \myp{y} e^{W_0 \myp{y}} = y $ for $ y \geq 0 $. The minimum equals
    $$\bar x^2\log \bar x-\lambda \bar x=-\frac{\lambda^2}{8W^2}(2W+1).$$
    We have $\bar x>1$ if and only if $W>1/2$ and in this case we obtain
    $$\bar x^2\log \bar x-\lambda \bar x\geq -\frac{\lambda^2}{2W}.$$
    If $\bar x\leq 1$ then since the function is increasing on the right of $\bar x$, we conclude that $x^2\log x-\lambda  x\geq -\lambda$ for all $x\geq1$. Thus for integers we can conclude that
    $$\min_{n\in\N} \big\{n^2\log n-\lambda n\big\} \geq -\lambda -\frac{\lambda^2}{2W_0(\lambda\sqrt e/2)},\qquad \forall \lambda>0.$$
    Using that $W_0(y)\sim_0 y$ and $W_0(y)\sim_{+\ii}\log (y)$, we obtain
    $$\min_{n\in\N} \big\{n^2\log n-\lambda n\big\} \geq -C\lambda\left(1+\frac{\lambda}{\log(2+\lambda)}\right),\qquad \forall \lambda>0,$$
    for some universal constant $C$.
	Inserting this bound in our problem and arguing as in the $ \alpha \neq d $ case, we end up with a bound of the form
	\begin{equation*}
		g_T \myp{\mu,L}
		\geq{} - K \myp{\kappa + \mu}_+\left(1+\frac{\myp{\kappa + \mu}_+}{\log(2+\myp{\kappa + \mu}_+)}\right)- T e^{- \frac{1}{T} \myp{\kappa+\mu}_-}.
	\end{equation*}
	Changing $\kappa+\mu$ into $C+\mu$ for some $C>\kappa$, we obtain the simpler bound in the statement.

	Since $ g_T $ is concave in $ \mu $, it is also the Legendre transform of $ f_T $, that is, for $ \alpha \neq d $,
	\begin{equation*}
		g_T \myp{\mu,L}
		= \inf_{\rho \geq 0} \Set{f_T \myp{\rho,L} - \mu \rho}
		\leq{} C \rho^{\gamma} + C \myp{1+T} \rho + T \rho \log \rho - \mu \rho,
	\end{equation*}
	for any $ \rho \geq 0 $, where the inequality follows from \eqref{eq:fcanupbound}.
	When $ \alpha = d $, the same bound holds with $ \rho^{\gamma} $ replaced by $ \rho^2 \myp{\log \rho}_+ $.
	When $ \mu \leq \widetilde{C} := C \myp{1+T} + T + C $, we choose $ \rho_{\mu} = e^{-\frac{1}{T} \myp{\mu - \widetilde{C}}_- } \leq 1 $ to obtain
	\begin{equation*}
		g_T \myp{\mu,L}
		\leq C \rho_{\mu}^{\gamma} - \myp{C+T}\rho_{\mu}
		\leq -T \rho_{\mu}
		= - T e^{-\frac{1}{T} \myp{\mu - \widetilde{C}}_- }.
	\end{equation*}
	The same bound holds when $ \alpha = d $ because $ \rho_{\mu} \leq 1 $ implies $ \rho_{\mu}^2 \myp{ \log \rho_{\mu}}_+ = 0 $.
	In the case when $ \mu \geq \widetilde{C} $ and $ \alpha \neq d $, we can for instance take $ \rho_{\mu} $ to satisfy
	\begin{equation*}
		\rho_{\mu}^{\gamma-1} = \frac{1}{\gamma C + T} \myp{\mu - \widetilde{C}} + 1,
	\end{equation*}
	or equivalently, $ \mu = \myp{\gamma C + T} \rho_{\mu}^{\gamma-1} + C \myp{1+T} - \myp{\gamma -1}C $.
	Using that $ \rho_{\mu} \geq 1 $, this leads to the bound
	\begin{align*}
		g_T \myp{\mu,L}
		\leq{}& - \myp{\myp{\gamma-1} C + T} \rho_{\mu}^{\gamma} + \myp{\gamma - 1}C \rho_{\mu} + T \rho_{\mu} \log \rho_{\mu} \\
		\leq{}& - \myp{\gamma-1}C \rho_{\mu} \myp{\rho_{\mu}^{\gamma-1} - 1} - T \rho_{\mu} \\
		={}& - \frac{\myp{\gamma-1}C}{\gamma C + T} \rho_{\mu} \myp{\mu-\widetilde{C}} - T \rho_{\mu}
		\leq{} - \frac{\myp{\gamma-1}C}{\gamma C + T} \myp{\mu - \widetilde{C}}^{\frac{\gamma}{\gamma-1}} - T.
	\end{align*}
	The argument is similar when $\alpha=d$, using the Lambert $W$-function.
	For instance, taking $\rho_{\mu} = \exp\myp[\big]{W_0 \myp{\frac{\sqrt{e}}{2 \myp{C+T}} \myp{\mu - C \myp{1+T} - T} }} \geq 1 $ immediately gives
	\begin{align*}
		g_T \myp{\mu,L}
		\leq{}& \myp{C+T} \rho_{\mu}^2 \myp{\log \rho_{\mu}}_+ - \myp{\mu - C \myp{1+T} - T} \rho_{\mu} - T \\
		={}& - \frac{2 \sqrt{e} - e}{4 \myp{C+T}} \frac{\myp{\mu - C \myp{1+T} - T}^2}{W_0 \myp{\frac{\sqrt{e}}{2 \myp{C+T}} \myp{\mu - C \myp{1+T} - T}}} - T,
	\end{align*}
	which can then easily be put in the same form as in the statement.
\end{proof}

\appendix
\section{Ruelle local bounds for superstable interactions}
\label{app:Ruelle}

 \subsection{Ruelle bounds for homogeneous systems}

Ruelle bounds~\cite{Ruelle-70} are local estimates on Gibbs point processes, which are uniform with respect to the size of the system. They have played a very important role in statistical mechanics and Ruelle's original article~\cite{Ruelle-70} is considered as a historical breakthrough. Unfortunately, the dependence in terms of the temperature $T$ and the chemical potential $\mu$ (resp. density $\rho$) was not made explicit in Ruelle's paper. It is the goal of this appendix to state such quantitative bounds. The proof is essentially that of Ruelle in~\cite{Ruelle-70} with some small variations and we provide it for completeness.

\begin{theorem}[Quantitative Ruelle bounds]\label{thm:Ruelle}
Assume that $w=w_1+w_2$ where

\smallskip

\noindent $\bullet$ $w_1\geq0$ with $w_1(x)\geq A>0$ for $|x|\leq\delta$;

\smallskip

\noindent $\bullet$ $w_2$ is stable,
$$\sum_{1\leq j<k\leq n}w_2(x_j-x_k)\geq -\kappa n,\qquad \forall x_1,...,x_n\subset\R^d,\ \forall n\geq2$$
and lower regular
\begin{equation}
 w_2(x)\geq -\phi(|x|),
 \label{eq:assumption_w_2}
\end{equation}
where $x\mapsto \phi(|x|)$ is a non-negative radial-decreasing function in $L^1\cap L^\ii(\R^d)$.

For $x=\{x_1,...,x_n\}\subset\R^d$, let
$$H(x):=\sum_{1\leq j<k\leq n}w(x_j-x_k),\qquad h(x):=\sum_{1\leq j<k\leq n}w_1(x_j-x_k)$$
be the total and repulsive part of the interaction. Denote by $h_Q(x)=h(x\cap Q)$ its restriction to any set $Q\subset\R^d$.
There exists a function $\psi:\N\mapsto \R_+$ tending to $+\ii$ and a constant $\zeta$ (both depending only on $w$) such that
\begin{equation}
\expec[\big]{e^{\frac\beta2 h_{Q}}}_{\beta,\mu,\Omega} 
\leq \exp\myp[\big]{L^de^{\beta(\mu+\zeta)}} \myp[\bigg]{ e^{\zeta\beta\psi(L)L^d}+\sum_{\substack{\ell\in\N\\ \ell> L}}e^{-\beta\big(\frac{\psi(\ell)}{\zeta}-\zeta e^{\beta\mu}\big)\ell^d} }
\label{eq:main_estim_Ruelle_thm}
\end{equation}
for any $\beta>0$, any $\mu\in\R$, any bounded domain $\Omega$ and any cube $Q$ of side length $L\geq \zeta$. Here $\pscal{\nonarg}_{\beta,\mu,\Omega}$ denotes the expectation against the grand-canonical Gibbs point process with energy $H$, chemical potential $\mu$ and inverse temperature $\beta$, in a domain $\Omega\subset\R^d$.
\end{theorem}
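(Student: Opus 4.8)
The plan is to follow Ruelle's superstability method from~\cite{Ruelle-70}, keeping track of all constants so as to reach the explicit bound~\eqref{eq:main_estim_Ruelle_thm}. Throughout, write $z=e^{\beta\mu}$ for the activity. As a preliminary observation, since $w_1\geq 0$ everywhere and $w_2$ is stable we have $H(x)\geq h(x)-\kappa\abs{x}\geq h_Q(x)-\kappa\abs{x}$, whence the integrand obeys $e^{\frac\beta2 h_Q(x)-\beta H(x)}\leq e^{-\frac\beta2 h_Q(x)+\beta\kappa\abs{x}}\leq e^{\beta\kappa\abs{x}}$; so $\pscal{e^{\frac\beta2 h_Q}}_{\beta,\mu,\Omega}$ is finite for every bounded $\Omega$, and the whole difficulty is to bound it \emph{uniformly} in $\Omega$.

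\textbf{Superstability bounds.} Fix once and for all a side length $\ell_0\leq\delta/\sqrt d$ and tile $\R^d$ into reference cubes $\Delta_i$ of side $\ell_0$; for a configuration $x$ let $n_i=n_i(x)$ be its number of points in $\Delta_i$. Any two points lying in the same $\Delta_i$ are at distance $\leq\delta$, where $w_1\geq A$, and $w_1\geq 0$ for all other pairs, which gives $h_Q(x)\geq A\sum_{\Delta_i\subseteq Q}n_i(n_i-1)\geq A\sum_{\Delta_i\subseteq Q}n_i^2-(A+\kappa)n_Q$ and, after adding $\sum w_2\geq-\kappa\abs{x}$, also $H(x)\geq A\sum_i n_i^2-(A+\kappa)\abs{x}$. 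Combined with the preliminary observation, $e^{\frac\beta2 h_Q(x)-\beta H(x)}\leq\exp\!\bigl(-\tfrac{\beta A}2\sum_{\Delta_i\subseteq Q}n_i^2+\beta\zeta_1\abs{x}\bigr)$ for a constant $\zeta_1$ depending only on $w$; so the task reduces to controlling the weight carried by configurations for which $\sum_{\Delta_i\subseteq Q}n_i^2$ is large, in a way that survives the normalisation by $Z_{\beta,\mu,\Omega}$.

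\textbf{Removal estimate and iteration over scales.} Following Ruelle, one conditions the Gibbs measure on the configuration $x''$ outside a cube $\widetilde Q\supseteq Q$; the conditional law of the particles $x'$ in $\widetilde Q$ is again Gibbsian with external field created by $x''$, so its normalisation is a bounded-domain partition function and the dependence on $\Omega$ drops out. One then compares $x'\cup x''$ with $x''$ alone (delete the particles in $\widetilde Q$) and bounds the energy drop $H(x')+W(x',x'')$ from below: $H(x')$ by the superstability bound, and the $\widetilde Q$--exterior interaction by lower regularity $w_2\geq-\phi$ together with $w_1\geq 0$, giving $W(x',x'')\geq-\sum_{i,j}\bar\phi_{ij}n_i(x')n_j(x'')$ with $\bar\phi_{ij}=\sup_{\Delta_i\times\Delta_j}\phi$. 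Because $\phi$ is radial decreasing and in $L^1\cap L^\infty$, the sums $\sum_j\bar\phi_{ij}$ are bounded uniformly in $i$, so splitting $n_in_j\leq\frac t2 n_i^2+\frac1{2t}n_j^2$ with $t$ small turns the cross-energy into $-\eps\sum_{\Delta_i\subseteq\widetilde Q}n_i(x')^2$ (with $\eps<A$, absorbed into the repulsive gain) plus an exterior remainder $-C_\eps\sum_j n_j(x'')^2$ concentrated near $\partial\widetilde Q$. This remainder is itself a sum of occupation squares, so one re-absorbs it by repeating the removal for cubes of side $L,L+1,L+2,\dots$: at scale $\ell$ the leftover net repulsive gain per unit volume is recorded as $\psi(\ell)$, which may be taken to tend to $+\infty$ since large occupation of a cube of side $\ell$ forces a density growing with $\ell$, where the $n^2$ contributions dominate. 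Re-inserting the deleted particles costs, at each scale, a factor bounded by the empty-cube grand-canonical weight $e^{z\ell^d}$; estimating the conditional partition function from below by its empty-$\widetilde Q$ contribution and summing over scales then produces the prefactor $\exp(L^d e^{\beta(\mu+\zeta)})$ together with the series $\sum_{\ell>L}e^{-\beta(\psi(\ell)/\zeta-\zeta z)\ell^d}$, while the base scale, estimated without iteration, contributes $e^{\zeta\beta\psi(L)L^d}$. Choosing $\ell_0$, $t$ and $\psi$ appropriately yields~\eqref{eq:main_estim_Ruelle_thm}.

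\textbf{Main obstacle.} The delicate part is the scale iteration: the sizes of the successive shells, the splitting parameter $t$, and the growth of $\psi$ must be calibrated together so that at \emph{every} scale the repulsive gain strictly beats both the attractive loss and the activity cost of reinsertion, and so that the accumulated boundary errors telescope into the convergent geometric-type series of~\eqref{eq:main_estim_Ruelle_thm} rather than piling up. It is exactly this simultaneous control of all scales, carried out uniformly in the size $\abs{\Omega}$ of the system, that is the heart of Ruelle's argument and that the appendix reproduces with explicit constants.
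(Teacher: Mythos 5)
Your skeleton (superstability over small cubes, splitting the cross-interaction via $n_in_j\leq\frac t2n_i^2+\frac1{2t}n_j^2$, the role of $\phi\in L^1$ and of a slowly diverging $\psi$) is the right one, but the step that actually closes the argument is missing. The paper does not condition on the exterior, delete particles, or bound a conditional partition function from below; instead it partitions the exterior configurations $Y\subset\R^d\setminus Q$ according to the \emph{first scale} $q=q_Y$ such that $\sum_{C_i\subset Q_j}n_i(Y)^2\leq V_j\psi_j$ for all $j>q$, where the cubes $Q_j$ grow \emph{geometrically}, $\ell_{j+1}/\ell_j\approx 1+2\alpha$. On the event $\{q_Y=q\}$ one has both a lower bound ($\sum_{C_i\subset Q_q}n_i(y)^2> V_q\psi_q$, so the clustered particles $y=Y\cap Q_q$ carry self-energy $\gtrsim A\psi_qV_q$) and upper bounds at all larger scales, and the latter are exactly what is needed to control the interaction of $Q_q$ with the remainder $z$ shell by shell: the first shell is absorbed using $V_{q+1}\psi_{q+1}-V_q\psi_q\leq((1+3\alpha)^{d+1}-1)V_q\psi_q$ (this is where geometric growth and the condition $\psi(\ell+1)\leq(1+\ell^{-1})\psi(\ell)$ enter; unit increments $L,L+1,\dots$ as you propose do not give this), and the far shells by an Abel summation against $\phi$ using $\sum_\ell\psi(\ell)\phi(\ell)\ell^{d-1}<\infty$. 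The outcome is a \emph{pointwise} inequality $H(x\cup y\cup z)\geq\frac{h(x)}2+\frac{A}{16}\psi_qV_q+\frac{A}{16V_q}(\#y)^2+H(z)-C\#x$; feeding it into the grand-canonical integral, the $z$-integral reproduces at most $Z_{\beta,\mu,\Omega}$ (monotonicity of the partition function in the domain), which cancels the normalisation, the $y$-integral gives the ideal-gas cost $e^{zV_q}$ and the gain $e^{-\beta A\psi_qV_q/16}$ producing the series over $\ell>L$, and the $x$-integral gives the prefactor $\exp(L^de^{\beta(\mu+\zeta)})$. Without the event decomposition by $q_Y$ there is no way to make your ``re-absorption'' uniform over exterior configurations, since $\sum_jn_j(x'')^2$ is unbounded.

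A second, related problem: your justification that $\psi$ ``may be taken to tend to $+\infty$ since large occupation of a cube of side $\ell$ forces a density growing with $\ell$'' is circular. In the actual proof $\psi$ is not derived from the configuration; it is \emph{chosen in advance}, constrained only by the slow-growth condition and by summability against $\phi$, and the divergence $\psi\to\infty$ is what guarantees both that $q_Y<\infty$ for every finite configuration and that the series $\sum_{\ell>L}e^{-\beta(\psi(\ell)/\zeta-\zeta e^{\beta\mu})\ell^d}$ converges. You should also fix the factor $\tfrac12$ in $h_Q\geq\frac A2\sum_in_i(n_i-1)$, but that is cosmetic compared with the missing decomposition.
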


The estimate~\eqref{eq:main_estim_Ruelle_thm} is not at all expected to be optimal and it is only displayed for concreteness. In the proof we explain in detail how to choose $\psi$ and the large constant $\zeta$ (the latter is the maximum of several explicit  constants). The precise final estimate we obtain is in~\eqref{eq:estim_exp} below. Note that~\eqref{eq:main_estim_Ruelle_thm} behaves rather badly when the temperature tends to 0, that is, $\beta\to\ii$. In fact, the proof also works at $T=0$ and the corresponding bound is provided in Corollary~\ref{cor:T0} below.

The function $\psi$ depends on the decay of $\phi$ at infinity and, in principle, it can diverge very slowly. If
$\phi(x)=B(1+|x|)^{-s}$ for some $s>d$, we can just take $\psi(\ell)= \ell^\eps$ for any $0<\eps<\min(1,s-d)$ or $\psi(\ell)=\log\ell$. The series on the right of~\eqref{eq:main_estim_Ruelle_thm} is convergent because $\psi(\ell)\to+\ii$, hence the power in the exponential is strictly negative for $\ell$ large enough.\footnote{In our proof we do not get the full sum over $\ell> L$ on the right side of~\eqref{eq:main_estim_Ruelle_thm}, but only over a subsequence $\{\ell_j\}$ of integers tending exponentially fast to infinity. The details are provided below.} Note, however, that the value of the sum can be quite large when the activity $z=e^{\beta\mu}$ is large. We do not expect this to be optimal at all. Better estimates exist at low activity $z=e^{\beta\mu}\ll1$, using expansion methods~\cite{Ruelle}.

Since $\psi$ diverges, we may choose $L$ large enough (depending on $\beta$ and $\mu$) so that the term in the parenthesis in~\eqref{eq:main_estim_Ruelle_thm} is $\leq e^{2\zeta\beta\psi(L)L^d}$. We then end up with the simpler estimate
\begin{equation}
\expec[\big]{e^{\frac\beta2 h_{Q}}}_{\beta,\mu,\Omega} 
\leq \exp\myp[\big]{ L^d(e^{\beta(\mu+\zeta)}+2\zeta\beta\psi(L)) }.
\label{eq:main_estim_Ruelle_thm_simplified}
\end{equation}
As we will see in~\eqref{eq:estim_h_cube} below, there exists a constant $c>0$ (depending on $\delta$) such that $h_{Q}\geq (A/2)(cL^{-d}n_Q^2-n_Q)$ where $n_Q$ is the number of points in $Q$ and we obtain
\begin{equation}
\expec[\big]{n_Q^{2k}}_{\beta,\mu,\Omega}\leq C_kL^{2dk}(1+\psi(L))
 \label{eq:estim_moments}
\end{equation}
on the local moments of the Gibbs state, with $C_k$ depending on $w,\beta,\mu$. Due to the divergence of $\psi(L)$, this bound is definitely not optimal for large values of $L$. But one should not think of taking $L$ too large here. Once we obtain~\eqref{eq:estim_moments} for one cube of side length $L_0$, it immediately follows for a bigger cube by covering it with the smaller cubes. This way we obtain the expected estimate involving the volume of the large cube. The estimate~\eqref{eq:estim_moments} implies that the correlation functions are bounded in $L^1_{\rm unif}(\R^d)$, independently of the size of $\Omega$. In~\cite{Ruelle-70}, Ruelle proves $L^\ii(\R^d)$ bounds on the correlation functions, which are discussed later in Remark~\ref{rmk:correlation_bounds}. Averaged bounds of the kind of~\eqref{eq:main_estim_Ruelle_thm} and~\eqref{eq:estim_moments} are usually enough in applications.

In the case that $\phi$ decays polynomially at infinity, we can make the bound more explicit after choosing $\psi(\ell)=\ell^\eps$.

\begin{corollary}[Ruelle bounds for polynomially decreasing potentials]\label{cor:Ruelle_polynomial}
Assume that $w$ satisfies the assumptions of Theorem~\ref{thm:Ruelle} with $\phi(|x|)=B(1+|x|)^{-s}$ for some $s>d$. Let $0<\eps<\min(1,s-d)$. Then we have
\begin{equation}
\expec[\big]{e^{\frac\beta2 h_{Q}}}_{\beta,\mu,\Omega}
\leq e^{\zeta\beta\cR},
\qquad \pscal{n_{Q}^2}_{\beta,\mu,\Omega}
\leq L^d\,\min\big\{ze^{\zeta\beta(2\cR+1)}\,;\,\zeta\cR\big\}
\label{eq:main_estim_Ruelle_polynomial}
\end{equation}
for any cube $Q$ of side length $L\geq \zeta$, where $z=e^{\beta\mu}$ and
$$\cR:=\frac{zL^de^{\beta\zeta}}{\beta\zeta}+L^{d+\eps}+ \myp[\Big]{ \frac{z}\beta }^{1+\frac{d}\eps}+\frac{(\log\beta)_-+1}\beta.$$
\end{corollary}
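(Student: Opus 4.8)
The plan is to derive \cref{cor:Ruelle_polynomial} directly from \cref{thm:Ruelle} by choosing $\psi(\ell)=\ell^{\eps}$. This choice is admissible: $\phi(|x|)=B(1+|x|)^{-s}$ is a non-negative radial-decreasing function in $L^1\cap L^{\ii}(\R^d)$, and the condition $0<\eps<\min(1,s-d)$ ensures the series on the right-hand side of \eqref{eq:main_estim_Ruelle_thm} converges, as already noted after \cref{thm:Ruelle}. Writing $z=e^{\beta\mu}$, \eqref{eq:main_estim_Ruelle_thm} then reads
\begin{equation*}
\expec[\big]{e^{\beta h_Q/2}}_{\beta,\mu,\Omega}
\leq e^{ze^{\beta\zeta}L^{d}}\left(e^{\zeta\beta L^{d+\eps}}+\sum_{\substack{\ell\in\N\\ \ell>L}}e^{-\beta(\ell^{\eps}/\zeta-\zeta z)\ell^{d}}\right),
\end{equation*}
and the entire task is to bound the right-hand side by $e^{\zeta\beta\cR}$, after enlarging the constant $\zeta$ (which, as in \cref{thm:Ruelle}, may be taken as large as needed).

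The organising object is the scale $\ell_{\ast}:=\lceil(2\zeta^{2}z)^{1/\eps}\rceil$ at which the exponent turns negative, namely $\ell^{\eps}/\zeta-\zeta z\geq\ell^{\eps}/(2\zeta)$ for $\ell\geq\ell_{\ast}$. If $L\geq\ell_{\ast}$, the series is dominated by $\sum_{\ell\geq1}e^{-\beta\ell^{d+\eps}/(2\zeta)}$, which an integral comparison bounds by $1+c\,(\zeta/\beta)^{1/(d+\eps)}\leq e^{c'(1+(\log\beta)_{-})}$; this, together with the term $e^{\zeta\beta L^{d+\eps}}$ and the prefactor $e^{ze^{\beta\zeta}L^{d}}$, accounts for the three terms $\tfrac{zL^{d}e^{\beta\zeta}}{\beta\zeta}$, $L^{d+\eps}$ and $\tfrac{(\log\beta)_{-}+1}{\beta}$ of $\cR$ (equivalently, one may quote the already-simplified bound \eqref{eq:main_estim_Ruelle_thm_simplified}, which is valid precisely in this regime $L\gtrsim\ell_{\ast}$). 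If instead $L<\ell_{\ast}$ one cannot force the exponents negative on $Q$ itself; the clean remedy is to enlarge $Q$ to a cube $\widetilde{Q}\supseteq Q$ of side length $\ell_{\ast}$. Since $h_{Q}\leq h_{\widetilde{Q}}$ — enlarging the restriction set only adds non-negative pair terms of $w_{1}$ — we get $\expec[\big]{e^{\beta h_Q/2}}_{\beta,\mu,\Omega}\leq\expec[\big]{e^{\beta h_{\widetilde Q}/2}}_{\beta,\mu,\Omega}$, and the previous step applies to $\widetilde Q$. Because $\ell_{\ast}^{\,d+\eps}$ is of order $z^{1+d/\eps}$, this is exactly where the fourth term of $\cR$, of the form $(z/\beta)^{1+d/\eps}$, is generated. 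Passing from the parenthesis to the exponent uses $\log(a+b)\leq\log2+\log a+\log b$ for $a,b\geq1$, and absorbing the resulting powers of $\zeta$, $\beta$, $z$, $L$ into one enlarged $\zeta$ gives
\begin{equation*}
\expec[\big]{e^{\beta h_Q/2}}_{\beta,\mu,\Omega}\leq e^{\zeta\beta\cR}.
\end{equation*}

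From this exponential bound the two moment estimates follow by soft arguments. For the branch $\pscal{n_{Q}^{2}}\leq L^{d}\,\zeta\cR$ we use the superstability inequality $h_{Q}\geq\tfrac{A}{2}(cL^{-d}n_{Q}^{2}-n_{Q})$ from \eqref{eq:estim_h_cube}: an AM--GM rearrangement turns it into the pointwise bound $n_{Q}^{2}\leq\frac{4L^{d}}{cA}h_{Q}+\frac{L^{2d}}{c^{2}}$, and taking expectations together with Jensen's inequality $\pscal{h_{Q}}\leq\frac{2}{\beta}\log\expec[\big]{e^{\beta h_Q/2}}\leq2\zeta\cR$ and $L^{d}\leq L^{d+\eps}\leq\cR$ yields $\pscal{n_{Q}^{2}}_{\beta,\mu,\Omega}\leq L^{d}\,\zeta'\cR$. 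For the low-activity branch $\pscal{n_{Q}^{2}}\leq L^{d}z\,e^{\zeta\beta(2\cR+1)}$, we combine the identity $\rho^{(1)}(x)=z\,\pscal{e^{-\beta\sum_{j}w(x-x_{j})}}_{\beta,\mu,\Omega}$ with the lower bound $w\geq-\phi$ and with the exponential estimate applied to $\pscal{e^{\beta\sum_{j}\phi(x-x_{j})}}$ (split $\phi$ along a tiling of $\R^{d}$ and use its decay) to get $\sup_{x}\rho^{(1)}(x)\leq z\,e^{\zeta\beta(\cR+1/2)}$; feeding this and an analogous bound for $\pscal{h_{Q}}$ (via $t e^{-\beta t}\leq(e\beta)^{-1}$ applied to $w_{1}$, together with a bound on $\rho^{(2)}$) into $\pscal{n_{Q}^{2}}\leq\frac{4L^{d}}{cA}\pscal{h_{Q}}+\frac{2L^{d}}{c}\pscal{n_{Q}}$, and absorbing the extra powers of $L$ and $z$ into the exponential (using $\cR\geq L^{d+\eps}$ and $\cR\geq\frac{zL^{d}e^{\beta\zeta}}{\beta\zeta}$), gives the claim. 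Taking the minimum of the two bounds finishes the proof.

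The step I expect to be the real obstacle is the large-activity regime $z=e^{\beta\mu}\gg1$, i.e. the case $L<\ell_{\ast}$: there the series in \eqref{eq:main_estim_Ruelle_thm} has positive exponents, and keeping exact track of how $\beta$ enters its contribution — whether by the cube-enlargement trick above or, equivalently, by isolating the maximal term near the scale $\ell$ of order $(\zeta^{2}z)^{1/\eps}$ and summing the rest — is delicate; this is precisely the point where one should work from the explicit estimate produced inside the proof of \cref{thm:Ruelle} (the one labelled \eqref{eq:estim_exp}) rather than from the packaged inequality \eqref{eq:main_estim_Ruelle_thm}. Every other step — admissibility of $\psi(\ell)=\ell^{\eps}$, the convergent tail of the series, and the passage to the two moment bounds — is routine.
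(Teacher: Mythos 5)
Your skeleton matches the paper's: take $\psi(\ell)=\ell^\eps$, control $\expec[\big]{e^{\beta h_Q/2}}$, then obtain one branch of the moment bound from Jensen's inequality plus the superstability estimate \eqref{eq:estim_h_cube}, and the other from the correlation-function bounds. The gap sits in the one step you yourself flag as the obstacle, and the concrete fix you propose does not deliver the stated inequality. First, your threshold $\ell_*=\lceil(2\zeta^2 z)^{1/\eps}\rceil$ is read off from the packaged form \eqref{eq:main_estim_Ruelle_thm} and misses a factor of $\beta$: in the raw estimate \eqref{eq:estim_exp} the summand is $e^{-2^d\ell^d(\beta\frac{A}{16}\ell^{\eps}-z)}$, and the relevant scale is the balance point of the exponent, $\ell_1\sim(z/\beta)^{1/\eps}$, which is exactly what generates the term $\beta(z/\beta)^{1+d/\eps}=z^{1+d/\eps}\beta^{-d/\eps}$ inside $\zeta\beta\cR$. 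With your $\ell_*$ the best achievable exponent is of order $\beta z^{1+d/\eps}$, which for $\beta>1$ is \emph{not} bounded by $\zeta\beta\cR$. Second, even with the corrected scale, the cube-enlargement trick is too lossy: applying the theorem to $\widetilde Q$ of side $\ell_*$ replaces $L$ by $\ell_*$ also in the prefactor $\exp\big(ze^{\beta\zeta}\widetilde L^{d}\big)$, producing a contribution of order $z\ell_*^{d}e^{\beta\zeta}\sim \beta e^{\beta\zeta}(z/\beta)^{1+d/\eps}$ in the exponent, which cannot be absorbed into $\zeta\beta\cR$ for large $\beta$ --- the only term of $\cR$ carrying a factor $e^{\beta\zeta}$ is $zL^de^{\beta\zeta}/(\beta\zeta)$, with $z$ to the first power and the \emph{original} $L^d$.

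The paper avoids both problems by never touching the cube: it keeps $Q$ fixed and estimates the series alone, via Lemma~\ref{lem:estim_sum}, which gives $\sum_{\ell>\ell_0}e^{-b\ell^{d+\eps}+a\ell^d}\leq\exp\big(Ba^{(d+\eps)/\eps}b^{-d/\eps}+\log(b/B)_-\big)$; with $a\sim z$ and $b\sim\beta$ (the values coming from \eqref{eq:estim_exp}, not from \eqref{eq:main_estim_Ruelle_thm}) this is precisely $\exp\big(C\beta(z/\beta)^{1+d/\eps}+\dots\big)$. The proof of that lemma is your ``isolate the maximal term'' alternative carried out correctly: split at $\ell_1$ with $\ell_1^{\eps}=2a/b$, bound the at most $\ell_1$ terms with possibly positive exponent by $\ell_1 e^{a\ell_1^d}$, and compare the tail to an integral. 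If you replace the cube enlargement by this series estimate, the remainder of your argument goes through. For the low-activity branch, note also that the paper's route is more direct than yours: it integrates the correlation bound \eqref{eq:estim_correlation_fn} to get $\pscal{n_Q}\leq ze^{\beta C}V_0e^{\zeta\beta\cR}$ and $\pscal{n_Q(n_Q-1)}\leq (ze^{\beta C}V_0)^2e^{\zeta\beta\cR}$, rather than passing through $\sup_x\rho^{(1)}(x)$ and a separate estimate of $\pscal{h_Q}$.
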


Applying the above bound for $L_0=\zeta$ and covering any larger cube by cubes of side length $L_0$, we conclude that
\begin{equation}
\boxed{\pscal{n_{Q}^2}_{\beta,\mu,\Omega}\leq C_{\beta,\eps}|Q|^2\,z\big(1+z^{\frac{d}\eps}\big)}
 \label{eq:ruelle_app}
\end{equation}
for a constant $C_{\beta,\eps}$ depending on $w$ and $\beta>0$ and for every large enough cube $Q$. The proof of Corollary~\ref{cor:Ruelle_polynomial} is provided in Section~\ref{sec:proof_cor_Ruelle}.

\subsection{Proof of Theorem~\ref{thm:Ruelle}}
We write the whole proof assuming for simplicity $\delta=\sqrt{d}$. The general case follows by scaling, that is, after applying the inequality to $w'(x):=w(\delta x/\sqrt{d})$ which has $\delta'=\sqrt{d}$, taking  $\Omega'=(\sqrt{d}/\delta)\Omega$ and $e^{\beta\mu'}=(\delta/\sqrt{d}) e^{\beta\mu}$. This changes the constant $\zeta$. The function $\psi$ also needs to be rescaled appropriately, since it depends on $\phi$.

\medskip

\noindent\textbf{Step 1: Definition of a splitting of space.}
First we split our space into the small cubes $C_i:=i+\left[0,1\right)^d$ with $i\in\Z^d$ and denote by
$n_i(x):=\#C_i\cap x$
the number of points in the small cube $C_i$. Those have diameter $\delta=\sqrt{d}$ and therefore we have $w_1(x-y)\geq A$ for $x,y$ in such a cube, hence
\begin{equation}
h_{C_i}\geq \frac{An_i(n_i-1)}{2}.
\label{eq:estim_h_cube}
\end{equation}
This estimate is going to play an important role in the following.

Since the bound in~\eqref{eq:main_estim_Ruelle_thm} is invariant under translations, we can assume that the cube $Q$ of interest is centered at the origin of space. We will also assume for simplicity that the side length of $Q$ is an even integer, so that $Q$ can be written as the union of the smaller cubes $C_i$.

The proof will require to estimate the interaction of some points $x_j\in Q$ with an arbitrary number of points $y_j\in\R^d\setminus Q$. For this we will need to understand how the $y_j$ are distributed in space and, in particular, what is the local number of points in any domain. Big clusters will typically generate a large interaction with the $x_j$.
To this end, we introduce a growing sequence of cubes $\{Q_j\}_{j\geq0}$ with $Q_0=Q$ (the cube in the statement of the theorem). The side length will increase exponentially fast, but not too fast. More precisely, we choose an increasing sequence of integers $\ell_0<\ell_1<\cdots$ which satisfies $\ell_j\to\ii$ as well as $\ell_{j+1}/\ell_j\approx 1+2\alpha$, where $\alpha>0$. We will later need to assume that $\ell_0$ is large enough and $\alpha$ is small enough, depending on $w$. In fact we will not need to have an exact limit and only require that
\begin{equation}
\boxed{1+\alpha\leq \frac{\ell_{j+1}}{\ell_j}\leq 1+3\alpha,\qquad \forall j\geq0.}
 \label{eq:bounds_ell_j}
\end{equation}
To be more concrete, we can for instance take $\ell_j=\lfloor\ell_0(1+2\alpha)^j\rfloor$
and choose $\ell_0\in\N$ large enough to have~\eqref{eq:bounds_ell_j}.
We then call
$$Q_j:=(-\ell_j,\ell_j)^d,\qquad V_j:=(2\ell_j)^d$$
the cube which is the union of $(2\ell_j)^d$ smaller cubes $C_i$ and has volume $V_j$.
We will focus our attention on the annulus-type regions $A_j:=Q_j\setminus Q_{j-1}$ which has the volume
$$|A_j|=(2\ell_j)^d-(2\ell_{j-1})^d=(2\ell_j)^d \myp[\bigg]{1-\frac{\ell_{j-1}^d}{\ell_{j}^d}},$$
hence
\begin{equation}
 \frac{\alpha d V_j}{2}\leq V_j\left(1-\frac{1}{(1+\alpha)^d}\right)\leq |A_j|\leq V_j\left(1-\frac{1}{(1+3\alpha)^d}\right)\leq 3\alpha d V_j
 \label{eq:volume_A_j}
\end{equation}
for $\alpha$ small enough (depending only on the dimension $d$).

Our goal will be to look at how many particles there are in each $A_j$ or $Q_j$, and whether this number is of the order of the volume or not. The fact that the interaction is integrable permits a slight deviation from a volume term, which will be described by a function $\psi$ depending on the radial function $\phi$ (the lower bound to the stable part $w_2$ of the interaction), and is the function in the statement (after an appropriate rescaling). We take $\ell\in\N\mapsto \psi(\ell)\in[1,\ii)$ any increasing function so that
\begin{equation}
\boxed{\frac{\psi(\ell+1)}{\psi(\ell)}\leq \frac{\ell+1}{\ell},\quad \text{for all $\ell\geq1$, and}\qquad \sum_{\ell\geq1}\psi(\ell)\phi(\ell)\ell^{d-1}<\ii,}
\label{eq:assump_psi}
\end{equation}
that is, $\psi$ must diverge to infinity sufficiently slowly. Recall that since $ x \mapsto \phi (\lvert x\rvert) $ is in $ L^1 (\R^d) $ and is radial decreasing, we have $\sum_{\ell\geq1}\phi(\ell)\ell^{d-1}<\ii$.
A typical example is when $\phi(|x|)=(1+|x|)^{-s}$ with $s>d$, in which case we just take $\psi(\ell)=\ell^\eps$ for $\eps<\min(1,s-d)$. Denote finally
$\psi_j:=\psi(\ell_j).$

\medskip

\noindent\textbf{Step 2: Pointwise lower bound on $H$.}
Next we explain the main Ruelle estimate, which is a pointwise bound on the total energy $H$. We consider an arbitrary  configuration of points in $\R^d$, which we split into the points $x_j$ in $Q_0$ and those outside of $Q_0$. The idea of Ruelle in~\cite{Ruelle-70} is to distinguish between `good' configurations where the points in $\R^d\setminus Q_0$ are well distributed in space, with a reasonable number of particles in any bounded region, from the `bad' configurations where too many of these points concentrate in a small volume, which results in a stronger interaction with the $x_j$'s. The method is then to merge the bad points with the $x_j$, use the superstability for the union to prove that their energy is very positive, and use this large energy to control the interaction with the remaining good points. The latter interaction is essentially of the order of the volume where the bad points live. Thus we need the energy of the bad points to be much larger than the volume they occupy, in order to control this interaction.

Let now $Y\subset \R^d\setminus Q_0$ be a finite set. The condition which characterizes good and bad configurations of $Y$ is whether $\sum_{i\in \Z^d\cap Q_j}n_i(Y)^2$ is of the order of $V_j\psi_j$ or not. In other words, we use the function $\psi_j$ to quantify large but still controllable deviations of the number of particles (squared locally) compared to the volume of the large cube $Q_j$. To be more precise, we call $q=q_Y\geq0$ the \emph{smallest integer} such that
\begin{equation}
 \sum_{C_i\subset Q_j}n_i(Y)^2\leq  V_j\psi_j\qquad \forall j> q_Y.
 \label{eq:def_q_y}
\end{equation}
Since we have
$$\sum_{C_i\subset Q_j}n_i(Y)^2\leq \myp[\bigg]{ \sum_{C_i\subset Q_j}n_i(Y) }^2\leq (\#Y)^2$$
the inequality~\eqref{eq:def_q_y} always holds for $j$ large enough, hence $q=q_Y<\ii$. When $q=0$, the inequality~\eqref{eq:def_q_y} holds for all $j\geq0$ (it is valid for $j=0$ since the left side vanishes in this case). When $q>0$, we have by definition
\begin{equation}
 \sum_{C_i\subset Q_{q}}n_i(Y)^2>  V_q\psi_q\qquad \text{if $q>0$.}
 \label{eq:q_inequality_critical}
\end{equation}
Now we are going to split our configuration space in the two regions $Q_{q}$ and $\R^d\setminus Q_{q}$. We write $Y=y\cup z$ where $y$ are the particles in $Q_{q}$ (if $q>0$) and $z$ those in $\R^d\setminus Q_q$. We write our full energy in the form
$$H(x\cup y\cup z)=H(x\cup y)+H(z)+W(x\cup y,z)$$
where the third term is the interaction:
$$W(x,y):=\sum_{i=1}^{\#x}\sum_{k=1}^{\#y}w(x_i-y_k).$$
Our goal is to find a lower bound on $H(x\cup y\cup z)$ which only involves $x$, $H(z)$ and $q$. To state the bound, we recall that $h(x)$ is the repulsive part of the energy.

\begin{proposition}[Pointwise lower bound on $H$]\label{prop:pointwise_H}
One can choose $\alpha$ small enough and $\ell_0$ large enough (depending only on $d$ and $w$) so that for all $x\subset Q_0$ and all $Y\subset \R^d\setminus Q_0$, we have the pointwise estimate
\begin{equation}
 H(x\cup y\cup z)\geq \frac{h(x)}2 +\frac{A}{16}\psi_q V_q+\frac{A}{16V_q}(\#y)^2+
H(z)-\left(\kappa+\frac{A}2\right)\#x
 \label{eq:pointwise_bd_H}
\end{equation}
whenever $q=q_Y>0$, with the decomposition $Y=y\cup z$, where $y=Y\cap Q_q$ and $z=Y\cap (\R^d\setminus Q_q)$. If $q=q_Y=0$, we have
\begin{equation}
H(x\cup Y)\geq \frac{h(x)}2 -\frac{4^{d-2}A}{\alpha}\psi_0V_0+H(Y)-\left(\kappa+\frac{A}2\right)\#x.
 \label{eq:pointwise_bd_H_q0}
\end{equation}
\end{proposition}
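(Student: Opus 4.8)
The plan is to follow Ruelle's strategy \cite{Ruelle-70}: split the configuration into the ``cluster'' sitting inside $Q_q$, namely $x\cup y$, and the ``good'' exterior part $z\subset\R^d\setminus Q_q$; bound the cluster's self-energy from below by superstability; keep $H(z)$ as it stands; and show that the cross-interaction $W(x\cup y,z)$ is negligible because $z$ has a locally controlled particle number in each annulus $A_j$, $j>q$. The starting point is the exact splitting $H(x\cup y\cup z)=H(x\cup y)+H(z)+W(x\cup y,z)$ with $W(A,B)=\sum_{a\in A,\,b\in B}w(a-b)$, from which $w_1\ge0$ and \eqref{eq:assumption_w_2} give $W(x\cup y,z)\ge-\sum_{a\in x\cup y}\sum_{b\in z}\phi(|a-b|)$. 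All free parameters — the annulus ratio $\alpha$ in \eqref{eq:bounds_ell_j}, the starting scale $\ell_0$, and an auxiliary Cauchy--Schwarz weight introduced below — are fixed once and for all, depending only on $d$ and $w$, never on $\beta,\mu,\Omega,Q$.

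For the case $q\ge1$ I would estimate the cluster energy as follows. Superstability of $w$ in the form $H(u)\ge\frac A2\sum_{C_i}n_i(u)^2-(\kappa+\frac A2)\#u$ (combining \eqref{eq:estim_h_cube} with the stability of $w_2$) together with $h(x\cup y)\ge h(x)$ gives $H(x\cup y)\ge\frac12 h(x)+\frac A4 S-(\kappa+\frac A4)\#(x\cup y)$, where $S:=\sum_{C_i\subset Q_q}n_i(x\cup y)^2$. The critical inequality \eqref{eq:q_inequality_critical} forces $\sum_{C_i\subset Q_q}n_i(y)^2>V_q\psi_q$, while Cauchy--Schwarz over the $V_q$ unit cubes tiling $Q_q$ gives both $\#(x\cup y)\le\#x+\sqrt{V_qS}$ and $S\ge\max(V_q\psi_q,\,(\#y)^2/V_q)$. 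Since $\psi$ diverges and $q\ge1$ forces $\psi_q\ge\psi(\ell_1)$, choosing $\ell_0$ large makes $\psi_q$ exceed any prescribed constant built from $\kappa$ and $A$; this absorbs $(\kappa+\frac A4)\sqrt{V_qS}$ into a fraction of $\frac A4 S$, leaving $H(x\cup y)\ge\frac12 h(x)+\frac A8 S-(\kappa+\frac A2)\#x$, and finally $\frac A8 S\ge\frac A{16}V_q\psi_q+\frac A{16V_q}(\#y)^2$ using $2S\ge V_q\psi_q+(\#y)^2/V_q$. Combined with the interaction bound below this yields \eqref{eq:pointwise_bd_H}.

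The delicate point is controlling the attractive tail, which is also the whole content of the case $q=0$. There $y=\emptyset$, $z=Y$ is well distributed outside $Q_0$ ($\sum_{C_i\subset Q_j}n_i(Y)^2\le V_j\psi_j$ for all $j\ge1$), and $H(x\cup Y)\ge h(x)-\kappa\#x+H(Y)+W(x,Y)$, so \eqref{eq:pointwise_bd_H_q0} reduces to showing $\sum_{a\in x}\sum_{b\in Y}\phi(|a-b|)\le\frac12 h(x)+\frac A2\#x+\frac{4^{d-2}A}{\alpha}\psi_0V_0$. I would group $a$'s and $b$'s into unit cubes, decompose $Y$ along the annuli $A_j$, and on each cube--cube pair use $n_i(x)n_k(Y)\le\frac\lambda2 n_i(x)^2+\frac1{2\lambda}n_k(Y)^2$ with a small weight $\lambda$. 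For $j\ge2$ the distance from $Q_0$ to $A_j$ is at least $\ell_{j-1}-\ell_0=\Theta(\alpha\ell_0(1+2\alpha)^{j-1})$, so the $x$-side contribution is $C'(\ell_0)\sum_{C_i\subset Q_0}n_i(x)^2$ with $C'(\ell_0)\to0$ as $\ell_0\to\infty$ (because $\phi(|\cdot|)\in L^1$), and the $Y$-side contribution is a sum $\sum_j\phi(\ell_{j-1})V_j\psi_j$ that converges thanks to the summability $\sum_\ell\psi(\ell)\phi(\ell)\ell^{d-1}<\infty$ in \eqref{eq:assump_psi} and the geometric spacing \eqref{eq:bounds_ell_j}; both are made small, resp.\ a controlled multiple of $\psi_0 V_0$, by taking $\ell_0$ large. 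On the near annulus $A_1$, where $\phi$ is only bounded and cube distances can vanish, the $x$-side term $\frac\lambda2 C_\phi\sum_{C_i\subset Q_0}n_i(x)^2$ is absorbed into $\frac12 h(x)$ via $h(x)\ge\frac A2\sum_{C_i}(n_i(x)^2-n_i(x))$ by taking $\lambda$ small, while the $Y$-side term $\frac{C_\phi}{2\lambda}\sum_{C_i\subset Q_1}n_i(Y)^2\le\frac{C_\phi}{2\lambda}V_1\psi_1$ — a constant multiple of $\psi_0V_0$, the ratios $V_j/V_{j-1}\le(1+3\alpha)^d$ and \eqref{eq:volume_A_j} producing the geometric factor — is absorbed into $\frac{4^{d-2}A}{\alpha}\psi_0V_0$ by taking $\alpha$ small. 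The cross-interaction $W(x\cup y,z)$ in the case $q\ge1$ is handled by the identical scheme, now weighed against the $\frac A4 S$ produced in the previous step (using $\sum_{C_i\subset A_{q+1}}n_i(z)^2\le V_{q+1}\psi_{q+1}$, which is comparable to $V_q\psi_q\le S$) and against $\frac12 h(x)$ for the part facing $Q_0$.

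The step I expect to be the main obstacle is this last one: splitting the $w_2$-tail between the cluster and the well-distributed exterior points into a near part absorbed \emph{quadratically} (by $h(x)$, resp.\ by the superstable term $\frac A4 S$) and a far part summed through the $\psi$-weighted integrability of $\phi$, while making the thresholds — $\alpha$ small, $\ell_0$ large, $\lambda$ small — consistent and independent of $\beta,\mu,\Omega,Q$. Everything else is bookkeeping with Cauchy--Schwarz and the superstability estimate.
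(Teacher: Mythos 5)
Your overall strategy coincides with the paper's: the same decomposition $H(x\cup y\cup z)=H(x\cup y)+H(z)+W(x\cup y,z)$, superstability over the unit cubes $C_i$ to produce a quadratic term of the form $\tfrac A4\sum_{C_i\subset Q_q}n_i(x\cup y)^2$, a weighted Cauchy--Schwarz on cube pairs for the attractive tail, and a near-shell/far-shell split of $\R^d\setminus Q_q$ with the far shells controlled through $\sum_\ell\psi(\ell)\phi(\ell)\ell^{d-1}<\infty$. Your cluster estimate (absorbing $(\kappa+\tfrac A4)\#y$ via $\#y\le\sqrt{V_q\,S}$ with your $S=\sum_{C_i\subset Q_q}n_i(x\cup y)^2>V_q\psi_q$ and $\psi(\ell_0)$ large) is correct, and so is the $q=0$ case: there the error term in \eqref{eq:pointwise_bd_H_q0} carries a factor $1/\alpha$, so bounding the near-shell $Y$-contribution by a fixed multiple of $V_1\psi_1\le(1+3\alpha)^{d+1}V_0\psi_0$ suffices.

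There is, however, a genuine gap in your treatment of the first exterior shell $A_{q+1}$ when $q\ge1$. After the Cauchy--Schwarz step, the $z$-side of the near-shell interaction carries the coefficient $\tfrac{C_\phi}{2\lambda}$, which is a \emph{fixed and possibly large} constant: $\lambda$ must be small to absorb the $x\cup y$-side into the superstable term, so $1/\lambda$ is large, and neither $\alpha$ nor $\ell_0$ can reduce this coefficient because the cubes of $A_{q+1}$ touch $Q_q$. Your bound $\sum_{C_i\subset A_{q+1}}n_i(z)^2\le V_{q+1}\psi_{q+1}$, ``comparable to $V_q\psi_q\le S$'', therefore produces an error of size $\tfrac{C_\phi}{2\lambda}(1+3\alpha)^{d+1}V_q\psi_q$, which must be absorbed into the only positive quantity of that order, a fraction of $\tfrac A4 S$ --- and $S$ can be arbitrarily close to $V_q\psi_q$. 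This would force a smallness condition such as $\tfrac{C_\phi}{2\lambda}\lesssim A$, i.e.\ a constraint on the tail of $w_2$ relative to its repulsive core, which is not assumed. The fix is to exploit the minimality of $q$ in \emph{both} directions: since $\sum_{C_i\subset Q_{q+1}}n_i(Y)^2\le V_{q+1}\psi_{q+1}$ while $\sum_{C_i\subset Q_q}n_i(y)^2>V_q\psi_q$ by \eqref{eq:q_inequality_critical}, the first shell only carries the increment,
$\sum_{C_i\subset A_{q+1}}n_i(z)^2\le V_{q+1}\psi_{q+1}-V_q\psi_q\le\bigl((1+3\alpha)^{d+1}-1\bigr)V_q\psi_q$, which is $O(\alpha)V_q\psi_q$; taking $\alpha$ small then beats the fixed constant. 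This is exactly condition \eqref{eq:condition_alpha} in the paper, and it is the one place where the smallness of $\alpha$ is genuinely needed for $q\ge1$. A secondary, fixable point: for a general decreasing $\phi$ the far-shell sum $\sum_{j\ge q+2}\phi(\ell_{j-1}-\ell_q)V_j\psi_j$ is not termwise dominated by the tail of $\sum_\ell\psi(\ell)\phi(\ell)\ell^{d-1}$, since $\phi(\ell_{j-1}-\ell_q)$ and $\phi(\ell_{j-1})$ need not be comparable without a doubling assumption on $\phi$; the paper's summation by parts against the increments $\phi(\ell_{j-1}-\ell_q)-\phi(\ell_j-\ell_q)$, or a comparison over a shifted block of integers, closes this.
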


Recall that $A$ is the minimum of $w_1(x)$ for $|x|\leq\delta=\sqrt{d}$ and that $\kappa$ is the stability constant for $w_2$.

\begin{proof}[Proof of Proposition~\ref{prop:pointwise_H}]
We call  $n=\#x$, $k=\#y$ and $m=\#z$ the number of particles in each group. We assume $q=q_Y>0$ in the whole proof and treat the simpler case $q=0$ at the end. First we use the stability of $w_2$ and the positivity of $w_1$ to get
\begin{align}
H(x\cup y\cup z)&=H(x\cup y)+H(z)+W(x\cup y,z)\nn\\
&\geq h(x)+h(y)-\kappa(n+k)+H(z)+W(x\cup y,z)\nn\\
&\geq \frac{h(x)}2 +\frac{A}{4}\sum_i n_i(x)^2+\frac{A}{2}\sum_i n_i(y)^2\nn\\
&\qquad +H(z)+W(x\cup y,z)-C(n+k),\label{eq:decompos_Hxyz}
\end{align}
where $C=\kappa+A/2$.
Our first step will be to control the easy term $-Ck$. We use the Cauchy-Schwarz (or Jensen) inequality which says that
\begin{equation}
\sum_i n_i(y)^2\geq V_q^{-1} \myp[\bigg]{ \sum_i n_i(y) }^2=V_q^{-1}k^2.
 \label{eq:estim_square_y}
\end{equation}
Since
$\frac{A}{16V_q}k^2-Ck\geq -\frac{4C^2}{A}V_q$
we obtain
$$\frac{A}{2}\sum_i n_i(y)^2-Ck\geq \frac{3A}{8}\sum_i n_i(y)^2+\left(\frac{A\psi_q}{16}-\frac{4C^2}{A}\right)V_q,$$
where we have used that $\sum_i n_i(y)^2\geq V_q\psi_q$ due to the definition of $q=q_Y$.
Using that $\psi$ tends to infinity, we can assume that $\ell_0$ is large enough so that
\begin{equation}
\boxed{\frac{A}{16}\psi(\ell_0)\geq \frac{4C^2}{A}=\frac{(2\kappa+A)^2}{A}.}
\label{eq:condition_ell_0_1}
\end{equation}
Then we arrive at
\begin{multline}
 H(x\cup y\cup z)\geq \frac{h(x)}2 +\frac{A}{4}\sum_i n_i(x)^2+\frac{3A}{8}\sum_i n_i(y)^2+H(z)\\
 -Cn+W(x\cup y,z).
\label{eq:decompos_Hxyz2}
\end{multline}

Our next and main task is to bound the interaction $W(x\cup y,z)$. To simplify our writing, we call
$$\delta_{ij}=\min_{\substack{x\in C_i\\ y\in C_j}}|x-y|, \qquad \phi_{ij}:=\max_{\substack{x\in C_i\\ y\in C_j}}\phi(|x-y|)=\phi(\delta_{ij})$$
the distance between the cubes $C_i$ and $C_j$ and the corresponding interaction. Note that when the two cubes have a common boundary, we just get $\delta_{ij}=0$ and $\phi_{ij}=\phi(0)$. Splitting our space into the small cubes $C_i$ and using the lower regularity of $w_2$, we obtain
\begin{align}
 W(x\cup y,z)&\geq -\sum_{i,j}n_i(x)n_j(z)\phi_{ij}-\sum_{i,j}n_i(y)n_j(z)\phi_{ij}\nn\\
 &\geq-\tau \sum_{i,j} n_i(x)^2\phi_{ij}-\tau \sum_{i,j}n_i(y)^2\phi_{ij}-\frac1{2\tau} \sum_{\substack{i,j\\ C_i\subset Q_q}}n_j(z)^2\phi_{ij}.
 \label{eq:interaction_x_y_z}
\end{align}
In the first two sums we have for simplicity dropped the condition that $C_j\subset \R^d\setminus Q_q$.
The integrability of $\phi$ implies that
$$\boxed{S:=\sum_{i\neq0}\phi_{0i}<\ii.}$$
In view of~\eqref{eq:decompos_Hxyz2} we take $\tau :=A/(8S)$
so that the first two error terms are controlled. We thus obtain
\begin{equation}
 H(x\cup y\cup z)\geq \frac{h(x)}2 +H(z) -Cn +\frac{A}{4}\sum_{i}n_i(y)^2-\frac{4S}{A} \sum_{\substack{i,j\\ C_i\subset Q_q}}n_j(z)^2\phi_{ij},\label{eq:decompos_Hxyz3}
\end{equation}
after dropping the positive term $(A/8)\sum_i n_i(x)^2$.
It remains to estimate the last sum by $\psi_q V_q$. We distinguish between the $C_j$ belonging to the first shell $A_{q+1}=Q_{q+1}\setminus Q_q$ and the ones further away.

\medskip

\noindent\textbf{First shell $A_{q+1}$.} After adding all the missing terms for $i$, the sum over all the $C_j\subset A_{q+1}$ can be estimated by
$$\sum_{\substack{C_i\subset Q_q\\ C_j\subset A_{q+1}}}n_j(z)^2\phi_{ij}\leq S\sum_{C_j\subset A_{q+1}}n_j(z)^2.$$
By definition of $q$ in~\eqref{eq:def_q_y} and~\eqref{eq:q_inequality_critical}, we have
$$\sum_{C_j\subset A_{q+1}}n_j(z)^2+\sum_{C_j\subset Q_{q}}n_j(y)^2\leq V_{q+1}\psi_{q+1},\qquad
\sum_{C_j\subset Q_{q}}n_j(y)^2\geq V_{q}\psi_{q}
$$
since $q>0$ and therefore we obtain
$$\sum_{C_j\subset A_{q+1}}n_j(z)^2\leq V_{q+1}\psi_{q+1}-V_q\psi_q.$$
Here comes the importance of the properties of $\ell_j$ and $\psi$. By~\eqref{eq:bounds_ell_j} we have
$$V_{q+1}=(2\ell_{q+1})^d\leq (1+3\alpha)^d(2\ell_q)^d=(1+3\alpha)^dV_q.$$
Similarly, the estimate $\psi(\ell+1)\leq (\ell+1)\psi(\ell)/\ell$ in~\eqref{eq:assump_psi} implies by induction
\begin{equation}
\psi_{j+1}=\psi(\ell_{j+1})\leq \frac{\ell_{j+1}}{\ell_{j}}\psi(\ell_{j})\leq (1+3\alpha)\psi(\ell_{j})=(1+3\alpha)\psi_j.
\label{eq:induction_psi}
\end{equation}
Thus we obtain
$$\sum_{C_j\subset A_{q+1}}n_j(z)^2\leq \myp[\big]{ (1+3\alpha)^{d+1}-1 } V_q\psi_q.$$
This term can be absorbed into the positive term in~\eqref{eq:decompos_Hxyz3} provided we choose $\alpha$ so that
\begin{equation}
\boxed{\frac{4S^2}{A} \myp[\big]{(1+3\alpha)^{d+1}-1 } \leq \frac{A}{16}.}
\label{eq:condition_alpha}
\end{equation}
Then we obtain
\begin{equation}
 H(x\cup y\cup z)\geq \frac{h(x)}2 +H(z) -Cn+\frac{3A}{16}\sum_{i}n_i(y)^2-\frac{4S}{A}\!\! \sum_{\substack{C_i\subset Q_q\\ C_j\subset \R^d\setminus Q_{q+1}}}n_j(z)^2\phi_{ij}.\label{eq:decompos_Hxyz4}
\end{equation}

\medskip

\noindent\textbf{Other shells $A_{j}$ for $j\geq q+2$.} Any shell $A_j$ with $j\geq q+2$ is at least at a distance $\ell_{j-1}-\ell_q$ from $Q_q$. We thus group our small cubes according to these shells and estimate the interaction by $\phi(\ell_{j-1}-\ell_q)$. This gives
$$\sum_{\substack{C_i\subset Q_q\\ C_j\subset \R^d\setminus Q_{q+1}}}n_j(z)^2\phi_{ij}\leq V_q\sum_{j\geq q+2}\sum_{C_i\subset A_j}n_i(z)^2\phi(\ell_{j-1}-\ell_q)$$
since there are $V_q$ cubes $C_i$ in $Q_q$. We would like to use that
$\sum_{C_i\subset Q_j}n_i(z)^2\leq V_j\psi_j$
by definition of $q$ (we remove here the part involving $y$) and hence we rewrite the sum as
\begin{align*}
\MoveEqLeft[4] \sum_{j\geq q+2}\phi(\ell_{j-1}-\ell_q)\sum_{C_i\subset  A_j}n_i(z)^2 \\
={}& \sum_{j\geq q+2}\phi(\ell_{j-1}-\ell_q) \myp[\bigg]{ \sum_{C_i\subset Q_j}n_i(z)^2-\sum_{C_i\subset Q_{j-1}}n_i(z)^2 } \\
={}& \sum_{j\geq q+2} \myp[\bigg]{ \sum_{C_i\subset Q_j}n_i(z)^2 } \big(\phi(\ell_{j-1}-\ell_q)-\phi(\ell_{j}-\ell_q)\big)\\
&-\phi(\ell_{q+1}-\ell_q)\sum_{C_i\subset Q_{q+1}}n_i(z)^2\\
\leq{}& \sum_{j\geq q+2}V_j\psi_j\big(\phi(\ell_{j-1}-\ell_q)-\phi(\ell_{j}-\ell_q)\big)\\
={}& 2^d \sum_{j\geq q+2}\ell_j^d\psi(\ell_j)\big(\phi(\ell_{j-1}-\ell_q)-\phi(\ell_{j}-\ell_q)\big).
\end{align*}
This is now independent of $z$, as required. It will be useful to replace $\ell_j$ by $\ell_{j-1}-\ell_q$ in the factor $ \ell_j^d\psi(\ell_j) $. For the volume we can simply use that
$$\frac{(\ell_{j-1}-\ell_q)^{d}}{\ell_j^{d}}=\left(\frac{\ell_{j-1}}{\ell_{j}}\right)^{d}\left(1-\frac{\ell_q}{\ell_{j-1}}\right)^{d}\geq \frac1{(1+3\alpha)^d}\left(1-\frac{1}{1+3\alpha}\right)^{d}\geq \alpha^{d}$$
for $(1+3\alpha)^2\leq 3$. Similarly, using~\eqref{eq:induction_psi} we have
$$\psi(\ell_j)\leq \frac{\ell_j}{\ell_{j-1}-\ell_q}\psi(\ell_{j-1}-\ell_q)\leq \frac{\psi(\ell_{j-1}-\ell_q)}{\alpha}.$$
This gives
\begin{multline*}
\sum_{j\geq q+2}\phi(\ell_{j-1}-\ell_q)\sum_{C_i\subset  A_j}n_i(z)^2\\
\leq \frac{2^d}{\alpha^{d+1}} \sum_{j\geq q+2}(\ell_{j-1}-\ell_q)^d\psi(\ell_{j-1}-\ell_q)\big(\phi(\ell_{j-1}-\ell_q)-\phi(\ell_{j}-\ell_q)\big).
\end{multline*}
We now show that the sum on the right side is bounded above by
\begin{equation}
I(\ell_0):=\sum_{\ell\geq\ell_0}\ell^d\psi(\ell)\big(\phi(\ell)-\phi(\ell+1)\big).
\label{eq:def_I_ell_0}
\end{equation}
We use the fact that for $f$ a non-decreasing function, $g$ a non-increasing function and $k_j$ an increasing sequence of integers, we have
\begin{align*}
\MoveEqLeft[2] f(k_{j-1})\big(g(k_{j-1})-g(k_{j})\big)\\
={}& f(k_{j-1})\big(g(k_{j-1})-g(k_{j-1}+1)\big)+\cdots+f(k_{j-1})\big(g(k_{j}-1)-g(k_{j})\big)\\
\leq{}& f(k_{j-1})\big(g(k_{j-1})-g(k_{j-1}+1)\big)+\cdots+f(k_{j}-1)\big(g(k_{j}-1)-g(k_{j})\big).
\end{align*}
In the estimate we used that $f$ is non-decreasing and that $g(k_{j-1}+k)-g(k_{j-1}+k+1)\geq0$ since $g$ is non-increasing. This way we obtain all the $f(\ell)(g(\ell)-g(\ell+1))$ for $\ell$ between $k_{j-1}$ and $k_j-1$. After summing over $j$ we deduce that
$$\sum_j f(k_{j-1})\big(g(k_{j-1})-g(k_{j})\big)\leq \sum_\ell f(\ell)\big(g(\ell)-g(\ell+1)\big).$$
Applying this to $k_j=\ell_j-\ell_q$ and using that
$\ell_{j-1}-\ell_q\geq \ell_{q+1}-\ell_q\geq (1+\alpha)\ell_0\geq\ell_0,$
this proves as we wanted that
\begin{equation}
\sum_{j\geq q+2}\phi(\ell_{j-1}-\ell_q)\sum_{C_i\subset  A_j}n_i(z)^2\leq \frac{2^dI(\ell_0)}{\alpha^{d+1}}.
\end{equation}
Note that the series in~\eqref{eq:def_I_ell_0} is convergent. In fact, after changing indices this is the same as
$$\sum_{\ell\geq \ell_0} \myp[\big]{ (\ell+1)^d\psi(\ell+1)-\ell^d\psi(\ell) } \phi(\ell+1)<\ii,$$
which follows from the fact that $\psi(\ell+1)\leq(1+\ell^{-1})\psi(\ell)$ and $(1+\ell)^d\leq\ell^d(1+c/\ell)$, so that
$(\ell+1)^d\psi(\ell+1)-\ell^d\psi(\ell)\leq c\ell^{d-1}\psi(\ell).$
Thus $I(\ell_0)$ is the remainder of a convergent series due to our assumption~\eqref{eq:assump_psi} on $\psi$. As a conclusion, $\alpha>0$ being fixed so that~\eqref{eq:condition_alpha} holds, we need to take $\ell_0$ large enough so that
\begin{equation}
 \boxed{\frac{2^{d+2}S\, I(\ell_0)}{\alpha^{d+1}A}\leq \frac{A}{16}\psi(\ell_0).}
 \label{eq:condition_ell_0}
\end{equation}
As a conclusion, we obtain the bound
\begin{equation}
 H(x\cup y\cup z)\geq \frac{h(x)}2 +H(z) -Cn+\frac{A}{8}\sum_{i}n_i(y)^2,
 \label{eq:decompos_H_final}
\end{equation}
which reduces to the stated inequality~\eqref{eq:pointwise_bd_H} after using again~\eqref{eq:estim_square_y} and the definition of $q$.

It remains to deal with the case $q=0$. Then the energy is reduced to $H(x\cup z)$ with $z=Y$, that is, $y$ is empty. Of course, if $n=0$, then $H(x\cup z)=H(z)$  and we can therefore assume $n\geq1$. The exact same bounds as in~\eqref{eq:decompos_Hxyz3} (with $\tau=A/(4S)$ in~\eqref{eq:interaction_x_y_z}) provide
$$H(x\cup z)\geq \frac{h(x)}{2}-Cn+H(z) -\frac{2S}{A}\sum_{\substack{i,j\\ C_i\subset Q_0}}n_j(z)^2\phi_{ij}.$$
For the first shell we just have
$$\sum_{\substack{C_i\subset Q_0\\ C_j\subset A_{1}}}n_j(z)^2\phi_{ij}\leq S\psi_1V_1\leq S(1+3\alpha)^{d+1}V_0\psi_0,$$
whereas the next shells are estimated as above, leading to
$$H(x\cup z)\geq \frac{h(x)}{2}-Cn+H(z) -\frac{A}{32}\psi_0V_0\left(1+\frac{(1+3\alpha)^{d+1}}{(1+3\alpha)^{d+1}-1}\right).$$
Using the simple estimate
$$1+\frac{(1+3\alpha)^{d+1}}{(1+3\alpha)^{d+1}-1}\leq1+\frac{4^{d+1}}{6\alpha}\leq\frac{4^d}{\alpha}$$
since $\alpha<1$, we obtain~\eqref{eq:pointwise_bd_H_q0}. This concludes the proof of Proposition~\ref{prop:pointwise_H}.
\end{proof}

As a side remark, we notice that our conditions~\eqref{eq:condition_ell_0_1},~\eqref{eq:condition_alpha} and~\eqref{eq:condition_ell_0} are monotone in $\ell_0$. When they are valid for one $\ell_0$, then they also hold for larger values. A simple estimate at zero temperature follows immediately from Proposition~\ref{prop:pointwise_H}.

\begin{corollary}[Ruelle bound at $T=0$]\label{cor:T0}
Let $\Omega\subset\R^d$ be any domain and let $\mu\in\R$. Let $X\subset\Omega$ be any minimizer for the free energy
$$X\subset\Omega\mapsto H_{\Omega}(X)-\mu\, n_\Omega(X).$$
Let $L_\mu\geq0$ be the smallest integer such that $\psi(L_\mu/2)\geq 64\mu_+^2/A^2$. Then
\begin{align}
n_Q(X)&\leq \frac{4(2\kappa+A+\mu)_+}{A}\left(1+2^{d-2}\alpha^{-\frac12}\sqrt{\psi(L/2)}+(L_\mu/L)^{\frac d2}\right)L^d,\nn\\
h_Q(X)&\leq \frac{12(2\kappa+A+\mu)^2_+}{A}\left(1+2^{2d-1}\alpha^{-1}\psi(L/2)+(L_\mu/L)^{d}\right)L^d,
\label{eq:estim_T0}
\end{align}
for any cube $Q$ of side length $L\geq\zeta$.
\end{corollary}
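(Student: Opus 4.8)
The plan is to run Ruelle's pointwise estimate (Proposition~\ref{prop:pointwise_H}) against the minimizing property of $X$, exactly as in the proof of Theorem~\ref{thm:Ruelle} but with the Gibbs expectation replaced by evaluation at the deterministic minimizer and with the activity $z=e^{\beta\mu}$ replaced throughout by $\mu$ itself. First I would dispose of the trivial regime: if $\mu<-\kappa$, then any nonempty $X\subset\Omega$ has $H(X)-\mu\,\#X\ge-(\kappa+\mu)\,\#X>0$ by stability of $w_2$ and $w_1\ge0$, so the minimizer is empty and \eqref{eq:estim_T0} holds trivially; hence we may assume $\mu\ge-\kappa$, so that $\lambda:=(2\kappa+A+\mu)_+=2\kappa+A+\mu\ge\kappa+A\ge A$. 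As in the proof of Theorem~\ref{thm:Ruelle}, after translating I take $Q=Q_0=(-\ell_0,\ell_0)^d$ to be centred, built of unit cubes, with $L=2\ell_0$; the smallness/largeness requirements on $\alpha,\ell_0$ in Proposition~\ref{prop:pointwise_H} are exactly what the hypothesis $L\ge\zeta$ encodes, and then $V_0=L^d$, $\psi_0=\psi(L/2)$. Split $X=x\cup Y$ with $x:=X\cap Q_0$, $Y:=X\setminus Q_0$, and $Y=y\cup z$ with $y=Y\cap Q_q$, $z=Y\setminus Q_q$ where $q:=q_Y$ is the critical index of \eqref{eq:def_q_y}. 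Removing $x$ (and $y$) from $X$ cannot decrease $H-\mu\,\#(\cdot)$, which gives $H(x\cup y\cup z)-H(z)\le\mu(\#x+\#y)$ when $q>0$ and $H(x\cup Y)-H(Y)\le\mu\,\#x$ when $q=0$. Combining with Proposition~\ref{prop:pointwise_H} and the elementary bound $h(x)=h_Q(X)\ge\frac A2\big(\frac{(\#x)^2}{L^d}-\#x\big)$ from \eqref{eq:estim_h_cube} and Jensen, I get in the case $q>0$ the master inequality
\[
\frac{A}{4L^d}(\#x)^2+\frac{A}{16}\psi_qV_q+\frac{A}{16V_q}(\#y)^2\ \le\ \lambda\,\#x+\mu_+\,\#y,
\]
and in the case $q=0$, $\tfrac{h(x)}{2}\le\lambda\,\#x+\tfrac{4^{d-2}A}{\alpha}\,\psi(L/2)\,L^d$.

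For $q=0$: plug the Jensen bound into the $q=0$ inequality and solve the resulting quadratic in $\#x$; this produces the first line of \eqref{eq:estim_T0} (the $\sqrt{\psi(L/2)}$ term originates here, and $\lambda\ge A$ is what makes the numerical constants line up), and reinserting that bound on $\#x$ into the same inequality gives the second line, using $\sqrt\psi\le\psi$. For $q>0$, the key step is to control $V_q$: completing the square in $\#y$ against $\tfrac{A}{16V_q}(\#y)^2$ turns the master inequality into $\tfrac{A}{4L^d}(\#x)^2\le\lambda\#x+V_q\big(\tfrac{4\mu_+^2}{A}-\tfrac{A}{16}\psi_q\big)$. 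If $\psi_q\ge 64\mu_+^2/A^2$ the $V_q$-term is $\le0$ and $\#x\le 4\lambda L^d/A$ at once. If $\psi_q< 64\mu_+^2/A^2$, then by monotonicity of $\psi$ one gets $\ell_q<L_\mu/2$ (this is precisely why the threshold $64\mu_+^2/A^2$ is used in the definition of $L_\mu$), hence $V_q=(2\ell_q)^d<L_\mu^d$, and solving $\tfrac{A}{4L^d}(\#x)^2\le\lambda\#x+\tfrac{4\mu_+^2}{A}L_\mu^d$ (with $\mu_+\le\lambda$) gives $\#x\le\tfrac{4\lambda}{A}\big(1+(L_\mu/L)^{d/2}\big)L^d$. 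The bound on $h_Q(X)=h(x)$ follows from Proposition~\ref{prop:pointwise_H} directly (before Jensen): $\tfrac{h(x)}{2}\le\lambda\#x+V_q\big(\tfrac{4\mu_+^2}{A}-\tfrac{A}{16}\psi_q\big)$, into which one substitutes the bounds on $\#x$ and $V_q$, the same dichotomy, and $2L_\mu^{d/2}L^{d/2}\le L_\mu^d+L^d$. In each case the outcome is dominated by the right side of \eqref{eq:estim_T0}, since the $\psi(L/2)$-terms there are automatically at least $\tfrac{12\lambda^2}{A}L^d$ and $\tfrac{4\lambda}{A}L^d$ respectively (as $\psi\ge1$), leaving room.

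Granting Proposition~\ref{prop:pointwise_H}, essentially the entire argument is the quadratic formula; the one point needing care is the dichotomy in the case $q>0$, where one must (i) fix the cut-off $64\mu_+^2/A^2$ so that ``$\psi_q$ small'' is exactly the complement of ``$\ell_q\ge L_\mu/2$'', and (ii) verify that in the complementary ``$\psi_q$ large'' regime the very same $\psi_q$ that makes $V_q$ small also renders the leftover $\mu_+^2V_q/A$ harmless (indeed $\psi_q\ge64\mu_+^2/A^2$ forces $\mu_+^2V_q/A\lesssim\lambda^2L^d/A$). Once this bookkeeping is in place, assembling the explicit constants $4(2\kappa+A+\mu)_+/A$ and $12(2\kappa+A+\mu)_+^2/A$ is routine, the reduction $\mu\ge-\kappa$ (hence $\lambda\ge A$) being used repeatedly to absorb the $\alpha$- and $\psi$-dependent contributions.
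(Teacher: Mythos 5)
Your proposal is correct and follows essentially the same route as the paper's own proof: reduce to $\mu$ bounded below (your threshold $-\kappa$ versus the paper's $-C=-\kappa-A/2$, both valid), test the minimality of $X$ against $z=X\setminus Q_q$, feed this into Proposition~\ref{prop:pointwise_H}, complete the square in $\#y$, use the $\psi_q\gtrless 64\mu_+^2/A^2$ dichotomy together with the definition of $L_\mu$ to replace $V_q$ by $L_\mu^d$, solve the quadratic in $n$ via \eqref{eq:estim_h_cube} and Jensen, and back-substitute for $h_Q$. The constants check out, so there is nothing to add.
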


For instance, for $\psi(\ell)=\ell^\eps$, we have $L_\mu\approx c\mu^{2/\eps}$. The local bounds from Corollary~\ref{cor:T0} can be used to pass to the limit and prove the existence of infinite ground states, as defined in~\cite{Radin-84,Radin-04,BelRadShl-10,Suto-05,Suto-11,Lewin-22}. With a hard core this was done previously in~\cite{Radin-04,BelRadShl-10}. Should the potential $w_1$ diverge at the origin, the bound on $h_Q$ implies a lower bound on the smallest distance between the points as in~\cite[Lem.~8]{Lewin-22}. For the Lennard-Jones interaction such an estimate appeared before in~\cite{Blanc-04}.

\begin{proof}
From the stability condition, we have
$$H_\Omega(X)-\mu n_\Omega(X)\geq h(X)-(\kappa+\mu)n_\Omega(X)\geq \frac{A}2\sum_i n_i(X)^2-(C+\mu)n_\Omega(X)$$
with the notation of the proof of Proposition~\ref{prop:pointwise_H}. Recall that $C=\kappa+A/2$. In particular, we deduce that there is just no particle at all in $\Omega$ ($n_\Omega(X)=0$) when $\mu<-C$. We can therefore always assume that $\mu\geq-C$.

Next we write $X=x\cup Y$ with $x=X\cap Q$. If $q:=q_Y>0$, we write again $Y=y\cup z$. Using $H_\Omega(z)-\mu n_\Omega(z)\geq H_\Omega(X)-\mu n_\Omega(X)$, we obtain from~\eqref{eq:pointwise_bd_H}
\begin{align*}
0&\geq \frac{h(x)}2 +\frac{A}{16}\psi_q V_q+\frac{A}{16V_q}k^2-(C+\mu) n-\mu k\\
&\geq \frac{h(x)}2 +\left(\frac{A}{16}\psi_q-\frac{4\mu_+^2}{A}\right) V_q-(C+\mu) n\geq \frac{h(x)}2 -\frac{4\mu_+^2L_\mu^d}{A}-(C+\mu) n,
\end{align*}
from the definition of $L_\mu$. We obtain
\begin{equation}
h_{Q_0}(X)=h(x)\leq 2(C+\mu) n+\frac{8L_\mu^d\mu_+^2}{A}.
\label{eq:estim_h_n_local}
\end{equation}
Using now
\begin{equation}
h(x)\geq \frac{A}2\sum_{C_i\subset Q_0}n_i^2-\frac{An}{2}\geq \frac{A}{2V_0}n^2-\frac{An}{2} 
 \label{eq:estim_h_0}
\end{equation}
and $A\leq 2C$ we find that the number of points in any cube of volume $V_0$ satisfies
$$n\leq \frac{4(2C+\mu)_+}{A}V_0+\frac{4\sqrt{V_0}L_\mu^{\frac d2}\mu_+}{A}\leq \frac{4(2C+\mu)_+V_0}{A} \myp[\big]{ 1+(L_\mu/L)^{\frac d2} }.$$
From~\eqref{eq:estim_h_n_local}, we then obtain for the energy
\begin{align*}
h(x)&\leq 2(C+\mu)n+\frac{8L_\mu^d\mu_+^2}{A}\\
&\leq \frac{8(2C+\mu)^2_+V_0}{A} \myp[\big]{ 1+(L_\mu/L)^{\frac d2}}+\frac{8L_\mu^d(2C+\mu)_+^2}{A}\\
&\leq \frac{8(2C+\mu)^2_+V_0}{A} \myp[\big]{ 1+(L_\mu/L)^{\frac d2}+(L_\mu/L)^{d}}\\
&\leq \frac{12(2C+\mu)^2_+V_0}{A} \myp[\big]{ 1+(L_\mu/L)^{d}}.
\end{align*}
The argument is similar when $q_Y=0$. From~\eqref{eq:pointwise_bd_H_q0}, we have in this case
$$\frac{A}{2V_0}n^2-\frac{An}{2}\leq h(x) \leq \frac{2^{2d-3}A}{\alpha}\psi_0V_0+2(C+\mu) n$$
and therefore obtain
$$n\leq \frac{4(2C+\mu)_+}{A}V_0+\frac{2^{d-1}\sqrt{\psi_0}}{\sqrt\alpha}V_0\leq \frac{4(2C+\mu)_+}{A}V_0\left(1+\frac{2^{d-2}\sqrt{\psi_0}}{\sqrt\alpha}\right)$$
since $1\leq (2C+\mu)/C\leq (2/A)(2C+\mu)_+$ for $\mu\geq-C$. The estimate on $h(x)$ is similar.
\end{proof}

\medskip

\noindent\textbf{Step 3: Exponential estimate on the Gibbs state.}
In Proposition~\ref{prop:pointwise_H} we have derived a simple pointwise bound which depends on the location of the particles outside of $Q_0$, and in particular on the value of $q=q_Y$. We now turn to the derivation of the local bound on the grand-canonical Gibbs state.
We have
\begin{align*}
\MoveEqLeft[2] Z_{\beta,\mu}(\Omega)\expec[\big]{e^{\frac\beta2 h_{Q_0}}}_{\beta,\mu,\Omega}\\
={}&\sum_{n,K}\frac{e^{\beta\mu(n+K)}}{n!\,K!}\int_{(Q_0\cap \Omega)^n}\int_{(\Omega\setminus Q_0)^K}e^{-\beta H(x\cup Y)+\frac\beta2 h(x)}\,\rd x\,\rd Y\\
={}&\sum_{q\geq0}\sum_{n,K}\frac{e^{\beta\mu(n+K)}}{n!\,K!}\int_{(Q_0\cap \Omega)^n}\int_{(\Omega\setminus Q_0)^K}\1(q_Y=q)e^{-\beta H(x\cup Y)+\frac\beta2 h(x)}\,\rd x\,\rd Y\\
={}&\sum_{q\geq0}\sum_{n,k,m}\frac{e^{\beta\mu(n+k+m)}}{n!\,k!\,m!}\int_{(Q_0\cap \Omega)^n}\iint_{(\Omega\cap Q_q\setminus Q_0)^k\times (\Omega\setminus Q_q)^m} \1\left(q_{y\cup z}=q\right)\\
&\qquad\qquad\times e^{-\beta H(x\cup y\cup z)+\frac\beta2 h(x)}\rd x\,\rd y\,\rd z.
\end{align*}
Using the pointwise bound~\eqref{eq:pointwise_bd_H} and then simply removing the condition that $q_{y\cup z}=q$, we can bound
\begin{align}
&\expec[\big]{e^{\frac\beta2 h_{Q_0}}}_{\beta,\mu,\Omega}\nn\\
&\leq{} e^{\beta \frac{4^{d-2}A}{\alpha}\psi_0V_0}\sum_{n}\frac{e^{\beta(\mu+C)n}(V_0)^n}{n!}+\sum_{q>0}\sum_{n,k}\frac{e^{\beta(\mu+C)n+\beta \mu k}(V_0)^n(V_q)^ke^{-\frac{\beta A}{16}V_q\psi_q}}{n!\,k!}\nn\\
&={} \exp \myp[\big]{ e^{\beta(\mu+C)}V_0 } \myp[\bigg]{ e^{\beta \frac{4^{d-2}A}{\alpha}\psi_0V_0}+\sum_{q>0}e^{-\beta\frac{A}{16}\psi_qV_q+e^{\beta\mu}V_q} } \nn\\
&\leq{} \exp \myp[\big]{ e^{\beta(\mu+C)}V_0 } \myp[\bigg]{ e^{\beta \frac{4^{d-2}A}{\alpha}\psi_0V_0}+\sum_{\ell>\ell_0}e^{-2^d\ell^d(\beta\frac{A}{16}\psi(\ell)-e^{\beta\mu})}}.
\label{eq:estim_exp}
\end{align}
After scaling and replacing all the constants by their maximum $\zeta$, this can be written as in the statement. This concludes the proof of Theorem~\ref{thm:Ruelle}.\qed

\begin{remark}[Pointwise bound on the correlation functions]\label{rmk:correlation_bounds}
Arguing as before we can estimate the correlation functions uniformly over $Q_0$. For any $x\subset (Q_0)^j$ we split the other particles according to the three domains $Q_0$, $Q_q$ and $\Omega\setminus Q_q$. We obtain
\begin{align*}
\rho^{(j)}(x)=&Z_{\beta,\mu}(\Omega)^{-1}\sum_{q\geq0}\sum_{n,k,m}\frac{e^{\beta\mu(j+n+k+m)}}{n!\,k!\,m!}\int_{(Q_0\cap \Omega)^n}\int_{(\Omega\cap Q_q\setminus Q_0)^k}\int_{(\Omega\setminus Q_q)^m}\times\\
&\qquad \times \1\left(q_{y\cup z}=q\right)e^{-\beta H(x\cup x'\cup y\cup z)}\rd x'\,\rd y\,\rd z.
\end{align*}
We then merge $x$ and $x'$ and use the lower bound~\eqref{eq:pointwise_bd_H} from Proposition~\ref{prop:pointwise_H} in the form
$$H(x\cup x'\cup y\cup z)\geq \frac{h(x)}{2}-C(n+j)+\frac{A}{16}\psi_q V_q+H(z).$$
We obtain with $z=e^{\beta\mu}$
\begin{multline}
\rho^{(j)}(x)\leq (e^{\beta C}z)^j e^{-\beta\frac{h(x)}{2}}e^{V_0 ze^{\beta C}} \myp[\bigg]{ e^{\beta \frac{4^{d-2}A}{\alpha}\psi_0V_0}+\sum_{\ell>\ell_0}e^{-2^d\ell^d(\beta\frac{A}{16}\psi(\ell)-z)} } \\
\forall x\subset (Q_0)^j.
\label{eq:estim_correlation_fn}
\end{multline}
This pointwise bound requires that all the $x_i$ are in $Q_0$. To get a true uniform bound on $\rho^{(j)}$ we need a modification of Proposition~\ref{prop:pointwise_H} dealing with the situation that some of the particles are far away from the others. We do not discuss this here and refer instead to~\cite{Ruelle-70}.
\end{remark}

\subsection{Proof of Corollary~\ref{cor:Ruelle_polynomial}}\label{sec:proof_cor_Ruelle}

When $\phi=B(1+|x|)^{-s}$ we can take $\psi(r)=r^\eps$ for $0<\eps<\min(1,s-d)$ and thus obtain
$$\expec[\big]{e^{\frac\beta2 h_{Q_0}}}_{\beta,\mu,\Omega}\leq \exp \myp[\big]{e^{\beta(\mu+C)}2^d\ell_0^d } \myp[\bigg]{ e^{\beta \frac{2^{3d-4}A}{\alpha}\ell_0^{d+\eps}}+\sum_{\ell>\ell_0}e^{-2^d\ell^d(\beta\frac{A}{16}\ell^\eps-e^{\beta\mu})} }.$$
The first estimate of Corollary~\ref{cor:Ruelle_polynomial} thus follows from estimating the last sum. A non-optimal but simple bound is provided in the following

\begin{lemma}\label{lem:estim_sum}
For $a,b>0$ and $2\leq \ell_0\in\N$, we have
\begin{equation}
\sum_{\ell>\ell_0}e^{-b\ell^{d+\eps}+a\ell^d}
\leq \exp \myp[\big]{Ba^{\frac{d+\eps}\eps}b^{-\frac{d}{\eps}}+\log(b/B)_- }
\label{eq:estim_sum}
\end{equation}
for a constant $B$ depending only on $d$ and $\eps$.
\end{lemma}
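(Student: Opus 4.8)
The strategy is to absorb half of the decaying term $-b\ell^{d+\eps}$ into the growing term $a\ell^d$, extracting a ``Gaussian maximum'' type bound, and to keep the other half to make the series summable. Concretely I would write, for every $\ell$,
\begin{equation*}
	-b\ell^{d+\eps}+a\ell^d = -\tfrac b2\ell^{d+\eps} + \Bigl(a\ell^d-\tfrac b2\ell^{d+\eps}\Bigr),
\end{equation*}
and bound the bracket by its maximum over $\ell\in(0,\infty)$. A one–variable optimization (the unique critical point is at $\ell^\eps=\tfrac{2ad}{b(d+\eps)}$) gives
\begin{equation*}
	\max_{\ell>0}\Bigl(a\ell^d-\tfrac b2\ell^{d+\eps}\Bigr)=c\,a^{\frac{d+\eps}\eps}b^{-\frac d\eps},\qquad c=c(d,\eps)=\frac{\eps}{d+\eps}\Bigl(\frac{2d}{d+\eps}\Bigr)^{d/\eps}.
\end{equation*}
Writing $M:=a^{\frac{d+\eps}{\eps}}b^{-\frac d\eps}$, this already yields $\sum_{\ell>\ell_0}e^{-b\ell^{d+\eps}+a\ell^d}\leq e^{cM}\sum_{\ell\geq1}e^{-\frac b2\ell^{d+\eps}}$.

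Next I would estimate the remaining series. Since $d\geq1$ and $\eps>0$ we have $d+\eps\geq1$, hence $\ell^{d+\eps}\geq\ell$ for every integer $\ell\geq1$, so
\begin{equation*}
	\sum_{\ell\geq1}e^{-\frac b2\ell^{d+\eps}}\leq\sum_{\ell\geq1}e^{-\frac b2\ell}=\frac1{e^{b/2}-1}.
\end{equation*}
The elementary inequality $e^x\geq1+x$ gives $\frac1{e^{b/2}-1}\leq\frac2b$ for all $b>0$, while $\frac1{e^{b/2}-1}\leq1$ as soon as $b\geq2\log2$. This dichotomy is precisely what produces the factor $e^{\log(b/B)_-}=\max(1,B/b)$ in the statement.

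Finally I would fix $B:=\max(c,2)$, which depends only on $d$ and $\eps$ as required, and split into two cases. If $b\geq B$, then $b\geq 2\log2$, so the remaining series is $\leq1$ and the total sum is $\leq e^{cM}\leq e^{BM}=\exp\bigl(BM+\log(b/B)_-\bigr)$. If $b<B$, then the remaining series is $\leq 2/b$, so the total sum is $\leq\frac2b e^{cM}\leq\frac Bb e^{BM}=\exp\bigl(BM+\log(b/B)_-\bigr)$, where I used $B\geq2$ and $B\geq c$. This is the claimed bound.

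\textbf{Main obstacle.} There is no genuine difficulty here; the estimate is deliberately crude. The only point that needs a little care is to match the exact right–hand side: one must check that the small–$b$ regime, where the geometric series behaves like $2/b$, is exactly accounted for by the $\log(b/B)_-$ term, and that a single constant $B$ depending only on $d$ and $\eps$ can simultaneously dominate the multiplicative constant $c$ coming from the Gaussian maximum and the threshold $2\log2$, uniformly in $a$ and $b$.
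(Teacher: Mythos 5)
Your proof is correct, and it reaches the stated bound by a route that differs from the paper's in both of its two steps. Where you split the \emph{exponent} pointwise, writing $-b\ell^{d+\eps}+a\ell^d=-\tfrac b2\ell^{d+\eps}+\bigl(a\ell^d-\tfrac b2\ell^{d+\eps}\bigr)$ and bounding the bracket by its global maximum $c\,a^{\frac{d+\eps}{\eps}}b^{-\frac d\eps}$, the paper splits the \emph{sum} at the turning point $\ell_1=(2a/b)^{1/\eps}$: for $\ell>\max(\ell_0,\ell_1)$ each term is at most $e^{-\frac b2\ell^{d+\eps}}$, the tail is compared to the integral $\int e^{-\frac b2 t^{d+\eps}}\,dt$, and the at most $\ell_1$ terms with $\ell_0<\ell\leq\ell_1$ are bounded crudely by $\ell_1 e^{a\ell_1^d}$, absorbed via $xe^x\leq e^{2x}$; this counting step is where the paper actually uses the hypothesis $\ell_0\geq 2$, which your argument never needs. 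Your treatment of the remaining series by $\ell^{d+\eps}\geq\ell$ and the geometric sum $1/(e^{b/2}-1)$ is more elementary than the integral comparison, and your dichotomy $b\geq B$ versus $b<B$ produces the $\log(b/B)_-$ factor in exactly the same way the paper's prefactors $a^{-1/d}$ and $b^{-1/(d+\eps)}$ do. The only thing the paper's version buys is the extra (discarded) decay $e^{-b\ell_0^{d+\eps}/4}$ in $\ell_0$ for the tail, which does not appear in the final statement; for the lemma as stated, your argument is cleaner and complete.
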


\begin{proof}
If $\ell_0^\eps\geq 2a/b$, then we can simply bound
\begin{align*}
\sum_{\ell>\ell_0}e^{-b\ell^{d+\eps}+a\ell^d}\leq \sum_{\ell>\ell_0}e^{-\frac{b}2\ell^{d+\eps}}&\leq \int_{\ell_0}^\ii e^{-\frac{b}2t^{d+\eps}}\,\rd t\\
&= \myp[\Big]{\frac{b}2 }^{-\frac1{d+\eps}}\int_{(b/2)^{\frac1{d+\eps}}\ell_0}^\ii e^{-t^{d+\eps}}\,\rd t\leq C\frac{e^{-\frac{b\ell_0^{d+\eps}}4}}{b^{\frac1{d+\eps}}}.
\end{align*}
If $1<\ell_0^\eps< 2a/b=:\ell_1^\eps$ we split the sum in two parts. We use that the number of integers $\ell$ satisfying $\ell_0<\ell\leq\ell_1$ is less than or equal to $\ell_1+1-\ell_0\leq\ell_1$ since $\ell_0\geq2$. Hence, using the previous estimate and the fact that $xe^x\leq e^{2x}$ with $x=a\ell_1^d$, we find
\begin{align*}
\sum_{\ell>\ell_0}e^{-b\ell^{d+\eps}+a\ell^d}&\leq \ell_1e^{a\ell_1^d}+\sum_{\ell>\ell_1}e^{-b\ell^{d+\eps}+a\ell^d}\leq \frac{ e^{2a\ell_1^d}}{a^{\frac1d}}+C\frac{e^{-\frac{b\lfloor\ell_1\rfloor^{d+\eps}}4}}{b^{\frac1{d+\eps}}}\\
&\leq \frac{ e^{2a\ell_1^d}}{2(b/B)^{\frac1d}}+\frac{e^{-\frac{b\ell_0^{d+\eps}}4}}{2(b/B)^{\frac1{d+\eps}}}\leq \frac{e^{\frac1d\log(b/B)_-}}2\left(e^{2a\ell_1^d}+e^{-\frac{b\ell_0^{d+\eps}}4}\right),
\end{align*}
with $B=\max(2^{1+d},(2C)^{d+\eps})$. Since $\ell_1>\ell_0$, we have replaced $\lfloor\ell_1\rfloor$ by $\ell_0$ in the the last exponential. To simplify the bound further, we use that $e^{-b\ell_0^{d+\eps}/4}\leq1$ and $(e^x+1)/2\leq e^{x}$ for $x\geq0$, leading to~\eqref{eq:estim_sum} after further increasing $B$ and using $d\geq1$.
\end{proof}

Using~\eqref{eq:estim_sum} and $e^x+e^y\leq e^{x+y+1}$ for $x,y\geq0$, we find
$$\expec[\big]{e^{\frac\beta2 h_{Q_0}}}_{\beta,\mu,\Omega}\leq e^{\zeta \beta\cR},\qquad \cR=\frac{ze^{\beta C}V_0}{\beta\zeta}+\ell_0^{d+\eps}+ \myp[\Big]{ \frac{z}\beta }^{\frac{d+\eps}\eps} + \frac{(\log\beta)_-+1}{\beta}$$
for a large enough constant $\zeta$. By Jensen's inequality, we obtain
$$\pscal{h_{Q_0}}_{\beta,\mu,\Omega}\leq 2\zeta \cR.$$
Using~\eqref{eq:estim_h_0} and the fact that $ \cR\geq  \ell_0^{d+\eps}\geq  V_0/2^d$, we obtain after further increasing $\zeta$
$$\pscal{n^2_{Q_0}}_{\beta,\mu,\Omega}\leq \zeta \cR V_0.$$
On the other hand, our bound~\eqref{eq:estim_correlation_fn} on the correlation functions provides
\begin{align*}
\pscal{n_{Q_0}}_{\beta,\mu,\Omega}&=\int_{Q_0}\rho^{(1)}\leq (e^{\beta C}z V_0)e^{\zeta \beta\cR},\\
\pscal{n_{Q_0}(n_{Q_0}-1)}_{\beta,\mu,\Omega}&=\iint_{(Q_0)^2}\rho^{(2)}\leq (e^{\beta C}z V_0)^2e^{\zeta \beta\cR}.
\end{align*}
Thus we arrive at~\eqref{eq:main_estim_Ruelle_polynomial} after using $1+e^{\beta C}z V_0\leq e^{e^{\beta C}z V_0}\leq e^{\zeta\beta\cR}$. This concludes the proof of Corollary~\ref{cor:Ruelle_polynomial}.\qed

 \subsection{Ruelle bounds in an external potential}
We can extend the previous result to the situation where we have a bounded-below external potential $v(x)$, that is, the energy is of the form
$$H_v(x)=\sum_{1\leq j<k\leq n}w(x_j-x_k)+\sum_{i=1}^n v(x_i)=:H(x)+V(x)$$
with $\int_{\R^d}e^{-\beta v(x)}\rd x<\ii$. The previous situation corresponds to $v\equiv -\mu$ on $\Omega$ and $v\equiv+\ii$ outside of $\Omega$.

\begin{corollary}[Local estimates with potential]\label{thm:Ruelle_V}
Assume that $w=w_1+w_2$ satisfies the same assumptions as in Theorem~\ref{thm:Ruelle}. Let $v:\R^d\to\R\cup \{+\ii\}$ be any measurable function such that
$$v\geq -\mu_0\text{ a.e.},\qquad \int_{\R^d}e^{-\beta v(x)}\,dx<\ii.$$
The corresponding grand-canonical Gibbs state satisfies
\begin{equation}
\expec[\big]{e^{\frac\beta2 h_{Q}}}_{\beta,v} 
\leq \exp \myp[\Big]{ e^{\zeta \beta}\int_{Q}e^{-\beta v(x)}\,\rd x } \myp[\bigg]{ e^{\zeta\beta\psi(L)L^d}+\sum_{\substack{\ell\in\N\\ \ell> L}}e^{-\beta\big(\frac{\psi(\ell)}{\zeta}-\zeta e^{\beta\mu_0}\big)\ell^d} }
\label{eq:main_estim_Ruelle_thm_V}
\end{equation}
for any $\beta>0$ and any cube $Q$ of side length $L\geq \zeta$, where $\psi$ and $\zeta$ are the same as in Theorem~\ref{thm:Ruelle}. If $\phi(|x|)=B(1+|x|)^{-s}$ for some $s>d$ and $0<\eps<\min(1,s-d)$, then we have
\begin{equation}
\expec[\big]{e^{\frac\beta2 h_{Q}}}_{\beta,v}
\leq \exp \myp[\Big]{ e^{\zeta \beta}\int_{Q}e^{-\beta v}+\zeta\beta\cR_0 },
\label{eq:main_estim_Ruelle_polynomial_V}
\end{equation}
\begin{multline}
 \pscal{n_{Q}^2}_{\beta,v}\leq L^d\,\min\bigg\{e^{\zeta\beta(\cR_0+1)} \myp[\big]{ 1+L^de^{\beta(\zeta+\mu_0)} } \int_{Q}e^{-\beta v};\\
 \zeta\beta^{-1}e^{\zeta \beta}\int_{Q}e^{-\beta v}+\zeta\cR_0\bigg\},
 \label{eq:main_estim_Ruelle_polynomial_V_n2}
\end{multline}
where $\cR_0:=L^{d+\eps}+ (e^{\beta\mu_0}/\beta)^{1+\frac{d}\eps}+\frac{1+(\log\beta)_-}\beta.$
\end{corollary}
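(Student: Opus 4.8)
\emph{Proof strategy.} The plan is to repeat the proof of Theorem~\ref{thm:Ruelle} essentially verbatim, with the activity $e^{\beta\mu}$ replaced by the position-dependent weight $e^{-\beta v(x)}$ for particles lying inside the cube $Q$, and bounded by the constant $e^{\beta\mu_0}$ for particles lying outside $Q$. The decisive observation is that the pointwise lower bound of Proposition~\ref{prop:pointwise_H} (and its $q=0$ version \eqref{eq:pointwise_bd_H_q0}) involves \emph{only} the interaction $H$, and that the whole geometric apparatus of Step~1 — the growing cubes $Q_j$, the integer $q=q_Y$, the function $\psi$ — depends only on the configuration $Y$ of the points lying \emph{outside} $Q_0=Q$. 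Hence these objects are untouched by the one-body potential, and in particular by the fact that $v$ may be very negative on part of $Q$.

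Concretely, writing $H_v=H+V$ with $V(x_1,\dots,x_n)=\sum_i v(x_i)$ and $Z_{\beta,v}=\sum_n\frac1{n!}\int e^{-\beta H_v}$ (finite because $w_2$ is stable and $\int e^{-\beta v}<\infty$), I would expand, exactly as in Step~3,
\begin{multline*}
Z_{\beta,v}\expec[\big]{e^{\frac\beta2 h_Q}}_{\beta,v}
= \sum_{q\geq0}\sum_{n,k,m}\frac{1}{n!\,k!\,m!}\int_{Q^n}\iint_{(Q_q\setminus Q)^k\times (\R^d\setminus Q_q)^m}\1(q_{y\cup z}=q)\\
\times\,e^{-\beta H(x\cup y\cup z)-\beta V(x\cup y\cup z)+\frac\beta2 h(x)}\,\rd x\,\rd y\,\rd z.
\end{multline*}
Inserting \eqref{eq:pointwise_bd_H} (resp. \eqref{eq:pointwise_bd_H_q0} when $q=0$), the $h(x)$-terms cancel and the nonnegative term $\tfrac{A}{16V_q}(\#y)^2$ is discarded, so that the three groups of particles decouple: the $x$-particles, living in $Q$ and keeping their weight $e^{-\beta v(x_i)}$, contribute $\exp\bigl(e^{\beta C}\int_Q e^{-\beta v}\bigr)$ with $C=\kappa+A/2$; the $y$-particles, living in $Q_q$ and with weight $e^{-\beta v(y_j)}\leq e^{\beta\mu_0}$, contribute at most $\exp\bigl(e^{\beta\mu_0}V_q\bigr)$; and the $z$-particles, for which $e^{-\beta H(z)}$ \emph{and} $e^{-\beta V(z)}$ are both retained, sum to at most $Z_{\beta,v}$ and cancel the left-hand prefactor. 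It is precisely here that the hypothesis $\int e^{-\beta v}<\infty$ enters. Renaming the finitely many explicit constants to a single $\zeta$ and rewriting the $q$-sum as a sum over $\ell$ as in Step~3 yields \eqref{eq:main_estim_Ruelle_thm_V}.

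For \eqref{eq:main_estim_Ruelle_polynomial_V} I would specialise $\phi(|x|)=B(1+|x|)^{-s}$ and $\psi(r)=r^\eps$, bound the resulting series $\sum_{\ell>L}e^{-2^d\ell^d(\beta\frac{A}{16}\ell^\eps-e^{\beta\mu_0})}$ by Lemma~\ref{lem:estim_sum} with $b\sim\beta$ and $a\sim e^{\beta\mu_0}$ (which produces the terms $(e^{\beta\mu_0}/\beta)^{1+d/\eps}$ and $\beta^{-1}(1+(\log\beta)_-)$), absorb the $q=0$ contribution $\sim L^{d+\eps}$, and collapse the bracket into a single exponential via $e^x+e^y\le e^{x+y+1}$, which is what builds $\cR_0$. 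Finally, \eqref{eq:main_estim_Ruelle_polynomial_V_n2} follows from the two routes already used for Corollary~\ref{cor:Ruelle_polynomial}: Jensen's inequality applied to \eqref{eq:main_estim_Ruelle_polynomial_V}, together with the elementary bound $h_Q\geq \frac{A}{2L^d}n_Q^2-\frac A2 n_Q$ coming from \eqref{eq:estim_h_0} and with $\cR_0\geq L^d$, gives the estimate with $\zeta\beta^{-1}e^{\zeta\beta}\int_Q e^{-\beta v}+\zeta\cR_0$; and integrating over $Q$ the pointwise bound on the one- and two-point correlation functions obtained exactly as in Remark~\ref{rmk:correlation_bounds} (merging the explicit points with the $x$- and $y$-particles and using \eqref{eq:pointwise_bd_H}), in which the activity of the explicit points is replaced by $e^{-\beta v}$ and in which $\int_Q e^{-\beta v}\leq L^d e^{\beta\mu_0}$ is used, gives the other estimate.

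This is essentially a transcription, so there is no deep obstacle; the only point demanding care is the bookkeeping of Step~1 — namely that the decomposition into the cubes $Q_j$ and the definition of $q_Y$ are carried out using only the points outside $Q$, so that they commute with integrating the particles inside $Q$ against $e^{-\beta v}$, and that discarding the constraint $\1(q_{y\cup z}=q)$ (which is what lets the $z$-sum close up to $Z_{\beta,v}$) remains legitimate because all summands are nonnegative. Everything else is the proof of Theorem~\ref{thm:Ruelle} with $e^{\beta\mu}$ replaced by $e^{-\beta v}$ inside $Q$ and by $e^{\beta\mu_0}$ outside.
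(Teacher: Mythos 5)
Your proposal is correct and follows essentially the same route as the paper: expand the expectation as in Step~3 of Theorem~\ref{thm:Ruelle}, insert the pointwise bounds \eqref{eq:pointwise_bd_H} and \eqref{eq:pointwise_bd_H_q0}, keep the weight $e^{-\beta v}$ for the particles in $Q$, bound $V(y)\geq-\mu_0 k$ for the intermediate particles, let the $z$-sum reconstitute (a lower bound of) $Z_{\beta,v}$, and then obtain the polynomial-decay and second-moment bounds exactly as in Corollary~\ref{cor:Ruelle_polynomial} via Lemma~\ref{lem:estim_sum}, Jensen, and the correlation-function estimates. The bookkeeping points you flag (that $q_Y$ depends only on the exterior configuration, and that dropping the indicator is legitimate by positivity) are precisely the ones the paper relies on implicitly.
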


\begin{proof}
The potential $v$ only occurs in Step 3 of the proof of Theorem~\ref{thm:Ruelle}. We have
\begin{multline*}
Z_{\beta,v}(\Omega)\expec[\big]{e^{\frac\beta2 h_{Q_0}}}_{\beta,v}
=\sum_{q\geq0}\sum_{n,k,m}\frac{1}{n!\,k!\,m!}\int_{(Q_0)^n}\iint_{( Q_q\setminus Q_0)^k\times (\R^d\setminus Q_q)^m} \times\\
\times \1\left(q_{y\cup z}=q\right)e^{-\beta \big(H(x\cup y\cup z)+V(x)+V(y)+V(z)\big)+\frac\beta2 h(x)}\rd x\,\rd y\,\rd z.
\end{multline*}
We use $V(y)\geq -\mu_0 k$ and the previous bound on $H(x\cup y\cup z)$, leading to
\begin{multline}
\expec[\big]{e^{\frac\beta2 h_{Q_0}}}_{\beta,\mu,\Omega}
\leq e^{\beta \frac{4^{d-2}A}{\alpha}\psi_0V_0}\sum_{n}\frac{e^{\beta Cn} \myp[\big]{\int_{Q_0}e^{-\beta v} }^n}{n!}\nn\\
+\sum_{q>0}\sum_{n,k}\frac{e^{\beta(Cn+\mu_0 k)} \myp[\big]{\int_{Q_0}e^{-\beta v} }^n(V_q)^ke^{-\frac{\beta A}{16}V_q\psi_q}}{n!\,k!}
\end{multline}
and~\eqref{eq:main_estim_Ruelle_thm_V} follows as for~\eqref{eq:estim_exp}. When $\phi$ decreases polynomially, the bound~\eqref{eq:main_estim_Ruelle_polynomial_V} is proved the same as Corollary~\ref{cor:Ruelle_polynomial}. Finally, we have the pointwise bound on the correlation function
$$\rho^{(j)}(x)\leq e^{-\beta \myp[\big]{ \frac{h(x)}{2}-Cj+\sum_{i=1}^jv(x_i)}} \myp[\bigg]{ e^{\beta \frac{4^{d-2}A}{\alpha}\psi_0V_0}+\sum_{\ell>\ell_0}e^{-2^d\ell^d(\beta\frac{A}{16}\psi(\ell)-z)} }$$
for $x\subset (Q_0)^j$ and~\eqref{eq:main_estim_Ruelle_polynomial_V_n2} follows as in the proof of Corollary~\ref{cor:Ruelle_polynomial}.
\end{proof}

We have a similar result at zero temperature.

\begin{corollary}[Local estimates with potential, $T=0$]\label{thm:Ruelle_V_T0}
Assume that $w=w_1+w_2$ satisfies the same assumptions as in Theorem~\ref{thm:Ruelle}. Let $v:\R^d\to\R\cup \{+\ii\}$ be any lower semi-continuous function such that $v\geq -\mu_0$. Let $X\subset\R^d$ be any minimizer for the grand-canonical free energy
$$X\mapsto H(X)+V(X),\qquad V(X)=\sum_{x\in X}v(x).$$
Then the number of points in any cube $Q$ of side length $L\geq\zeta$ satisfies
\begin{multline}
n_Q(X)\leq \frac{4L^d}{A} \myp[\Big]{ 2\kappa+A-\min_Q v }_+\\
+ \myp[\bigg]{ \frac{16V_0(\mu_0)_+^2L_{\mu_0}^d}{A^2}+\frac{2^{2d-2}}{\alpha}\psi(L/2)L^{2d}-\frac{4L^d}{A} \myp[\Big]{ 2\kappa+A-\min_Q v }_- }^{\frac12}_+,
\label{eq:main_estim_Ruelle_thm_V_T0}
\end{multline}
where $L_{\mu_0}$ is, like in Corollary~\ref{cor:T0}, the smallest integer such that $\psi(L_{\mu_0}/2)\geq 64(\mu_0)_+^2/A^2$.
\end{corollary}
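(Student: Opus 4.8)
The plan is to replay the proof of Corollary~\ref{cor:T0}, feeding the external potential through the pointwise estimate of Proposition~\ref{prop:pointwise_H} in place of the constant $-\mu$. Keeping the constants $\alpha,\ell_0,\zeta$ fixed as in the proof of Theorem~\ref{thm:Ruelle}, and enlarging the target cube by a bounded amount absorbed into $\zeta$, I may assume that $Q=Q_0=(-\ell_0,\ell_0)^d$ is a union of the unit cubes $C_i$, with $L=2\ell_0$, $V_0=L^d$, and with the sequence $\{\ell_j\}$ built on $\ell_0$ exactly as in the appendix. I would write the minimizer as $X=x\cup Y$ with $x:=X\cap Q_0$, then split $Y=y\cup z$ with $y:=Y\cap Q_q$, $z:=Y\cap(\R^d\setminus Q_q)$ and $q:=q_Y$, and set $n:=\#x$, $k:=\#y$.

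The only input from minimality is that $z$ (obtained by deleting from $X$ the particles in $Q_q$) is not favourable, i.e. $H(z)+V(z)\geq H(X)+V(X)$, which since $V(X)=V(x)+V(y)+V(z)$ reads
\[
H(x\cup y\cup z)-H(z)+V(x)+V(y)\leq 0 .
\]
Into this I would insert the pointwise bound~\eqref{eq:pointwise_bd_H} (or~\eqref{eq:pointwise_bd_H_q0} when $q=0$) for $H(x\cup y\cup z)-H(z)$, together with the two crude estimates $V(x)\geq n\min_Q v$ (all particles of $x$ lie in $Q$) and $V(y)\geq-(\mu_0)_+\,k$ (since $v\geq-\mu_0\geq-(\mu_0)_+$ everywhere; the potential felt by the far particles $z$ has already cancelled). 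Completing the square via $\tfrac{A}{16V_q}k^2-(\mu_0)_+k\geq-\tfrac{4(\mu_0)_+^2V_q}{A}$, the case $q>0$ gives
\[
0\geq\tfrac12 h(x)+\Big(\tfrac{A}{16}\psi_q-\tfrac{4(\mu_0)_+^2}{A}\Big)V_q-\Big(\kappa+\tfrac A2-\min_Q v\Big)n ,
\]
and, exactly as in Corollary~\ref{cor:T0}, the bracket times $V_q$ is $\geq-\tfrac{4(\mu_0)_+^2L_{\mu_0}^d}{A}$ (it is nonnegative when $\psi_q\geq 64(\mu_0)_+^2/A^2$, and otherwise $\ell_q\leq L_{\mu_0}/2$ so $V_q\leq L_{\mu_0}^d$). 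Hence $h(x)\leq 2(\kappa+\tfrac A2-\min_Q v)n+\tfrac{8(\mu_0)_+^2L_{\mu_0}^d}{A}$, while the case $q=0$, using~\eqref{eq:pointwise_bd_H_q0}, produces instead the term $\tfrac{2^{2d-3}A}{\alpha}\psi(\ell_0)V_0=\tfrac{2^{2d-3}A}{\alpha}\psi(L/2)L^d$ in place of the $L_{\mu_0}$ term.

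Finally I would combine this with the lower bound $h(x)\geq\tfrac{A}{2V_0}n^2-\tfrac A2 n$ (from $h_{C_i}\geq\tfrac12 An_i(n_i-1)$ summed over the $V_0$ cubes $C_i\subset Q_0$ and Cauchy--Schwarz), using $A\leq 4(\kappa+A/2)$ to tidy the linear coefficient, to reach a quadratic inequality $n^2\leq an+b$ with $a\leq\tfrac{4V_0}{A}(2\kappa+A-\min_Q v)$ and $b:=\tfrac{16V_0(\mu_0)_+^2L_{\mu_0}^d}{A^2}+\tfrac{2^{2d-2}}{\alpha}\psi(L/2)L^{2d}$ (taking the sum of the two possible constant terms). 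Splitting $a$ into its positive and negative parts — so that for $n\geq1$ one has $n^2\leq\tfrac{4V_0}{A}(2\kappa+A-\min_Q v)_+\,n+b-\tfrac{4V_0}{A}(2\kappa+A-\min_Q v)_-$ — and then using $\sqrt{u^2+4t}\leq u+2\sqrt t$ for $u,t\geq0$ together with the fact that $n=0$ as soon as $b-\tfrac{4V_0}{A}(2\kappa+A-\min_Q v)_-<0$, gives the stated bound (recall $V_0=L^d$). The whole thing is bookkeeping rather than new ideas; the point to watch is that $v$ is non-uniform, so it must enter only through $\min_Q v$ on the target cube and through the global floor $-\mu_0$ on the "bad-cluster" particles $y$ in $Q_q$, and that the $q=0$ alternative has to be carried separately since it contributes the $\alpha^{-1}\psi(L/2)L^{2d}$ piece of $b$.
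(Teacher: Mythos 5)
Your proposal is correct and follows exactly the route of the paper's own (very terse) proof: introduce $\mu=-\min_Q v$, compare the minimizer with the configuration obtained by deleting the particles in $Q_{q}$, feed $v\geq-\mu$ on $x$ and $v\geq-\mu_0$ on $y$ into the pointwise bounds \eqref{eq:pointwise_bd_H}/\eqref{eq:pointwise_bd_H_q0}, complete the square in $k$, and finish with $h(x)\geq \frac{A}{2V_0}n^2-\frac{A}{2}n$ and the elementary implication $n^2-an\leq b\Rightarrow n\leq a_++(b-a_-)^{1/2}_+$. The bookkeeping (the $L_{\mu_0}$ dichotomy, the $2^{2d-2}\alpha^{-1}\psi(L/2)L^{2d}$ term from the $q=0$ alternative, and the enlargement of the linear coefficient to $\frac{4L^d}{A}(2\kappa+A-\min_Q v)$) all checks out.
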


The bound~\eqref{eq:main_estim_Ruelle_thm_V_T0} states that the number of particles in a cube can be bounded in terms of the ``local'' chemical potential $\mu=-\min_Q v$. The bound is linear for $\mu\geq -(2\kappa+A)$ and vanishes when $\mu$ is really negative. In between it interpolates continuously. As usual we should not think of $L$ large here. Taking $L=\zeta$ and covering any large enough cube by finitely many smaller cubes, we get a bound in the form
\begin{equation}
 \boxed{n_Q\leq C|Q| \myp[\Big]{ C-\min_Q(v) }_+}
 \label{eq:simplified_bound_v_T0}
\end{equation}
where $C$ depends on $\mu_0$.

\begin{proof}
We introduce $\mu:=-\min_Q v$ and write $X=x\cup Y$ with $x=X\cap Q$. If $q:=q_Y>0$, we write again $Y=y\cup z$. For the particles in $x\subset Q$ we use $v\geq -\mu$ but for the particles in $y$ we use $v\geq -\mu_0$. Arguing exactly as in the proof of Corollary~\ref{cor:T0}, we obtain
\begin{equation}
\frac{A}{V_0}n^2-4(2C+\mu)n\leq \frac{16(\mu_0)_+^2L_{\mu_0}^d}{A}.
\label{eq:estim_v_not_cnst}
\end{equation}
If $q_Y=0$ we get the same but with the right side replaced by $\frac{2^{2d-2}A}{\alpha}\psi_0V_0$. To conclude, we use that $n^2-\alpha n\leq \beta$ for an integer $n$ and $\beta>0$ implies $n\leq\alpha_++(\beta-\alpha_-)^{1/2}_+$.
\end{proof}


\end{document}